\title{Geometric Langlands Twists of $N=4$ Gauge Theory \\ from Derived Algebraic Geometry}
\author{Chris Elliott \and Philsang Yoo}
\date{}
\DeclareMathOperator{\EOM}{EOM}
\DeclareMathOperator{\PT}{\mathbb{PT}}
\DeclareMathOperator{\PN}{\mathbb{PN}}
\newcommand{\del}{\partial}
\newcommand{\delby}[1]{\frac{\del}{\del #1}}
\begin{document}
\maketitle

\begin{abstract}
We develop techniques for describing the derived moduli spaces of solutions to the equations of motion in twists of supersymmetric gauge theories as derived algebraic stacks. We introduce a holomorphic twist of $N=4$ supersymmetric gauge theory and compute the derived moduli space. We then compute the moduli spaces for the Kapustin--Witten topological twists as its further twists. The resulting spaces for the A- and B-twist are closely related to the de Rham stack of the moduli space of algebraic bundles and the de Rham moduli space of flat bundles, respectively. In particular, we find the unexpected result that the moduli spaces following a topological twist need not be entirely topological, but can continue to capture subtle algebraic structures of interest for the geometric Langlands program. 
\end{abstract}

\tableofcontents

\section{Introduction}
In this work we will attempt to make precise, using derived algebraic geometry and the classical BV formalism, the relationship between certain topological twists of $N=4$ gauge theories and the moduli stacks that occur in the geometric Langlands program as pioneered in \cite{KW}.  In this paper we'll construct the moduli spaces of solutions to the equations of motion in the Kapustin--Witten twists of $N=4$ gauge theory as shifted symplectic derived stacks, and note the appearance of interesting representation theoretic moduli spaces, with appropriate algebraic structures.

\subsection{Statement of Geometric Langlands} \label{GL_section}

Historically, the original motivation for the geometric Langlands conjecture comes from number theory: from trying to find the right analogue of the Langlands reciprocity conjecture in the realm of complex geometry. Because the objects of interest behave better in a geometric setting, one can prove stronger results in a cleaner way and hope to eventually transport some ideas from geometry to number theory. Ng\^o's proof \cite{Ngo} of the fundamental lemma using the geometry of the Hitchin system is an example of a striking success of this program (explained in an expository article of Nadler \cite{NadlerFL}).

In this subsection, we will briefly recall the heuristic categorical statement of geometric Langlands conjecture due to Beilinson and Drinfeld. For an introduction following the historical path with more detail, one can refer to Frenkel's series of lectures on the Langlands program \cite{FrenkelLectures}.

Let $G$ be a complex reductive algebraic group.  The \emph{Langlands dual} group $G^\vee$ to $G$ is the unique reductive algebraic group with dual root data: the roots and coroots are interchanged, as are the weights and coweights.  Let $\Sigma$ be a smooth projective algebraic curve over the complex numbers. The \emph{geometric Langlands conjecture} alleges an equivalence of two categories related to the geometric representation theory of $G$ and $G^\vee$ respectively on $\Sigma$.

\begin{conjecture}[``The best hope'']
There is an equivalence of dg-categories $\mr{D}(\bun_G(\Sigma)) \simeq \mr{QC}(\Flat_{G^\vee}(\Sigma))$ intertwining the natural symmetries on both sides.
\end{conjecture}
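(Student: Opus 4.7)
Since the final statement is the central open conjecture of the geometric Langlands program rather than a theorem whose proof the present paper will supply, my plan is necessarily a strategic outline rather than a genuine proof. The strategy follows the physical logic of Kapustin--Witten: realize both sides of the equivalence as categorical invariants of the two topological twists of $N=4$ super Yang--Mills theory---the $A$-twist with gauge group $G$ for the left-hand side, and the $B$-twist with gauge group $G^\vee$ for the right-hand side---and then transport them onto one another via $S$-duality, which exchanges the two theories and inverts the coupling.

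First I would construct, in the classical BV formalism, the derived moduli stacks of solutions to the equations of motion for each twist as shifted symplectic derived stacks over the curve $\Sigma$. The expectation is to recover $\bun_G(\Sigma)$ (with its de Rham enhancement) on the $A$-side and $\loc_{G^\vee}(\Sigma)$ on the $B$-side; this is precisely the step that the present paper undertakes. Second I would quantize: the category of boundary conditions for the quantized $A$-twist should yield $\mr{D}$-modules on $\bun_G$ as a deformation quantization of a $(-1)$-shifted cotangent, while the $B$-twist should yield quasicoherent sheaves on $\loc_{G^\vee}$ as the corresponding deformation on the dual side.

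The third and decisive step is to promote $S$-duality from a conjectural physical symmetry to a mathematical equivalence of these categorical invariants, and this is the main obstacle. $S$-duality is not itself a theorem, and the naive equivalence above is known to fail as stated: the Arinkin--Gaitsgory refinement replaces $\mr{QC}(\loc_{G^\vee})$ by the category of ind-coherent sheaves with singular support bounded by the nilpotent cone, and any honest proof must track this modification through the physical picture, presumably via a careful analysis of the singular locus of $\loc_{G^\vee}$ and the boundary conditions attached to it. A genuine proof would therefore need either to bypass $S$-duality by constructing the equivalence directly along the lines of the Hecke-eigensheaf / spectral-decomposition program of Gaitsgory, Arinkin, Raskin, and collaborators, or to give an independent mathematical proof of $S$-duality itself. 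Realistically, my own contribution would stop at the classical identification of the two derived moduli stacks as shifted symplectic objects, leaving the categorical equivalence as a structural prediction to be established by quantization and duality in subsequent work.
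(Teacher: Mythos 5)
The statement you were asked about is a conjecture that the paper does not prove — it is stated purely as motivation, and the paper's own remarks immediately note that it is false as written for non-abelian $G$ and must be corrected \`a la Arinkin--Gaitsgory. Your outline reproduces exactly the paper's own framing (classical BV moduli stacks for the $A$- and $B$-twists, quantization to categories of boundary conditions, $S$-duality as the unproven bridge, and the Arinkin--Gaitsgory refinement as the necessary fix), and your honest conclusion that the genuine content stops at the classical identification of the shifted symplectic moduli spaces is precisely the scope of the paper's actual theorems.
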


For the category on the left-hand side (the \emph{geometric}, \emph{automorphic} or `A'-side), we take the moduli stack $ \bun_G(\Sigma)$ of algebraic principal $G$-bundles on $\Sigma$, and consider the dg-category of \emph{D-modules} on $\bun_G(\Sigma)$, which we'll denote by $\mr D(\bun_G(\Sigma))$.  This category has been well studied in work of Drinfeld and Gaitsgory \cite{DrinfeldGaitsgory}.

For the category on the other side (the \emph{spectral}, \emph{Galois} or `B'-side), we consider the moduli stack $\Flat_{G^\vee}(\Sigma)$ of \emph{$G^\vee$-flat connections} on $\Sigma$.  We should be careful about what exactly we mean by this category -- we'll mean the stack parameterizing algebraic $G^\vee$-bundles with flat connection (as opposed to the character stack, say, which in particular does not depend on the complex structure of $\Sigma$).  The ``best hope'' version of the geometric Langlands conjecture uses the dg-category $\QC(\Flat_{G^\vee}(\Sigma))$ of quasi-coherent sheaves on $\Flat_{G^\vee}(\Sigma)$.

\begin{remarks}
\begin{enumerate}
\item The conjecture also includes a compatibility of the equivalence with natural symmetries on the two sides.  We're omitting the details of these symmetries for now, but they are crucial aspects of the correspondence that one must address.
\item As written above, the conjecture is known to be false as soon as $G$ is not a torus: there are objects on the A-side which are ``too large'' to correspond to anything on the B-side.  This phenomenon is already visible on the curve $\bb{CP}^1$ \cite{Lafforgue}. Arinkin--Gaitsgory \cite{ArinkinGaitsgory} formulated a form of the conjecture which is intended to correct this incompatibility by suitably enlarging the category on the B-side.
\end{enumerate}
\end{remarks}

We address the second remark in a follow-up to this paper \cite{EY2}.

\subsection{Kapustin--Witten and Geometric Langlands} \label{KW_section}
The observation that the magnetic dual group in the S-duality of Goddard, Nuyts, and Olive \cite{GNO} and Montonen and Olive \cite{MontonenOlive} recovers the notion of the Langlands dual group prompts the natural question ``is there any further relationship between S-duality and Langlands duality?''  Kapustin and Witten argue that the answer is yes: the geometric Langlands equivalence of categories is recovered as an equivalence of categories of branes along a Riemann surface in certain twists of $N=4$ gauge theories.  This paper and its sequel will attempt to mathematically understand part of this claim, so we begin by recalling the arguments of Kapustin and Witten.

We'll review some of Kapustin--Witten's main ideas here in a rather heuristic way that aims to reveal its relationship with the geometric Langlands correspondence. The first idea is to construct a family of 4-dimensional topological field theories parametrized by $\Psi \in \CC \PP^1$ as topological twists of $N=4$ supersymmetric gauge theory with gauge group $G$. S-duality interchanges the theory with gauge group $G$ and parameter $\Psi$ with the theory with gauge group $G^\vee$ and parameter $-\frac{1}{\Psi}$ \footnote{Strictly speaking this is only correct when $G$ is simply laced.  For general groups S-duality is expected to exchange parameters $\Psi$ and $-\frac 1{n_\gg \Psi}$ where $n_\gg$ is the lacing number: the ratio of the lengths of a longest and shortest root.}, identifying two manifestly different theories.  The relevant parameters for the geometric Langlands conjecture are $\Psi=0$ and $\Psi=\infty$.

The second idea is to consider the compactification of the theories along a compact Riemann surface $\Sigma$, and identify them as a family of topological sigma-models with target $\mathcal{M}_G(\Sigma)$ -- the Hitchin moduli space -- whose complex structures (and hence corresponding symplectic structures) are parametrized by $\Psi \in \bb{CP}^1$. Furthermore, at those special points $(G,0)$ and $(G^\vee,\infty)$, upon compactification, S-duality becomes mirror symmetry between the A-model with target $\mathcal{M}_G^K(\Sigma)$ in the complex structure $K$ and the B-model with target $\mathcal{M}_{G^\vee}^J(\Sigma)$ in the complex structure $J$. Since the A-model is known to only depend on the symplectic structure of the target manifold, one can identify $\mathcal{M}_G^K(\Sigma)$ with the moduli space of Higgs bundles, or $T^*\bun_G(\Sigma)$. On the other hand, one can identify $\mathcal{M}_{G^\vee}^J(\Sigma)$ with the moduli space $\Flat_{G^\vee}(\Sigma)$ of principal $G^\vee$ bundles with flat connection.

Already from these two ideas one can obtain a version of the geometric Langlands correspondence. Kapustin and Witten argued at the physics level of rigor that the relevant A-branes on $T^*\bun_G$ can be identified with D-modules on $\bun_G$, and hence a version of homological mirror symmetry would give $\mr{D}(\bun_G(\Sigma)) \simeq \QC(\Flat_{G^\vee}(\Sigma))$.  A mathematical theorem about the relationship between the Fukaya category and D-modules for real analytic manifolds is provided by Nadler--Zaslow \cite{NadlerZaslow} and Nadler \cite{NadlerMicrolocal}. This argument yields a statement that seems exactly of the form of the best hope conjecture.

Although Kapustin and Witten's argument is both beautiful and influential, it has a few mathematical defects. First of all, the geometric Langlands conjecture is an algebraic statement, whereas all the above discussion is at best analytic, for example involving A-branes and complex flat connections.  An additional argument is therefore needed to recover the actual categories studied in the geometric Langlands program.  We study the classical twisted $N=4$ theories in a more rigorous way to identify things in an algebraic category, providing a stronger algebraic version of the statement of Kapustin and Witten.  What's more, Kapustin and Witten's argument does not (immediately) provide a way to remedy the deficiencies of the best hope conjecture. In future work we will argue that a careful study of the theory incorporating a choice of quantum vacuum naturally leads to Arinkin and Gaitsgory's modified version of the conjecture.

We should say a little more about the significance of the determination of the algebraic structure on our moduli spaces.  There will, in general, be several analytically equivalent possible versions of the moduli space of solutions to the equations of motion.  The choices that appear in the geometric Langlands conjecture involve a two (real) directions in which the theory is truly topological -- we might call these \emph{Betti} directions -- and two directions in which the theory depends on a complex algebraic structure -- \emph{de Rham} directions.  For example, on the B-side geometric Langlands discusses the moduli space of flat connections on a curve as opposed to the character stack.  In the physical story we'll discuss all four directions will most naturally be de Rham, and we'll have to describe a version of the story in which two of the de Rham directions are replaced by Betti directions (as yet without a strong physical motivation) in order to obtain the moduli spaces of interest for geometric Langlands. 

\subsection{Outline of the Paper}
We begin in Section \ref{main_theory_section} by setting up the formalism for twists of supersymmetric field theories that we'll use in the rest of the paper.  We describe the $N=4$ supersymmetry algebra in four dimensions and its square-zero supercharges: the \emph{holomorphic} supercharges for which half of the translations are exact, and the \emph{topological} supercharges for which all the translations are exact.  In particular, we'll describe the A and B topological supercharges whose corresponding twists are discussed by Kapustin and Witten.  The A supercharge is approximated by a $\CC^\times$ family built from \emph{holomorphic-topological} supercharges for which three translation directions are exact.  After performing a holomorphic twist all of these supercharges admit descriptions as vector fields on a superspace of form $\CC^{2|3}$, which we'll describe, allowing us to generalize the twisted theories to classical field theories on curved manifolds.  The background on supersymmetry algebras which we refer to is reviewed in Appendix \ref{SUSY_appendix}.

We proceed by defining classical field theories, both locally and globally, in the language of derived algebraic geometry: in particular it will make sense to talk about theories that depend on an algebraic structure on spacetime. We discuss what it means to \emph{twist} a classical field theory by an action of the supergroup $\CC^\times \ltimes \Pi \CC$: examples of such twisting data arise naturally from square-zero supercharges in a supersymmetric field theory.  Twists of non-perturbative field theories are defined as one-parameter deformations that are compatible with the perturbative twists described by Costello \cite{CostelloSH} when we restrict to the tangent complex.  There are natural constructions of twists using results of Gaitsgory and Rozenblyum that identify derived stacks with formal maps from a base derived stack $\mc X$ with Lie algebroids on $\mc X$. 

In Section \ref{constructing_theories_section} we review the main constructions of $N=4$ supersymmetric gauge theories.  We begin by introducing the language of compactification and (informally) dimensional reduction for classical field theories.  The first construction is sketched at a lower level of rigor: dimensional reduction from $N=1$ super Yang--Mills theory on $\RR^{10}$.  More rigorous is the construction by compactification from holomorphic Chern--Simons theory on $N=4$ twistor space, although there are still subtleties stemming from the non-holomorphicity of the relevant twistor map.  We review some background from twistor theory, and then prove that the linearised BV complex in holomorphic Chern--Simons yields the linearised BV complex of $N=4$ anti-self-dual super Yang--Mills theory under compactification.

Our main results appear in Section \ref{classical_states_section}, where we compute the holomorphic, B- and A-twists of $N=4$ super Yang--Mills theory as an assignment of derived stacks, beginning from the twistor space perspective. Here we compute the holomorphic twist first, because it is the minimal twist which admits an algebraic description, and realize the B- and A-twists as further twists. We find the following:

\begin{theorem}
The moduli space of solutions to the equations of motion in the holomorphic twist of $N=4$ super Yang--Mills on a smooth proper complex algebraic surface $X$ is equivalent to 
\[\EOM_{ \mr{hol} }(X) \iso  T^*_{\mr{form}}[-1] \higgs^{\mr{fer}}_G(X) \]
as a $(-1)$-shifted symplectic derived stack.
\end{theorem}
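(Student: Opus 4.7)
The plan is to combine the twistor-space construction from section~\ref{constructing_theories_section}, which realises $N=4$ SYM as a compactification of holomorphic Chern-Simons on the super-twistor space $\PT$, with the description of derived stacks via formal mapping spaces discussed in section~\ref{main_theory_section}. The first step is to compute the BV complex of the holomorphic twist directly: activating the holomorphic supercharge restricts the twistor super-fibre to the piece surviving the twist, and compactifying along the twistor $\CC\PP^1$-fibration then produces a sheaf of dg-Lie algebras on $X$ whose Chevalley--Eilenberg construction represents the desired formal moduli problem.

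Next, I would interpret this BV data globally as a derived mapping stack. Using the Gaitsgory--Rozenblyum identification between formal derived stacks and Lie algebroids, the formal moduli problem above is recognised as $\mr{Map}_{\mr{form}}(X^{2|3}, BG)$, the stack of formal maps from the super-surface $X^{2|3}$ (that is, $X$ with three odd directions attached, as anticipated by the $\CC^{2|3}$-superspace description above) into the classifying stack $BG$. Filtering such maps by the three fermionic directions, the zeroth layer gives a point of $\bun_G(X)$, the next layer a section of $\Omega^{1,0}_X \otimes \mr{ad}(P)$ placed in odd cohomological degree (the ``fermionic Higgs field''), and the higher layers data Serre dual to the preceding ones on $X$.

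The third step is to identify the first two layers with $\higgs^{\mr{fer}}_G(X)$ and the remaining dual data with its formal $(-1)$-shifted cotangent bundle, producing the desired equivalence $\EOM_{\mr{hol}}(X) \iso T^*_{\mr{form}}[-1]\higgs^{\mr{fer}}_G(X)$. Simultaneously, the $(-1)$-shifted symplectic form should descend from the canonical symplectic pairing on holomorphic Chern-Simons --- trace on $\mf g$ paired with integration against a holomorphic volume form on $\PT$ --- and match the tautological pairing on the shifted cotangent.

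The main obstacle will be carrying out this third step cleanly: the pairing between the upper and lower fermionic layers that realises the formal cotangent structure must be shown to match the shifted-symplectic pairing on the BV complex, which requires careful book-keeping of super-gradings and of Serre duality on $X$. A secondary subtlety is that the twistor compactification of section~\ref{constructing_theories_section} uses a non-holomorphic twistor map, so some care is needed to conclude that the final derived stack carries the claimed algebraic rather than merely analytic structure.
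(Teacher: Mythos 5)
Your proposal follows essentially the same route as the paper: compute the $Q_{\mr{hol}}$-twist of holomorphic Chern--Simons on super twistor space (where the supercharge acts by multiplication by a section of $\OO(1)$, cutting the super-fibre down to the surviving piece supported on a copy of $\CC^2$), identify the resulting sheaf of dg Lie algebras as a cotangent theory over the fermionic Higgs moduli via Serre duality, and globalize with the Gaitsgory--Rozenblyum correspondence; you also correctly flag the two real subtleties (the Serre-duality bookkeeping and the non-holomorphicity of the Penrose map, which the paper handles by restricting to the zero locus $\{Z_2=0\}$ rather than literally pushing forward). The only point worth noting is that the paper proves the statement on $\CC^2$ and then \emph{defines} the theory on a general proper surface $X$ by the superspace formalism, so for general $X$ the displayed equivalence is definitional rather than computed.
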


\begin{theorem}
\begin{itemize}
\item The moduli space of solutions to the equations of motion in the A-twist of $N=4$ super Yang--Mills on a smooth proper complex algebraic surface $X$ is equivalent to 
\[\EOM_A(X) \iso  \higgs_G(X)_{\mr {dR} }\]
as a $(-1)$-shifted symplectic derived stack.
\item The moduli space of solutions to the equations of motion in the B-twist of $N=4$ super Yang--Mills on a smooth proper complex algebraic surface $X$ is equivalent to 
\[\EOM_B(X) \iso T^*_{\mr{form}}[-1] \Flat_G(X)  \]
as a $(-1)$-shifted symplectic derived stack.
\end{itemize}
\end{theorem}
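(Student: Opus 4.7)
The approach is to realize the A- and B-twists as further twists of the holomorphic twist by one additional square-zero supercharge, as developed in section \ref{main_theory_section}, and to leverage the description of $\EOM_{\mr{hol}}(X) \iso T^*_{\mr{form}}[-1]\higgs^{\mr{fer}}_G(X)$ from the previous theorem. A further twist by a square-zero supercharge deforms the BV differential by adding the Hamiltonian vector field of an associated action term; the plan is to compute the deformed BV complex on $X$ in each case and identify it, via the Gaitsgory-Rozenblyum equivalence between formal moduli problems on $X$ and Lie algebroids on $X$, with the tangent complex of the claimed target derived stack together with its shifted symplectic pairing.

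For the B-twist the additional supercharge couples the fermionic Higgs coordinate --- the $\Omega^{1,0}(X; \mr{ad} P)$-valued field in the holomorphic-twisted BV complex --- to the $(0,1)$-component of the partial connection, combining them into the curvature of a full algebraic connection. The resulting equations of motion become the flatness condition, so the base of the shifted cotangent is promoted from $\higgs^{\mr{fer}}_G(X)$ to $\loc_G(X)$ while the $T^*_{\mr{form}}[-1]$ direction is preserved essentially unchanged. For the A-twist, the additional supercharge instead deforms the BV differential by a term whose leading piece is the algebraic de Rham differential on $X$: the effect is to tensor the sheaves appearing in the holomorphic BV complex with $\Omega_X^\bullet$, promoting them to $D_X$-modules. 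Identifying the resulting Lie algebroid with that of $\higgs_G(X)_{\mr{dR}}$ uses the standard description of the de Rham stack in terms of crystals, together with a computation that the shifted cotangent fibers are absorbed into the de Rham structure rather than surviving as an independent $T^*[-1]$ factor.

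The main obstacle will be verifying that the additional twisting terms produce exactly the expected algebraic deformations --- de Rham differential and flatness coupling, respectively --- and that the twisted $(-1)$-shifted symplectic structure descends correctly to the canonical shifted symplectic forms on the target stacks. This amounts to checking that the additional square-zero supercharges act as Hamiltonian vector fields for the BV pairing, which should follow from their description as odd vector fields on the superspace $\CC^{2|3}$ in section \ref{main_theory_section} combined with the explicit form of the holomorphic-twisted BV pairing. A further delicate point, specific to the A-twist, is justifying the appearance of the algebraic structure on $X$ in the de Rham stack despite the topological nature of the A-supercharge: this reflects the asymmetric role of the complex structure in our algebro-geometric setup discussed in section \ref{KW_section}, where two of the naturally de Rham directions must be replaced by Betti directions to obtain the moduli spaces of interest for geometric Langlands.
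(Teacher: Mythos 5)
Your treatment of the B-twist is essentially the paper's argument: the supercharge $Q_B$ acts in the superspace formalism as the algebraic de Rham differential $\del$ on $X$, so it deforms the source of the mapping stack from $X_{\mr{Dol}}$ to $X_{\mr{dR}}$ (equivalently, it couples the Higgs field to the partial connection to form a flat algebraic connection), promoting the base of the $(-1)$-shifted cotangent from $\higgs_G^{\mr{fer}}(X)$ to $\loc_G(X)$ while leaving the cotangent direction alone. The paper packages this as the relative mapping stack out of the Hodge prestack $X_{\mr{Hod}}$ over $\bb A^1$ and verifies the perturbative compatibility along the section that rescales the connection, but the content is the same as what you describe.

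Your A-twist argument, however, misidentifies the mechanism. In the superspace coordinates of section \ref{superspace_section} the A-supercharge is $Q_A = \delby{\varepsilon}$, the odd vector field on the trivial factor $\CC^{0|1}$ of $Y = \Pi TX \times \CC^{0|1}$; it is $Q_B$, not $Q_A$, that acts as $\del$ on spacetime. Consequently the A-twist does not tensor the BV complex with $\Omega^\bullet_X$ or produce $D_X$-modules or crystals on $X$ --- that operation is exactly the B-deformation and would land you back at $\loc_G(X)$. What actually happens is that, after regrading, $\EOM_{\mr{hol}}(X) \iso T^*_{\mr{form}}[-1]\higgs_G^{\mr{bos}}(X) \iso T_{\mr{form}}[1]\higgs_G^{\mr{bos}}(X)$ via the $(-2)$-shifted symplectic structure on $\ul{\mr{Map}}(X_{\mr{Dol}}, BG)$ coming from the AKSZ construction (this is where properness of $X$ enters), and $\delby{\varepsilon}$ becomes the tautological nonvanishing degree-$1$ vector field on this shifted tangent bundle. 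Twisting by that vector field deforms the shifted tangent bundle of the moduli stack to its de Rham prestack, so the ``de Rham'' in $\higgs_G(X)_{\mr{dR}}$ is the de Rham stack of the moduli space of Higgs bundles, not a de Rham complex on $X$. Your observation that the cotangent fibers are ``absorbed'' is the right conclusion, but without the AKSZ identification $T^*_{\mr{form}}[-1] \iso T_{\mr{form}}[1]$ and the identification of $Q_A$ with the shift vector field, the argument does not go through; as written it would prove the wrong statement (or the B-statement twice).
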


We further proceed to identify the moduli spaces of solutions for a surface of the form $X=C \times \Sigma$ for smooth proper curves $C,\Sigma$ which exhibits the relevance to the geometric Langlands correspondence. On the other hand, even from this description from a 4-dimensional perspective, one can make the following observation. Since it is realized as a deformation of the Higgs moduli space, it is clear that the B-twist sees the algebraic structure of the de Rham moduli space $\Flat_G(\Sigma)$ of flat $G$-bundles, as opposed to the Betti moduli space of locally constant sheaves. Also, on the A-side, we have the appearance of the de Rham stack, which explains the appearance of $D$-modules: in particular, one doesn't have to deal with a Fukaya-type category to see the categorical geometric Langlands correspondence from this perspective.

These two aspects are the results that Kapustin and Witten don't find from their analysis but which play a crucial role for us to recover the refined conjecture of Arinkin and Gaitsgory in our subsequent work \cite{EY2}. That physics can capture subtle algebraic dependence on a spacetime after a twist was not at all expected: first of all, supersymmetric gauge theory is defined purely in the realm of the smooth category.  Moreover, a topological twist was expected to yield a topological field theory in the sense of Atiyah and Segal. Indeed, this was one of the crucial reasons why many experts in the geometric Langlands program were skeptical of how much new insight the work of Kapustin and Witten could bring in to the original program. Our research makes the first explicit bridge between these two different ways of thinking, by setting up a framework based on the recent rigorous development of derived algebraic geometry (as in, for instance the work of To\"en--Vezzosi \cite{HAGI, HAGII}, Lurie \cite{HigherAlgebra}, Pantev--To\"en--Vaqui\'e--Vezzosi \cite{PTVV} and Gaitsgory--Rozenblyum \cite{GR}). 

\subsection{Conventions}
Throughout this paper we'll work with $(\infty,1)$-categories, where between two objects one has a topological space -- or a simplicial set -- of morphisms. We won't use any model-dependent arguments, but to be concrete one may consider the formulation in terms of quasi-categories, which is most extensively developed by Lurie \cite{LurieHTT}. Henceforth, we will usually just say category when we mean an $(\infty,1)$-category, use the word functor to mean a functor of $(\infty,1)$-categories, and a limit for a limit in $(\infty,1)$-categories, and so on, unless otherwise specified. As is usual in the subject, there are a lot of technicalities which must be stated in order to make subtle arguments, most of which we will omit when possible for simplicity.

Also throughout the paper we'll work over the complex number field $\CC$, although most of the formal arguments would proceed under more relaxed hypotheses.

Our main background language is derived algebraic geometry for which we don't offer an extensive exposition. This is justified partially because our arguments are mainly formal, not using any deep result of algebro-geometric content, and also because there are a few great references, for instance due to Gaitsgory \cite{GaitsgoryQC} \cite{GaitsgoryStacks} and To\"en \cite{ToenOverview} \cite{ToenSurvey}. For the reader's convenience, in the appendix we provide a summary of aspects of formal derived algebraic geometry that we take advantage of throughout.

\vspace{-8pt}
\begin{itemize}
\item By a (super) \emph{cdga} $R$ we'll always mean a (super) commutative differential graded algebra over $\CC$. We denote the category of such by $\mr{cdga}$. We also consider the functor (of ordinary categories) $(-)^\natural \colon \mr{cdga} \rightarrow \mr{cdga}$ by $R \mapsto R^\natural$, where $R^\natural$ is the underlying graded commutative algebra obtained after forgetting the differential. We use cohomological grading with respect to which we introduce the full subcategory $\mr{cdga}^{\leq 0} \subset \mr{cdga} $ of cdgas whose cohomology is concentrated in non-positive degrees. We denote the opposite category to $\cdga^{\le 0}$ by $\mr{dAff}$, the category of affine derived schemes, considering an object $R \in \mr{cdga}^{\leq 0}$ as the ring of functions on the space $\spec R$. In particular, a classical affine scheme is an affine derived scheme.

\item By a \emph{derived scheme}, we mean a ringed space $(X,\OO_X)$ where $\OO_X$ is a sheaf valued in $\mr{cdga}^{\leq 0}$ such that $(X, H^0(\OO_X) )$ is a classical scheme and $H^i(\OO_X)$ is a quasi-coherent sheaf over the scheme $(X,H^0(\OO_X))$. By definition, a scheme or an affine derived scheme forms a derived scheme in an obvious manner and a derived scheme yields a classical scheme as its classical truncation $X^{\mr{cl}} := (X,H^0(\OO_X))$. Note that an affine derived scheme could have been defined to be a derived scheme whose classical truncation is an affine scheme. We call the category of derived schemes $\mr{dSch}$. 

\item A \emph{prestack} $\mc X$ is a functor
 \[\mc X \colon \cdga^{\le 0} \to \sset,\]
 where $\sset$ is the category of simplicial sets. A \emph{derived stack} is a prestack satisfying a descent condition with respect to the \'etale topology and we denote the category of derived stacks by $\mr{dSt}$. In particular, any simplicial set provides a constant derived stack, and any derived scheme defines a derived stack by its functor of points.  That is, if $X$ is a derived scheme we define the corresponding derived stack whose $R$-points are the simplicial set whose $i$-simplices are $\hom_{\mr{dSch}}(\spec (R \otimes \Omega^\bullet_{\mr{alg}}(\Delta^i)),X)$, where $\Omega^\bullet_{\mr{alg}}(\Delta^i)$ is the ring of algebraic de Rham forms on the standard $i$-simplex $\Delta^i$.  The reduced part $\mc X^{\mr{red}}$ of a prestack $\mc X$ is the functor $\cring^{\mr{red}} \to \sset$ from reduced commutative rings obtained by the restriction along the functor $\cring^{\mr{red}} \to \mr{cdga}^{\leq 0}$.
 
\item A derived stack is a \emph{derived $0$-Artin stack} if it is an affine derived scheme. A derived stack is a \emph{derived $n$-Artin stack} if it is realized as a colimit over a smooth groupoid of derived $(n-1)$-Artin stacks. A derived stack is called a \emph{derived Artin stack} if it is a derived $n$-Artin stack for some $n$. For arguments involving shifted symplectic structures we'll need to restrict attention to derived Artin stacks which are locally of finite presentation.  This ensures that the cotangent complex is perfect, hence dualizable.  

\item For any two derived stacks $\mc X $, $\mc Y$, one can define the \emph{mapping stack} $\underline{\mr{Map}} (\mc X,\mc Y) \colon \mr{dAff}^{\mr{op}} \rightarrow \sset $ by $U \mapsto \mr{Map}_{\mr{dSt}}(\mc X \times U, Y)$. As an example of a mapping stack, one defines the $k$-shifted tangent space $T[k] \mc X$ of $\mc X$ to be $T[k]\mc X := \underline{\mr{Map}} (\spec \mathbb{C}[\eps] , \mc X )$, where $\eps$ is a parameter of cohomological degree $-k$ with $\eps^2=0$.  As another example, we define the loop space $\LL X: = \underline{\mr{Map}}(S^1_B , X)$, where the \emph{Betti circle} $S^1_B$ is the simplicial set $S^1$ understood as a derived stack. 

\item For a derived stack $\mc X$, one defines its category $\QC(\mathcal{X})$ of \emph{quasi-coherent sheaves} as the limit 
\[\QC(\mathcal{X}) : = \displaystyle\lim_{U \in ( \mr{dAff}_{/\mathcal{X}})^{\mr{op}} } \QC(U)\] 
over the opposite category ($\mr{dAff}_{/\mathcal{X}})^{\mr{op}}$ of the category of affine derived schemes over $\mathcal{X}$, where $\QC(\spec R)$ is defined to be the category $R\text{-mod}$ of dg modules over $R$. Similarly, one defines the category $\perf(\mc X )$ of perfect complexes using finitely generated dg-modules, and the category $\coh(\mc X)$ of coherent sheaves using bounded complexes with coherent cohomology.  Finally, one defines the category $\indcoh(\mc X)$ of ind-coherent sheaves on $\mc X$ as the ind-completion of the category $\coh(\mc X)$. 

\item Every derived Artin stack $\mc X$ admits a \emph{cotangent complex} $\mathbb{L}_{\mc X} \in \QC(\mc X )$ \cite{HAGII}[2.2.3.3]. Since $\mc X$ is assumed to be of locally finite presentation, $\bb L_{\mc X}$ is a perfect complex and hence dualizable, allowing one to define the \emph{tangent complex} $\mathbb{T}_{\mc X} : = \mathbb{L}_{\mc X}^*$.  We can recover this tangent complex from the previously defined notion of the tangent space $T[k]\mc X$ \cite{HAGII}[1.4.1.9].  The shifted tangent complex $\mathbb{T}_{\mc X}[k]$ is obtained as the limit of the objects $T[k]\mc X \times_{\mc X} U$ over all $U \in \mr{dAff}_{/\mc X}$, each of which is affine and finitely generated over $U$ so lies in $\perf(U)$, and therefore the limit defines an object in $\perf (\mc X)$. One can then define the $k$-shifted cotangent stack as the relative spectrum $T^*[k]\mc X :=  \underline{\mr{Spec}}_{\mc X }(\mr{Sym}(\mathbb{T}_{\mc X}[-k] ) )$.

\item For a prestack $\mc X$, we define its \emph{de Rham prestack} $\mc X_{\mr{dR}}$ to be the functor $R \mapsto \mc X(R^{\mr{red}} )$. For a map $\mc X \to \mc Y$ of prestacks, we introduce the \emph{formal completion} $\mc Y_{\mc X}^\wedge$ of $\mc Y$ along $\mc X$ defined by $\mc Y_{\mc X} ^\wedge := \mc X_{\mr{dR}} \times_{\mc Y_{\mr{dR}} } \mc Y$. Note that one recovers the usual notion when $\mc X \rightarrow \mc Y$ is a closed immersion of ordinary schemes, justifying the name. If $\mc Y = \mr{pt}$, then one obtains the de Rham prestack $\mc X_{\mr{dR}}$. If $\mc Y = T^*[k] \mc X$ is the $k$-shifted cotangent stack, then we set $T^*_{\mr{form}}[k] \mc X : = (T^*[k] \mc X)_{\mc X }^\wedge$ for the formal neighborhood of $\mc X$ inside $T^* [k] \mc X$. 

\item A \emph{inf-scheme} \cite[Chapter 2.3]{GR} is a prestack $\mc X$ whose reduced part $\mc X^{\mr{red}}$ is a reduced scheme, and which \emph{admits deformation theory} in the sense of \cite[Chapter 1.7]{GR} (in particular derived Artin stacks locally of finite presentation admit deformation theory).  A morphism $\mc X \to \mc Y$ of prestacks is \emph{inf-schematic} if the base change $\mc X \times_{\mc Y} \spec R$ by any affine derived scheme is an inf-scheme. For instance, any map of prestacks $\mc X \to \mc Y$ induces an inf-schematic map $\mc X \to \mc Y_{\mc X}^\wedge$.
\end{itemize}

\subsection*{Acknowledgements}
We would like to thank Kevin Costello for many productive discussions throughout this work, and for teaching us so much about classical and quantum field theory.  We would also like to thank David Ben-Zvi, Damien Calaque, Saul Glasman, Ryan Grady, Michael Gr\"ochenig, Owen Gwilliam, Aron Heleodoro, Theo Johnson-Freyd, Kobi Kremnitzer, David Nadler, Tony Pantev, Mihnea Popa, Nick Rozenblyum, Pavel Safronov, Claudia Scheimbauer, Brian Williams, Edward Witten, and Peng Zhou for many helpful discussions, and especially Thel Seraphim for a great number of ideas and insightful comments. We are also grateful to Richard Hughes and Anton Zeitlin for their comments and corrections to an earlier version.

This research was supported in part by Perimeter Institute for Theoretical Physics. Research at Perimeter Institute is supported by the Government of Canada through Industry Canada and by the Province of Ontario through the Ministry of Economic Development \& Innovation.

\section{Classical $N=4$ Theories and their Twists} \label{main_theory_section}
In this section we'll discuss the foundational constructions of supersymmetric gauge theories, and the general formalism of ``twisting'' supersymmetric theories.  For simplicity, from Section \ref{supercharge_section} onwards we'll stick to considering 4-dimensional theories in Riemannian signature, but many of the constructions we discuss (particularly those purely algebraic constructions involving supersymmetry algebras) have natural analogues in other dimensions.  For instance, the construction of $N=4$ supersymmetric gauge theories in four-dimensions by dimensional reduction fits into a natural family of constructions using the theory of normed division algebras.  This is beautifully explained by Anastasiou, Borsten et al \cite{ABDHN}.  Throughout this section we'll refer to Appendix \ref{SUSY_appendix} for general constructions with supersymmetry algebras.

\subsection{Holomorphic and Topological Twists} \label{twist_susy_algebra_section}
The idea of a \emph{twist} of a supersymmetry algebra, or of a supersymmetric field theory, originated in \cite{WittenTQFT} as a procedure for constructing topological ``sectors'' of general supersymmetric field theories, but one can make sense of twists in much greater generality. One can form a twist of a supersymmetry algebra $\mc A$ -- and a twist of a theory on which it acts -- from any supercharge $Q$ (i.e. fermionic element of the supersymmetry algebra) such that $[Q,Q] = 0$.  The definition of the twisted supersymmetry algebra is straightforward.  

Let $\mc A$ be the complexified supersymmetry algebra in dimension $n$ associated to a spinorial complex representation $\Sigma$ of $\spin(n)$, a non-degenerate pairing $\Gamma \colon \Sigma \otimes \Sigma \to V_\CC$ where $V_\CC$ is the $n$-dimensional vector representation, and a subalgebra $\gg_R$ of R-symmetries.  The example that we'll be most concerned with is the 4d supersymmetry algebra associated to a finite-dimensional complex vector space $W$, given by
\[\mc A^W = (\so(4;\CC) \ltimes \CC^4) \oplus \gg_R \oplus \Pi((S_+ \otimes W) \oplus (S_- \otimes W^*))\]
where $\gg_R = \sl(4;\CC)$, as described in Appendix \ref{SUSY_appendix}.

\begin{definition}
The \emph{twisted supersymmetry algebra} associated to a fermionic element $Q \in \mc A$ with $[Q,Q] = 0$ is the cohomology of $\mc A$ with respect to the differential $[Q,-]$.
\end{definition}

A more subtle notion is that of a twist of a supersymmetric field theory, which should be thought of as the derived $Q$-invariants of the original theory, admitting an action of the twisted supersymmetry algebra.  Such twisted theories inherit properties (invariance under certain natural symmetries) from properties of the supercharge $Q$.  We'll discuss two such properties: topological and holomorphic invariance. 

Perhaps the most important types of twist are \emph{topological twists}.  In the literature, these are defined as coming from supercharges $Q \in \Pi (\Sigma)$ which are $\mr{Spin}(n)$-invariant.  Of course, there are generally no such $Q$; for instance in 4 dimensions the odd part of the $N=1$ supersymmetry algebra decomposes as a sum of irreducible two-complex dimensional $\spin(4)$-representations.  However, it suffices to find $Q$ that is $\spin(n)$-invariant \emph{after} modifying the action of the complexified rotations $\so(n;\CC)$ on the space of supercharges.  Let's make this more precise by first giving a more natural definition, then showing why the above notion implies the more natural condition.
\begin{definition}
A supercharge $Q$ with $[Q,Q]=0$ is called \emph{topological} if the map
\[[Q,-] \colon \Sigma \to V_\CC\]
is surjective. 
\end{definition}

\begin{remark}
The above definition also makes sense for theories with an action of an uncomplexified supersymmetry algebra.  A real supercharge $Q$ is likewise called topological if the map $[Q,-]$ is surjective onto the space $V_\RR$ of real translations.
\end{remark}

We'll see shortly that this implies that all translations act trivially on $Q$-twisted theories for a topological supercharge $Q$. Now, let's recover the classical notion of a topological twist.  If $\phi \colon \so(n;\CC) \to \gg_R$ is a Lie algebra homomorphism, we can define a $\phi$-twisted action of $\so(n;\CC)$ on $\Sigma$.  Indeed, $\Sigma$ always takes the form $S \otimes W$ (in odd dimensions), or $S_+ \otimes W_1 \oplus S_- \otimes W_2$ (in even dimensions) where $W, W_1$ and $W_2$ are finite-dimensional vector spaces acted on by the R-symmetries.  With this in mind we define the twisted action of $X \in \so(n;\CC)$ by
\begin{align*}
&X(s \otimes w) = X(s) \otimes \phi(X)(w) \\
\text{or } &X(s_+ \otimes w_1 + s_- \otimes w_2) = X(s_+) \otimes \phi(X)(w_1) + X(s_-) \otimes \phi(X)^*(w_2) 
\end{align*}
depending on the dimension.

\begin{prop} \label{top_twist_condition}
Let $Q$ be a non-zero supercharge in $n$ dimensions such that $[Q,Q]=0$, and such that there exists a homomorphism $\phi \colon \so(n;\CC) \to \gg_R$ making $Q$ invariant under the $\phi$-twisted action of $\so(n;\CC)$.  Then $Q$ is topological.
\end{prop}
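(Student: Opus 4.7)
The strategy is to exploit the hypothesis that $Q$ becomes $\so(n;\CC)$-invariant after twisting the R-symmetry action, combined with the irreducibility of the vector representation $V_\CC$. Concretely, I will show that $[Q,-]\colon \Sigma \to V_\CC$ is a map of $\so(n;\CC)$-representations with respect to the $\phi$-twisted action on $\Sigma$ and the standard action on $V_\CC$, and then use Schur-type reasoning: the image is an $\so(n;\CC)$-subrepresentation of an irreducible, and it is nonzero by non-degeneracy of the spin-bracket $\Gamma$, so it must be all of $V_\CC$.

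First I would check the equivariance. Since $\phi$ lands in the R-symmetry subalgebra $\gg_R$, which by construction commutes with translations, the $\phi$-twist modifies the action of $\so(n;\CC)$ only on $\Sigma$, not on $V_\CC$. The pairing $\Gamma\colon \Sigma \otimes \Sigma \to V_\CC$ is equivariant for the standard (untwisted) action on both sides, but because $\phi(X)$ and $X$ differ only by an element of $\gg_R$ acting trivially on $V_\CC$, the pairing remains equivariant when $\Sigma$ carries the $\phi$-twisted action. Now for $X \in \so(n;\CC)$ and $s \in \Sigma$, writing $X \cdot_\phi$ for the twisted action, we have
\[
[Q, X \cdot_\phi s] \;=\; X \cdot [Q,s] \;-\; [X \cdot_\phi Q, s] \;=\; X \cdot [Q,s],
\]
since $X \cdot_\phi Q = 0$ by hypothesis. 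Hence $[Q,-]$ intertwines the twisted action on $\Sigma$ with the standard action on $V_\CC$.

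Next I would show the image is nonzero. Non-degeneracy of $\Gamma$ means the induced map $\Sigma \to \Hom(\Sigma, V_\CC)$ is injective, so $Q \neq 0$ implies $\Gamma(Q,-)\colon \Sigma \to V_\CC$ is a nonzero linear map; that is, there exists $s \in \Sigma$ with $[Q,s] = \Gamma(Q,s) \neq 0$. Finally, the complexified vector representation $V_\CC$ of $\so(n;\CC)$ is irreducible for $n \geq 3$ (and in the $n=4$ case relevant to the paper, it is the irreducible $(V_1 \otimes V_2)$ tensor product of the standard representations of the two $\sl(2;\CC)$ factors). The image of $[Q,-]$ is therefore a nonzero subrepresentation of an irreducible representation, forcing it to equal $V_\CC$. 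This gives surjectivity, i.e.\ $Q$ is topological.

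The only delicate points are the equivariance step (which requires one to be careful about the fact that twisting by an R-symmetry really does preserve equivariance with respect to $V_\CC$) and checking that irreducibility of $V_\CC$ is available in the relevant dimensions. Neither is a serious obstacle: the first follows immediately from $[\gg_R, V_\CC] = 0$, and the second holds in all dimensions of physical interest. No further structural input is needed.
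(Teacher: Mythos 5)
Your proof is correct and follows essentially the same route as the paper's: pass to the $\phi$-twisted algebra, observe that invariance of $Q$ makes $[Q,-]$ equivariant so its image is a subrepresentation of the irreducible $V_\CC$, and rule out the zero image using non-degeneracy of $\Gamma$. Your explicit verification of the equivariance step is a welcome elaboration of what the paper leaves implicit, but it is not a different argument.
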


\begin{proof}
We can replace the supersymmetry algebra with the supersymmetry algebra \emph{twisted by $\phi$}, with brackets modified as follows:
\vspace{-10pt}
\begin{itemize}
 \item The rotations $\so(n;\CC)$ act on $\Sigma$ according to the $\phi$-twisted action.
 \item Rotations bracket with elements of $\gg_R$ as their image under the embedding $\phi$.
\end{itemize}
\vspace{-10pt}
The bracket of two odd elements is unchanged, so it suffices to check that $Q$ is topological in this twisted algebra.  In this algebra, since $Q$ spans an irreducible $\so(n;\CC)$ representation, the image of $[Q,-]$ in $V_\CC$ should be itself an irreducible subrepresentation, so either 0 or $V_\CC$ itself.  Since the pairing $\Gamma$ is non-degenerate, the map $[Q,-]$ is never 0 when $Q \ne 0$, so its image is all of $V_\CC$ as required.
\end{proof}

\begin{remark}
The converse to this proposition is false in general.  For a counterexample, we consider the case of the $N=1$ supersymmetry algebra in dimension $n=8$, where the positive helicity Weyl spinor representation is related to the vector representation by triality (i.e. by precomposing by an outer automorphism of $\so(8;\CC)$).  The R-symmetry group is just $\CC^\times$, so twisting homomorphisms are just characters, and we observe that there are no non-zero invariant vectors for the vector representation of $\so(8;\CC)$ twisted by a character, and similarly for the twisted Weyl spinor representation.  However, there \emph{are} topological supercharges in the positive Weyl spinor representation in dimension 8.  In dimension 8 any Weyl spinor $Q_+$ pairs with itself to 0 under the $\Gamma$-pairing, and if $Q_+$ is not \emph{pure} -- i.e. if its nullspace in $\CC^8$ under Clifford multiplication is not of dimension 4 -- then the map $\Gamma(Q_+, -) \colon S_{8-} \to \CC^8$ is surjective. 
\end{remark}

\begin{remark}
In dimension 4 -- the case we'll principally be interested in in this paper -- there is a classification of twisting homomorphisms $\phi$ that yield topological twists by this procedure \cite{Lozano}.  We'll investigate twists coming from the so-called ``Kapustin--Witten'' twisting homomorphism, which we'll define at the beginning of the next subsection.
\end{remark}

The notion of a topological twist suggests a natural definition for a \emph{holomorphic} twist.  We should ask the image of the map $[Q,-]$ from the odd to the even part of the supersymmetry algebra to contain \emph{exactly half} of all translations.  In order for this to make sense, suppose $n$ is even.
\begin{definition}
A supercharge $Q$ with $[Q,Q]=0$ is called \emph{holomorphic} if there exists a $\CC$-linear isomorphism between $V_\CC$ and $\CC^{n/2} \otimes_\RR \CC$ such that the image of $[Q,-]$ in $V_\CC$ spans the holomorphic subspace $\RR^{n/2} \otimes_\RR \CC$.  
\end{definition}
To put it another way, $Q$ is holomorphic if we can choose a splitting of the algebra of translations into holomorphic and anti-holomorphic directions such that the image of $[Q,-]$ is precisely the anti-holomorphic piece.  There's a natural procedure for constructing holomorphic twists analogous to the procedure for topological twists above, which is straightforward to describe in four dimensions.  The procedure depends on a choice of embedding $\SU(2) \to \SU(2)_+ \times \SU(2)_-$, or on the level of complexified Lie algebras $\so(3;\CC) \to \so(4;\CC)$.  This defines an action of $\SU(2)$ on $V_\CC$ by restricting the tensor product action on $S_+ \otimes S_-$, and thus a subspace of $V_\CC$ by taking invariant vectors.  We want this to give a real subspace (i.e. a half-dimensional subspace), so we must restrict attention to the inclusions $\iota_1$ and $\iota_2$ of the two factors.

\begin{prop} \label{holo_twist_condition}
Let $Q$ be a non-zero supercharge $Q$ with $[Q,Q]=0$, and suppose there exists a homomorphism $\phi \colon \so(4;\CC) \to \gg_R$ making $Q$ invariant under the $\phi$-twisted action of $\iota_i(\so(3;\CC))$, where $i = 1$ or 2.  Then $Q$ is either a holomorphic or a topological twist.
\end{prop}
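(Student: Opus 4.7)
The plan is to analyze the image of the map $[Q,-] \colon \Sigma \to V_\CC$ as a representation of the subalgebra $\iota_i(\so(3;\CC))$, using the $\phi$-twisted action on $\Sigma$. First I would check that this map is $\iota_i(\so(3;\CC))$-equivariant, with the source carrying the twisted action and the target carrying the ordinary action on translations. This reduces to observing that the R-symmetry contribution $\phi(X)$ to the twisted action drops out of the bracket, since
\[\Gamma(\phi(X)\cdot s_1, s_2) + \Gamma(s_1, \phi(X)\cdot s_2) = \phi(X)\cdot \Gamma(s_1,s_2) = 0\]
by $\gg_R$-equivariance of $\Gamma$ together with the fact that R-symmetries act trivially on $V_\CC$. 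Combined with the $\iota_i(\so(3;\CC))$-invariance of $Q$, this forces the image of $[Q,-]$ to be an $\iota_i(\so(3;\CC))$-subrepresentation of $V_\CC$.

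The next step is a classification of these subrepresentations. Without loss of generality take $i=1$, so that $\iota_1(\so(3;\CC))$ acts as the standard two-dimensional representation on $S_+$ and trivially on $S_-$. Then $V_\CC = S_+ \otimes S_-$ is a sum of copies of the standard representation with multiplicity space $S_-$, and its subrepresentations are precisely the subspaces $S_+ \otimes U$ for $U \subseteq S_-$. The case $U = 0$ is ruled out by non-degeneracy of $\Gamma$: since $Q \ne 0$, there is some $s \in \Sigma$ with $\Gamma(Q,s) \ne 0$, so $[Q,-]$ is not identically zero. The case $U = S_-$ produces the full translation space, which is exactly the definition of a topological supercharge. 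The remaining case is $U = L$ for some line $L \subset S_-$.

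The conceptual core, and the step I expect to be the main obstacle, is the identification of this middle case with a holomorphic twist, i.e.\ recognising $S_+ \otimes L$ as the antiholomorphic subspace of $V_\CC$ for some complex structure on $V = \RR^4$. For this I would invoke the classical description of compatible orthogonal complex structures on oriented Euclidean $\RR^4$: they are parametrised by the disjoint union $\bb{P}(S_-) \sqcup \bb{P}(S_+)$ of the projectivised half-spin representations, and for each line $L \subset S_-$ the complementary splitting $S_- = L \oplus L'$ produces the decomposition $V_\CC = (S_+ \otimes L) \oplus (S_+ \otimes L')$ as the $(0,1) \oplus (1,0)$ splitting of the associated complex structure on $V$. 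Once this identification is in place the proposition follows, and the $i=2$ case proceeds identically with the roles of the two chirality spaces interchanged, using complex structures of the opposite duality.
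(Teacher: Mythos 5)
Your argument is correct and its skeleton coincides with the paper's: twist the algebra so that $[Q,-]\colon \Sigma \to V_\CC$ becomes $\iota_i(\so(3;\CC))$-equivariant (the paper phrases this as passing to the $\phi$-twisted supersymmetry algebra, in which the odd--odd bracket is unchanged; your direct verification that the $\phi(X)$ contribution cancels against the $\gg_R$-invariance of $\Gamma$ is the same fact), decompose $V_\CC$ as a $\iota_i(\so(3;\CC))$-module into two copies of the standard two-dimensional representation, conclude the image is $0$, half-, or full-dimensional, and rule out $0$ by non-degeneracy of $\Gamma$. Your description of the subrepresentations as $S_+\otimes U$ for $U\subseteq S_-$ via the multiplicity space is a slightly cleaner account of why ``half-dimensional'' is the only intermediate option (there is a whole $\PP(S_-)$ of such subrepresentations, not just the two visible summands, but they are all half-dimensional, so the paper's conclusion stands).

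Where you go beyond the paper is the last step. The paper stops at ``half-dimensional, hence holomorphic,'' which suffices for the literal definition given there (any two-dimensional subspace of $V_\CC$ can be carried to $\RR^{2}\otimes_\RR\CC$ by some $\CC$-linear isomorphism). You instead verify the stronger gloss that the image is the antiholomorphic subspace of an honest complex structure on $V_\RR=\RR^4$, using the parametrization of compatible complex structures by $\PP(S_-)\sqcup\PP(S_+)$. That is a worthwhile addition, since it is the version of ``holomorphic'' actually used later when the twisted theories are placed on complex surfaces. One small imprecision: for $S_+\otimes L$ to be the $(0,1)$-part of a complex structure on the \emph{real} vector space $V_\RR$ you need the complementary line to be $L' = \ol L = j_-(L)$, where $j_-$ is the quaternionic structure on $S_-$ inducing the real structure on $V_\CC = S_+\otimes S_-$ --- not an arbitrary complement of $L$. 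Since $j_-$ is antilinear with $j_-^2=-1$ it fixes no complex line, so $L\cap j_-(L)=0$ and the required splitting always exists; with that adjustment your identification is complete.
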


\begin{proof}
This is very similar to the proof of Proposition \ref{top_twist_condition} above. Again we can replace the supersymmetry algebra by its $\phi$-twisted version, but now the image of $[Q,-]$ in the translations is a $\iota_i(\so(3;\CC))$-subrepresentation of $V_\CC$.  As a module for this algebra $V_\CC$ decomposes as the sum of two two-dimensional irreducible representations.  Thus the image of $[Q,-]$ is zero, half-dimensional, or full-dimensional.  As before, non-degeneracy of $\Gamma$ ensures that it's non-zero, so $Q$ is either holomorphic or topological.
\end{proof}

\subsubsection{Twists of the $N=4$ Supersymmetry Algebra} \label{supercharge_section}
For most of the rest of this paper, we'll specialize to the 4-dimensional setting and the case where $W = \CC^4$, i.e. to $N=4$ supersymmetry.  We'll take the R-symmetry algebra to be $\gg_R = \sl(4;\CC) \sub \gl(4;\CC)$; this is the R-symmetry algebra that'll act on supersymmetric gauge theories, since the theories we'll define will require fixing a choice of trivialization of $\det\CC^4$.  We'll consider several holomorphic and topological twists of an $N=4$ supersymmetric gauge theory, so let's discuss these twists at the level of the supersymmetry algebra
\[\mc A^{N=4} = \left( \so(4;\CC) \oplus \gg_R \oplus V_\CC \right) \oplus \Pi \left( S_+ \otimes W \oplus S_- \otimes W^* \right)\]
where $W = \CC^4$, and where $\gg_R$ acts on $W$ by its fundamental representation.

We'll first analyse a family of holomorphic twists of this supersymmetry algebra.  We'll fix a particular twisting homomorphism $\phi$, the \emph{Kapustin--Witten twist}, defined to be the composite
\[\phi_{\text{KW}} \colon \so(4;\CC) \iso \sl(2;\CC) \oplus \sl(2;\CC) \to \sl(4;\CC)\]
where the first map is the exceptional isomorphism in dimension 4, and the second map is the block diagonal embedding. We'll get a space of holomorphic supercharges for each factor of $\SU(2)_+ \times \SU(2)_-$, which we'll describe concretely.  Choose a $\CC$-basis for the space of supercharges by choosing bases for its constituent pieces as follows:
\begin{align*}
S_+ &= \langle \alpha_1 , \alpha_2 \rangle \\
S_- &= \langle \alpha_1^\vee, \alpha_2^\vee \rangle \\
W   &= \langle e_1, e_2, f_1, f_2 \rangle \\
W^* &= \langle e_1^*, e_2^*, f_1^*, f_2^* \rangle 
\end{align*}
where $\so(4;\CC)$ acts on $W$ via the $\phi_{KW}$-twist so that $\{e_i\}$ and $\{f_i\}$ are bases for the two semispin factors (i.e. the summands on which $\SU(2)_+$ and $\SU(2)_-$ act), and where the basis given for $W^*$ is the dual basis to the one for $W$.  Tensor products of basis elements yield a basis for $S_+ \otimes W \oplus S_- \otimes W^*$.  Consider the embedding $\iota_2 \colon \SU(2) \to \SU(2)_+ \times \SU(2)-$ by inclusion of the second factor.  The resulting invariant supercharges are those in $S_+ \otimes \langle e_1, e_2 \rangle$.  From now on we'll fix a reference holomorphic supercharge
\[Q_{\text{hol}} = \alpha_1 \otimes e_1.\]

Now, let's compute the $Q_{\text{hol}}$-cohomology of the $N=4$ supersymmetry algebra; that is, the cohomology of the cochain complex
\[\xymatrix{
\so(4;\CC) \oplus \gg_R \ar[r]^(.38){[Q_{\text{hol}},-]} &\Pi(S_+ \otimes W \oplus S_- \otimes W^*)\ar[r]^(.75){[Q_{\text{hol}},-]} &V_\CC
}.\]
Consider the terms sequentially.  
\vspace{-9pt}
\begin{itemize}
 \item In the translation term we expect to find a half-dimensional family of ``anti-holomorphic'' translations as the cokernel of $[Q_{\text{hol}},-]$.  Indeed, the image in the translations is the span of $\Gamma(\alpha_1, \alpha_1^\vee)$ and $\Gamma(\alpha_1, \alpha_2^\vee)$, which are linearly independent.  From now on we'll work in coordinates on $V_\CC$ defined by
 \[ \frac \dd{\dd \ol z_i} = \Gamma(\alpha_1, \alpha_i^\vee), \frac \dd{\dd z_i} = \Gamma(\alpha_2, \alpha_i^\vee).\]
 \item In the remaining bosonic term, the kernel of $[Q_{\text{hol}},-]$ is spanned by the subgroup of elements $(A,R) \in \so(4;\CC) \oplus \sl(4;\CC)$ such that $A(\alpha_1) = c\alpha_1$ and $R(e_1) = -ce_1$ for some $c \in \CC$. This subgroup is isomorphic to $\sl(2;\CC)_- \oplus \mf p$ where $\mf p$ is isomorphic to a parabolic subalgebra of $\sl(4;\CC)$ with Levi subalgebra $\sl(3;\CC)$.
 \item In the fermionic term, consider the two summands separately.  First look at $S_+ \otimes W$.  These elements are all $[Q_{\text{hol}},-]$-closed, and the exact elements are just the five-dimensional subspace generated by $S_+ \otimes \langle e_1 \rangle$ and $\langle \alpha_1 \rangle \otimes W$, leaving
 \[\langle \alpha_2 \otimes e_2, \alpha_2 \otimes f_1, \alpha_2 \otimes f_2\rangle\]
 as the cohomology.  Finally, look at $S_- \otimes W^*$.  There are no exact elements in this subspace, and the closed elements are given by
 \[S_- \otimes \langle e_2^*, f_1^*, f_2^* \rangle.\]
\end{itemize}
So overall, the twisted supersymmetry algebra has form
\[\left(\so(3;\CC) \oplus \mf p \oplus \left\langle \frac \dd{\dd z_1},\frac \dd{\dd z_2} \right\rangle \right) \oplus \Pi\bigg(\langle \alpha_2 \otimes e_2, \alpha_2 \otimes f_1, \alpha_2 \otimes f_2\rangle \oplus S_- \otimes \langle e_2^*, f_1^*, f_2^* \rangle \bigg)\]
where $\so(3;\CC)$ acts on $S_-$ by its spin representation, and $\sl(3;\CC) \sub \mf p$ acts on $\langle e_2, f_1, f_2 \rangle$ and its dual space by the fundamental and anti-fundamental representations respectively.

Now, the twists we'll really be concerned with will all be \emph{further twists} of such a holomorphic twist.  That is, they'll be determined by supercharges $Q = Q_{\text{hol}} + Q'$ where $Q'$ commutes with $Q_{\text{hol}}$ but is not obtained from $Q_{\text{hol}}$ by the action of some symmetry, so survives in the $Q_{\text{hol}}$ twist.  All such supercharges are holomorphic or stronger (i.e. at least half the translations are $Q$-exact); indeed, the image of $[Q,-]$ in $V_\CC$ contains the image of $[Q_{\text{hol}},-]$. 

\begin{remark} \label{remark_successive_twisting}
For our further twists, we have an isomorphism $H^\bullet(\mc A^{N=4}; Q_{\text{hol}} + Q') \iso H^\bullet(H^\bullet(\mc A^{N=4}; Q_{\text{hol}});Q')$.  This is clear for $Q'$ contained entirely in the $S_-$ summand of space of supersymmetries, this follows from the degeneration of the spectral sequence of the double complex for $\mc A^{N=4}$ where $S_+$ is placed in bidegree $(1,0)$ and $S_-$ is placed in bidegree $(0,1)$.  If instead $Q'$ is contained entirely in the $S_+$ summand, the complexes $(\mc A^{N=4}, Q_{\mr{hol}} + Q')$ and $(H^\bullet(\mc A^{N=4}, Q_{\mr{hol}}), Q')$ in degrees 0, 1 and 2 split as the sum of two two-step complexes.  The claim follows for further twists of form $Q' = \alpha_2 \otimes w$ where $w \in W$ by examining the cohomology of each of these two-step complexes.
\end{remark}

We'll investigate which such supercharges $Q$ are topological.  First let's describe those supercharges which are compatible with the twisting homomorphism $\phi_{KW}$ above independently of the holomorphic twist.  After turning on the twisting homomorphism we can identify $S_+ \otimes W \oplus S_- \otimes W^*$ with
\[S_+ \otimes (S_{+R} \oplus S_{-R}) \oplus S_- \otimes (S_{+R} \oplus S_{-R})\]
where $S_{+R}$ is spanned by $e_i$ in the first summand and $e_i^*$ in the second summand and likewise $S_{-R}$ is spanned by $f_i$ and $f_i^*$.  The compatible topological supercharges, i.e. those elements that are invariant for the twisted $\SO(4)$ action, have the form
\[\alpha(\alpha_1 \otimes e_1 - \alpha_2 \otimes e_2) + \beta(\alpha_1^\vee \otimes f_1^* - \alpha_2^\vee \otimes f_2^*)\]
for any pair $(\alpha, \beta)$ of complex numbers.  These supercharges are further twists of $Q_{\mr{hol}}$ exactly when $\alpha = 1$.

This $\CC$-family extends to a natural $\bb{CP}^1$-family of topological further twists of the holomorphic twist, with a special point at infinity which is no longer compatible with the full Kapustin--Witten twisting homomorphism.  Consider the $\bb{CP}^1$-family of supercharges
\[Q_{(\mu:\nu )} = Q_{\text{hol}} + (\mu(\alpha_1^\vee \otimes f_1^* - \alpha_2^\vee \otimes f_2^*) + \nu(\alpha_2 \otimes e_2) ), \text{ for } (\mu:\nu) \in \bb{CP}^1.\]
Here simultaneous rescaling of the two-parameter family over $(\mu,\nu) \in \CC^2$ does not change the affect the further twist obtained from an element of the $Q_{\mr{hol}}$-cohomology of the supersymmetry algebra.  These twists are all compatible with the twisting homomorphism $\phi_{KW}$ with the exception of the special point $(1:0)$.  One can verify that this twist is compatible not with a twisting homomorphism from the full group $\mr{Spin}(4)$ of rotations, but only with subgroups thereof, for example with twisting homomorphisms from the group $\mr{U}(2)$.  It is therefore only natural to describe the associated twisted theory on oriented 4-manifolds that admit a complex structure.   We call this family the \emph{Kapustin--Witten family of topological twists}.  We'll be most interested in the cases where $(\mu:\nu) = (0:1)$ and $(1:0)$.  We call these twists the \emph{A-twist} $Q_A$ and the \emph{B-twist} $Q_B$ respectively. 

We note that as a result of considering only further twists of $Q_{\mr{hol}}$ this $\bb{CP}^1$-family does not coincide (specifically at $\infty$) with the family of twists considered in \cite{KW}.  However we'll show that this family does indeed produce the expected B-model after twisting at the point $Q_B$. 

One can alternatively fit $Q_B$ into a larger family of twists that is closed under S-duality. First, we introduce a family of \emph{holomorphic-topological} supercharges $Q_{\text{hol}} - \lambda(\alpha_2^\vee \otimes f_2^*)$, so called because we think of them as being holomorphic in two real dimensions -- i.e. one complex dimension -- and topological in the remaining two.  In other words a \emph{three-dimensional} family of translations will be exact for the action of these supercharges. Holomorphic-topological twists of this form were originally studied by Kapustin \cite{KapustinHolo}.  We can then define a family of topological supercharges approximating $Q_A$ by
\[Q_\lambda = Q_{\text{hol}} - \lambda(\alpha_2^\vee \otimes f_2^*) + (\alpha_2 \otimes e_2) \]
for each $\lambda \in \CC$.  These are A-type deformations of the holomorphic-topological twists $Q_{\text{hol}} - \lambda(\alpha_2^\vee \otimes f_2^*)$ which converge to $Q_A$ as $\lambda \to 0$. Note that $Q_B$ can also be realized as a deformation of $Q_{\text{hol}} - \lambda(\alpha_2^\vee \otimes f_2^*)$. The task of investigating the family of \emph{quantum} field theories obtained by twisting by these supercharges remains to be addressed in future work. 
 
\subsubsection{Superspace Formalism} \label{superspace_section}

The above formalism will allow us to define the action of a supersymmetry algebra on certain theories over $\RR^4$, and to produce topologically and holomorphically twisted versions with desirable symmetry properties.  However, it'll be important for us to generalize these theories to theories defined on more general manifolds than $\RR^4$.  We'll do this by \emph{globalizing} the twisted supersymmetry algebras, i.e. realizing them as acting locally on the total spaces of certain \emph{super vector bundles} over our manifolds by infinitesimal symmetries.  To set up this so-called ``superspace formalism'' we'll need some language from supergeometry.  By a \emph{super-ring}, we'll just mean a $\ZZ/2\ZZ$-graded commutative ring.  We'll consider suitable ``superspaces'' whose local functions form such a super-ring.

\begin{definition}
A \textit{supermanifold} of dimension $n|m$ is a ringed space $(M,C^\infty_M)$ which is locally isomorphic to $(\RR^{n} , C^\infty(\RR^n;\CC) [\eps_1,\ldots,\eps_m])$, where the $\eps_i$ are odd variables.
\end{definition}

\begin{remark}
Note that we're defining a supermanifold to have a structure sheaf consisting of \emph{complex valued} functions.  Such an object is sometimes called a \emph{complex supersymmetric} (or \emph{cs}) supermanifold, for instance by Witten \cite{WittenSUSY}.
\end{remark}

A typical example of the kind of supermanifold we are going to consider is the total space of an odd vector bundle.  We can define this as follows.

\begin{example}
Let $M$ be a real manifold, and let $E$ be a complex vector bundle on $M$. Then we define a supermanifold $(\Pi E , C^\infty_{\Pi E})$ by setting $C^\infty_{\Pi E}(U) = C^\infty(U,\wedge^\bullet E^*)$ for each open set $U \subset M$. In particular, if $E = T_M$ is the tangent bundle of $M$, then the sheaf of smooth functions on $U \subset \Pi E$ is the space $\Omega^\bullet(U)$ of smooth differential forms on $U$.  Supermanifolds diffeomorphic to a supermanifold of this form are called \emph{split}. 
\end{example}

We can also define an algebraic analogue. 

\begin{definition}
A \textit{supervariety} of dimension $n|m$ is a ringed space $(X,\OO_X)$ which is locally isomorphic to $(\spec R , R[\eps_1,\cdots,\eps_m ] )$ for a reduced $\CC$-algebra $R$ of Krull dimension $n$. 
\end{definition}

Note that every smooth supervariety naturally yields a supermanifold.  Our vector bundle example still makes sense in an algebraic sense.

\begin{example}
Let $X$ be a smooth complex algebraic variety and $E$ be an algebraic vector bundle on $X$. Then we define a supervariety $(\Pi E , \OO_{\Pi E})$ by setting $\OO_{\Pi E}(U) = \OO( U ,\wedge^\bullet E^*)$. Supervarieties isomorphic to the ones of this form are called \emph{split} supervarieties.
\end{example}

\begin{remark}
There is a fundamental difference between the smooth and algebraic settings.  In the smooth setting, a theorem of Batchelor \cite{Batchelor} says that all supermanifolds are split.  In the complex algebraic setting this is very much not true, and there are many non-split supervarieties.  Luckily, all the examples we'll need to deal with in what follows will be split, so this subtlety will not play a role.
\end{remark}

\begin{example}
An example of a natural supervariety of this form is the complex \emph{super projective space} $\bb{CP}^{n|m}$, modelling the quotient of the supermanifold $\CC^{n+1|m} \bs \{0\}$ under the action of $\CC^\times$ by rescaling.  Concretely, $\bb{CP}^{n|m}$ is the total space of the odd algebraic vector bundle $\Pi(\OO(1) \otimes \CC^m)$ over $\bb{CP}^n$, as one can readily check by analysing the transition functions for the odd coordinates between affine charts.
\end{example}

If we want to do calculus on supermanifolds, we need an analogue of the \emph{canonical bundle} for a supermanifold.  
\begin{definition}
For a split supermanifold $\Pi E$ for $E \rightarrow M$, we define the \emph{Berezinian} to be the super vector bundle $\mr{Ber}_{\Pi E} = \det(T^*_M \oplus E^*)$ over $\Pi E$.  Similarly, for a split supervariety $\Pi E$ for $E \rightarrow X$, we define the Berezinian to be the algebraic super vector bundle $\mr{Ber}_{\Pi E} = \det(T^*_{X} \oplus E^*)$ over $\Pi E$, where $T_{X}$ denotes the algebraic tangent bundle of $X$.
\end{definition}

\begin{example}
Let $\Sigma$ be a smooth curve and $L$ be a line bundle over $\Sigma$. For the supervariety $\Pi L$ over $\Sigma$ with projection map $p \colon \Pi L \rightarrow \Sigma$, its Berezinian is the bundle $\mr{Ber}_{\Pi L} = p^*(K_\Sigma \otimes L^* )$ on $\Pi L$. 
\end{example}

\begin{definition} \label{CY_def}
A \emph{Calabi--Yau structure} on a supervariety $X$ is a trivialization of the Berezinian, i.e. a complex vector bundle isomorphism from $\mr{Ber}_X$ to the trivial bundle.
\end{definition}

Now let us globalize the Kapustin--Witten family of topological twists in the language of supergeometry.  To do this, we'll find an odd vector bundle $\Pi E$ over $\CC^2$ and an action of the $Q_{\text{hol}}$-cohomology of the supersymmetry algebra on $\Pi E$ extending the natural action of the bosonic symmetries $\so(3;\CC) \oplus \left\langle \frac \dd{\dd z_1},\frac \dd{\dd z_2} \right\rangle$.  Since the space of odd symmetries is 9-dimensional, a natural choice for $\Pi E$ is the superspace $\CC^{2|3} \to \CC^2$ (which has a 9-dimensional space of odd vector fields).  Choose coordinates $(z_1,z_2,\varepsilon,\varepsilon_1,\varepsilon_2)$ for this superspace, where the complexified rotations $\so(3;\CC)$ act on the bosonic coordinates by its spin representation, and the R-symmetries $\sl(3;\CC)$ act on the fermionic coordinates.  In these coordinates, we define the action of the supersymmetries by the following odd vector fields.
\begin{align*}
\alpha_2\otimes e_2 &= \delby{\varepsilon} \\
\alpha_2 \otimes f_i &= (-1)^{i+1}\delby{\varepsilon_i} \\
\alpha_j^\vee \otimes e_2^* &= \varepsilon \delby{z_j} \\
\text{and }\  \alpha_j^\vee\otimes f_i^* &= (-1)^{i+1}\varepsilon_i\delby{z_j}
\end{align*}
for $i,j \in \{1,2\}$. This does indeed define an action of the super Lie algebra, i.e. the vector fields satisfy the correct commutation relations.  In this notation, the topological supercharges act by the vector fields
\[Q_{(\mu:\nu)} = \left(\mu \left(\varepsilon_1\delby{z_1} + \varepsilon_2 \delby{z_2} \right) + \nu\delby{\varepsilon} \right).\]
Note that we abuse the notation $Q_{(\mu:\nu)}$ to mean the one in the previous subsection after taking $Q_{\mr{hol}}$-cohomology.

It remains to extend these local vector fields to global vector fields on a 4-manifold $X$.  We'll be able to do this if $X$ has the structure of a complex surface. Since $\SU(2)_-$ acts on $S_-$ as the fundamental representation, one can identify $\varepsilon_i=dz_i$ and hence simply write $\varepsilon_1\delby{z_1} + \varepsilon_2 \delby{z_2} = \del$. On the other hand, $\varepsilon$ belongs to the trivial representation, and hence should be a trivial odd line bundle. Namely, for a given complex surface $X$, the global superspace we end up with after the holomorphic twist is $Y=\Pi TX \times \CC^{0|1}$, where further twists are described by the algebraic vector fields $\lambda \del + \mu \delby{\varepsilon}$.

If, furthermore, $X$ splits as the product of two smooth algebraic curves $X = \Sigma_1 \times \Sigma_2$, we can globalize the action of the topological twists $Q_\lambda$.  In the coordinates above, these twists act locally by
\[Q_\lambda = \lambda \eps_2 \delby{z_2} + \delby{\eps}\]
which, by the argument above, describes the local action of the odd vector field $\lambda \del_2 + \delby{\eps}$ where $\del_2$ is the algebraic de Rham operator on $\Sigma_2$ only.

\subsection{Twisted Supersymmetric Field Theories} \label{classical_field_theory_section}
Now, let's discuss what we'll mean by a classical field theory, and what it means to twist such an object.  The definitions in this section will build on the perturbative definitions given by Costello in \cite{CostelloSH}, but extended to a global, non-perturbative setting.  In doing so we'll find that, indeed, topological and holomorphic twists give rise to topological and holomorphic field theories respectively, justifying their names (by holomorphic field theories, we mean those where observables depend only on a choice of complex structure on spacetime, not on a choice of metric.  In two dimensions this will coincide with the notion of a (chiral) conformal field theory).   The supercharge $Q$ with which we wish to twist generates a one-odd-dimensional abelian superalgebra $\CC Q$, and the twisted theory will be -- perturbatively -- defined as something very close to the derived $\CC Q$-invariants of the untwisted theory. 

Globally, we can define a twist with respect $Q$ as a family of derived stacks over $\bb A^1$ so that, on the relative tangent bundle to a section, we recover a perturbative twist of the fiber at 0 by $Q$.  In general there is no reason that such global twists should be unique, but in many examples we'll see that there exists a natural choice provided by theorems of Gaitsgory and Rozenblyum.

\subsubsection{Classical Field Theories}
Costello and Gwilliam  \cite{CostelloGwilliam1, CostelloGwilliam2} give a beautiful axiomatisation of the notion of a perturbative classical field theory amenable to quantization and explicit calculation.  The definition we'll give will be a global extension of this definition, but to perform any calculations (especially for quantization) we'll restrict to the world of perturbation theory, and to their language.  One should view our definition as encoding the \emph{moduli space of solutions to the equations of motion} in a theory, and Costello and Gwilliam's definition as describing the formal neighborhood of a point in this moduli space.  We'll begin by briefly recalling the definition of a perturbative classical field theory.

\begin{remark}
In this section, by ``vector spaces'' we'll mean cochain complexes of nuclear Fr\'echet spaces.  We'll use $E^\vee$ to denote the strong dual of a vector space, and $E \otimes F$ will denote the completed projective tensor product.  We'll write $\symc(E)$ for the completed symmetric algebra built using this tensor product. 

For a vector bundle $E$ on a space $X$, we'll use the calligraphic letter $\mc E$ for its sheaf of sections, and we'll denote by $\mc E_c$ the corresponding sheaf of \emph{compactly supported} sections.  We'll write $E^!$ for the twisted dual bundle $E^\vee \otimes \dens_X$ where $\dens_X$ is the bundle of \emph{densities}, so there's a natural pairing $E \otimes E^! \to \dens_X$ of vector bundles.
\end{remark}

\begin{definition}
An \emph{elliptic $L_\infty$ algebra} $E$ on a topological space $X$ is a local $L_\infty$ algebra (as in Appendix \ref{linfty_appendix}) over $X$ which is elliptic as a cochain complex. A \emph{perturbative classical field theory} is an elliptic $L_\infty$ algebra $E$ equipped with a non-degenerate, invariant, symmetric bilinear pairing
\[\langle-,- \rangle \colon E \otimes  E[3] \to \dens_X\]
where $\dens_X$ denotes the bundle of densities on $X$.  Here \emph{invariant} means that the induced pairing on the sheaf of compactly supported sections
\[\int_X \langle-,-\rangle \colon \mc E_c \otimes \mc E_c[3] \to \CC\]
is invariant.
\end{definition}

From a perturbative classical field theory in this sense, we can produce a more geometric object.  Indeed, the fundamental theorem of deformation theory (as described in Appendix \ref{linfty_appendix}) allows us to associate to a local $L_\infty$ algebra $E$ a sheaf of formal moduli problems $BE$, and this correspondence provides an equivalence of categories. If the $L_\infty$ algebra $E$ is equipped with a degree $k$ pairing then we say the formal moduli problem $BE$ inherits a \emph{presymplectic form} of degree $k+2$.  We use this to motivate a general definition in the language of derived algebraic geometry, using a theory of shifted symplectic structures that is applicable in great generality.

In their 2013 paper \cite{PTVV}, Pantev, To\"en, Vaqui\'e, and Vezzosi define the notion of a \emph{shifted symplectic structure} on a derived Artin stack.  We refer to their paper and the paper \cite{Calaque} of Calaque for details, but we should note that a $k$-symplectic structure on $\mc M$ induces a non-degenerate degree $k$ pairing on the tangent complex $\bb{T}_{\mc M}$, and thus a degree $k-2$ pairing on the shifted tangent complex $\bb{T}_{\mc M}[-1]$.  In the recent sequel \cite{CPTVV}, Calaque, Pantev, To\"en, Vaqui\'e, and Vezzosi generalize this notion to that of a \emph{shifted Poisson structure}, and prove that this recovers the notion of a shifted symplectic structure when a non-degeneracy condition is imposed (a different proof for Deligne--Mumford stacks only also appeared in an earlier preprint of Pridham \cite{Pridham}).

We'll begin by giving an \emph{ideal} definition of a non-perturbative classical field theory that we believe best captures the structure of local classical solutions to the equations of motion.

\begin{definition} \label{idealdefinition}
A \emph{classical field theory} on a smooth manifold $X$ is a sheaf $\mc M$ of $(-1)$-shifted Poisson derived stacks such that for each open set $U \subset X$, the shifted tangent complex $\bb{T}_{p}[-1] \mc M(U)$ for a closed point $p \in \mc M(U)$ is homotopy equivalent to a perturbative classical field theory when equipped with the degree $-3$ pairing induced from the shifted Poisson bracket.
\end{definition}

\begin{remark}
We assume that Costello's assumption of ellipticity is always satisfied in an algebraic setting, in view of the main example of de Rham forms $\Omega^\bullet_{\mr{alg} }(X)$ becoming elliptic in the analytic topology by the Dolbeault resolution. It is possible that one needs a more careful definition of ellipticity in an algebraic setting for a treatment of the quantization of algebraic perturbative theories, but this is beyond the scope of the present paper. 
\end{remark}

In practice, in this paper we'll need to use a modified, algebraic version of this definition.  There are several reasons for this.
\vspace{-10pt}
\begin{enumerate}
\item Since we hope to eventually describe the moduli spaces of interest in the geometric Langlands program as local solutions in a classical field theory, we'll need a model that depends on an \emph{algebraic structure} on the spacetime manifold.  As such we won't be able to make sense of classical solutions on a general analytic open set.  Instead we'll need to work with a topology whose open embeddings are algebraic maps.
\item The theories we'll construct will be built using mapping spaces out of spacetime.  In general, if a spacetime patch $U$ is not proper, these mapping spaces will be of infinite type, and so it will be technically difficult to describe shifted Poisson structures on them.  Rather than getting bogged down in these functional analysis issues we'll simply ask for a shifted symplectic structure on the \emph{global} sections (with the understanding that a more sophisticated analysis should also recover a global version of the local Poisson bracket used by Costello and Gwilliam).
\end{enumerate}

\begin{remark}
We expect that an alternative version of the theory should exist in the analytic topology, using a suitable notion of analytic derived stacks, for example the formalism introduced by Porta and Yu \cite{PortaYu} or a model based on the $C^\infty$ dg-manifolds of Carchedi and Roytenberg \cite{CarchediRoytenberg} or the d-manifolds of Joyce \cite{Joycedmanifolds}.
\end{remark}

\begin{definition} \label{classical_field_theory_def}
An \emph{algebraic classical field theory} on a smooth proper algebraic variety $X$ is an assignment of a derived stack $\mc M(U)$ to each Zariski open set $U \sub X$, with a $(-1)$-shifted symplectic structure on the space $\mc M(X)$ of global sections whose shifted tangent complex $\bb{T}_{\mc M(X)}[-1]$ is homotopy equivalent to the global sections of a perturbative classical field theory when equipped with the degree $-3$ pairing induced from the shifted symplectic pairing.
\end{definition}

\begin{remarks}
\begin{enumerate}
\item We've deliberately left the nature of the ``assignment'' in the definition imprecise, although we expect that the correct definition is a sheaf of derived stacks.  Constructing the restriction maps and finding a symplectic structure -- much like investigating the shifted Poisson structure on open sets -- will involve subtle functional analytic issues involving Verdier duality on infinite-dimensional stacks which is beyond the scope of the present work.  The main theorems of this paper involve a determination of the global sections of a classical field theory on a smooth proper variety, and are expected to need adjustment to extend to sheaves of derived stacks. We hope to discuss this issue elsewhere.
\item In what follows, we sometimes consider theories defined on not necessarily proper varieties, for instance $\CC^n$.  We will informally refer to assignments of derived stacks in this general setting also as algebraic classical field theories, even without an analysis of shifted Poisson structures. 
\item We could just as readily have made this definition using a finer topology, the \'etale topology for instance, but Zariski sheaves will be sufficiently general for the examples in the present paper.
\end{enumerate}
\end{remarks}

The intuition behind this definition is -- as we already stated -- to encode the idea of the derived moduli spaces of solutions to the equations of motion.  Globally, given a space of fields and an action functional we can produce a shifted symplectic derived stack by taking the derived critical locus of the action functional.  Locally there are subtleties due to the existence of a boundary (as discussed for instance by Deligne and Freed in their notes on classical field theories \cite{DeligneFreedCFT}): one can still determine the equations of motion but the space of derived solutions will at best have a shifted Poisson structure.

In what follows we'll single out a special family of algebraic classical field theories which is adapted for discussion of twists of supersymmetric Yang--Mills theories.  These will model theories whose classical fields include a 1-form field, which is constrained to describe an algebraic structure on a $G$-bundle on-shell, and where the rest of the fields are all determined by formal data.

\begin{definition} \label{algebraic_gauge_theory_def}
A \emph{formal algebraic gauge theory} on a smooth variety $X$ is an algebraic classical field theory $\mc M$ on $X$ with a map $\sigma \colon \bun_G(U) \to \mc M(U)$ for each Zariski open set $U \sub X$, such that $\sigma$ is inf-schematic and induces an equivalence $ \bun_G(U)^{\mr{red}} \to  \mc M(U)^{\mr{red}}$ of their reduced parts. If a formal algebraic gauge theory $\mc M$ additionally admits such a map $\pi \colon \mc M(U)  \to \bun_G(U)$ for each $U$ such that $\sigma$ is a section of $\pi$, then we call $\mc M$ \emph{fiberwise formal}.
\end{definition}

\begin{remark}
We'll see in our examples that there are natural twists of supersymmetric gauge theories that are not of this formal nature, for instance twists that form the total space of a (dg) vector bundle over $\bun_G$.  We'll motivate the appearance of such example by viewing them as natural extensions of formal algebraic gauge theories, but they do not intrinsically fit into the above definition.  We think of the definition as a tool that allows us to compute twists of supersymmetric gauge theories.
\end{remark}

\begin{example} \label{cotangent_theory_definition}
Given any sheaf $\mc M$ of derived stacks with elliptic tangent complex and where $\mc M(X)$ is finitely presented we obtain an algebraic classical field theory by taking the \emph{formal shifted cotangent space} $T^*_{\mr{form}}[-1]\mc M$.  At the perturbative level, if $E$ is an elliptic $L_\infty$ algebra this corresponds to taking the direct sum of $L_\infty$-algebras $E \oplus E^![-3]$, with invariant pairing induced from the evaluation pairing $E \otimes E^! \to \dens_X$.  If $\mc M$ admits a map $\sigma \colon \bun_G \to \mc M$ satisfying the hypotheses of Definition \ref{algebraic_gauge_theory_def}, then $T^*_{\mr{form}}[-1] \mc M$ defines a formal algebraic gauge theory, using the zero section map associated to the formal shifted cotangent space. Likewise, if $\mc M$ also admits a map $\pi \colon \mc M \to \bun_G$, so that $\sigma$ is a section as in Definition \ref{algebraic_gauge_theory_def} then the projection map makes $T^*_{\mr{form}}[-1] \mc M$ into a fiberwise formal algebraic gauge theory.
\end{example}

Having given a definition of a classical field theory, let's investigate what it means to \emph{twist} such objects.  We'll begin by explaining what it means to twist a perturbative classical field theory, then use this to give a non-perturbative definition of a twist of a formal algebraic gauge theory which will suffice for our examples. 

\subsubsection{Perturbative Twisting}
\begin{definition} 
A classical field theory $E$ on a space $X$ with an action of the super Poincar\'e algebra (such as $\RR^n$) is called \emph{supersymmetric} if it admits an action by the super Lie algebra $\mathfrak{so}(n,\CC)\ltimes \CC^4 \oplus \Pi((S_+ \otimes W) \oplus (S_- \otimes W^*))$ extending the natural action of the Poincar\'e algebra for some vector space $W$  (for a definition of a superalgebra action on a local $L_\infty$ algebra, see the appendix, Definition \ref{local_L_infty_definition}). 
\end{definition}

We'll be interested in supersymmetric field theories where the action extends to an action of the full supersymmetry algebra for some choice of R-symmetries.  In our examples for $N=4$, this will be the case with the subalgebra $\sl(4;\CC) \sub \gl(4;\CC)$ of (complexified) R-symmetries preserving a trivialization of the determinant bundle.

The data required to twist a classical field theory is the action of a certain supergroup.  Define a supergroup
\[H = \CC^\times \ltimes \Pi \CC\]
where $\CC^\times$ acts with weight 1.  This group arises as the group of automorphisms of the odd complex line.
\begin{definition}
\emph{Twisting data} for a classical field theory $\Phi$ on a space $X$ is a local action $(\alpha, Q)$ of $H$ on $\Phi(U)$ for all $U$.  That is, in the perturbative case $\Phi$ is a sheaf of $L_\infty$ algebras with $H$-module structure, and in the non-perturbative case $\Phi$ is a family of derived stacks with $H$-action.  In our notation, $\alpha$ is a $\CC^\times$ action, and $Q$ is an odd infinitesimal symmetry with $\alpha$-weight 1.  
\end{definition}

An important source of twisting data is a supersymmetry action.  Let $Q$ be a supercharge such that $[Q,Q]=0$, and let $\alpha$ be a $\CC^\times$ action such that $Q$ has weight one (we can always find such an action by choosing a suitable $\CC^\times$ in the group of R-symmetries, after choosing an exponentiation of the action of the R-symmetry algebra to an action of an R-symmetry \emph{group}.)  Since $[Q,Q]=0$, the supercharge $Q$ generates a subalgebra isomorphic to $\Pi \CC$ acting on any theory with the appropriate supersymmetry action, and along with $\alpha$ this defines an action of the supergroup $H$.

\begin{lemma} \label{supercochainequivalence}
There is an equivalence of categories
\[\{\text{super vector spaces with an } H \text{-action}\} \iso \{\text{super cochain complexes}\}.\]
\end{lemma}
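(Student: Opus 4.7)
The plan is to verify this equivalence by unpacking what an $H$-module structure on a super vector space $V$ amounts to, and observing that the resulting data is exactly that of a super cochain complex.

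First, I would observe that an algebraic action of $\CC^\times$ on a super vector space $V$ is equivalent to a weight decomposition $V = \bigoplus_{n \in \ZZ} V_n$ into sub-super-vector-spaces. Reading the weight grading as cohomological grading gives the underlying $\ZZ$-graded super vector space. The additional data of extending this to an $H$-action is, by the semidirect product structure $H = \CC^\times \ltimes \Pi \CC$ with $\CC^\times$ acting with weight $1$ on $\Pi \CC$, an odd endomorphism $Q \in \End(V)$ of $\CC^\times$-weight $+1$ (that is, $Q(V_n) \subseteq V_{n+1}$) satisfying $[Q,Q] = 0$. In characteristic zero this is equivalent to $Q^2 = 0$, so setting $d := Q$ on $V^\bullet := \bigoplus V_n$ yields a differential of cohomological degree $+1$, odd parity, and square zero --- precisely the data of a super cochain complex.

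Conversely, given a super cochain complex $(V^\bullet, d)$, I would define the $\CC^\times$-action to have weight $n$ on $V^n$ and set $Q := d$; the properties of $d$ immediately give a well-defined $H$-action. These assignments are manifestly functorial and mutually inverse, and a morphism of $H$-modules (a super-linear map commuting with $\alpha$ and $Q$) corresponds exactly to a morphism of super cochain complexes (a super-linear map preserving cohomological degree and commuting with $d$). There is no real obstacle in this argument: the entire content of the lemma is the translation between the semidirect product structure on $H$ and the combined cohomological/parity grading on a super cochain complex. The only thing worth being careful about is confirming that in this paper's conventions a super cochain complex is understood as a $\ZZ$-graded super vector space whose differential is both of cohomological degree $+1$ and odd, as opposed to a plain cochain complex with parity induced from degree mod $2$; under the former interpretation --- which is the natural one here --- the equivalence is immediate.
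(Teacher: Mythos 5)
Your proof is correct and is exactly the argument the paper intends: the paper gives no formal proof, only the one-line remark that the grading comes from the $\CC^\times$-weight and the differential from the $\Pi\CC$-action, which is precisely what you spell out. Your unpacking of the semidirect product structure (the weight-$1$ condition forcing $Q$ to raise degree by one, and $[Q,Q]=0$ giving $d^2=0$) fills in the details of the same approach.
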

Here the grading is given by the weight under the action of $\CC^\times$ and the differential is given by the action of $\Pi \CC$.  We use this fact to define a twisted theory for the data $(\alpha, Q)$.

\begin{definition} \label{perturbative_twist_def}
Let $E$ be a perturbative classical field theory with an action of the supergroup $H$.  The \emph{twisted theory} $E^Q$ (where $Q$ is a generator of $\Pi \CC$) is the theory obtained by introducing a new differential graded structure on $E$ in accordance with the previous lemma and taking the total complex with respect to this new grading and the cohomological grading.
\end{definition}

\begin{remark} \label{perturbative_deformation}
The twisted theory $E^Q$ fits into a family of classical field theories deforming $E$ -- i.e. a sheaf of perturbative field theories over the line $\bb A^1$ -- whose fiber at $\lambda$ is the theory obtained by applying the twisting construction with respect to the dilated twisting data $(\lambda Q, \alpha)$.
\end{remark}

\begin{remarks}
This definition needs some unpacking.  We should explain what we want to do intuitively, in particular the role of the action $\alpha$. 
\vspace{-15pt}
\begin{itemize}
 \item On the level of functions -- that is, observables -- our first idea is to take the $Q$-coinvariants.  By identifying observables with their orbits under $Q$ we force all $Q$-exact symmetries to act trivially, so if we choose a holomorphic or topological supercharge we impose strong symmetry conditions on the observables in the twisted theory.  The na\"ive thing to do to implement this procedure would be to take the derived invariants of our classical field theory with respect to the group $\Pi \CC$ generated by $Q$.  
  \item This is all well and good, but recall what a $\Pi \CC$-action actually means: the data of a family of classical field theories over the space $B (\Pi \CC)$ whose fiber over zero is $E$.  That is, a module over $\CC[[t]]$, where $t$ is a fermionic degree 1 parameter.  One really wants to restrict interest to a generic fiber of this family. 
 \item To do this we restrict to the odd formal \emph{punctured} disc, or equivalently invert the parameter $t$, then take invariants for an action $\alpha$ of $\CC^\times$ for which $t$ has weight 1, thus extracting a ``generic'' fiber instead of the special fiber at 0.  This is an instance of the Tate construction for the homotopy $\Pi \CC$ action $Q$.  It's important to restrict to the formal punctured disc, since not all these invariant fields extend across zero: if we just took $\CC^\times$ invariants in $\mc E[[t]]$ we'd obtain elements of $\mc E$ of the form $\phi t^k$ where $\phi$ had weight $-k$.  In particular we'd find ourselves throwing away everything of positive $\CC^\times$ weight in $\mc E$.  
 \item Now, this procedure is exactly the same as the definition we gave above.  Taking derived $Q$ invariants corresponds to taking the complex $\mc E[[t]]$ with differential $d_{\mc E} + tQ$.  Inverting $t$ and taking invariants under the action $\alpha$ is then the same as adding the $\alpha$ weight to the original grading, and adding the operator $Q$ to the original differential $d_{\mc E}$, just as in our definition.
\end{itemize}
\end{remarks}

\begin{prop} \label{perturbative_twist_still_a_theory}
The twisted theory $E^Q$ is still a classical field theory when equipped with a pairing inherited from $E$.
\end{prop}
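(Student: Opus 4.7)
The plan is to verify each defining property of a perturbative classical field theory for $E^Q$ in turn: namely, that (i) $E^Q$ is a local $L_\infty$ algebra, (ii) it is still elliptic, and (iii) the bilinear pairing inherited from $E$ is still non-degenerate, symmetric, and invariant of the required degree.

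First I would check the $L_\infty$ structure. Under the equivalence of Lemma \ref{supercochainequivalence}, the $H$-action assembles $E$ into a bigraded object whose total differential on $E^Q$ is $d_E + Q$. This squares to zero because $d_E^2 = 0$ by hypothesis, $Q^2 = \tfrac{1}{2}[Q,Q] = 0$ by assumption on $Q$, and $[d_E, Q] = 0$ since $Q$ acts as an infinitesimal symmetry of the $L_\infty$ structure on $E$. The higher brackets of $E$ are unchanged, and $Q$ acts on them as an $L_\infty$ derivation (again by being a symmetry), so the $L_\infty$ axioms still hold for $d_E + Q$ together with the original brackets. The locality of the new structure follows from the fact that the $H$-action is itself local. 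The total degree is arranged so that $d_E + Q$ has degree $+1$: $Q$ has cohomological degree $+1$ after shifting by the $\alpha$-weight, exactly matching $d_E$.

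Second I would establish ellipticity, which I expect to be the main technical obstacle. The original differential $d_E$ has an exact symbol complex, and the twist adds an operator $Q$ obtained from the action of a square-zero supercharge. Since $Q$ is built from constant-coefficient translation-like operators (its image under the bracket lands in the translations $V_\mathbb{C}$ or a subspace thereof), the symbol of $d_E + Q$ differs from that of $d_E$ only in a way controlled by the holomorphic/topological type of $Q$. Exactness of the symbol complex for $d_E + Q$ can then be checked directly: one filters by the $\alpha$-grading and observes that the associated graded of the symbol complex recovers that of $d_E$, which is exact. This is essentially the argument in \cite{CostelloSH} specialized to our setting, and it is the step that actually uses the specific form of $Q$ rather than only the abstract identity $[Q,Q]=0$.

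Finally I would transport the pairing. The pairing $\langle-,-\rangle \colon E \otimes E[3] \to \dens_X$ is part of the structure preserved by the supersymmetry action, hence in particular it is $H$-invariant: one has both $\langle Qa, b\rangle + (-1)^{|a|}\langle a, Qb\rangle = 0$ and weight-zero behavior under $\alpha$ (after accounting for the $[3]$ shift). The first identity says exactly that the pairing is a cochain map for $d_E + Q$, i.e.\ it is invariant for the twisted theory; the second ensures the pairing retains the required cohomological degree. Symmetry is preserved because the underlying graded space and its symmetry properties are unchanged. Non-degeneracy likewise survives, because on the level of underlying graded vector bundles (forgetting differentials) nothing has been altered; only the differential and the weight grading are modified. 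Combined with (i) and (ii), this shows $E^Q$ satisfies all the conditions of a perturbative classical field theory.
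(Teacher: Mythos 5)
Your proof is correct and follows essentially the same route as the paper's: both arguments reduce everything to the fact that $Q$ is a square-zero symmetry of the full structure, and your pairing verification (the identity $\langle Qf_1,f_2\rangle + \langle f_1, Qf_2\rangle = 0$ plus $\alpha$-invariance of the pairing) is exactly the computation the paper performs in its $t$-adic Tate-construction formalism. The only real difference is that you supply an explicit filtration-by-$\alpha$-weight argument for ellipticity, whereas the paper folds that step into the definition of a group action on a field theory (citing Costello); your elaboration is sound and, if anything, makes the proof more self-contained.
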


\begin{proof}
First note that $E^Q$ is still an elliptic $L_\infty$ algebra.  The complex obtained as the $\Pi \CC$ invariants of the theory -- the complex $(\mc E[[t]], d_{\mc E} + tQ)$ -- is required to have the structure of an elliptic $L_\infty$ algebra by the definition of a group action on a field theory.  Inverting $t$ preserves this structure, as does taking $\CC^\times$-invariants, again because $\alpha$ is a local $L_\infty$-action.

It remains to construct an invariant pairing on $E^Q$ of the correct degree (we'll follow Costello \cite[13.1]{CostelloSH}).  The pairing on $\mc E$ induces a degree -3 pairing of form
\[\langle - , - \rangle_Q \colon E[[t]] \otimes E[[t]][3] \to \dens_X[[t]]\]
by $\langle e_1t^{k_1}, e_2t^{k_2} \rangle_Q = \langle e_1, e_2 \rangle t^{k_1 + k_2}$.  We only need to check that this is compatible with the differential $d_{\mc E} + tQ$, i.e. that exact terms on the left vanish under the pairing map, or more precisely that
\[\left(\langle d_{\mc E}f_1, f_2 \rangle + \langle f_1,d_{\mc E}f_2 \rangle \right)t^{k_1+k_2} + \left( \langle Qf_1,f_2 \rangle + \langle f_1, Qf_2 \rangle \right) t^{k_1+k_2+1} = 0.\]
The first term vanishes because of compatibility of $d_{\mc E}$ with the pairing, and the second term vanishes because $Q$ is a symmetry of the classical field theory.  This pairing yields an invariant $\dens_X((t))$ valued pairing after inverting $t$.  By construction these pairings are equivariant with respect to the action of $\CC^\times$ by rescaling $t$, so descends to a pairing
\[\langle - , - \rangle_Q \colon \left(E((t)) \otimes E((t))[3]\right)^{\CC^\times} \to \dens_X((t))^{\CC^\times} = \dens_X.\]
This pairing is still invariant, so gives $E^Q$ the structure of a classical field theory.
\end{proof}

\subsubsection{Global Twisting}
Now, let $\mc M$ be a non-perturbative algebraic classical field theory on $\CC^n$, and suppose $\mc M$ admits an action of a supersymmetry algebra extending the action of the translations.  As above, choose a supercharge $Q$ satisfying $[Q,Q]=0$, and an action $\alpha$ of $\CC^\times$ on $\mc M$ so that $Q$ has $\alpha$-weight one. 

\begin{definition}
A \emph{deformation} of a derived stack $\mc X$ is a derived stack $\pi \colon \mc X' \to \bb A^1$ flat over the affine line along with an immersion $\mc X \inj \mc X'_0$, and an equivalence $\mc X'|_{\bb G_m} \iso \mc X'_1 \times \bb G_m$, where $\mc X'_t$ is the fiber over the point $t$. 
\end{definition}

We'll begin with a prototypical example of a deformation, presented somewhat informally for motivation. We'll provide a more conceptual and general treatment of this example later in Example \ref{de_rham_stack_twist_example}.

\begin{example}
Consider a smooth proper variety $X$. We define a ringed space $X_{\mr{Dol}}$ by $X_{\mr{Dol}}:= (X, \OO_{T[1]X})$, where the structure sheaf $\OO_{T[1] X}$ is equivalent to $\mr{Sym}_{X} (\bb L_X[-1]) = \Omega^\bullet_{\mr{alg}, X}$. As one has a quasi-isomorphism $\Omega^p_{\mr{alg},X} \simeq (\mathcal{A}^{p,\bullet}_X , \overline{\del} )$ in the analytic topology, $X_{\mr{Dol}}$ is justifiably called the \emph{Dolbeault stack} of $X$. Similarly, one defines $X_{\lambda\text{-dR}}$ to be the ringed space $(X, (\Omega^\bullet_{\mr{alg}, X } , \lambda \del))$. Of course, $X_{\text{dR} } : = X_{1 \text{-dR}}$ is called the \emph{de Rham stack} of $X$ because one has $(\Omega^\bullet_{\mr{alg}, X } , \del)  \simeq  (\mathcal{A}^{\bullet,\bullet}_X , \del + \overline{\del} ) \simeq  (\mathcal{A}^{\bullet}_X , d)$ in the analytic topology. It will sometimes be convenient to write $X_{ 0 \text{-dR} }$ for $X_{\mr{Dol}}$.

There exists a ringed space $X_{\mr{Hod}}$ and a map $X_{\mr{Hod}} \to \bb A^1$ such that the fiber over $\lambda$ is $(X, (\Omega^\bullet_{\mr{alg}, X } ,  \lambda\del))$. That is, both squares in the following diagram are fiber product squares.
\[\xymatrix{
X_{\mr{Dol}} \ar[d] \ar[r] & X_{\mr{Hod}} \ar[d] & X_{\lambda\text{-dR}}\ar[d] \ar[l] \\
\{0\} \ar[r] & \bb A^1 & \{\lambda\} \ar[l]
}\]
In particular, $X_{\mr{Hod}}$ is a deformation of $X_{\mr{Dol}}$.

Now we would like to write down this information in a way that can be easily generalized to other situations. First, observe that as $X_{\mr{Dol}}$ and $X_{\lambda \text{-dR}}$ have the same closed points, they differ only by an infinitesimal thickening from the original space $X$. In order to write this more carefully, let us introduce the canonical map $\sigma_\lambda \colon  X \to X_{\lambda\text{-dR} }$. Then we would like to compare $\bb T_{\sigma_0(x) } X_{\mr{Dol}}$ and $\bb T_{\sigma_\lambda(x) } X_{\lambda\text{-dR}}$ for every $x \in X$. A way to compare them is to find a section $s \colon \bb A^1 \to X_{\mr{Hod}}$ so that both of them are realized as fibers of $s^* \bb T_{X_{\mr{Hod}} / \bb A^1}$. If that is the case, then one declares $X_{\mr{dR}}$ to be a \emph{twist} of $X_{\mr{Dol}}$.

On the other hand, in general, one might not have a map playing the role of $\sigma_\lambda$, even if we started with a map $\sigma_0$ which is an equivalence at the level of closed points. Then it would be reasonable to ask for compatibility for every point $x_1 \in X_{\text{dR} }$. Namely, for a closed point $x_1 \in X_{\text{dR}}$, we ask the existence of a section $s \colon \bb A^1\to X_{\mr{Hod}}$ such that
\vspace{-12pt}
\begin{enumerate}
\item $s(0) = \sigma_0(x)$ for some $x \in X$,
\item $s(\lambda) =  x_\lambda  $ for some $x_\lambda \in X_{\lambda \text{-dR} }$, and
\item  $s^* \bb T_{X_{\mr{Hod}} / \bb A^1}$ is a deformation of $\bb T_{\sigma_0(x) }{X_{\mr{Dol}}}$.
\end{enumerate}
\vspace{-12pt}
Even only with this weaker requirement, we think of $X_{\mr{dR}}$ as a twist of $X_{\mr{Dol}}$.
\end{example}

When we define a twist of a formal algebraic gauge theory, there are two additional small complications to be introduced.  Firstly, given twisting data $(\alpha, Q)$, before twisting by $Q$ we need to deal with modifying the gradings by the $\CC^\times$-weight under $\alpha$.

\begin{definition} 
A \emph{regrading} of a formal algebraic gauge theory $\mc M$ with respect to a $\CC^\times$ action $\alpha$ such that $\sigma \colon \bun_G \to \mc M$ is equivariant for the trivial action on $\bun_G$ is a formal algebraic gauge theory $\sigma_\alpha \colon \bun_G \to \mc M^\alpha$ such that the restricted tangent complex $\sigma_\alpha^* \bb{T}_{\mc M^\alpha}[-1]$ is equivalent to the restricted tangent complex of $\mc M$ with degrees modified by adding the $\alpha$-weight to the cohomological degree and the $\alpha$-weight mod 2 to the fermionic degree, as a sheaf of Lie algebras. 
\end{definition}

The second complication is that a perturbative classical field theory consists of more data than just a cochain complex, and our twist must preserve this additional information on the level of each tangent complex, in the sense discussed in the previous section on twists of perturbative field theories.

Bearing these two points in mind, by mimicking the motivating example with $X$ replaced by $\mr{Bun}_G$, we obtain the following definition.

\begin{definition} \label{twist_definition}
A classical non-perturbative field theory $\mc M^Q$ is a \emph{twist} of a formal algebraic gauge theory $\mc M$ with respect to twisting data $(\alpha,Q)$ if there is a deformation $\pi \colon \mc M' \to \bb A^1$ of the regrading $\mc M^\alpha$, whose generic fiber is equivalent to $\mc M^Q$, such that for every closed point $x_1 \in \mc M^Q$, there is a section $s \colon \bb A^1 \to \mc M'$ of the map $\pi$ such that
\vspace{-13pt}
\begin{enumerate}
\item $s(0) = \sigma_\alpha(x)$ for some $x \in \mr{Bun}_G$,
\item $s(\lambda) =  x_\lambda  $ for some $x_\lambda \in \mc M^{\lambda Q }$, and
\item  $s^* \bb T_{ \mc M ' / \bb A^1}$ is a perturbative twist of $\bb T_{\sigma(x) }\mc M$ with respect to the given twisting data as in Remark \ref{perturbative_deformation}.
\end{enumerate}
\end{definition}

\begin{remark}
One could define twists of more general classical field theories as long as they could be viewed as formal extensions of some fixed base stack (playing the role of $\bun_G$ in the above definition).  For example, one could replace $\bun_G$ by maps into a target other than $BG$ to describe twists of supersymmetric sigma models, or if $\mc M^Q = T^*[-1]\mc B$ was a cotangent theory one might use the base space $\mc B$.
\end{remark}

\begin{remark} \label{nonperturbative_twist_construction}
One ought to be able to produce twisted field theories explicitly from a functor-of-points perspective, along the lines of a construction explained by Grady and Gwilliam \cite{GG}.  Let $\mc L$ be an $L_\infty$ space (we refer the reader to Grady--Gwilliam or Costello \cite{CostelloSH} for details concerning the theory of $L_\infty$ spaces) over a scheme $M$ whose fibers are finitely generated and concentrated in non-negative degrees, and let $\mc L$ be equipped with a degree $-3$ invariant pairing on its fibers making it into a sheaf of perturbative classical field theories.  Then we can attempt to build a non-perturbative classical field theory out of $\mc L$ as follows.  Let $\mc L_{>0}$ be the truncation in positive degrees: a \emph{nilpotent} $L_\infty$ space, and let $L_0$ be the degree 0 piece: a sheaf of Lie algebras.  We can attempt to construct a sheaf $\mc M$ of derived stacks over $M$ by a Maurer--Cartan procedure.  To do so, choose an exponentiation of $L_0$ to a sheaf of algebraic groups $G$.  Define, for a cdga $R$ concentrated in non-positive degrees, the $R$-points of $\mc M$ over $U$ by
\[\mc M(U)(R) = \mr{MC}(\mc L_{>0}(U) \otimes R) / G(U)(R).\]
We can easily compute the shifted tangent complex at a point $p \in \mc M$, since
\begin{align*}
\bb{T}_p[-1]\mc M &= \bb{T}_p[-1](\mr{MC}(\mc L_{>0})/G) \\
&\iso (\mc L_0)_p \to (\mc L_{>0})_p \\
&= \mc L_p,
\end{align*}
so we recover the perturbative theory.  Grady and Gwilliam \cite{GG} prove that this construction satisfies a descent condition, albeit a weaker condition than the condition we've demanded for derived stacks.  We anticipate that applying this construction to the twist of a perturbative classical field theory will yield a non-perturbative twisted theory, compatibly with the examples we construct elsewhere in the paper.  
\end{remark}


We'll construct twists of the $N=4$ theories of interest to us in Section \ref{constructing_theories_section} below, but why should the twisted theory with respect to specified twisting data be well-defined?  Well, for many theories of the type we're considering it is possible to recover the full non-perturbative theory from a family of perturbative theories parametrized by $\bun_G$. This follows from a theorem of Gaitsgory and Rozenblyum \cite{GR}.  Even when this formal procedure fails we'll see that the Gaitsgory--Rozenblyum correspondence often provides a natural choice of twist.

The following definition, also due to Gaitsgory and Rozenblyum, models in derived algebraic geometry a family of formal moduli problems as described in Appendix \ref{linfty_appendix} over a base derived stack $\mc X$, coherently equipped with base points.

\begin{definition}
A \emph{pointed formal moduli problem} $\mc Y$ over a derived stack $\mc X$ is an inf-schematic morphism $\pi \colon \mc Y \to \mc X$ of prestacks with an inf-schematic section $\sigma \colon \mc X \to \mc Y$ such that the induced map $\pi^{\mr{red}} \colon \mc Y^{\mr{red}} \to \mc X^{\mr{red}}$ is an isomorphism.  We'll denote the category of pointed formal moduli problems over $\mc X$ by $\text{Ptd}(\text{FormMod}_{ / \mc X} )$.
\end{definition}

\begin{theorem} [Gaitsgory--Rozenblyum {\cite[5.1.6.4, 7.3.6.2]{GR}}] \label{GRtheorem}
For a derived stack $\mc X$ which is locally almost of finite type there is an equivalence
\[F \colon \text{Ptd}(\text{FormMod}_{/ \mc X} ) \to \text{LieAlg}(\text{IndCoh}(\mc X) ),\] 
where $\text{LieAlg}(\text{IndCoh}(\mc X) )$ is the category of Lie algebra objects in ind-coherent sheaves on $\mc X$.
\end{theorem}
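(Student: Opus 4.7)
The plan is to establish this as a relative version of the classical Lurie--Pridham theorem identifying formal moduli problems over a field (in characteristic zero) with dg Lie algebras. The central heuristic is that for any pointed formal moduli problem $\sigma \colon \mc X \to \mc Y \to \mc X$, the ``relative shifted tangent complex at the section'' should carry a Lie algebra structure coming from the group-like structure on the loop space of the fiber; conversely, every Lie algebra should arise as such a tangent complex via a relative Chevalley--Eilenberg/Maurer--Cartan construction. I would try to make this precise in stages, working relative to $\mc X$.

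First I would construct the functor $F$. Given $(\pi,\sigma) \in \text{Ptd}(\text{FormMod}_{/\mc X})$, the inf-schematic assumption plus the isomorphism on reduced parts guarantee that $\mc Y$ admits a pro-coherent deformation theory over $\mc X$, so $\sigma^*\mathbb{T}_{\mc Y/\mc X}$ is a well-defined object of $\indcoh(\mc X)$. The Lie bracket on $L := \sigma^*\mathbb{T}_{\mc Y/\mc X}[-1]$ is extracted from the group structure on the relative based loop stack $\Omega_\sigma(\mc Y/\mc X)$: one shows $\Omega_\sigma(\mc Y/\mc X)$ is itself a pointed formal moduli problem whose tangent complex is $L[1]$, and the multiplication transfers to a Lie structure on $L$ in the usual way (Eckmann--Hilton plus shift). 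For this step it is important that $\mc X$ be locally almost of finite type, so that $\indcoh$ has the expected duality and base change properties, and so that $\sigma^*$ on $\indcoh$ is well-behaved.

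Next I would construct the inverse $G$. For a Lie algebra $L \in \indcoh(\mc X)$, I would form its relative Chevalley--Eilenberg cochains $C^\bullet_{CE}(L) \in \mr{CAlg}(\indcoh(\mc X)^{\mr{op}})$, completed with respect to the natural filtration, and then set $G(L)$ to be the functor of points sending an affine $R$-point $x \colon \spec R \to \mc X$ and a square-zero-ish extension $\spec R \hookrightarrow \spec R'$ to $\mr{MC}(x^* L \otimes_R \mf m_{R'/R})$. Descent and the inf-schematic condition need checking, but these follow from the fact that $C^\bullet_{CE}(L)$ is computed as an inverse limit of nilpotent thickenings of $\mc X$. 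The remaining content is to show that the two compositions $FG$ and $GF$ are the identity, which amounts to a pro-nilpotent completeness statement: any pointed formal moduli problem is recovered from its infinitesimal jets, and any Lie algebra is recovered as the primitives of its Chevalley--Eilenberg chains. This is where the work of \cite{GR2, GR3} really goes in.

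The main obstacle, as I see it, is the foundational setup rather than any single clever argument. One must develop enough of the six-functor formalism for $\indcoh$ on inf-schemes (base change, projection formula, duality) to make the Koszul duality between commutative and Lie coalgebras work fiberwise over $\mc X$; without the $\indcoh$ framework the dualities required to identify $\mr{Sym}$ with its Koszul dual $\mr{coLie}$ fail in the presence of singularities or stacky points of $\mc X$. In practice this means that the hard part is not writing down $F$ and $G$, but showing that $G$ lands in inf-schematic pointed formal moduli problems and that the unit and counit of the adjunction are equivalences, both of which rely on the deformation theory developed in \cite{GRdeftheory}.
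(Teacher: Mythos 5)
This theorem is quoted in the paper as an external result of Gaitsgory--Rozenblyum; the paper offers no proof of its own, only an unpacking of the functor as $F = \mr{Lie}\circ\Omega_{\mc X} = \sigma^*\mathbb{T}_{/\mc X}[-1]$. Your construction of $F$ via the relative based loop stack and the shifted relative tangent complex agrees exactly with that description, and your sketch of the inverse via relative Chevalley--Eilenberg/Maurer--Cartan, together with your diagnosis that the substantive content lies in the $\text{IndCoh}(\mc X)$ foundations and the pro-nilpotent completeness of the (co)unit, is a faithful outline of the actual Gaitsgory--Rozenblyum argument. Nothing in your proposal conflicts with how the paper uses the theorem.
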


We can now more succinctly say that a fiberwise formal algebraic gauge theory is an assignment to open sets in $X$ of pointed formal moduli problems over $\bun_G$, with the structure of an algebraic classical field theory on its total space. Theorem \ref{GRtheorem} therefore says that fiberwise formal algebraic gauge theories are completely determined by Lie algebra objects in sheaves over $\bun_G$.  We'll take advantage of this, and define the twist of a fiberwise formal algebraic gauge theory using this sheaf of Lie algebras.

It will be useful to unpack what exactly the functor in the theorem is. It is constructed as a composition of two equivalences \[\xymatrix{
\text{Ptd}(\text{FormMod} _{/ \mc X} ) \ar[r]^-{\Omega_{\mc X} }  & \text{Grp}(\text{FormMod}_{ / \mc X} ) \ar[r]^-{\text{Lie}} & \text{LieAlg}(\text{IndCoh}(\mc X) ),
} \] where $\text{Grp}(\text{FormMod} _{/ \mc X} ) $ stands for the category of group objects in $\mr{FormMod}_{/ \mc X}$. Here $\Omega_{\mc X}$ is the based loop space functor $\mc Y \mapsto \Omega_{\mc X} \mc Y = \mc X \times_{\mc Y} \mc X$ and $\mr{Lie}$ is the functor given by $\mc H \mapsto \mathbb{T}_{ \mc H / \mc X }|_{\mc X}$, so that the composition in terms of the underlying ind-coherent sheaf is simply $\mc Y \mapsto \mathbb{T}_{ \mc Y / \mc X }|_{\mc X} [-1]$. In other words, one can write $F = \sigma^* \mathbb{T}_{ / \mc X}[-1]$, the restricted relative shifted tangent complex.

Now, we'll discuss a construction of twists of fiberwise formal algebraic gauge theories.  In order to give as general a construction as possible we'll need to consider a stronger form of the Gaitsgory--Rozenblyum correspondence than Theorem \ref{GRtheorem}, also due to Gaitsgory--Rozenblyum. This is because a fiberwise formal algebraic gauge theory does not necessarily remain fiberwise formal when we twist: in general the twisting data will not preserve the fibers of the projection map $\pi$, so this structure is lost upon twisting. 

Consider the commutative diagram:
\[\xymatrix{
\text{Ptd}(\text{FormMod}_{/ \mc X} ) \ar[r]^-{\Omega_{\mc X} } \ar[d]_{\text{forget}} & \text{Grp}(\text{FormMod}_{ / \mc X} ) \ar[r]^-{\text{Lie}} \ar[d]^{\text{forget}}& \text{LieAlg}(\text{IndCoh}(\mc X) ) \ar[d],\\
\text{FormMod}_{\mc X/} \ar[r]^-{\Omega_{\mc X} } & \text{FormGrpoid}(\mc X) \ar[r]^-{\text{Lie}} & \text{LieAlgebroid}( \mc X).
}\]
Here $\text{FormMod}_{\mc X/}$ stands for the category of formal moduli problems under $\mc X$, so that a formal algebraic gauge theory is exactly an algebraic classical field theory -- given by a family of formal moduli problems -- under $\mc X = \bun_G$. The other categories are also defined in Gaitsgory--Rozenblyum, but for our purposes it will suffice to note that the abusive notations $\Omega_{\mc X}$ and $\mr{Lie}$ still realize equivalences and that the forgetful functor from $\text{Ptd}(\text{FormMod}_{/ \mc X} ) $ to $\text{FormMod}_{\mc X/}$ is given by the natural identification $\text{Ptd}(\text{FormMod}_{/ \mc X} ) = (\text{FormMod}_{\mc X/})_{/ \mc X}$.  We'll now state the necessary generalization of Theorem \ref{GRtheorem}.

\begin{theorem}[Gaitsgory--Rozenblyum {\cite[5.2.3.2, 8.2.1]{GR}}] \label{GRtheorem2}
The functor
\[\mr{Lie} \circ \Omega_{\mc X} \colon \text{FormMod}_{\mc X/} \to \text{LieAlgebroid}( \mc X)\]
is an equivalence for any derived stack $\mc X$ locally almost of finite type.
\end{theorem}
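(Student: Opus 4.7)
The plan is to factor the functor as $\mr{Lie} \circ \Omega_{\mc X}$, as displayed in the square just preceding the theorem, and prove that each piece is an equivalence, mimicking the strategy used for the pointed version (Theorem \ref{GRtheorem}) while carefully tracking the extra anchor data that is trivial in the pointed case. The relationship between the pointed and unpointed versions is encoded by the identification $\text{Ptd}(\text{FormMod}_{/\mc X}) = (\text{FormMod}_{\mc X/})_{/\mc X}$ together with the observation that, on the Lie side, a Lie algebra object in $\text{IndCoh}(\mc X)$ is the same as a Lie algebroid with zero anchor.

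First I would show that $\Omega_{\mc X} \colon \text{FormMod}_{\mc X/} \to \text{FormGrpoid}(\mc X)$ is an equivalence. In one direction, $\Omega_{\mc X}\mc Y = \mc X \times_{\mc Y} \mc X \rightrightarrows \mc X$ is a formal groupoid: both leg maps are inf-schematic because the structural map $\mc X \to \mc Y$ is inf-schematic, and the Čech nerve assembles into a groupoid object in prestacks. In the other direction, one recovers $\mc Y$ from a formal groupoid $\mc H^\bullet$ as the geometric realization $|\mc H^\bullet|$; the key input is that formal moduli problems satisfy descent along inf-schematic covers with reduced fibers, which follows from the deformation-theoretic axioms available under the assumption that $\mc X$ is locally almost of finite type. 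Second I would establish the Lie equivalence $\text{FormGrpoid}(\mc X) \to \text{LieAlgebroid}(\mc X)$ by sending $\mc H$ to $\mathbb{T}_{\mc H/\mc X}|_{\mc X}[-1]$ (via the source map), with anchor obtained from the differential of the target map and bracket from the groupoid multiplication, and constructing the inverse via a relative Chevalley--Eilenberg / enveloping-algebroid construction: given $\mathcal{L}$, build the formal groupoid whose functions are dual to the relative jet algebra $U(\mathcal{L})$ acting on $\mathcal{O}_{\mc X}$ through the anchor.

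The main obstacle is the reconstruction of a formal groupoid from a Lie algebroid $\mathcal{L}$. In the pointed case the anchor is forced to vanish, so one can essentially exponentiate fiberwise; here the anchor $\alpha \colon \mathcal{L} \to \mathbb{T}_{\mc X}$ genuinely mixes deformations along $\mc X$ with deformations transverse to it, and so one cannot produce the formal groupoid by naively gluing the pointed answer over $\mc X$. Resolving this requires working with $\mr{IndCoh}$ on the de Rham prestack $\mc X_{\mr{dR}}$ (so that $\mc L$-modules become crystals on $\mc X$ via the anchor), proving a PBW theorem for the enveloping algebroid of $\mathcal{L}$, and then showing that the resulting pro-object represents a formal groupoid whose cotangent complex is indeed $\mathcal{L}^\vee$. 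The hypothesis that $\mc X$ is locally almost of finite type enters in a crucial way here: it ensures that $\mathbb{T}_{\mc X}$ and the relative tangent complexes in play are perfect after restriction to eventually coconnective affines, so dualization is well-behaved and the identification of $\mr{IndCoh}(\mc X)$ as the natural target for the tangent Lie algebroid holds.
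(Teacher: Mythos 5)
This statement is not proved in the paper at all: it is imported verbatim from Gaitsgory--Rozenblyum (\cite{GR2}, \cite{GR4}), and the text explicitly declines even to define the general notion of a Lie algebroid, referring the reader to \cite{GR4}. So there is no in-paper argument to compare yours against; the only fair comparison is with the cited source.

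Measured against that source, the first half of your plan is essentially right: the equivalence $\Omega_{\mc X} \colon \text{FormMod}_{\mc X/} \to \text{FormGrpoid}(\mc X)$ is indeed proved by the \v{C}ech-nerve/geometric-realization adjunction, with the real content being descent for inf-schematic nil-isomorphisms, and this is where the ``locally almost of finite type'' hypothesis (needed for the ind-coherent/deformation-theoretic machinery) genuinely enters. The second half, however, contains a gap of approach. You treat $\text{LieAlgebroid}(\mc X)$ as an independently defined category of objects $\mc L \to \bb T_{\mc X}$ with bracket and anchor, and propose to build the inverse functor by a PBW theorem for the enveloping algebroid. That is precisely the step Gaitsgory--Rozenblyum do \emph{not} (and state they cannot) carry out in the derived setting: there is no known explicit operadic presentation of derived Lie algebroids, and a naive relative Chevalley--Eilenberg/PBW reconstruction runs into exactly the problem you flag, that the anchor mixes directions along $\mc X$ with transverse ones. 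Their resolution is to \emph{define} $\text{LieAlgebroid}(\mc X)$ so that the second functor is an equivalence essentially by construction (Lie algebroids are formal groupoids, equivalently formal moduli problems under $\mc X$, with the anchor extracted afterwards by the forgetful functor $\mr{Anch}$, as the present paper also does). So your outline, taken literally, would require solving a problem the cited proof deliberately routes around; as a proof sketch it is incomplete at that point, though the first equivalence and the role of the finiteness hypothesis are correctly identified.
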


We don't define the general notion of Lie algebroids here, referring the reader instead to Gaitsgory--Rozenblyum for details. In the present paper essentially only two types of examples of Lie algebroids will appear, the initial object and the terminal object in the category $\text{LieAlgebroid}(\mc X)$, so we'll use a more concrete way to think about them in terms of an anchor map. Namely, we use the forgetful functor \[\mr{Anch} \colon \text{LieAlgebroid}(\mc X) \to \text{IndCoh}(\mc X)_{ / \bb{T}_{\mc X} }\] defined by sending the formal moduli problem $\mc X \to \mc Y$, which we identify with a Lie algebroid by Theorem \ref{GRtheorem2}, to $\bb T_{\mc X/ \mc Y} \to \bb{T}_{\mc X}$, where the map is induced from the identity $\bb{T}_{\mc X} \to \bb{T}_{\mc X}$. In particular we have $\mr{Anch}( \mc X \to \mc X ) = ( 0 \to\bb T_{\mc X} )$, which we call the \emph{zero Lie algebroid}, and $\mr{Anch}(\mc X \to \mc X_{\mr{ dR}} ) = (\mr{id} \colon  \bb T_{\mc X }\to \bb T_{\mc X} )$, which we call the \emph{tangent Lie algebroid}. 

At this point we'll introduce our main example: the de Rham prestack arising as a deformation of the formal 1-shifted tangent bundle.  Before we do so we'll introduce some relevant geometric objects originally constructed by Simpson \cite{Simpson2, Simpson3, Simpson4}. 

\begin{definition} \label{lambda_connection}
A \emph{$\lambda$-connection} on an algebraic $G$-bundle $P$ over a smooth complex variety $X$ is a map 
\[\del_\lambda \colon \Omega^{0}_{\mr{alg}}(X;\gg_P) \to \Omega^{1}_{\mr{alg}}(X;\gg_P)\] 
such that $\del_\lambda (f \cdot s ) = \lambda (\del f) s + f \del_\lambda s$ for $f \in \mc O_X$ and $s \in \Omega^{1}_{\mr{alg}}(X;\gg_P)$.  A $\lambda$-connection $\del_\lambda$ is called \emph{flat} if $\del_\lambda ^2=0$, where $\del_\lambda$ naturally extends to a map $\Omega_{\mr{alg}}^{i}(X;\gg_P) \to \Omega_{\mr{alg}}^{i+1}(X;\gg_P)$ for all $i$. 
\end{definition}

In particular, if $\lambda \ne 0$ and $\dd_\lambda$ is a flat $\lambda$-connection, then $\lambda^{-1} \dd_\lambda$ is an algebraic flat connection on an algebraic $G$-bundle.  If $\lambda=0$ then a flat $\lambda$-connection is a section $\phi$ of $\Omega_{\mr{alg}}^{1}(X;\gg_P)$ satisfying $[\phi,\phi]=0$: a Higgs field. 

\begin{example} \label{de_rham_stack_twist_example}
Let $\mc X$ be a derived Artin stack.  We can define a prestack $\mc X_{\mr{Hod}}$, the \emph{Hodge prestack} of $\mc X$, as a deformation of the formal 1-shifted tangent bundle $T_{\mr{form}}[1]\mc X$.  Such a deformation is -- by definition -- a flat morphism $\pi \colon \mc Y \rightarrow \mathbb{A}^1$ with $\mc Y_0 = T_{\mr{form}} [1] \mc X$ and $\mc Y |_{\mathbb{G}_m} \cong \mc Y_1 \times \mathbb{G}_m$. We first construct a formal moduli under $X \times \bb A^1$. Having $T_{\mr{form}}[1] \mc X$ as an object of $\text{FormMod}_{\mc X/}$ using Theorem \ref{GRtheorem2}, whose associated Lie algebroid is $0 \colon \mathbb{T}_{\mc X} \to \mathbb{T}_{\mc X}$, one can easily think of its deformation $\lambda Q$ parametrized by $\lambda \in \mathbb{A}^1$ with $Q= \mr{id} \colon  \mathbb{T}_{\mc X} \to \mathbb{T}_{\mc X}$ in the category of Lie algebroids: this gives rise to a formal moduli problem under $\mc X \times \bb A^1$. It remains to construct a map down to $\bb A^1$ for which we refer to Gaitsgory--Rozenblyum \cite[Chapter 9]{GR}, where this map is constructed as an example of a more general ``scaling'' construction, applied to the prestack $\mc X_{\mr{dR}}$.

We denote the fiber of $X_{\mr{Hod}}$ over a point $\lambda \in \CC$ by $X_{\lambda \text{-dR}}$.  The fiber over $\lambda = 1$ is the usual de Rham prestack $X_{\mr{dR}}$ -- since the formal moduli problem $\mc X_{\mr{dR}}$ under $\mc X$ corresponds to the tangent Lie algebroid $\mr{id} \colon \mathbb{T}_{\mc X} \to \mathbb{T}_{\mc X}$ -- and the fiber over $\lambda = 0$ is also called the \emph{Dolbeault stack}, and denoted $X_{\mr{Dol}}$.  We denote the mapping stack into $BG$ by
\[\underline{\mr{Map}}(X_{\lambda \text{-dR}}, BG) = \Flat^\lambda_G(X).\]
It represents flat $\lambda$-connections on $X$ when $X$ is a smooth variety.  When $\lambda=0$ we recover the moduli stack of Higgs bundles on $X$ for the group $G$.
\end{example}

\begin{remark}
Simpson \cite{Simpson4} originally gave a different definition in the case where $X$ is a scheme, modelling $X_{\mr{Hod}}$ as a groupoid in schemes living over $\bb A^1$.  First form the deformation to the normal cone of the diagonal map $\Delta \colon X \inj X \times X$.  This is a $\bb G_m$-equivariant scheme living over $\bb A^1$ whose fiber over $\lambda \ne 0$ is just $X \times X$ with $X$ included diagonally, and whose fiber over $0$ is the tangent space $TX$ with $X$ included as the zero section.  Form the formal completion of $X \times \bb A^1$ inside this total space.  This admits two maps to $X \times \bb A^1$ inherited from the two projections $X \times X \to X$,
\[\mr{Def}(\Delta)^{\wedge}_{X \times \bb A^1} \rightrightarrows X \times \bb A^1.\]
The Hodge prestack $X_{\mr{Hod}}$ is equivalent to the coequalizer of these arrows in the category of stacks. For $\lambda =1$, it coincides with the usual definition of the de Rham prestack $X_{\mr{dR}}$. For $\lambda =0$, the coequalizer of the trivial action $T_{\mr{form} }X \rightrightarrows X$ is the relative classifying space $B_XT_{\mr{form}}X$ of the sheaf $T_{\mr{form}} X$ of formal groups over $X$, which in turn is the same as $T_{\mr{form}}[1] X$ by the discussion below the Theorem \ref{GRtheorem}: the two prestacks arise from the same Lie algebra.
\end{remark}

With this apparatus in hand, one can construct twists of fiberwise formal algebraic gauge theories, as long as the twisting data is compatible with the structure map $\sigma \colon \bun_G \to \mc M$, so that a twist exists within the category of formal algebraic gauge theories.  Let $\mc M$ be a fiberwise formal algebraic gauge theory acted on by twisting data $(\alpha,Q)$ preserving the fibers of the map $\sigma$.  This condition will be necessary for a natural twist to exist within formal algebraic gauge theories.  Let's be clear about precisely what compatibility we require between the structure maps of out formal algebraic gauge theories and the $H$-action.

\begin{definition} \label{preserves_fibers_def}
Let $f \colon \mc X \to \mc Y$ be a morphism of derived stacks, and suppose that the supergroup $H$ acts on $\mc Y$.  We say that the $H$-action \emph{preserves the fibers} of the map $f$ if the image of the map $df \colon \bb T_{\mc X} \to f^*\bb T_{\mc Y}$ is invariant under the $H$-action.  In particular this makes the relative tangent complex $\bb T_{\mc X / \mc Y}$ into a sheaf of $H$-representations. 
\end{definition}

We will proceed by defining the canonical twist for the case of $\sigma$ and $\pi$ both being preserved by the twisting data and of $\sigma$ being preserved independently first and show that these two are compatible. 

\begin{definition}[Twisting a fiberwise formal algebraic gauge theory] \label{two_twist_defs}
Let $\mc M$ be a fiberwise formal algebraic gauge theory with $\sigma \colon \mr{Bun}_G\to \mc M$ and $\pi \colon \mc M\to \mr{Bun}_G$. We always assume that the action of $H$ on $\bun_G$ is trivial.
\vspace{-12pt}
\begin{enumerate}
\item \label{lie_algebra_twist_def} Suppose that the twisting data $(\alpha,Q)$ preserves the fibers of both $\sigma$ and $\pi$. Then $\mc M$ -- as a Lie algebra object in $\mr{IndCoh}(\mr{Bun}_G )$ by Theorem \ref{GRtheorem} -- has a twist $\mc M^Q$ in the same category by Proposition \ref{perturbative_twist_still_a_theory}, which in turn can be identified with a fiberwise formal algebraic gauge theory by Theorem \ref{GRtheorem}.
\item  \label{anchor_twist_def}  Suppose that the twisting data $(\alpha,Q)$ preserves the fibers of $\sigma$. An $H$-equivariant map $\sigma \colon \mr{Bun}_G\to \mc M$ gives an ind-coherent sheaf $\bb T_{\mr{Bun}_G / \mc M }$  with $H$-action, while an $H$-equivariant map $\mc M \to (\mr{Bun}_G)_{\mr{dR}}$ under $\mr{Bun}_G$ gives a map $\bb T_{\mr{Bun}_G / \mc M } \to \bb T_{\mr{Bun}_G}$ of ind-coherent sheaves with $H$-action by Theorem  \ref{GRtheorem2}. Hence we can define the \textit{twisted anchor map} as the twist of $\mr{anch}(\bb T_{ \mr{Bun}_G / \mc M })$ which is still an object of $\mr{IndCoh}(\mr{Bun}_G)_{/\bb T_{ \mr{Bun}_G}}$.
\end{enumerate} 
\end{definition}

Note that in the first case, one retains a Lie algebra structure, which by Theorem \ref{GRtheorem} gives rise to a pointed formal moduli over $\bun_G \times \bb A^1$. Note that the projection down to $\bb A^1$ supplies the structure of a twist in the sense of \ref{twist_definition}; the necessary section is given by composing the pointing with the map $\bb A^1 \to \bb A^1 \times \bun_G$ associated to a closed point of $\bun_G$. In the second case we only obtain an ind-coherent sheaf with an anchor map to $\bb T_{\mr{Bun}_G}$.  These two definitions of twist are compatible.

\begin{prop}
Given a fiberwise formal algebraic gauge theory $\mc M$ with twisting data preserving the fibers of both $\sigma$ and $\pi$, the anchor map of the twisted theory $\mr{anch}(\bb T_{\bun_G/\mc M^Q})$ is equivalent to the twist of the anchor $\mr{anch}(\bb T_{\bun_G/\mc M})$.
\end{prop}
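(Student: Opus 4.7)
The plan is to reduce both sides of the claimed equivalence to the same data, namely the twisted ind-coherent sheaf $(\bb T_{\bun_G/\mc M})^Q$ equipped with the zero morphism into $\bb T_{\bun_G}$. The key structural observation is that any fiberwise formal algebraic gauge theory has vanishing anchor.

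To see this, let $\mc N$ be a fiberwise formal algebraic gauge theory with section $\sigma$ and retraction $\pi$. The fiber sequence associated to $\sigma$ reads
\[\bb T_{\bun_G/\mc N} \to \bb T_{\bun_G} \xrightarrow{d\sigma} \sigma^* \bb T_{\mc N},\]
and the identity $\pi \circ \sigma = \mr{id}_{\bun_G}$ supplies, via $d\pi|_\sigma$, a splitting of $d\sigma$. Hence $d\sigma$ is a split monomorphism with cokernel $\sigma^*\bb T_{\mc N/\bun_G}$. In a stable $\infty$-category the fiber of a split monomorphism is the $[-1]$-shift of its cokernel, and the resulting fiber sequence is itself split; it follows at once that $\bb T_{\bun_G/\mc N} \simeq \sigma^*\bb T_{\mc N/\bun_G}[-1]$ and $\mr{anch}(\bb T_{\bun_G/\mc N}) = 0$.

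Now apply this to both $\mc M$ and $\mc M^Q$. The hypothesis that the twisting data preserves the fibers of $\sigma$ and $\pi$ makes both structure maps $H$-equivariant (with trivial $H$-action on $\bun_G$), so the twisted theory $\mc M^Q$ produced in definition \ref{two_twist_defs}(1) --- constructed by twisting the Lie algebra object $\sigma^*\bb T_{\mc M/\bun_G}[-1]$ and applying theorem \ref{GRtheorem} --- automatically inherits a section $\sigma^Q$ and a retraction $\pi^Q$, hence is again fiberwise formal. The previous paragraph therefore identifies $\mr{anch}(\bb T_{\bun_G/\mc M^Q})$ as the zero map out of $(\sigma^Q)^*\bb T_{\mc M^Q/\bun_G}[-1]$. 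By the construction of $\mc M^Q$, this ind-coherent sheaf is the perturbative twist of $\sigma^*\bb T_{\mc M/\bun_G}[-1] \simeq \bb T_{\bun_G/\mc M}$, which in turn agrees with the sheaf appearing in the twisted anchor of definition \ref{two_twist_defs}(2); and since the anchor of $\mc M$ is itself zero by the same argument, its twist is the zero map as well. The two objects in $\mr{IndCoh}(\bun_G)_{/\bb T_{\bun_G}}$ therefore coincide.

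The main point demanding care is the claim that the retraction really survives at the non-perturbative level, i.e., that definition \ref{two_twist_defs}(1), formulated in terms of Lie algebras in $\mr{IndCoh}(\bun_G)$, produces a fiberwise formal object rather than merely a pointed formal moduli. This should follow from the functoriality of the Gaitsgory--Rozenblyum equivalence applied $H$-equivariantly to $\pi$: the perturbative twist carries an $H$-equivariant map of pointed formal moduli to a map of the corresponding twisted objects, but this compatibility will need to be spelled out in our conventions.
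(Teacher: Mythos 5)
Your proof is correct and structurally the same as the paper's: both reduce the claim to (i) the twisted theory $\mc M^Q$ being again fiberwise formal, hence having zero anchor, (ii) the underlying ind-coherent sheaves agreeing, and (iii) the twisted anchor map of definition \ref{anchor_twist_def} being zero. The one place where the implementations differ is step (iii). The paper exhibits an $H$-equivariant factorization of the anchor through the zero Lie algebroid by applying the functor of theorem \ref{GRtheorem2} to the diagram $\bun_G \to \mc M \xrightarrow{\pi} \bun_G \to (\bun_G)_{\mr{dR}}$ of formal moduli problems under $\bun_G$; you instead split the fiber sequence $\bb T_{\bun_G/\mc M} \to \bb T_{\bun_G} \xrightarrow{d\sigma} \sigma^*\bb T_{\mc M}$ using $d\pi$ and observe that the connecting map of a split triangle is null. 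Your version has the merit of actually proving the assertion, which the paper leaves implicit, that fiberwise formal implies zero anchor, and of identifying $\bb T_{\bun_G/\mc M} \simeq \sigma^*\bb T_{\mc M/\bun_G}[-1]$ en route. The only point to tighten is your final sentence: ``the anchor is zero, so its twist is zero'' requires the nullhomotopy to live in the category of $H$-representations, not merely after forgetting the $H$-action; this does follow from your setup, since the splitting is induced by the $H$-equivariant map $d\pi$, but you should say so explicitly (this is precisely the point the paper's factorization is designed to make). You are also right to flag that the survival of the retraction $\pi^Q$ needs justification; in the paper this is immediate from theorem \ref{GRtheorem}, since the twisted Lie algebra object corresponds to a \emph{pointed} formal moduli problem over $\bun_G$, which comes equipped with both a section and a projection.
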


\begin{proof}
Because the twist $\mc M^Q$ is still a fiberwise formal algebraic gauge theory, its anchor map is zero.  The underlying ind-coherent sheaves of both the theory obtained by applying the functor $\mr{anch}$ to the twisted theory $\mc M^Q$, and the theory obtained by twisting $\mr{anch}(\bb T_{\bun_G/\mc M})$ coincide, and hence we must only check that if our twisting data is equivariant for $\pi$ then the twisted anchor map defined above is zero.  In this case we can factor the anchor map $\bb T_{\bun_G/\mc M} \to \bb T_{\bun_G}$ through zero as maps of $H$-representations, by applying the functor of Theorem \ref{GRtheorem2} to the diagram
\[\xymatrix{
\mc M \ar[r]^\pi &\bun_G \ar[r] &(\bun_G)_{\mr{dR}} \\
\bun_G \ar[u]^\sigma \ar@{=}[ur] \ar[urr] &&
}\]
in formal moduli problems under $\bun_G$.  Because these maps are $H$-equivariant the twisted anchor map from the twist of $\bb T_{\bun_G/\mc M}$ still factors through the zero bundle, so is the zero map. 
\end{proof}

With this proposition in mind, we'll abuse notation and always refer to the twisted anchor map as $\mathbb{T}_{\bun_G /\mc M^Q}$, even if the twisting data does not preserve the fibers of $\pi$.  In some examples we can promote this anchor map to a unique Lie algebroid, and therefore to a unique formal algebraic gauge theory.

\begin{definition}
A deformation $L'$ of a Lie algebroid $L$ on a derived stack $\mc X$ is a Lie algebroid on $\mc X \times \bb A^1$ such that the moduli problem under $\mc X$ corresponding to $L$ via Theorem \ref{GRtheorem2} and the moduli problem under $\mc X$ obtained by restricting the moduli problem associated to $L'$ to $\mc X \times \{0\}$ coincide. 
\end{definition}

\begin{lemma} \label{twist_construction_lemma}
If the twisted family $\mathbb{T}_{\bun_G /\mc M^{\lambda Q}}  \in \text{IndCoh}(\bun_G) _{/ \mathbb{T}_{\bun_G }}$ for $\lambda \in \bb A^1$ is the image under the functor $\mr{Anch}$ of a deformation in $\text{LieAlgebroid}(\bun_G)$, deforming the Lie algebroid corresponding to $\mc M$ then there exists a formal moduli problem $\mc M'$ under $\mc M \times \bb A^1$, corresponding to a deformation of a Lie algebroid, with respect to the twisting data $(\alpha, Q)$.  If this object $\mr{Anch}^{-1}(\mathbb{T}_{\bun_G / \mc M^Q})$ is unique up to equivalence then so is the twisted derived stack $\mc M^Q$, among formal algebraic gauge theories.
\end{lemma}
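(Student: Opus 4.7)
The plan is to pass through the Gaitsgory-Rozenblyum correspondence of theorem \ref{GRtheorem2} to convert the assumed Lie algebroid deformation into a formal moduli problem, then verify that this moduli problem satisfies the three conditions of definition \ref{twist_definition}. Concretely, by hypothesis we are handed a Lie algebroid $L'$ on $\bun_G \times \bb A^1$ which restricts to the Lie algebroid associated with $\mc M$ at $\lambda = 0$ and whose image under $\mr{Anch}$ is the family $\mathbb{T}_{\bun_G / \mc M^{\lambda Q}}$. Applying the equivalence $\mr{Lie} \circ \Omega_{\bun_G \times \bb A^1}$ in reverse produces a formal moduli problem $\mc M'$ under $\bun_G \times \bb A^1$, and hence by composition with the projection a map $\mc M' \to \bb A^1$ along which $\mc M'$ deforms the regrading $\mc M^\alpha$ by construction at $\lambda = 0$.

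Next I would identify the generic fiber with the perturbative twisted theory. The sections required by definition \ref{twist_definition} are produced from a closed point $x_1 \in \mc M^Q$ by choosing a compatible closed point $x \in \bun_G$ (using that the reduced part of any formal algebraic gauge theory agrees with that of $\bun_G$), and taking the structure section $\sigma_\alpha \times \mr{id}_{\bb A^1}$ restricted over a line through $x_1$; over $\bb A^1$ the relative tangent complex $s^* \mathbb{T}_{\mc M'/\bb A^1}$ is obtained by twisting, fiberwise, the pullback along $\sigma$ of the original tangent complex, because by construction the $\mr{Anch}$-image family fiberwise realizes the perturbative twist of the $L_\infty$-algebra in the sense of remark \ref{perturbative_deformation}. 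Here one uses the fact that the underlying ind-coherent sheaf of the Lie algebroid equals the relative tangent complex, together with proposition \ref{perturbative_twist_still_a_theory} to ensure the classical field theory structure is preserved fiberwise.

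For the uniqueness statement, suppose $\mc M^Q_1$ and $\mc M^Q_2$ are two candidate twisted theories produced from the same anchor data. Each gives rise via theorem \ref{GRtheorem2} to an object of $\mr{Anch}^{-1}(\mathbb{T}_{\bun_G / \mc M^Q})$, and the assumed uniqueness of the latter object up to equivalence transports through the equivalence of theorem \ref{GRtheorem2} to an equivalence $\mc M^Q_1 \simeq \mc M^Q_2$ of formal moduli problems under $\bun_G \times \bb A^1$, which in particular restricts to an equivalence of their generic fibers as formal algebraic gauge theories.

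The main obstacle I anticipate is the fiberwise verification that the perturbative twist of $\mathbb{T}_{\sigma(x)}\mc M$ in the sense of definition \ref{perturbative_twist_def} really coincides with the tangent complex produced by taking $\mathbb{T}_{\bun_G / \mc M^{\lambda Q}}$ at $x$: the subtle point is that the regrading with respect to $\alpha$ and the odd differential given by $Q$ need to assemble into the single one-parameter deformation indexed by $\lambda$, matching the Tate-construction-like description of perturbative twisting rather than merely agreeing at $\lambda = 0$ and at generic $\lambda$ separately. Once this compatibility is in place, the rest is essentially formal application of the two Gaitsgory-Rozenblyum equivalences.
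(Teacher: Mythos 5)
Your proposal is correct and follows the same route as the paper: the paper's entire proof reads ``This is a direct application of theorem \ref{GRtheorem2},'' and your argument is precisely a detailed unpacking of that application (converting the assumed Lie algebroid deformation into a formal moduli problem and transporting uniqueness through the equivalence). The extra verification you sketch of the conditions of definition \ref{twist_definition} -- finding the map to $\bb A^1$ and the sections through closed points -- is material the paper defers to remark \ref{canonical_twist_remark} rather than including in the lemma itself, so you have if anything proved slightly more than the stated lemma requires.
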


\begin{proof}
This is a direct application of Theorem \ref{GRtheorem2}.
\end{proof}

\begin{remark} \label{canonical_twist_remark}
If, in addition, one can find a map $\mc M' \to \bb A^1$ so that the composite $\bb A^1 \to \bb A^1 \times \bun_G \to \mc M' \to \bb A^1$ is the identity for every closed point $P$ of $\bun_G$, then $\mc M'$ has the structure of a twist as in Definition \ref{twist_definition}.  We observed, following Definition \ref{two_twist_defs} that there is automatically such a map when $\mc M$ is fiberwise formal and the twisting data preserves the fibers of $\sigma$.  There will also naturally be such a map for examples built from the Hodge stack.  We will call such twists -- when they exist and are essentially unique -- \emph{canonical twists}.
\end{remark}

For reference later, we should spell out exactly what we've shown for fiberwise formal theories -- i.e. in situations where we twist a Lie algebra object, and the twisted theory does not develop a non-trivial anchor map.

\begin{corollary} \label{fiberwise_action_corollary}
If $\mc M$ is a fiberwise formal algebraic gauge theory acted on by twisting data $(\alpha,Q)$ preserving the fibers of the map $\pi \colon \mc M \to \bun_G$, then there exists a canonical twist $\mc M^Q$, which is itself a fiberwise formal algebraic gauge theory.
\end{corollary}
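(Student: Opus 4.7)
The plan is to show the corollary follows essentially by direct application of the machinery set up in Definition \ref{two_twist_defs}(1), combined with the Gaitsgory--Rozenblyum correspondence of Theorem \ref{GRtheorem}. The corollary asks for slightly less than the hypothesis of Definition \ref{two_twist_defs}(1) --- only that the fibers of $\pi$ be preserved, rather than the fibers of both $\sigma$ and $\pi$ --- so the first step is to reconcile this.

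First I would observe that under the standing assumption that $H$ acts trivially on $\bun_G$, the section property $\pi \circ \sigma = \mr{id}$ forces the twisting data to preserve the fibers of $\sigma$ automatically. Indeed, for a fiberwise formal theory the section $\sigma$ splits the short exact sequence of tangent complexes, giving an $H$-equivariant decomposition
\[
\sigma^* \bb T_{\mc M} \;\simeq\; \bb T_{\bun_G} \,\oplus\, \sigma^* \bb T_{\mc M/\bun_G},
\]
in which the first summand carries the trivial $H$-action (since $\bun_G$ does) and hence is invariant; so the image of $d\sigma$ is $H$-invariant in the sense of Definition \ref{preserves_fibers_def}. Thus both fiber-preservation hypotheses of case \eqref{lie_algebra_twist_def} of Definition \ref{two_twist_defs} hold.

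Next I would apply Theorem \ref{GRtheorem} to transport $\mc M$, as a pointed formal moduli problem over $\bun_G$, to a Lie algebra object $L \in \mr{LieAlg}(\mr{IndCoh}(\bun_G))$. Because the $H$-action preserves the fibers of both structure maps, $L$ inherits an $H$-action through Lie algebra maps, and the fiberwise invariant pairing of $\mc M$ (giving the classical field theory structure) transports to a pairing on $L$. By the fiberwise application of Proposition \ref{perturbative_twist_still_a_theory}, the twisting procedure of Definition \ref{perturbative_twist_def} (formation of the Tate-style total complex) applied objectwise to $L$ produces a new Lie algebra object $L^Q$ with an invariant pairing of the correct degree. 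Applying the inverse of the Gaitsgory--Rozenblyum equivalence yields a pointed formal moduli problem $\mc M^Q$ over $\bun_G$ equipped with maps $\sigma^Q$ and $\pi^Q$; this is by definition a fiberwise formal algebraic gauge theory.

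Finally, to see canonicality in the sense of Remark \ref{canonical_twist_remark}, I would apply the same procedure to the $\bb A^1$-family of twisting data $(\alpha, \lambda Q)$ for $\lambda \in \bb A^1$ --- exactly as in Remark \ref{perturbative_deformation} --- to build a Lie algebra object over $\bun_G \times \bb A^1$ and hence, via Theorem \ref{GRtheorem} with base $\bun_G \times \bb A^1$, a pointed formal moduli problem $\mc M' \to \bun_G \times \bb A^1$. Composing with the projection to $\bb A^1$ and the point $\bb A^1 \to \bun_G \times \bb A^1$ determined by any closed point of $\bun_G$, using the pointing $\sigma$, gives the identity on $\bb A^1$, verifying Definition \ref{twist_definition}. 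Uniqueness among fiberwise formal theories is immediate from Theorem \ref{GRtheorem}, since the twisted Lie algebra object is determined up to equivalence by the twisting data. The only non-routine point in the argument is the compatibility of the invariant pairing with the twisting, which is already the content of Proposition \ref{perturbative_twist_still_a_theory}; I expect the rest to be formal from the Gaitsgory--Rozenblyum equivalences.
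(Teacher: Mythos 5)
The main body of your argument --- transport $\mc M$ to a Lie algebra object in $\mr{IndCoh}(\bun_G)$ via Theorem \ref{GRtheorem}, twist it objectwise using Proposition \ref{perturbative_twist_still_a_theory}, transport back, and obtain canonicality from the $\bb A^1$-family of twisting data $(\alpha,\lambda Q)$ --- is exactly the route the paper intends; the corollary is presented there as a summary of Definition \ref{two_twist_defs}\eqref{lie_algebra_twist_def} together with Remark \ref{canonical_twist_remark}, with no separate proof. So that part is fine.

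The genuine gap is your opening step, the claim that preserving the fibers of $\pi$ automatically forces the twisting data to preserve the fibers of $\sigma$. This is false, and the paper itself supplies the counterexample: in Section \ref{B_twist_section} the B supercharge preserves the fibers of $\pi \colon \EOM_{\mr{hol}}(U) \to \bun_G(U)$ but explicitly does \emph{not} preserve the fibers of the section $\sigma$, which is precisely why the B-twist cannot be produced by the Lie-algebra-object construction and has to be built by deforming the source of a mapping stack instead. Your splitting argument is circular: the decomposition $\sigma^*\bb T_{\mc M} \simeq \bb T_{\bun_G} \oplus \sigma^*\bb T_{\mc M/\bun_G}$ is $H$-equivariant only if the inclusion $d\sigma$ of the first summand is $H$-equivariant, which is the statement to be proved. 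What preserving the fibers of $\pi$ actually gives you is that the kernel of $d\pi$ is an $H$-subrepresentation; it says nothing about whether the chosen complement $\mr{im}(d\sigma)$ is preserved (concretely, a vertical odd vector field can move the section while preserving every fiber). The honest reading is that the corollary's hypothesis should be taken to be that of Definition \ref{two_twist_defs}\eqref{lie_algebra_twist_def} --- the twisting data preserves the fibers of \emph{both} $\sigma$ and $\pi$ --- which is how it is actually invoked for the holomorphic twist, where the preceding proposition verifies both conditions. With that hypothesis your remaining argument goes through; without it, the construction of the $H$-action on the Lie algebra object $\sigma^*\bb T_{\mc M/\bun_G}[-1]$ already fails, since one needs $\sigma$ to be $H$-equivariant to pull back the action.
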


As well as fiberwise formal theories and twisting data preserving the fibers of $\pi$, we'll use Lemma \ref{twist_construction_lemma} for the following simple example.  A deeper understanding of the anchor map functor would allow for a more general theorem ensuring the existence of canonical twists of fiberwise formal algebraic gauge theories: i.e. twists of sheaves of Lie algebras into Lie algebroids.

\begin{example}
If $\mc M = T[1] \bun_G $ and the twisting data acts as a non-vanishing degree 1 vector field, then $\mc M^Q$ is $(\bun_G)_{\mr{dR}}$. This follows because the vector field amounts to $\mr{id} \colon \bb{T}_{\bun_G} \rightarrow \bb{T}_{\bun_G }$ as ind-coherent sheaves over $\bun_G$. Note that this object is the terminal object in $\text{IndCoh}(\bun_G) _{/ \mathbb{T}_{\bun_G}  }$ so is the image under the functor $\mr{Anch}$ of a unique Lie algebroid.  In this case there is a natural map $(\bun_G)_{\mr{Hod}} \to \bb A^1$, realizing $(\bun_G)_{\mr{dR}}$ as a twist of $T[1]\bun_G$.
\end{example}

Having defined a twisting procedure for fiberwise formal algebraic gauge theories, let's investigate the properties enjoyed by these twisted theories.  The twisted theory $\mc M^Q$ retains only a limited amount of supersymmetry: it is acted on by the $Q$-cohomology of the full supersymmetry algebra.  More precisely, we have the following statement at the perturbative level, which immediately implies an analogous result non-perturbatively.

\begin{prop}\label{action_on_twisted_theories_lemma}
Suppose twisting data $(\alpha, Q)$ comes from the action of a supersymmetry algebra $\mc A$.  The action of the Chevalley--Eilenberg cochains $C^\bullet(\mc A)$ on the theory $\mc E$ defines an action of $C^\bullet(H^\bullet(\mc A, Q))$  on the twisted theory $\mc E^Q$, where we think of $Q$ as a fermionic endomorphism of cohomological degree 0 acting on $\mc A$, and hence on $C^\bullet(\mc A)$.  Furthermore the action of the translation algebra factors through the action of this algebra.
\end{prop}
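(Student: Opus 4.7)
The plan is to lift the supersymmetry action through the Tate construction defining $\mc E^Q$, then use homotopy transfer to descend from $\mc A$ to its $Q$-cohomology. Recall that an action of $C^\bullet(\mc A)$ on $\mc E$ is equivalent (via Koszul duality) to a map of differential graded Lie algebras $\rho \colon \mc A \to \mr{End}(\mc E)$, where $\mr{End}(\mc E)$ denotes the Lie algebra of endomorphisms of $\mc E$ equipped with the supercommutator bracket, and $\mc A$ initially carries zero differential. The strategy is first to establish a compatible extension of $\rho$ to the twisted theory, and second to pass to cohomology.

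The key computation is the commutator of the twisted differential $d_{\mc E} + tQ$ with $\rho(X)$ for $X \in \mc A$, extended $t$-linearly:
\[ [d_{\mc E} + tQ, \rho(X)] = [d_{\mc E}, \rho(X)] + t[\rho(Q), \rho(X)] = t\, \rho([Q, X]), \]
where the first term vanishes because $\rho(X)$ commutes with $d_{\mc E}$ and the second uses that $\rho$ is a Lie algebra homomorphism. After inverting $t$ and taking $\CC^\times$-invariants as in definition \ref{perturbative_twist_def}, this identity says exactly that $\rho$ descends to a map of differential graded Lie algebras from $(\mc A, [Q,-])$ to $\mr{End}(\mc E^Q)$. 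Moreover, whenever $X = [Q, Y]$ is $Q$-exact, the same formula shows $\rho(X) = [d_{\mc E} + Q, \rho(Y)]$ is null-homotopic as an endomorphism of $\mc E^Q$.

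To upgrade this homotopy-level statement to an honest action of $C^\bullet(H^\bullet(\mc A, Q))$ on $\mc E^Q$, I would apply the homotopy transfer theorem to a deformation retract of $(\mc A, [Q,-])$ onto its cohomology: this equips $H^\bullet(\mc A, Q)$ with an $L_\infty$-structure and produces an $L_\infty$-quasi-isomorphism $H^\bullet(\mc A, Q) \to \mc A$. Composing with $\rho$ and dualising gives the desired map $C^\bullet(H^\bullet(\mc A, Q)) \to C^\bullet(\mr{End}(\mc E^Q))$, i.e.\ an action on $\mc E^Q$. For the second claim, the translation subalgebra $V \subseteq \mc A^0$ maps into $H^0(\mc A, Q)$ with kernel $V \cap [Q, \Sigma]$, so the induced action on $\mc E^Q$ factors through the quotient $V/(V \cap [Q, \Sigma])$, which is precisely the subspace of translations surviving in the twisted supersymmetry algebra computed in section \ref{supercharge_section}.

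The main obstacle is the bookkeeping around the regrading by the $\CC^\times$-action $\alpha$ and the Tate construction inverting $t$: one must verify that the cohomological degrees of $\rho(X)$, $t$, and $Q$ are consistent with the commutator calculation above, and that every step is $\CC^\times$-equivariant so that the final object is intrinsically defined. A secondary subtlety is that the $L_\infty$-enhancement of $H^\bullet(\mc A, Q)$ produced by homotopy transfer may in general carry non-trivial higher brackets (Massey products), so the claimed action of $C^\bullet(H^\bullet(\mc A, Q))$ should be interpreted with this $L_\infty$-structure implicit; in the holomorphic and topological twists worked out explicitly in section \ref{supercharge_section} the relevant higher brackets can be checked to vanish.
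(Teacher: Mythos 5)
Your argument is correct, and it takes a genuinely different route from the paper's. The paper proves the proposition by a direct element-level computation on fields: it checks that a $Q$-closed symmetry $A$ sends equivalent fields $\phi$ and $\phi + Q\psi$ to results differing only by the action of a $Q$-exact symmetry, and that a $Q$-exact symmetry $[Q,\lambda]$ annihilates every class $[\phi]$ (using $Q[\phi]=0$ in $\mc E^Q$); the translation statement is then handled by observing that the geometric pushforward action of translations agrees with the algebraic one by construction of the twist. You instead work on the Koszul-dual side: you encode the action as a dg Lie algebra map $\rho\colon \mc A \to \mr{End}(\mc E)$, compute $[d_{\mc E}+tQ,\rho(X)] = t\,\rho([Q,X])$, and conclude after the Tate construction that $\rho$ intertwines $[Q,-]$ with $[d_{\mc E^Q},-]$, with exact elements acting null-homotopically; homotopy transfer then produces the action of $H^\bullet(\mc A,Q)$. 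Your version is the chain-level refinement of the paper's cohomology-level computation, and it buys two things the paper elides: an honest explanation of why the statement survives the regrading and $t$-inversion (rather than being checked only on cohomology classes), and an explicit acknowledgement that $H^\bullet(\mc A,Q)$ a priori carries transferred higher $L_\infty$ brackets, so that ``action of $C^\bullet(H^\bullet(\mc A,Q))$'' must be read with that structure implicit. The cost is reliance on the homotopy transfer machinery, and your reading of the final clause (the translation action factors through $V/(V\cap[Q,\Sigma])$) is slightly different from, though compatible with, the paper's point that the geometric and algebraic translation actions coincide. Neither difference is a gap.
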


\begin{remark}
In particular, this tells us that $Q$-exact translations act trivially in the twisted theory.
\end{remark}

\begin{proof}
We use the fact that, since $\mc A$ acts by symmetries, $[A,B](\phi) = A(B(\phi)) - B(A(\phi))$.  First we'll show that the $\mc A$ action on $\mc E$ induces an $\mc A$-action on $\mc E^Q$ which is well-defined up to $Q$-exact symmetries.  Let $\phi$ and $\phi + Q\psi$ be equivalent fields in $\mc E^Q$, and let $A \in C^1(\mc A)$ be a symmetry.  The action of $A$ on $\phi + Q\psi$ is by
\begin{align*}
A(\phi + Q\psi) &= A\phi + AQ\psi \\
&= A\phi + QA\psi - [Q,A]\psi \\
&= A\phi - [Q,A]\psi
\end{align*}
since $QA\psi = 0$ in $\mc E^Q$.  This expression in turn equals $A\phi$ up to $Q$-exact elements of the supersymmetry algebra, so this yields a well-defined action of the $Q$-closed symmetries in $\mc A$.

Now, let $A = [Q, \lambda] \in C^1(\mc A)$ be a $Q$-exact symmetry.  The action of $A$ on a field $[\phi]$ in $\mc E^Q$ is by
\begin{align*}
A \phi &= [Q, \lambda] [\phi] \\
&= Q\lambda [\phi] - \lambda Q [\phi] \\
&= 0 - \lambda(0) = 0
\end{align*}
since $Q\lambda \phi$ and $Q \phi$ vanish in $\mc E^Q$.  Note that here we're using the well-defined action of $Q$-closed symmetries on $\mc E^Q$ from the previous paragraph, so if $\phi \in \mc E$ has $Q$-cohomology class $[\phi]$ then $[\lambda[\phi]] = \lambda [\phi]$.  In particular $Q[\phi] = [Q\phi] = [0]$.  Thus we've shown that $Q$-exact symmetries act trivially, which means we have a well-defined action of $H^\bullet(\mc A, Q)$ on $\mc E^Q$ as required.

For the last statement we only need to note that the action of translations on $\mc E^Q$ by pushing forward along infinitesimal symmetries of spacetime agrees with the action of translations given here (which is well-defined since all translations are $Q$-closed) by construction of the twist.
\end{proof}

We focus now on the two types of twist that we're principally interested in: holomorphic and topological twists.  
\begin{definition}
A classical perturbative field theory $E$ on $\RR^n$ is called \emph{topological} if it is translation invariant; That is if the action of the Lie algebra $\RR^n$ on the sheaf $\mc E$ by translations is homotopically trivial.  The theory $E$ is called \emph{holomorphic} if the analogous condition holds for the Lie algebra of holomorphic vector fields for a specified complex structure on $\RR^n$. 
\end{definition}

\begin{prop} \label{top_supercharge_top_twist}
If $Q$ is a topological (resp. holomorphic) supercharge, then the twisted perturbative theory $E^Q$ is topological (resp. holomorphic).
\end{prop}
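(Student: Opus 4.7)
The plan is to deduce this from Proposition~\ref{action_on_twisted_theories_lemma}: the action of $V_\CC$ on $\mc E^Q$ factors through its image in $H^\bullet(\mc A,Q)$, and in particular any $Q$-exact translation acts trivially. So the problem reduces to identifying the $Q$-exact translations in each case and upgrading ``acts trivially on cohomology'' to an explicit null-homotopy at the level of the local $L_\infty$-algebra $\mc E^Q$.

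In the topological case, the definition of a topological supercharge is exactly that $[Q,-]\colon \Sigma \to V_\CC$ is surjective, so every translation $v\in V_\CC$ can be written as $v=[Q,\psi_v]$ for some $\psi_v\in\Sigma$. The required null-homotopy is then explicit: since $\psi_v$ acts on $\mc E$ as an odd derivation commuting in the graded sense with the original differential $d_{\mc E}$ (because supersymmetries are symmetries of the theory), and since the differential on $\mc E^Q$ is $D=d_{\mc E}+Q$, the graded commutator satisfies
\[[D,\psi_v] = [d_{\mc E},\psi_v] + [Q,\psi_v] = 0 + v.\]
Thus $v$ acts on $\mc E^Q$ as the $D$-commutator with $\psi_v$, which is precisely a homotopy trivialisation of translation by $v$; as $v$ was arbitrary, this shows that $\mc E^Q$ is topological.

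The holomorphic case is identical once one observes that the image of $[Q,-]$ is, by the definition of a holomorphic supercharge, the half-dimensional ``holomorphic'' subspace of $V_\CC$ in the chosen splitting; the same formula then produces null-homotopies for all translations in that subspace, matching the definition of a holomorphic theory given above. The main point requiring care is the assertion that the algebraic $Q$-primitive $\psi_v$ in $\mc A$ really provides a null-homotopy of the local $L_\infty$-action, rather than merely a cohomology vanishing. This follows from the fact that an action of the supersymmetry algebra on a local $L_\infty$-algebra is by $L_\infty$-derivations commuting with the $L_\infty$-differential, so the bracket identities in $\mc A$ lift directly to identities among endomorphisms of $\mc E^Q$, and no further $L_\infty$-coherence argument is needed.
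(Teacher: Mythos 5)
Your proof is correct and follows essentially the same route as the paper's: both reduce the statement to proposition \ref{action_on_twisted_theories_lemma}, using that for a topological (resp.\ holomorphic) supercharge all (resp.\ half of the) translations are $Q$-exact and hence act trivially on $\mc E^Q$. The explicit null-homotopy $[d_{\mc E}+Q,\psi_v]=v$ you supply is a refinement the paper leaves implicit, and your closing caveat about strictness of the supersymmetry action is at the same level of rigor the paper itself adopts in its proof of proposition \ref{action_on_twisted_theories_lemma}.
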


\begin{proof}
If $Q$ is topological, then by definition all translations are $Q$-exact, so vanish in the $Q$-cohomology.  The action of translations is given by a cochain map from the Chevalley--Eilenberg cohomology
\[a \colon C^\bullet(\CC^n) \to \eend(\mc E^Q(\RR^n)).\]
This action factors through the action of the full supersymmetry algebra, i.e. through the map $C^\bullet(\CC^n) \to C^\bullet(\mc A)$ induced by projection onto the translations in the supersymmetry algebra.  Now apply Proposition \ref{action_on_twisted_theories_lemma}, and note that all translations must act trivially.

The holomorphic case proceeds identically.
\end{proof}

\section{Constructing Supersymmetric Gauge Theories} \label{constructing_theories_section}
We'll discuss two procedures for constructing supersymmetric gauge theories in four dimensions: \emph{dimensional reduction from 10 dimensions} and \emph{compactification from a supertwistor space}.  In this section we'll review both constructions for $N=4$ theories (though analogous constructions also give rise to theories in dimensions other than 4, and theories with less supersymmetry). The idea of dimensional reduction was developed by Cremmer and Scherk \cite{CremmerScherk} and by Scherk and Schwarz \cite{ScherkSchwarz} in the 1970's, and the application we're most concerned with is the construction of $N=4$ supersymmetric gauge theory in four dimensions from $N=1$ gauge theory in ten dimensions by Brink, Schwarz, and Scherk \cite{BrinkSchwarzScherk}.  We currently don't have a fully rigorous definition of dimensional reduction for our notion of classical field theories, so the construction via dimensional reduction from 10 dimensions should be thought of as motivational, while the construction via compatification from twistor space should be thought of as a true definition.

Before getting into the specifics we'll recall the general ideas behind compactification and dimensional reduction for classical field theories.  Throughout this section a classical field theory $\mc M$ will be a family of derived stacks with a shifted symplectic structure on the global section as in Definition \ref{classical_field_theory_def}.

\begin{definition}
If $p \colon X \to Y$ is a smooth and proper map of smooth complex varieties, then the \emph{compactification} of the theory along $p$ of a classical field theory $\mc M$ on $X$ is the pushforward assignment $p_*\mc M$. 
\end{definition}

\begin{prop}
The compactification of a classical field theory $\mc M$ is still a classical field theory.  
\end{prop}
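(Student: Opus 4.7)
The plan is as follows. Unpacking definition \ref{classical_field_theory_def}, we must verify three things for $p_* \mc M$: that it assigns a derived stack to each Zariski open subset of $Y$; that $(p_*\mc M)(Y)$ carries a $(-1)$-shifted symplectic structure; and that its shifted tangent complex $\bb T_{(p_*\mc M)(Y)}[-1]$ is equivalent to the global sections of a perturbative classical field theory on $Y$. The first is automatic from the defining formula $(p_*\mc M)(V) = \mc M(p^{-1}(V))$, and the second is immediate from the identification $(p_*\mc M)(Y) = \mc M(p^{-1}(Y)) = \mc M(X)$, which already carries such a structure by hypothesis. The substance lies entirely in the third condition.

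First I would produce a candidate perturbative theory on $Y$. Let $E$ be the elliptic $L_\infty$ algebra on $X$ whose global sections represent $\bb T_{\mc M(X)}[-1]$. After resolving $E$ by a fine resolution in the fiber direction of $p$ -- for instance, tensoring with the relative Dolbeault complex of $p$ -- one can form the derived pushforward $p_* E$ as a sheaf of complexes on $Y$. The $L_\infty$ operations on $E$ are encoded by polydifferential operators, which behave naturally under pushforward along a smooth morphism, so $p_* E$ inherits the structure of a local $L_\infty$ algebra on $Y$. For the pairing, combine the original invariant pairing $\langle -, - \rangle \colon E \otimes E[3] \to \dens_X$ with the canonical factorization $\dens_X \iso p^*\dens_Y \otimes \dens_{X/Y}$ and fiber integration $\int_{X/Y} \colon p_* \dens_{X/Y} \to \OO_Y$ to obtain an invariant pairing $p_* E \otimes (p_* E)[3] \to \dens_Y$ of the correct degree. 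The tautological equality $\Gamma(Y, p_* E) = \Gamma(X, E)$ then identifies the global sections of this candidate with $\bb T_{\mc M(X)}[-1]$, as required.

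The hard part will be verifying genuine ellipticity of $p_* E$ as a cochain complex on $Y$. The derived pushforward of an elliptic complex along a smooth proper morphism is well-known to be elliptic in a relative sense -- essentially by a family-index argument -- but confirming strict ellipticity in the sense required for a perturbative classical field theory takes some care. My approach would be to reduce to a local model in which $p$ looks like a projection $V \times F \to V$ with $F$ compact, where the pushforward decomposes as the external tensor product of the elliptic complex on $V$ with the finite-dimensional Hodge-theoretic cohomology of fibers, which is manifestly elliptic. The remaining functional-analytic subtleties -- realizing $p_* E$ as a sheaf of nuclear Fr\'echet spaces and verifying non-degeneracy of the pushed-forward pairing on global sections via a fiberwise Poincar\'e-type duality combined with the non-degeneracy of the original pairing -- are technical but should not present fundamental difficulties once the resolution has been chosen compatibly with fiber integration.
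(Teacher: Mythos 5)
Your verification of the first two conditions matches the paper exactly: $(p_*\mc M)(Y) = \mc M(p^{-1}(Y)) = \mc M(X)$, so the shifted symplectic structure survives tautologically. But for the third condition you have set yourself a much harder problem than the definition requires, and the paper's proof is correspondingly much shorter than yours. Definition \ref{classical_field_theory_def} only asks that $\bb T_{(p_*\mc M)(Y)}[-1]$ be homotopy equivalent to the global sections of \emph{a} perturbative classical field theory --- it does not ask for that theory to live on $Y$. Since the global sections of the compactified theory literally coincide with those of $\mc M$, the original elliptic $L_\infty$ algebra on $X$ already witnesses this, and the paper's argument ends there. Your entire construction of $p_*E$ as a local $L_\infty$ algebra on $Y$ is therefore unnecessary for the statement as formulated.

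Moreover, the step you correctly identify as ``the hard part'' is a genuine obstruction, not a technicality: the pushforward of a complex of finite-rank vector bundles along a proper submersion is not a complex of finite-rank vector bundles on $Y$, and your local-model reduction (external tensor product with the fiberwise cohomology) only works when that fiberwise cohomology has constant rank, which fails in general without further hypotheses. This is precisely why Costello treats compactification not as a construction but as extra \emph{data} --- a finitely generated complex on $Y$ together with a quasi-isomorphism to the pushforward --- a point the paper makes in the remark immediately following this proposition, where the authors explain that they deliberately drop the finiteness condition so that ``this subtlety won't arise.'' If you want to upgrade the proposition to produce a perturbative theory intrinsically on $Y$, you would need to either impose such finiteness as a hypothesis or supply the finite model as part of the data; as written, your sketch does not close that gap.
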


\begin{proof}
We just have to note that the global sections of compactified theories still carry shifted symplectic structures compatibly with the structure maps, and that the shifted tangent complex at a point is still a perturbative classical field theory.  The survival of the shifted symplectic structure under the compactification along $p \colon X \to Y$ is obvious, since $p_*\mc M(Y) = \mc M(p^{-1}Y) = \mc M(X)$ by definition.  The shifted tangent complex certainly retains its invariant pairing coming from this symplectic pairing, and it retains the structure of an elliptic $L_\infty$ algebra, so it forms a perturbative field theory.  
\end{proof}

\begin{definition}[Definition sketch]
The \emph{dimensional reduction} of a classical field theory $\mc M$ on a smooth variety $X$ along a fiber bundle $p \colon X \to Y$ whose fiber is a homogeneous space for an algebraic group $G$ is the classical field theory on $Y$ whose sections on an open set $U \sub Y$ are the $G$-invariants in $\mc M(p^{-1}U)$ under the action induced from the $G$-action on the fibers of $p$.
\end{definition}

This definition is currently unsatisfactory; we expect to have to impose additional conditions on the theory and the fibration for the theory obtained by taking invariants to remain a classical theory.  As such, we'll refer to dimensional reduction purely in an informal sense.

\begin{remark}
Costello \cite[19.2.1]{CostelloSH} uses the term ``dimensional reduction'' for what we call ``compactification'', and he requires an additional piece of structure.  He requires perturbative classical field theories to arise as the sections of a finitely generated complex of vector bundles, which is broken by the pushforward.  Thus he defines the compactification to consist of a finitely generated complex of vector bundles whose sections carry the structure of a perturbative classical field theory as we define it, along with a homotopy equivalence to the pushforward of a perturbative classical field theory on $X$.  For our purposes we won't need this finiteness condition, so this subtlety won't arise. 
\end{remark}

It'll also be important to understand how compactification and twisting relate to one another.  If the compactified theory $p_* \mc M$ is locally supersymmetric as in Section \ref{superspace_section} then the original theory $\mc M$ also admits an action of the supersymmetry algebra by four-dimensional local isometries fixing the fibers.  If the theory $\mc M$ was a fiberwise formal algebraic gauge theory then the compactification $p_*\mc M$ still defines a family of pointed formal moduli problems over $\bun_G$, i.e. there are a pair of maps $p_*\mc M(U) \rightleftarrows \bun_G(p^{-1}U)$ satisfying the hypotheses of Definition \ref{algebraic_gauge_theory_def}.

Therefore if we have twisting data $(\alpha,Q)$ for $\mc M$ then it makes sense to twist \emph{either} the original theory \emph{or} the compactified theory.  Denote these twisted theories by $\mc M^Q$ and $(p_* \mc M)^Q$ respectively.  We'll describe the relationship perturbatively.
\begin{lemma} \label{twist_dim_red_lemma}
If $\sigma \colon \bun_G \rightleftarrows \mc M \colon \pi$ is a fiberwise formal algebraic gauge theory and $\mc M^Q$ is a twist of $\mc M$ with respect to twisting data that preserves the fibers of $\pi$ and $\sigma$, then there exists a quasi-isomorphism of classical field theories
\[p_*(\mc M^Q) \iso (p_* \mc M)^Q.\]
\end{lemma}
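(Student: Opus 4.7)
The plan is to identify both sides as perturbative twists of the same underlying sheaf of Lie algebras, using the Gaitsgory--Rozenblyum correspondence of theorem \ref{GRtheorem}. Since $\mc M$ is fiberwise formal and the twisting data preserves the fibers of $\sigma$ and $\pi$, definition \ref{two_twist_defs}(1) produces $\mc M^Q$ as a fiberwise formal algebraic gauge theory whose associated Lie algebra in $\mr{IndCoh}(\bun_G(X))$ is the perturbative twist, in the sense of proposition \ref{perturbative_twist_still_a_theory}, of the Lie algebra $\mc L$ associated to $\mc M$. On the other hand, since the twisting data acts by local symmetries fixing the fibers of $p$, the compactification $p_*\mc M$ inherits an $H$-action that still preserves the fibers of $p_*\sigma$ and $p_*\pi$, so the same procedure constructs $(p_*\mc M)^Q$ as the perturbative twist of the Lie algebra associated to $p_*\mc M$.

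The second step is to check that the perturbative twisting functor commutes with $p_*$. Recall from definition \ref{perturbative_twist_def} and the discussion following it that the twist is built as the subcomplex $(\mc E((t)))^{\CC^\times}$ of $\mc E((t))$ equipped with the differential $d_{\mc E}+tQ$. Both of the operations involved — tensoring with $\CC((t))$ and taking derived $\CC^\times$-invariants — commute with $p_*$, the former trivially and the latter because $p_*$ preserves derived limits. Combining this with the observation that the Lie algebra for $p_*\mc M$ is simply the pushforward of $\mc L$ (in the sense that $p_*\mc M(U) = \mc M(p^{-1}U)$ for every open $U \subset Y$), we obtain the desired quasi-isomorphism $p_*(\mc M^Q) \iso (p_*\mc M)^Q$. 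The invariant pairing matches on each side because it is inherited from the same twisted pairing on $\mc E((t))$ constructed in the proof of proposition \ref{perturbative_twist_still_a_theory}.

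The main technical obstacle is verifying that $p_*\mc M$ is itself a fiberwise formal algebraic gauge theory in the sense of definition \ref{algebraic_gauge_theory_def}, with induced structure maps $p_*\sigma$ and $p_*\pi$ whose fibers are preserved by the $H$-action. This is the condition that allows us to invoke theorem \ref{GRtheorem} and the twisting construction of definition \ref{two_twist_defs}(1) on the compactified side. The key observation is that the fiber-preservation hypothesis, as stated in definition \ref{preserves_fibers_def}, is controlled by the restricted tangent complex at a closed point, which behaves well under pushforward provided the $H$-action on $\mc M$ commutes with the local isometries fixing the fibers of $p$ — precisely the setup we are in, since the twisting data arises from the supersymmetry algebra acting by local isometries. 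With these points settled, the argument reduces to the purely formal commutation of pushforward with the Tate construction used to define the perturbative twist.
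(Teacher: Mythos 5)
Your proposal is correct and follows essentially the same route as the paper: reduce via the Gaitsgory--Rozenblyum correspondence (corollary \ref{fiberwise_action_corollary}) to the restricted shifted tangent complexes as sheaves of dg Lie algebras over $\bun_G$, and then observe that since $p_*\mc M(U) = \mc M(p^{-1}U)$, both sides compute the twisting procedure applied to the same local sections. The paper states the commutation of the Tate construction with $p_*$ more tersely than you do, but the content is identical.
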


\begin{proof}
By Corollary \ref{fiberwise_action_corollary} it suffices to check this at the level of perturbative field theories on $Y$, i.e. taking the shifted relative tangent complexes as sheaves of dg Lie algebras on $\bun_G$ over $Y$.  Write $p_*(\mc E^Q)$ and $(p_* \mc E)^Q$ for these two sheaves.  Fixing an open set $U \sub Y$ in the base, by definition $p_* (\mc E^Q)(U)$ is obtained as the local sections $\mc E^Q(p^{-1}U)$.  Likewise, $(p_* \mc E)^Q(U)$ is obtained by taking the space of local sections $\mc E(p^{-1}U)$ and applying the twisting procedure with respect to the specified twisting data, which also recovers the space of local sections $\mc E^Q(p^{-1}U)$, so the two sheaves coincide, thus so do the global derived stacks.
\end{proof}

\subsection{$N=1$ Super-Yang--Mills in Ten Dimensions} \label{10dSYM_section}

We'll now give an informal description of a supersymmetric ten-dimensional field theory in terms of fields and an action functional, while explaining the action of the supersymmetry algebra (as described in Appendix \ref{SUSY_appendix}) as clearly as possible.  Let $G$ be a complex reductive group with Lie algebra $\gg$ (we'll describe a complexification of the usual super Yang--Mills theory).  There are two fields $A$ and $\Psi$, where $A$ is identified with a $\gg$-valued 1-form and $\Psi$ is a Weyl fermion: a section of the bundle $S_{10+} \otimes \gg$.  The Lagrangian density can be identified with 
\[\mc L(A,\Psi) = \tr\left(\frac 12 F_A \wedge \ast F_A + \Psi \wedge \ast \sd D_A \Psi \right)\]
 where $F_A = dA + \frac 12[A,A]$, $D_A \Psi = d\Psi + [A,\Psi]$, and where we define the Dirac operator $\sd D_A$ using Clifford multiplication.  Here the trace is defined by means of a specified faithful finite-dimensional representation of $G$.  Define $\rho$ to be the Clifford multiplication map thought of as a map of vector bundles $T^*\CC^{10} \otimes S_{10+} \otimes \gg \to S_{10-} \otimes \gg$, using the metric to identify the tangent and cotangent bundles.  We define $\sd D_A = \rho \circ D_A$.  The trace pairing here implicitly includes both the invariant pairing on the Lie algebra and the ten-dimensional spinor pairing $S_{10-} \otimes S_{10+} \to \CC$.

One can describe $N=1$ super Yang--Mills in the homological formalism of Section \ref{classical_field_theory_section}, expanding a more familiar definition for Yang--Mills in the second order formalism to an $N=1$ vector multiplet.  Consider the elliptic complex
\[\xymatrix{
\Omega^0_\CC(\RR^{10}; \gg_P) \ar[r]^{\mr d} &\Omega^1_\CC(\RR^{10}; \gg_P) \ar[r]^{\mr{d}*\mr{d}} &\Omega^{9}_\CC(\RR^{10}; \gg_P) \ar[r]^{\mr d} &\Omega^{10}_\CC(\RR^{10}; \gg_P) \\
&\Omega^0_\CC(\RR^{10}; S_{10+} \otimes \gg_P) \ar[r]^{\ast \sd {\mr d}} &\Omega^{10}_\CC(\RR^{10}; S_{10-} \otimes \gg_P) &}
\]
in degrees $0, 1, 2$ and 3, where we write $\Omega^i_\CC(\RR^{10})$ for the complexification $\Omega^i(\RR^{10}) \otimes_\RR \CC$.  This complex admits an invariant pairing built from the wedge-and-integrate pairing on forms and the ten-dimensional spinor pairing between $S_{10+}$ and $S_{10-}$.  There is a natural $L_\infty$-structure coming from the action, for which the pairing is invariant.  The only non-trivial brackets are given by the action of $\Omega^0(\CC^{10} ;\gg_P)$ on everything, the degree two brackets 
\begin{align*}
\ell_2^{\mr{Bos}} \colon \Omega^1_\CC(\RR^{10};\gg_P) \otimes \Omega^1_\CC(\RR^{10};\gg_P) &\to \Omega^{9}_\CC(\RR^{10};\gg_P) \\
(A \otimes B) &\mapsto [A \wedge \ast \mr d B] + [\ast \mr d  A \wedge B] + \mathrm{d} \ast[A \wedge B] \\
\ell_2^{\mr{Fer}} \colon \Omega^1_\CC(\RR^{10};\gg_P) \otimes \Omega^0_\CC(\RR^{10}; S_{10+} \otimes \gg_P) &\to \Omega^{10}_\CC(\RR^{10}; S_{10-} \otimes \gg_P) \\
(A \otimes \Psi) &\mapsto \ast \sd A \Psi
\end{align*}
and the degree three bracket
\begin{align*}
\ell_3 \colon \Omega^1_\CC(\RR^{10};\gg_P) \otimes \Omega^1_\CC(\RR^{10};\gg_P) \otimes \Omega^1_\CC(\RR^{10};\gg_P) &\to \Omega^{9}_\CC(\RR^{10};\gg_P) \\
(A \otimes B \otimes C) &\mapsto [A \wedge \ast[B \wedge C]] + [B \wedge \ast[C \wedge A]] + [C \wedge \ast[A \wedge B]].
\end{align*}

Now, we must define the action of the supersymmetry algebra.  The bosonic piece acts by isometries on $\CC^{10}$ itself, and on the fields by pullback.  The fermions $S_{10+}$ act by supersymmetries; we choose $\eps \in S_{10+}$ and consider the infinitesimal symmetry coming from $\eps$, $(A, \Psi) \mapsto (A + \delta A, \Psi + \delta \Psi)$.  We let
\begin{align*}
\delta A &= \Gamma(\Psi, \eps) \\
\delta \Psi &= \rho^2(F_A \otimes \eps)
\end{align*}
where $\Gamma$ is the usual pairing $S_{10+} \otimes S_{10+} \to \CC^{10}$, fiberwise (and again using the metric to identify vector fields and 1-forms), and where $\rho^2$ denotes the composite map 
\[\Omega^2_\CC(\RR^{10}) \otimes S_{10+} \to \Omega^1_\CC(\RR^{10})^{\otimes 2} \otimes S_{10+} \to \Omega^1_\CC(\RR^{10}) \otimes S_{10-} \to S_{10+}\]
where the first map is the natural inclusion, and the latter maps are Clifford multiplication.  That this gives a well-defined action of the supersymmetry algebra, at least on-shell, and that the Lagrangian is supersymmetric are proven in \cite{ABDHN}. 
\begin{remark}
The on-shell condition here will require some care to treat rigorously.  Rather than giving a well-defined Lie algebra action on the space of fields, the supersymmetry relations only hold up to terms that vanish after imposing the equations of motion.  A priori this should give a well-defined homotopy action on the derived space of solutions to the equations of motion.  A careful analysis of this action is beyond the scope of this paper. 
\end{remark}

Now, by the calculations above, considering the subspace of fields constant along the leaves of a foliation by six-dimensional affine subspaces produces a four-dimensional theory with $N=4$ supersymmetry.  This theory is called (pure) \emph{$N=4$ super Yang--Mills} in four dimensions.  One can explicitly describe the fields and the action functional \cite{BrinkSchwarzScherk} in this dimensionally reduced theory.  The gauge field $A$ breaks into a four-dimensional gauge field (which we'll also call $A$) and six scalar fields $\phi_1,\ldots, \phi_6$.  The Weyl spinor $\Phi$ breaks into four four-dimensional Dirac spinors $\chi_1, \ldots \chi_4$.  When we construct an $N=4$ from the twistor space perspective we'll observe that the field content is the same (one can also define an action on super twistor space which recovers the dimensionally reduced action functional here.  This was done by Boels, Mason, and Skinner \cite{BMS}).

\subsection{Twistor Space Formalism} \label{twistor_section}
Twistor space is a complex manifold whose geometry is closely related to that of (compactified) Minkowski space.  At its root, twistor space $\PT$ is just the complex manifold $\CC \PP^3$, but we can describe it in a way that explains why it might be related to the geometry of $\RR^{1,3}$.  Write $\bb T$ for the Dirac spinor representation $S = S_- \oplus S_+$ in signature $(1,3)$, a 4-complex-dimensional vector space.  This new notation is chosen for compatibility with the twistor literature.  The \emph{twistor space} $\PT$ is then the space of complex lines in $\bb T$.

\begin{remark}
Elsewhere when discussing four-dimensional spinors we've used Euclidean signature, and indeed since we're only discussing complex spinor representations here our classical field theories don't depend on a choice of signature.  We've used the language of Lorentzian signature in the above construction of twistor space because of certain other aspects of twistor theory that appear in the literature, for instance the existence of the Penrose correspondence between the space of null twistors and complexified Minkowski space, that suggest that twistors are really most naturally related to Lorentzian geometry.
\end{remark}

Fix a Hermitian inner product on the space $S_+$ of Weyl spinors.  The space $\bb T = S_- \oplus S_+$ therefore admits a pseudo-Hermitian structure by
\[((\alpha_1,\beta_1),(\alpha_2, \beta_2)) \mapsto \langle \alpha_1, \ol \beta_2 \rangle + \langle \ol \beta_1, \alpha_2 \rangle\]
using the canonical isomorphism $S_- \iso \ol S_+$, which we observe has signature $(2,2)$.  This is called the \emph{twistor norm}.  The space of twistors with vanishing twistor norm is denoted $\NN \sub \bb T$ and forms a seven-real-dimensional submanifold.  Looking at complex lines contained in $\NN$ defines $\PN \sub \PT$, a five-real-dimensional compact submanifold.  Removing this submanifold splits $\PT$ into two components, $\PT_+$ and $\PT_-$ corresponding to twistors with positive and negative twistor norm respectively.

There are two natural maps associated to twistor space which we should describe.  First define the \emph{Penrose map} associated to an identification $S_+ \iso \bb H$ with the quaternions to be the map
\[p \colon \PT \iso \bb{CP}^3 \to \bb{HP}^1 \iso S^4\]
with fibers isomorphic to $\bb{CP}^1$ (the \emph{twistor lines}).  The space of null twistors $\PN$ maps to an equator $S^3 \sub S^4$.  We choose a point in $p(\PN)$ as a ``point at infinity''.  The preimage $\PT \bs \bb{CP}^1$ of the complement is isomorphic to $\bb{CP}^1 \times \RR^4$ as a smooth manifold.

For concreteness, choose homogeneous coordinates $Z_0, Z_1, Z_2, Z_3$ on $\bb T$.  The Penrose map is then given by
\[(Z_0:Z_1:Z_2:Z_3) \mapsto (Z_0 + j Z_1:Z_2+j Z_3).\]
Say the point at infinity is $(1:0) \in \bb{HP}^1$.  The complement of the twistor line at infinity is the set $\{(Z_0:Z_1:Z_2:Z_3) \mid Z_2 \text{ and } Z_3 \text{ are not both } 0\}$.  This allows us to define a \emph{holomorphic} map
\begin{align*}
\pi \colon \PT \bs \bb{CP}^1 &\to \bb{CP}^1 \\
(Z_0:Z_1:Z_2:Z_3) &\mapsto (Z_2:Z_3).
\end{align*}
In more coordinate-free language we can identify $\PT \bs \bb{CP}^1$ with the total space of the rank 2 holomorphic vector bundle $\OO(1) \oplus \OO(1) \to \PP(S_+)$.  The map $\pi$ is the bundle map.

\begin{remark}
This is an instance of a more general construction due to Atiyah, Hitchin, and Segal \cite{AHS} that makes sense starting from any pseudo-Riemannian 4-manifold $X$ satisfying a certain curvature condition.  In short, one can take the total space of the projectivized negative Weyl spinor bundle $\PP(S_+)$ over $X$, and produce a canonical almost complex structure on this total space using the Clifford multiplication.  This almost complex structure is integrable if one imposes the appropriate curvature condition.  In the case where $X = \RR^{1,3}$ is Minkowski space we obtain the total space of the trivial $\PP(S_+)$ bundle, and the complex structure one defines is precisely the complex structure on $\PT \bs \bb{CP}^1$ defined above. 
\end{remark}

The twistor space itself admits a supersymmetric extension.
\begin{definition}
The \emph{super twistor space} associated to a complex vector space $W$ is the total space of the odd vector bundle
\[\PT^W = \Pi(\OO(1) \otimes W) \to \PT.\]
\end{definition}
If we restrict to the preimage of $\RR^4$ under the Penrose map $p$, we find a superspace which admits a natural action of the supersymmetry algebra $\mc A^W$ (where the R-symmetries act trivially).  We'll construct supersymmetric field theories on $\RR^4$ by compactification from theories on twistor space admitting manifest supersymmetry actions.

\subsection{Holomorphic Chern--Simons Theory on Super Twistor Space} \label{hCS_section}
The power of the twistor space formalism lies in its ability to relate theories involving the \emph{holomorphic} or \emph{algebraic} geometry of (super) twistor spaces, and the metric geometry of 4-manifolds.  We'll recall two types of theory modelling the theory of holomorphic principal bundles.  First, let $X = (\Pi E \to X_{\mr{even}})$ be a split algebraic supermanifold of complex dimension $n$, let $G$ be a complex reductive group, and let $P$ be a principal $G$-bundle on $X_{\mr{even}}$. 

The following theory of $BG$ valued holomorphic maps was discussed in \cite[Section 11.2]{CostelloSH} (as an instance of a more general theory of holomorphic maps into a complex target stack).  It will be an \emph{analytic} perturbative field theory, i.e. a sheaf of complexes over a complex manifold $X$ with respect to its analytic topology.  Lacking a good theory of derived analytic geometry we won't be able to literally promote this to a non-perturbative field theory, we'll only be able to describe an analogous theory using algebraic bundles and the Zariski topology.

\begin{definition}
The \emph{curved $\beta \gamma$ system} on $X$ (with target $BG$) near a holomorphic $G$-bundle $P$ is the cotangent theory, as in Definition \ref{cotangent_theory_definition}, whose base is the elliptic $L_\infty$ algebra
\[\Omega^{0,\bullet}(X_{\mr{even}}; \OO_{\Pi E} \otimes \gg_P), \ol \dd).\]
Hence the underlying elliptic complex is $(\Omega^{0,\bullet}(X_{\mr{even}}; \OO_{\Pi E} \otimes \gg_P) \oplus \Omega^{n,\bullet}(X_{\mr{even}};\OO_{\Pi E^\vee} \otimes \gg_P^*[n-3]), (\ol \dd, \ol \dd))$, and the invariant pairing is given by the canonical pairings between $\gg$ and $\gg^*$ and between $E$ and $E^\vee$, and the wedge pairing on forms.  
\end{definition}

This perturbative description ought to arise as a description of the cotangent theory to the moduli space of holomorphic $G$-bundles on $X$, because the Dolbeault complex with coefficients in $\gg_P$ controls deformations of holomorphic $G$-bundles on $X$.  This suggests an analogous \emph{algebraic, non-perturbative} version of the classical field theory.
\begin{definition}
The (algebraic) \emph{curved $\beta \gamma$ system} on $X$ (with target $BG$) is the cotangent theory whose local sections on $U \sub X$ are given by the derived stack
\[T^*[-1]\bun_G(U).\]
If $X$ is smooth and proper -- so $\bun_G(U)$ is finitely presented -- the global sections admit a natural shifted symplectic structure.
\end{definition}

\begin{remark} \label{cotangent_space_remark}
Since $\bun_G(U)$ is not locally of finite presentation for general $U$, its cotangent complex is generally not perfect and hence one cannot define the (shifted) cotangent bundle as in the conventions section. On the other hand, one can always define the total space of a given quasi-coherent sheaf $\mathcal{F}$ on $X$ in terms of the moduli problem whose $R$-points consists of maps $f \colon  \spec R  \to X$ together with sections $\Gamma ( \spec R, f^* \mathcal{F})$. We won't make this technical definition precise here; we're most interested in describing the global sections of classical field theories on smooth projective varieties $X$. This remark should also be applied for later appearances of a cotangent space of a derived stack which is not locally of finite presentation.
\end{remark}

In either the analytic or the algebraic setting we could instead consider a more general theory of holomorphic or algebraic maps into any target -- this would define a more general curved $\beta\gamma$-system.  

Starting from $N=1$ and $N=2$ super twistor space, one constructs supersymmetric gauge theories by taking the curved $\beta \gamma$ system on the complement of a twistor line in the super twistor spaces $\PT^{N=1}$ or $\PT^{N=2}$.  For $N=4$ super Yang--Mills however we'll do something different: we observe that the complex $\Omega^{0,\bullet}(X; \gg_P)$ where $X$ is the complement of the line in $N=4$ super twistor space (i.e. the restriction of the odd vector bundle defining super twistor space to $\PT \bs \bb{CP}^1 \sub \PT)$ \emph{already} admits a degree $-3$ invariant pairing, and so defines a field theory.  This is an instance of a more general family of theories.

\begin{example}
Let $X$ be a compact super Calabi--Yau variety of complex dimension $n|m$, as in Definition \ref{CY_def}.  Then the complex $\Omega^{0,\bullet}(X; \gg_P)$ admits a degree $-n$ invariant pairing by the invariant pairing on $\gg$ and the wedge pairing on forms.  This pairing naturally lands in the Berezinian, which yields a density by applying the Calabi--Yau structure, an isomorphism of vector bundles $\mr{Ber}(X) \to \ul \CC$.  If $n=3$, this defines a perturbative field theory on $X$ which we call \emph{holomorphic Chern--Simons theory}.   This perturbative theory admits an algebraic non-perturbative analogue, as above.  One can consider the non-perturbative algebraic classical field theory $\EOM(U) = \bun_G(U)$, with $(-1)$-shifted symplectic structure arising via the derived AKSZ formalism \cite[Theorem 2.5]{PTVV} from the 2-shifted symplectic structure on $BG$ and the Calabi--Yau structure on $X$.
\end{example}

\begin{remark}
There's a certain amount of ambiguity in the terminology for these classical field theories.  The theory we call the curved $\beta\gamma$ system with target $BG$ is itself called holomorphic Chern--Simons theory in \cite{CostelloWittenGenusI}.  In the case where $X$ is a super Calabi--Yau 3-fold then the two theories are closely related: the holomorphic Chern--Simons theory (in our terminology) has the curved $\beta \gamma$ system as its cotangent theory, as in the book of Costello and Gwilliam \cite{CostelloGwilliam2}.
\end{remark}

Now, let $X = \PT^{N=4} \bs \bb{CP}^1$, the complement of a line in $N=4$ super twistor space.  One observes (as noted by Witten \cite{WittenTST}) that this space is super Calabi--Yau by computing the Berezinian.  More generally, the Berezinian of the super projective space $\bb{CP}^{n|m}$ is computed to be
\begin{align*}
\mr{Ber}_{\bb{CP}^{n|m}} &\iso K_{\bb{CP}^n} \otimes_\OO \wedge^m(\OO(1) \otimes \CC^m) \\
&\iso \OO(-n-1) \otimes \OO(m) \iso \OO(m-n-1)
\end{align*}
(using a choice of trivialization of $\wedge^m \CC^m$) which is trivial if and only if $m = n+1$, for instance in the case $n=3, m=4$. 

\begin{remark} \label{non_compact_remark}
We should note that while $\bb{CP}^{3|4} \bs \bb{CP}^1$ is super Calabi--Yau, it is not \emph{compact} super Calabi--Yau.  While holomorphic Chern--Simons on $\PT^{N=4}$ is a genuine classical field theory as in Definition \ref{classical_field_theory_def} with shifted symplectic structure on the space $\bun_G(\PT^{N=4})$ of global solutions to the equations of motion given by the derived AKSZ formalism, the shifted symplectic form fails to be well-defined on the complement of a line.  We expect at least a shifted Poisson structure to survive here, but since we won't need this shifted symplectic structure for the untwisted $N=4$ moduli space in what follows -- we'll construct the twisted theories of interest on $\RR^4$, then generalize to arbitrary smooth algebraic surfaces by analogy -- we'll ignore this subtlety in the present work.
\end{remark}

Let's try to understand the theory we get when we perform compactification along the map $p \colon \PT^{N=4} \bs \bb{CP}^1 \to \RR^4$.  Specifically let's verify that the field content agrees with the fields we described at the end of Section \ref{10dSYM_section}.  Our argument will follow the argument for the ordinary Penrose--Ward correspondence given by Movshev \cite{Movshev}, and cohomology calculations given in Section 7.2 of the book of Ward and Wells \cite{WW}.  We'll use the phrase \emph{linearised} holomorphic Chern--Simons and $N=4$ super Yang--Mills to mean the perturbative field theories obtained by forgetting the brackets in the $L_\infty$ structure, leaving only a cochain complex.  We'll do this calculation for the analytic, perturbative theory.

\begin{remark}
Note that we needed to trivialize $\wedge^4 \CC^4$ in order to define the super Calabi--Yau structure.  This choice breaks the full $\gl(4;\CC)$ of R-symmetries to $\sl(4;\CC)$, as we remarked in Section \ref{supercharge_section}.
\end{remark}

\begin{prop}
The compactification of linearised holomorphic Chern--Simons theory along the Penrose map $p$ is equivalent to the linearised anti-self-dual $N=4$ super Yang--Mills theory.
\end{prop}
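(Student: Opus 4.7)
The plan is a direct computation: split the Dolbeault complex on super twistor space into pieces indexed by the odd fiber directions, apply the classical Penrose transform termwise, and match the result with the linearised BV complex of anti-self-dual $N=4$ super Yang-Mills on $\RR^4$.

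First, I would decompose the structure sheaf along the odd directions. Since $X = \PT^{N=4} \setminus \bb{CP}^1$ is the total space of the split odd vector bundle $\Pi(\OO(1) \otimes \CC^4)$ over $\PT \setminus \bb{CP}^1$, we have
\[\OO_X \iso \bigoplus_{k=0}^{4} \OO_{\PT \setminus \bb{CP}^1}(-k) \otimes \wedge^k(\CC^4)^*,\]
with the $k$-th summand carrying fermionic parity $k \bmod 2$. Correspondingly the complex $\Omega^{0,\bullet}(X; \gg_P)$ splits as a direct sum of Dolbeault complexes of the line bundles $\OO(-k)$ on the bosonic twistor space, tensored with R-symmetry weight spaces and with $\gg$.

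Second, I would compactify each summand along the Penrose map $p$ using the double fibration $\PT \setminus \bb{CP}^1 \leftarrow F \to \RR^4$ with $F = \RR^4 \times \bb{CP}^1$. By the classical Penrose--Ward transform (as in Ward--Wells, chapter 7), the pushforward of each $\Omega^{0,\bullet}(\PT \setminus \bb{CP}^1; \OO(-k))$ is a two-term elliptic complex on $\RR^4$: for $k=0$ one recovers the ghost-gauge-auxiliary complex $\Omega^0 \to \Omega^1 \to \Omega^2_+$ of an ASD connection; for $k = 4$ its dual antifield complex, using the Calabi--Yau trivialization $\wedge^4 \CC^4 \iso \CC$; for $k=2$ the linear wave complex for scalar fields valued in $\wedge^2(\CC^4)^*$; and for $k = 1, 3$ the two chiralities of the massless Weyl/Dirac complex, valued in $(\CC^4)^*$ and $\wedge^3(\CC^4)^* \iso \CC^4$ respectively. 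The cohomological degree shifts are dictated by whether the fiber cohomology of $\OO(-k)$ on a twistor line $\bb{CP}^1$ sits in $H^0$ or $H^1$.

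Third, I would assemble these contributions with their parities, R-symmetry weights, and cohomological degree shifts, and recognize the total complex as the linearised BV complex of ASD $N=4$ SYM: the bosonic summands ($k=0,2,4$) yield the connection together with its ghost, the self-dual auxiliary field $B$, the six scalars $\phi_{ij} \in \wedge^2(\CC^4)^*$, and their antifields; the fermionic summands ($k=1,3$) yield the four Weyl fermions of each chirality in the (anti-)fundamental of the $\sl(4;\CC)$ R-symmetry. The main obstacle is the second step: verifying that the Dolbeault differential on super twistor space pushes forward to precisely the Penrose-transform operators on $\RR^4$ that encode the linearised ASD equations of motion. This reduces to a local calculation in the twistor coordinates $(\pi_{\alpha'}, \omega^a = i x^{a\alpha'}\pi_{\alpha'})$, carried out in the Penrose--Ward literature; the delicate bookkeeping lies in matching bundle identifications and checking that the degree $-3$ invariant pairing induced from the super Calabi-Yau structure corresponds, via Serre duality along the twistor lines, to the pairing on the $\RR^4$ BV complex.
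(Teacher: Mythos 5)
Your proposal is correct and follows essentially the same route as the paper: decompose $\OO_{\PT^{N=4}\setminus\bb{CP}^1}$ into the line bundles $\OO(-k)\otimes\wedge^k(\CC^4)^*$, push forward termwise along $p$ via the Penrose transform, and match the resulting two-term complexes (ASD connection, Dirac spinors, scalars) with the linearised BV complex, pairing $k$ with $4-k$ by Serre duality. The one place where the paper does substantive work that you defer to the literature is exactly the obstacle you name — identifying the pushed-forward differentials — and in particular the $\OO(-2)$ summand, where the relevant differential appears only on the $E_2$ page of the spectral sequence and is the second-order operator $\dd_i\dd_j(e^i\ol\dd^{-1}e^j)$, shown (following Ward--Wells 7.2.5) to be conjugate to the Laplacian.
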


\begin{proof}
To show this, we need to pushforward the sheaf of solutions to the classical equations of motion in the holomorphic Chern--Simons theory along $p$.  This sheaf is just the complex $\Omega^{0,\bullet}(X; \gg_P)$ where $X$ is the complement of the line in $N=4$ super twistor space.  That is, the complex
\begin{align*}
\bigoplus_{i \ge 0} \left(\Omega^{0,\bullet}(\PT \bs \bb{CP}^1; \sym^i(\Pi \OO(-1)^4) \otimes_\OO \gg_P)\right) \iso \ & \bigoplus_{i \ge 0} \left(\Omega^{0,\bullet}(\PT \bs \bb{CP}^1; \wedge^i(\OO(-1)^4) \otimes_\OO \gg_P)\right) \\
\iso \ & \Omega^{0,\bullet}(\PT \bs \bb{CP}^1; (\OO \oplus \OO(-4)) \otimes_\OO \gg_P) \\
&\oplus \ \Omega^{0,\bullet}(\PT \bs \bb{CP}^1; (\OO(-1) \oplus \OO(-3)) \otimes_\OO \gg_P)^4 \\
&\oplus \ \Omega^{0,\bullet}(\PT \bs \bb{CP}^1; \OO(-2) \otimes_\OO \gg_P)^6.
\end{align*}
We've grouped the terms here judiciously -- they'll yield the gauge field, four spinor fields and six scalar fields we saw in Section \ref{10dSYM_section} respectively (with their corresponding antifields).  To check this, we must compute the hypercohomology of these terms, complete with their actions of the algebra $\so(4;\CC)$.  This becomes a little simpler after identifying $\PT \bs \bb{CP}^1$ with the total space of the rank two holomorphic vector bundle $\OO(1) \otimes S_- \to \PP(S_+)$.  What's more, the pullback of the bundle $\OO(k)$ on $\PP(S_+)$ under the map $\pi$ is precisely the vector bundle $\OO(k)$ given by restriction from $\PT = \bb{CP}^3$.  From this point of view we can identify
\begin{align*}
\Omega^{0,\bullet}(\PT \bs \bb{CP}^1; \OO(k) \otimes \gg_P) &\iso \pi^*\left( \bigoplus_{i+j=\bullet}\Omega^{0,i}(\PP(S_+); \OO(k) \otimes \gg_P \otimes \wedge^j(\OO(1) \otimes S_-))\right), \\
\text{so } p_*(\Omega^{0,\bullet}(\PT \bs \bb{CP}^1; \OO(k) \otimes \gg_P)) &\iso \Omega^0(\RR^4) \otimes \left( \bigoplus_{i+j=\bullet}\Omega^{0,i}(\PP(S_+); \OO(k) \otimes \gg_P \otimes \wedge^j(\OO(1) \otimes S_-))\right)
\end{align*}
as a sheaf on $\RR^4$.  We then compute the hypercohomology of the right hand side, which is just the cohomology of the coefficient coherent sheaf with an additional differential.  Indeed, we can think of the complex as bigraded by the $i$ and $j$ gradings, and the cohomology of the coefficient coherent sheaf is precisely the $E_1$ page of the spectral sequence of the double complex.  This page has form
\[C^\infty(\RR^4) \otimes \left(\vcenter{\xymatrix{H^0(\PP(S_+); \OO(k) \otimes \gg_P) \ar[r] &  H^0(\PP(S_+); \OO(k+1) \otimes S_- \otimes \gg_P) \ar[r] & H^0(\PP(S_+); \OO(k+2) \otimes \gg_P) \\
H^1(\PP(S_+); \OO(k) \otimes \gg_P) \ar[r] &H^1(\PP(S_+); \OO(k+1) \otimes S_- \otimes \gg_P) \ar[r] & H^1(\PP(S_+); \OO(k+2)  \otimes \gg_P).
}}\right)\]
The page is concentrated in a single row and therefore the spectral sequence converges at the $E_2$ page unless $k=-2$, in which case there's one additional differential (from $(i,j)=(1,0)$ to $(0,2)$) and the complex converges at the $E_3$ page.

We begin with the first line (the term of interest in the ordinary, non-supersymmetric Penrose--Ward correspondence, and the term considered by Movshev \cite{Movshev}).  The coefficient sheaf is isomorphic to $((\OO \oplus (\OO(1) \otimes S_-)[-1] \oplus \OO(2)[-2]) \oplus (\OO(-4) \oplus (\OO(-3) \otimes S_-)[-1] \oplus \OO(-2)[-2])) \otimes \gg_P$  whose cohomology is $\gg_P \oplus \gg_P[-1] \otimes (S_- \otimes S_+ \oplus \sym^2S_+) \oplus \gg_P[-2] \otimes (\sym^2 S_+ \oplus S_- \otimes S_+) \oplus \gg_P[-3]$.  Thus the corresponding term in the pushforward sheaf is
\[\Omega^0(\RR^4; \gg_P \otimes (\CC \oplus (V \oplus \sym^2 S_+)[-1] \oplus (V \oplus \sym^2S_+)[-2] \oplus \CC[-3]))\]
where $V \iso S_+ \otimes S_-$ is the vector representation of $\so(4;\CC)$.  To compute the differential, we start with the first summand in the pushforward sheaf, $\Omega^0(\RR^4) \otimes H^0(\OO \oplus (\OO(1) \otimes S_-)[-1] \oplus \OO(2)[-2])$.  This is the $E_1$ page of the spectral sequence of the double complex described above, and the differential is the image of the $\ol \dd$ operator.  Concretely, in coordinates this operator has form $\dd_i e^i$, where $x^i$ is a basis for $\RR^4$, $\dd_i = \frac \dd{\dd x^i}$, and $e^i$ is a degree 1 operator on $H^0(\OO \oplus (\OO(1) \otimes S_-)[-1] \oplus \OO(2)[-2])$ associated to $x^i$.  This operator arises by canonically identifying $H^0(\OO(1) \otimes S_-)$ with $V = \RR^4 \otimes_\RR \CC$ so that every global section of $\OO(1) \otimes S_-$ yields a degree 1 operator on $H^0(\wedge^\bullet(\OO(1) \otimes S_-))$ via the natural map $\wedge^\bullet(H^0(\OO(1) \otimes S_-)) \to H^0(\wedge^\bullet(\OO(1) \otimes S_-))$.  Unpacking this calculation, we find exactly the differential in the Atiyah--Singer--Donaldson complex
\[\Omega^0(\RR^4) \overset d \to \Omega^1(\RR^4) \overset {d_+} \to \Omega^2_+(\RR^4).\]
controlling an anti-self-dual connection.  The remaining summand is Serre dual to the first summand, so the overall complex is the complex controlling an anti-self-dual Yang--Mills field as required.

Similarly, we analyse the second line.  Now, the coefficient sheaf is isomorphic to $((\OO(-1) \oplus (\OO \otimes S_-)[-1] \oplus \OO(1)[-2]) \oplus (\OO(-3) \oplus (\OO(-2) \otimes S_-)[-1] \oplus \OO(-1)[-2]))\otimes \gg_P$, whose cohomology is $\gg_P[-1] \otimes (S_- \oplus S_+) \oplus \gg_P[-2] \otimes (S_- \oplus S_+)$ with the $\so(4;\CC)$ action indicated by the notation.  Thus the corresponding term in the pushforward sheaf is
\[(\Omega^0(\RR^4; \gg_P \otimes (S[-1] \oplus S[-2])))^4\]
where $S = S_+ \oplus S_-$.  We analyse the differential in a similar way to the above, focusing on the first summand $\Omega^0(\RR^4) \otimes H^0(S_-[-1] \oplus \OO(1)[-2])$ (the other term is Serre dual to this one).  Again, in a specified basis, the differential is of the form $\dd_i e^i$, where now the $e^i$ act according to the action of $x^i \in H^0(\OO(1) \otimes S_-)$ on the complex $H^0(\sym^\bullet(\OO(1) \otimes S_-) \otimes \OO(-1))$.  Unpacking, this action map (from $\sym^1$ to $\sym^2$) is given by the composite
\[S_- \overset {x^i \otimes 1} \to V \otimes S_- \iso S_+ \otimes S_- \otimes S_- \surj S_+ \otimes \wedge^2S_- \iso S_+.\]
This composite is exactly the Clifford multiplication $\rho(x^i)$ by the vector $x^i$, so our overall differential is $\dd_i \rho(x^i)$.  This is the Dirac operator $\sd d$, so combining this term with its Serre dual we obtain the complex
\[\left(\Omega^0(\RR^4; S) \overset {\sd d} \to \Omega^0(\RR^4; S)\right)^4\]
in degrees one and two, which is the linearised BV complex controlling four Dirac spinors.

Finally, we analyse the last line, which is the simplest algebraically, but whose differential is a little more subtle than the others.  The coefficient sheaf is isomorphic to $(\OO(-2) \oplus (\OO(-1) \otimes S_- )[-1] \oplus \OO[-2]) \otimes \gg_P$, whose cohomology is $\gg_P[-1] \oplus \gg_P[-2]$ with the trivial $\so(4;\CC)$ action.  Thus the corresponding term in the pushforward sheaf is
\[(\Omega^0(\RR^4; \gg_P[-1] \oplus \gg_P[-2]))^6.\]
To compute the differential we have to do a little more than we did for the earlier terms, because now the $E_1$ and $E_2$ pages of the spectral sequence coincide, but there's a differential on the $E_2$ page increasing the $j$ degree by two.  This differential is of the form $D = \dd_i\dd_j(e^i \ol \dd^{-1} e^j)$, where the operator $e^i \ol \dd^{-1} e^j$ is obtained from the composite 
\[\xymatrix{
&H^0(S_- \otimes \OO(1)) \otimes \Gamma(\Omega^{0,0}_{\bb{CP}^1}(-1) \otimes S_-) \ar[r] &\Gamma(\Omega^{0,0} \otimes \wedge^2S_-) \iso H^0(\Omega^{0,0}_{\bb{CP}^1}) \\
H^0(S_- \otimes \OO(1))^{\otimes 2} \otimes \Gamma(\Omega^{0,1}_{\bb{CP}^1}(-2)) \ar[r] &H^0(S_- \otimes \OO(1)) \otimes \Gamma(\Omega^{0,1}_{\bb{CP}^1}(-1) \otimes S_-) \ar[u]_{1 \otimes \ol \dd^{-1}} &
}\]
(where we've used $\Gamma$ for the global sections of the infinite-type vector bundles $\Omega^{i,j}$ to emphasise that we're considering all forms, not just the Dolbeault cohomology) applied to $x^i \otimes x^j \in H^0(S_- \otimes \OO(1))^{\otimes 2}$ and a representative for a cohomology class in $H^{0,1}(\bb{CP}^1; \OO(-2))$.  Here we use the fact that the operator $\ol \dd \colon \Omega^{0,0}_{\bb{CP}^1}(-1) \to \Omega^{0,1}_{\bb{CP}^1}(-1)$ induces an isomorphism on $H^0$. To compute the operator $e^i \ol \dd^{-1} e^j$ we follow the method of \cite[Theorem 7.2.5]{WW}.  There is a map of complexes
\[\xymatrix{
0 \ar[r] &\OO(-2) \ar[r]^-{e^i} \ar@{=}[d] &\OO(-1) \otimes S_- \ar[r]^-{e^i} \ar@{=}[d] &\OO \otimes \wedge^2 S_- \ar[d]^{\delta^{ij}} \ar[r] &0\\
&\OO(-2) \ar[r]^-{e^i} &\OO(-1) \otimes S_- \ar[r]^-{e^j} &\OO \otimes \wedge^2 S_-
}\]
where the top row is exact, which yields a map between the spectral sequences computing the hypercohomology of the two rows.  On the $E_2$ page of these spectral sequences, this map just yields a commutative square
\[\xymatrix{
H^0(\Omega^{0,1}_{\bb{CP}^1}(-2)) \ar[r] \ar[d]^{\mr{id}} &H^0(\Omega^{0,0} \wedge^2S_-) \ar[d]^{\delta^{ij}} \\
H^0(\Omega^{0,1}_{\bb{CP}^1}(-2)) \ar[r]^{e^i\ol \dd^{-1}e^j}  &H^0(\Omega^{0,0} \wedge^2S_-), 
}\]
and the top arrow is an isomorphism because the corresponding sequence of complexes was exact, so the operator $e^i \ol \dd^{-1} e^j$ is obtained from $\delta^{ij}$ by a change of coordinates, and the second order operator $D$ is conjugate to the Laplacian, as required.
\end{proof}

\begin{remark}
In the above calculation we've computed the BV complex for a perturbative classical field theory on $\RR^4$ as a cochain complex with a pairing only.  We \emph{haven't} described the pushforward of the $L_\infty$ structure.  In other words we've shown that we obtain the expected quadratic terms in the action for an anti-self-dual $N=4$ gauge theory, but we haven't checked that the correct interaction terms appear.  In what follows we take the compactification of holomorphic Chern--Simons on twistor space as the \emph{definition} of untwisted $N=4$ anti-self-dual super Yang--Mills.
\end{remark}

We won't investigate the action in detail, but the holomorphic Chern--Simons action functional yields an \emph{anti-self-dual} super Yang--Mills theory after compactifying the twistor lines.  There's an extra term that we can introduce into the action, of form
\[S_2(A) = \int_{\RR^{4|8}} d\mu \log \det (\ol \dd|_{p^{-1}(\mu)}).\]
Boels, Mason, and Skinner \cite{BMS} prove that the holomorphic Chern--Simons theory on $N=4$ super twistor space with this additional term incorporated into the action recovers $N=4$ super Yang--Mills after compactifying along the twistor lines. 

\begin{remark} \label{non_holomorphic_issue}
We run into trouble when we try to define untwisted $N=4$ super Yang--Mills theory non-perturbatively via compactification along the twistor fibers, because the Penrose map $p$ is \emph{not} holomorphic for any complex structure on $\RR^4$.  As such, a Zariski open set $U \sub \CC^2$ does not lift to a Zariski set $p^{-1}(U) \sub \PT \bs \bb{CP}^1$.  This is not a problem in the analytic setting; any open set in a complex manifold admits a canonical complex structure, but generally not an \emph{algebraic} structure.  It is not particularly surprising that we encounter such problems: there's no reason that a metric-dependent theory like untwisted $N=4$ gauge theory should admit a description purely in terms of algebraic geometry.
\end{remark}

\section{Equations of Motion in the Twisted Theories} \label{classical_states_section}
We'll now investigate the form of the classical field theories obtained from applying our holomorphic and topological twists to this $N=4$ theory.  The holomorphic twist will be the simplest, conceptually: the holomorphic twisting data is compatible with the structure of $\bun_G(\PT^{N=4})$ as a fiberwise formal algebraic gauge theory over $\PT \bs \bb{CP}^1$, so a canonical holomorphic twist exists by Corollary \ref{fiberwise_action_corollary}, which can naturally be thought of as a fiberwise formal algebraic gauge theory over $\CC^2$, and which generalizes to describe a fiberwise formal algebraic gauge theory over a \emph{compact} complex algebraic surface $X$ whose global sections are given by
\[\EOM_{\mr{hol}}(X) \iso T_{\mr{form}}[1]\ul{\mr{Map}}(\Pi TX, BG).\]
The A and B topological twists are more subtle, because they each break structures that survive the holomorphic twist: the B-twist breaks the section $\bun_G(U) \to \EOM_{\mr{hol}}(U)$, while the A-twist breaks the projection map $\EOM_{\mr{hol}}(U) \to \bun_G(U)$.  However, we'll construct natural twists using Example \ref{de_rham_stack_twist_example}: the A-twist deforms the outer shifted tangent bundle to the de Rham prestack, while the B-twist deforms the source of the mapping stack to $X_{\mr{dR}}$, yielding the cotangent theory to the moduli of $G$-local systems.

\subsection{The Holomorphic Twist} \label{holo_twist_section}
First, recall that according to the superspace formalism, to define the holomorphically twisted theory we need to specify a complex structure on a 4-manifold.  The perturbative piece of this calculation is contained in Costello's 2011 paper \cite{CostelloSH}, but is included here for the reader's convenience.  Recall that a \emph{$G$-Higgs bundle} on a complex variety $X$ is an algebraic $G$-bundle $P$ equipped with a section $\phi \in H^0(X, T^*_X \otimes \gg_P )$ such that $[\phi, \phi]=0$.  We'll write $\higgs_G(X)$ for the moduli stack of $G$-Higgs bundles, and $\higgs_G^{\mr{fer}}(X)$ for the moduli stack of $G$-Higgs bundles where the Higgs field is placed in \emph{fermionic degree} (so the underlying bosonic piece is just $\bun_G(X)$).  This moduli space is described by the mapping stack $\ul{\mr{Map}}(\Pi TX, BG)$.

The Penrose--Ward correspondence tells us that $N=4$ anti-self-dual super Yang--Mills corresponds to the compactification of holomorphic Chern--Simons on super twistor space along the Penrose map $p$, where the bundles are constrained to be trivializable along the twistor lines.  As we remarked in \ref{non_holomorphic_issue} this is problematic when working algebraically, because the map $p$ is not holomorphic, so the compactification is not well-defined.  We'll motivate a definition of holomorphically twisted $N=4$ theory by computing the twist of the holomorphic Chern--Simons theory (since, by Lemma \ref{twist_dim_red_lemma} the compactification of this twist is the desired twist of $N=4$ theory).

We use the following trick: find a closed embedding $\iota \colon Z \sub \PT \bs \bb{CP}^1$ such that the Penrose map $p$ maps $Z$ diffeomorphically onto $\RR^4$.  We define the compactification of an algebraic gauge theory along $p$ to be the restriction of the theory to $Z$.

First, we'll check that the twisting data we've been discussing preserves the fibers of the maps $\sigma$ from $\bun_G$ and $\pi$ to $\bun_G$ as in Corollary \ref{fiberwise_action_corollary}, so the twist remains fiberwise formal.

\begin{prop}
The twisting data associated to the holomorphic twist preserve the fibers of the zero section map $\sigma \colon \bun_G(U) \to \EOM(U)$ and the projection map $\pi \colon \EOM(U) \to \bun_G(U)$ for an open set $U \sub \CC^2$, as in Definition \ref{preserves_fibers_def}.
\end{prop}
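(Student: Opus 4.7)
The plan is to verify the invariance conditions for $d\sigma$ and $d\pi$ by passing to the super twistor space picture, where both the fiberwise formal structure on $\EOM$ and the $H$-action have transparent geometric origins. The key observation is that super twistor space $\PT^{N=4}\bs\bb{CP}^1 \iso \Pi(\OO(1)\otimes W)$ fibers over its bosonic subspace $\PT\bs\bb{CP}^1$ via a projection $q$ with zero section $s$, and both the structure maps $(\sigma,\pi)$ and the twisting data $(\alpha,Q_{\mr{hol}})$ will arise by naturality from this fibration.

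First I would make $\sigma$ and $\pi$ explicit. Functorial application of $\bun_G$ to the pair $s$, $q$ yields maps $\bun_G(\PT\bs\bb{CP}^1) \rightleftarrows \bun_G(\PT^{N=4}\bs\bb{CP}^1)$, exhibiting holomorphic Chern-Simons on super twistor space as a fiberwise formal gauge theory. After compactification along the Penrose map, following the trick of section \ref{holo_twist_section} restricting to a closed embedding $Z \sub \PT\bs\bb{CP}^1$ that maps diffeomorphically onto $\CC^2$, this becomes the required structure over $\bun_G(U)$ for $U \sub \CC^2$. Next, I would verify that both generators of $H$ act by $q$-vertical symmetries on $\PT^{N=4}\bs\bb{CP}^1$, i.e.\ tangentially to the odd fiber and trivially on the base. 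For the $\CC^\times$-action $\alpha$, this is immediate: $\alpha$ is chosen from the R-symmetry algebra $\gg_R\sub\gl(W)$ so that $Q_{\mr{hol}}\in S_+\otimes W$ has weight one, and it scales the odd fiber $\Pi W$ while fixing the base $\PT\bs\bb{CP}^1$. For the odd supercharge $Q_{\mr{hol}}=\alpha_1\otimes e_1$, a local coordinate computation identifies its action on super twistor space as an odd vector field with components only in the $\Pi W$ fiber directions. Once verticality is established for both generators, $q$ and $s$ are $H$-equivariant for the trivial $H$-action on the base, and functoriality of $\bun_G$ promotes this to $H$-equivariance of $\sigma$ and $\pi$ --- equivalently, $H$-invariance of $\mr{im}(d\sigma)$ and $\mr{im}(d\pi)$ in the sense of definition \ref{preserves_fibers_def}.

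The main obstacle is the verticality check for $Q_{\mr{hol}}$. The superspace formalism of section \ref{superspace_section} presents the odd vector fields on $\CC^{2|3}$ only after taking $Q_{\mr{hol}}$-cohomology, whereas what is needed here is the pre-twist action of $Q_{\mr{hol}}$ on super twistor space itself. Bridging these requires a coordinate analysis using the identification $\PT\bs\bb{CP}^1 \iso \OO(1)\otimes S_-\to\PP(S_+)$ together with the pairing $\Gamma\colon S_+\otimes S_-\to V_\CC$ defining the supersymmetry brackets. The payoff is that, in contrast to the anti-holomorphic supercharges in $S_-\otimes W^*$ whose anticommutators with $S_+\otimes W$ produce translations with components along the base, the holomorphic supercharges themselves act purely on the odd fiber coordinates; this is precisely why fiber preservation holds for the holomorphic twisting data specifically, and will fail for some pieces of the further twists to the A- and B-models considered later in this section.
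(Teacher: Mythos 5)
Your proposal is correct and follows essentially the same route as the paper: both arguments work at the level of super twistor space, identify $\sigma$ and $\pi$ as induced by the zero section and projection of the odd bundle $\Pi(\OO(1)\otimes W)\to\PT\bs\bb{CP}^1$, and reduce everything to the observation that $\alpha$ and $Q_{\mr{hol}}$ act vertically. The ``main obstacle'' you flag is in fact a one-line check in the paper: $Q_{\mr{hol}}=\alpha_1\otimes e_1$ acts on the coefficient sheaf $\sym(\Pi\OO(-1)^4)$ by contraction against the section $Z_2$ of $\OO(1)$ dual to the $e_1^*$ factor, which is manifestly $\OO_{\PT}$-linear and preserves the zero section.
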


\begin{proof}
We can check the holomorphic twist preserves the fibers at the super twistor space level.  For holomorphic Chern--Simons theory on super-twistor space the relevant map $\pi\colon \bun_G(\PT^{N=4} \bs \bb{CP}^1) \to \bun_G(\PT \bs \bb{CP}^1)$ is given by pulling back under the zero section of the super vector bundle $\Pi \OO(-1)^4$.  The twisting data acts by pairing with a section of $\OO(1) \inj (\OO(-1)^4)^*$, the dual to the first factor, which acts on the fibers by multiplication by that section in the coefficient $\sym(\Pi(\OO(1)^4))$.  In particular, the fibers are preserved, so the twisting data acts trivially on the image of $d\pi$.  Also pairing with such a section preserves the zero-section of the bundle over $\PT$, thus the image of the section $\sigma$ and therefore the twisting data acts trivially on the image of $d\sigma$.
\end{proof}

As such, we can compute the holomorphic twist by computing the restricted relative shifted tangent complex as a sheaf over $\bun_G$, twisting the fibers, and applying Gaitsgory--Rozenblyum's theorem as in Corollary \ref{fiberwise_action_corollary}.  

\begin{theorem} \label{holo_twist_theorem}
The solutions to the equations of motion in the holomorphically twisted $N=4$ SYM theory on $\CC^2$ near an open set $U$ are given by
\[\EOM_{\mr{hol}}(U) \iso T^*_{\mr{form}}[-1]\higgs^{\mr{fer}}_G(U).\]
\end{theorem}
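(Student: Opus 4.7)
The plan is to prove the theorem in three stages: reduce the problem to a computation on super twistor space, carry out a twisted variant of the Penrose-Ward calculation from the preceding section, and invoke the Gaitsgory-Rozenblyum correspondence to upgrade the perturbative identification to an equivalence of fiberwise formal algebraic gauge theories.

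First, holomorphic Chern-Simons on $\PT^{N=4} \bs \bb{CP}^1$ is, up to the subtlety of Remark~\ref{non_compact_remark}, a fiberwise formal algebraic gauge theory over $\bun_G(\PT \bs \bb{CP}^1)$, and the proposition just proved shows that the $Q_{\mr{hol}}$-twisting data preserves both the section $\sigma$ and the projection $\pi$. Corollary~\ref{fiberwise_action_corollary} therefore produces a canonical twist which remains fiberwise formal, and Lemma~\ref{twist_dim_red_lemma} lets us interchange this twist with compactification along the Penrose map (implemented algebraically by restricting to a section $Z \hookrightarrow \PT \bs \bb{CP}^1$ of $p$ whose image is $\RR^4 \cong \CC^2$). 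By Theorem~\ref{GRtheorem}, it therefore suffices to identify the twisted sheaf of tangent Lie algebras over $\bun_G(U)$ with the one corresponding to $T^*_{\mr{form}}[-1]\higgs_G^{\mr{fer}}(U)$.

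Second, the untwisted shifted tangent complex of HCS at a holomorphic $G$-bundle $P$ is
\[
\Omega^{0,\bullet}\bigl(\PT \bs \bb{CP}^1;\ \wedge^\bullet(\OO(-1) \otimes W^*) \otimes \gg_P\bigr),
\]
where $W = \CC^4$ carries the R-symmetry action, as in the proof of the untwisted Penrose-Ward proposition. The twist by $Q_{\mr{hol}} = \alpha_1 \otimes e_1$ adjoins to $\ol\dd$ the odd vector field implementing contraction along the $e_1$-direction of $W$, together with the regrading induced by the chosen $\CC^\times$-action under which $Q_{\mr{hol}}$ has weight~$1$. Running the hypercohomology spectral sequence along the fibers of the holomorphic Penrose projection $\pi \colon \PT \bs \bb{CP}^1 \to \PP(S_+)$, exactly as in the untwisted calculation, and restricting to $Z$, each summand $\Omega^{0,\bullet}(\PT \bs \bb{CP}^1; \OO(-k) \otimes \gg_P)$ collapses onto a piece of a Dolbeault resolution of $\wedge^\bullet T^*U \otimes \gg_P$ or of its Serre dual on $U \subset \CC^2$, with the three-dimensional quotient $W/\langle e_1 \rangle$ now contributing fermionic parity rather than a bosonic grading. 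Reassembling the pieces, one recovers precisely the shifted tangent complex of $T^*_{\mr{form}}[-1]\ul{\mr{Map}}(\Pi TU, BG) = T^*_{\mr{form}}[-1]\higgs_G^{\mr{fer}}(U)$ at the point $(P, 0)$, namely $R\Gamma(U;\ \wedge^\bullet T^*U \otimes \gg_P) \oplus R\Gamma(U;\ \wedge^\bullet T^*U \otimes \gg_P)^\vee[-3]$ with the appropriate fermionic shifts. The invariant pairing inherited from the super Calabi-Yau structure on $\PT^{N=4} \bs \bb{CP}^1$ matches the canonical pairing on the formal shifted cotangent.

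The main obstacle is twofold. The more mechanical part is the careful bookkeeping of cohomological and fermionic gradings throughout the spectral sequence calculation, so as to guarantee that the odd Higgs field appears in fermionic rather than bosonic degree, justifying the superscript $\mathrm{fer}$. The substantive part is to check that the full $L_\infty$-structure on the twisted Dolbeault complex --- not merely its underlying graded vector space --- matches the Lie algebra structure on the shifted tangent Lie algebroid of $T^*_{\mr{form}}[-1]\higgs_G^{\mr{fer}}(U)$. On the twistor side the brackets come from the cubic HCS interaction (wedge product combined with the Lie bracket on $\gg$), and one must check that under the twisted Penrose-Ward correspondence these reduce to the natural brackets on $\wedge^\bullet T^*U \otimes \gg_P$ and its dual coming from the wedge product and $[-,-]_\gg$. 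Once this is verified, Theorem~\ref{GRtheorem} promotes the identification of sheaves of Lie algebras to an equivalence of fiberwise formal algebraic gauge theories over $\bun_G(U)$, completing the proof.
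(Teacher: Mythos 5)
Your scaffolding matches the paper's: twist holomorphic Chern--Simons on super twistor space, use the compatibility of twisting with compactification and the fact that the holomorphic twisting data preserves both $\sigma$ and $\pi$, and then globalize the identification of sheaves of Lie algebras over $\bun_G$ via theorem \ref{GRtheorem}. The gap is in the central computation. The supercharge $Q_{\mr{hol}} = \alpha_1 \otimes e_1$ does \emph{not} act on $\sym(\Pi\OO(-1)^4) \otimes \gg_P$ as a bare contraction against the $e_1^*$-direction: it acts as that contraction composed with multiplication by the holomorphic section $Z_2 \in H^0(\OO(1))$ corresponding to $\alpha_1 \in S_+$. This is decisive. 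The complex splits into two-term pieces $\OO(k-1) \xrightarrow{\,Z_2\,} \OO(k)$, whose cohomology is $\iota_*\OO_Z(k)$ for the divisor $Z = \{Z_2 = 0\}$; that is, the twisted tangent complex \emph{localizes onto} $Z$, which is exactly a section of the Penrose map isomorphic to $\CC^2$. This localization is what makes ``restrict to $Z$'' a legitimate implementation of the compactification (nothing is lost because the twisted sheaf is already supported there), and it is what circumvents the non-holomorphicity of $p$ flagged in remark \ref{non_holomorphic_issue}. A bare contraction would instead have cohomology $\sym(\Pi\OO(-1)^3)$ spread over all of $\PT \bs \bb{CP}^1$, leaving you needing the analytic pushforward along the twistor lines that the algebraic framework cannot perform; your proposal to ``run the hypercohomology spectral sequence exactly as in the untwisted calculation, and restrict to $Z$'' conflates these two incompatible compactification procedures without justifying either.

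Relatedly, your parity bookkeeping does not close up. The cohomology of the contraction in the three remaining odd directions would be $\wedge^\bullet(W/\langle e_1\rangle)^*$, which has eight summands, not the four of $\wedge^\bullet T^*U$. In the paper's calculation, the $\CC^\times$-regrading (weights $-1, 0, +1, 0$ on the four copies of $\OO(-1)$ dual to $e_1^*, e_2^*, f_1^*, f_2^*$) pushes exactly half of the eight localized summands into cohomological degree $1$, where they assemble into the Serre dual of the degree-$0$ half $\OO_Z \oplus \Pi\OO_Z(-1)^2 \oplus \OO_Z(-2) \iso \Omega^\natural_{Z,\mr{alg}}$; only then does the answer take the form base-plus-shifted-cotangent-fiber of $\higgs^{\mr{fer}}_G(U)$. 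Without specifying this weight assignment and the localization onto $Z$, the identification with $T^*_{\mr{form}}[-1]\higgs^{\mr{fer}}_G(U)$ does not follow. Your final point about matching the $L_\infty$-brackets is legitimate and is handled in the paper by observing that the twisted Lie structure pairs summands with complementary subscripts, exhibiting the cotangent-theory structure directly.
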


Note that Remark \ref{cotangent_space_remark} applies for this theorem for general open sets $U$.  The choice of holomorphic supercharge we made corresponds to a choice of complex structure on the base space $\RR^4$ of the Penrose map. For concreteness, let us note that for a holomorphic $G$-bundle $P$ on $U \subset \CC^2$, thought of as a Higgs bundle with trivial Higgs field, one has
\[\bb T_{P}[-1]\higgs^{\mr{fer}}_G(U) \iso  \OO(U; \gg_P) \oplus \Omega^{\ge 1}_{\mr{alg}}(U; \gg_P) \iso \Omega^{\natural}_{\mr{alg}}(U; \gg_P),\]
with zero differential, where $\Omega^p_{\mr{alg}}$ is naturally in fermionic degree $p \mod 2$ and cohomological degree 0. Here the first summand of the complex describes deformations of the holomorphic bundle $P$ and the second summand describes deformations of the Higgs field $0 \in \Pi \Omega^1_{\mr{alg}}(U; \gg_P)$. We will see in the proof that the homomorphically twisted theory is the cotangent theory with the base $\higgs^{\mr{fer}}_G(U) $, namely, \[\bb T_{P}[-1]\EOM_{\mr{hol}}(U) = \Omega^\natural_{\mr{alg}}(U; \gg_P) \oplus  \Omega^\natural_{\mr{alg}}(U; \gg_P)^\vee[-3]\]
with the Lie algebra structure being the base acting on the shifted cotangent fiber in a canonical way.

\begin{remark}
A priori, the twists of the full $N=4$ super Yang--Mills theory and its anti-self-dual piece might differ.  However, this is actually not the case.  In the appendix of Costello's paper on supersymmetric field theories \cite{CostelloSH} it is shown that the $Q_{\mr{hol}}$ twist of perturbative $N=4$ anti-self-dual Yang--Mills doesn't admit any deformations as a perturbative field theory.  If the twist of the full theory differed from the twist of the anti-self-dual theory, then there would be a path of twisted theories deforming one into the other (by sending the additional term in the action for the full theory to zero), thus a non-trivial deformation of the perturbative theory.  Hence we can compute our twist using twistor space without worrying about the additional Boels--Mason--Skinner term in the action: this is guaranteed to be $Q_{\mr{hol}}$-exact.
\end{remark}

\begin{proof}
We'll begin with a summary of the global structure of the proof. First, in view of Lemma \ref{twist_dim_red_lemma} we'll compute the twist of holomorphic Chern--Simons theory on super twistor space. This amounts to computing the shifted tangent complex and performing the twisting construction to get a new family over $\PT \bs \bb{CP}^1$, with the structure of a family of pointed formal moduli problems over $\bun_G$.  In order to obtain the compactified theory on $\CC^2$, we will use the trick described above: we'll find a closed embedding $\iota \colon Z \sub \PT \bs \bb{CP}^1$ such that the Penrose map $p$ induces a diffeomorphism $Z \cong \RR^4$ (and hence defines a complex structure on $\RR^4$) and define the compactification to be the restriction of the family from $\PT \bs \bb{CP}^1$ to $Z$. Since the result is a family over $Z \cong \CC^2$ of pointed formal moduli problems over $\bun_G$, the above computation determines the moduli space of solutions in the twisted, compactified theory, using Theorem \ref{GRtheorem}. 

We will compute the twisted theory at the level of twistor space.  Choose an open set $U \sub \PT \bs \bb{CP}^1$, an affine derived scheme $V$, and a smooth map $f \colon V \to \bun_G(U)$.  The shifted tangent complex at the map $f$ to the $N=4$ super twistor space theory was canonically quasi-isomorphic to
\[\Gamma(p^{-1}(U) \times V; f^*\gg) \iso \Gamma(U \times V; \pi_1^*\sym(\Pi \OO(-1)|_U^4) \otimes f^*\gg),\] 
where we write $f^*\gg$ to denote the sheaf of Lie algebras on $U \times V$ obtained by pulling back $\gg = \bb T[-1]BG$ under a closed point $f$ of $\ul{\mr{Map}}(V, \bun_G(U)) \iso \ul{\mr{Map}}(U \times V, BG)$, and where $\pi_1 \colon U \times V \to U$ is the projection.  From now on we'll just write $\OO(k)$ for the restriction $\OO(k)|_U$ when our arguments are independent of $U$.  Recall that when we twist we modify the sections of our theory over $U$ by adding a $\CC^\times$ weight to the cohomological grading then introducing a new differential coming from the supercharge.  We'll choose a $\CC^\times$-action such that the first copy of $\OO(-1)$ (corresponding to $e_1^* \in W^*$) has weight $-1$, the third copy of $\OO(-1)$ (corresponding to $f_1^* \in W^*$) has weight 1, and the remaining two copies (corresponding to $e_2^*$ and $f_2^* \in W^*$) have weight 0.

The holomorphic supercharge $Q_{\mr{hol}} = \alpha_1 \otimes e_1$ can be thought of as a section of $\Pi \OO(1)$ which pairs non-trivially with the first factor of $\OO(-1)^4$ (generated by $e_1^* \in W^*$) to define a map $\OO(-1)^4 \to \OO$, which extends to a sym-degree $-1$ derivation of $\sym(\Pi\OO(-1)^4)$.  The section of $\Pi \OO(1)$ in question, corresponding to $\alpha_1 \in S_+$, is given on the open set $U$ by the homogeneous polynomial $Z_2$ in twistor coordinates, so the differential given by $Q_{\mr{hol}}$ is generated by the map that multiplies a section of $\OO(-1)$ on the set $U$ by $Z_2$.  This preserves the cohomological grading, but increases the weight by 1, since it reduces the number of $e_1^*$ factors by 1.  The map ``multiply by $Z_2$'' from $\OO(k)$ to $\OO(k+1)$ is injective, and has cokernel isomorphic to $\OO_Z(k+1) = \iota_* \OO_{\{Z_2 = 0\}}(k+1)$ where $\OO_Z$ is the structure sheaf of the zero locus of $Z_2$.  Thus we compute the $Q_{\mr{hol}}$-twisted shifted tangent complex to be the space of global sections of the sheaf
\[\pi_1^*\left(\OO_Z \oplus \OO_Z(-2) \oplus \Pi \OO_Z(-1)^2\right) \otimes f^*\gg \oplus \pi_1^*\left(\OO_Z(-3) \oplus \OO_Z(-1) \oplus \Pi \OO_Z(-2)^2\right) \otimes f^*\gg[-1],\]
arising from the cohomology of the operator
\[\xymatrix{
{}^{-1} && \OO(-1)_1 \ar[dl] &\Pi\OO(-2)^2_{12,14}\ar[dl] &\OO(-3)_{124}\ar[dl] & \\
{}^0 &\OO \quad  &\Pi\OO(-1)^2_{2,4} &\OO(-2)_{13} \oplus \OO(-2)_{24}\ar[dl] &\Pi\OO(-3)^2_{123,134}\ar[dl] &\OO(-4)_{1234}\ar[dl] \\
{}^1 &&\OO(-1)_3 &\Pi\OO(-2)^2_{23,34} &\OO(-3)_{234} &
}\] 
where in the diagram cohomological degree runs vertically, and the subscripts represent symmetric products of the four factors of $\Pi\OO(-1)^4$.  This result actually defines the BV complex of a cotangent theory whose base is the first factor -- $\pi_1^*\left(\OO_Z \oplus \OO_Z(-2) \oplus \Pi \OO_Z(-1)^2\right) \otimes f^*\gg$ -- alone, since there is a canonical quasi-isomorphism of complexes
\[\left(\OO(k-1)[1] \to \OO(k)\right)^! \iso \left(\OO(-k-4) \to \OO(-k-3)[-1]\right)[3]\]
for each $k$ by identifying the sheaf of densities with $\OO(-4)[3]$ (the canonical sheaf shifted so that its cohomology is concentrated in degree zero) -- where the morphisms are given by pairing with the section $\alpha_1$ of $\OO(1)$ -- and therefore an invariant pairing on $\gg$ provides an isomorphism of coherent sheaves
\[\pi_1^*\left(\OO_Z(-3) \oplus \OO_Z(-1) \oplus \Pi \OO_Z(-2)^2\right) \otimes f^*\gg \iso \pi_1^*\left(\left(\OO_Z \oplus \OO_Z(-2) \oplus \Pi \OO_Z(-1)^2\right) \otimes f^*\gg\right)^![-2].\]
Since the original Lie algebra structure comes from the tensor product of sheaves and the Lie algebra structure on $f^*\gg$ (in the diagram, this pairs objects with their reflection through the center, with complementary subscripts), the induced Lie structure is that of a cotangent theory, using the nondegenerate invariant pairing. 

After identifying $\OO_Z(-1)^2 \iso \Omega^1_{Z, \mr{alg}}$ by choosing a trivialization, we obtain an isomorphism of coherent sheaves of graded Lie algebras
\[\pi_1^* \left(\OO_Z \oplus \OO_Z(-2) \oplus \Pi \OO_Z(-1)^2\right) \otimes f^*\gg \iso (\pi_1')^*\Omega^{\natural}_{Z, \mr{alg}} \otimes f^*\gg \]
over $Z$, where $\Omega^{1}$ is fermionic but in cohomological degree 0 and $\pi_1'$ is the projection $Z \times V \to Z$. With this, the holomorphically twisted shifted tangent complex becomes \[\Gamma( U \times V ; (\pi_1')^* \Omega^\natural_{Z,\mr{alg}} \otimes f^*\gg) = \Omega_{\mr{alg}}^\natural ( (U \cap Z) \times V ; f^*\gg) \] where we abuse notation to write $f^*\gg$ both for the sheaf on $U \times V$ and for its restriction to $(U \cap Z) \times V$. 

Now, we have to compactify the twisted complex along the Penrose map. We might worry that this is undefined since $p$ is not holomorphic, but we note that $p$ maps $\{Z_2 = 0\}$ diffeomorphically onto $\RR^4$ and henceforth identify $Z$ as $\RR^4$ (thus in particular defining a complex structure on $\RR^4$). Then for $U \subset \CC^2$, and a smooth map $f \colon V \to \bun_G(U)$, one obtains the shifted tangent complex of the cotangent theory whose base is perturbatively given by $\Omega^{\natural}_{\mr{alg}}(U \times V; f^*\gg)$ with zero differential, and where $\Omega^i$ is placed in fermionic degree $i \mod 2$.

It remains to globalize our computation using Theorem \ref{GRtheorem}. By definition of the tangent complex as a quasi-coherent sheaf, it is enough to check that for any affine derived scheme $V$ over $\bun_G(U)$, the local sections on $V$ of the restricted shifted tangent complexes to $\EOM_{\mr{hol}}(U)$ and $T^*[-1]\higgs^{\mr{fer}}_G(U)$ are equivalent as dg Lie algebras.  This is exactly what we checked above: the local sections on $V$ of the restricted tangent complex to $\higgs_G^{\mr{fer}}(U)$ are precisely given by $\Omega^{\natural}_{\mr{alg}}(U \times V; f^*\gg)$ with zero differential, and with $\Omega^i$ in fermionic degree $i \mod 2$, so the calculation above of the restricted shifted tangent complex to the holomorphically twist moduli space provides the desired dg Lie algebra equivalence for each $f$.  Thus we obtain an equivalence
\[\bb T_{\EOM_{\mr{hol}}(U) } [-1]\iso \bb T_{T^*_{\mr{form}}[-1]\higgs_G^{\mr{fer}}(U)} [-1]\]
of sheaves of dg Lie algebras, and therefore by Theorem \ref{GRtheorem} an equivalence of derived stacks as required.
\end{proof}

\begin{remark}
If we were working in an analytic framework, we could do this calculation by literally compactifying along the twistor lines.  If $U \sub \CC^2$ is an \emph{analytic} open set then its pullback $p^{-1}U$ to twistor space admits a canonical complex structure despite $p$ not being holomorphic.
\end{remark}

Given the above calculation, we can \emph{define} the holomorphic twist of $N=4$ theory on any complex proper algebraic surface $X$ using the superspace formalism of Section \ref{superspace_section}.  

\begin{definition} \label{holo_twist_definition}
The \emph{holomorphically twisted $N=4$ theory} on a complex proper algebraic surface $X$ is the assignment of derived stacks with
\[\EOM_{\mr{hol}}(U) = T^*_{\mr{form}}[-1]\higgs^{\mr{fer}}_G(U)\]
where $U \sub X$ is a Zariski open set, with the canonical $(-1)$-shifted symplectic structure on the global sections. 
\end{definition}

\subsection{The B-twist} \label{B_twist_section}
We'll now proceed to compute the B-twist of $N=4$ super Yang--Mills on a complex proper algebraic surface $X$.  This will again be a cotangent theory, but now to the moduli space $\Flat_G(X)$ of $G$-bundles with \emph{flat connection}.  As before, we'll compute the B-twist on flat space first -- computing the twist of the holomorphically twisted theory on $\CC^2$ with respect to the further B supercharge -- then note that the superspace formalism allows us to extend the theory to one on general complex (proper) algebraic surfaces.

Unlike the example of the holomorphic twist in the previous section, the B supercharge will preserve the fibers of the projection map $\pi \colon \EOM_{\mr{hol}}(U) \to \bun_G(U)$, but \emph{not} of the section $\sigma \colon \bun_G(U) \to \EOM_{\mr{hol}}(U)$.  As such we will not be able to directly apply Theorem \ref{GRtheorem2} to describe a canonical twist.  Instead, we'll observe that the moduli space $\EOM_{\mr{hol}}(U)$ has the structure of a mapping space, and the twisting data acts on the source of the mapping space alone, which \emph{does} admit a natural deformation describable by Theorem \ref{GRtheorem2}, yielding a natural B-twist.

We begin by describing $\EOM_{\mr{hol}}(U)$ in a slightly different way.  Using the language of the Hodge prestack, as in Example \ref{de_rham_stack_twist_example}, we can rewrite the moduli space of solutions to the equations of motion in the holomorphic twist in a way natural for constructing our further A- and B-twists.  There is a $\CC^\times$ action $\alpha$ on $\EOM_{\mr{hol}}(U)$, which acts on the base space $\higgs^{\mr{fer}}_G(U)$ of the shifted cotangent bundle in a way that on the fibers of the projection $\higgs^{\mr{fer}}_G(U) \to \bun_G(U)$ it does with weight minus one by rescaling the Higgs field.  

\begin{definition}
We'll write $\higgs_G^{\mr{bos}}(U)$ for the formal completion
\[\higgs_G^{\mr{bos}}(U) = \higgs_G(U)^\wedge_{\bun_G(U)} = \ul{\mr{Map}}(T[1]U, BG)^\wedge_{\ul{\mr{Map}}(U, BG)}.\]
\end{definition}

The superscript ``bos'' (for bosonic) is intended to contrast with the fermionic Higgs moduli space of the previous section, and to remind the reader that this formal Higgs moduli space differs slightly from the definition that more normally appears in the literature. 

\begin{lemma} \label{holo_as_dolbeault_lemma}
The regrading of the moduli space $\EOM_{\mr{hol}}(U)$ for a smooth surface $U$ with respect to this $\CC^\times$-action $\alpha$ is equivalent to the mapping stack
\[T^*_{\mr{form}}[-1] \ul{\mr{Map}}(U_{\mr{Dol}}, BG)^\wedge_{\bun_G(U)} \iso T^*_{\mr{form}}[-1] \higgs_G^{\mr{bos}}(U).\]
\end{lemma}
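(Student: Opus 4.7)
The plan is to reduce the claim to identifying the base stacks of the two formal shifted cotangent theories, and then verify this identification at the level of restricted shifted tangent complexes via Theorem \ref{GRtheorem}.

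First, since $\EOM_{\mr{hol}}(U) = T^*_{\mr{form}}[-1]\higgs^{\mr{fer}}_G(U)$ by Theorem \ref{holo_twist_theorem}, the $\CC^\times$-action $\alpha$ extends to the entire shifted cotangent by acting with opposite weights on the cotangent fibers. By naturality, this regrading commutes with the formal shifted cotangent construction, giving $(T^*_{\mr{form}}[-1]\mc X)^\alpha \iso T^*_{\mr{form}}[-1](\mc X^\alpha)$. So it will suffice to produce an equivalence $(\higgs^{\mr{fer}}_G(U))^\alpha \iso \higgs^{\mr{bos}}_G(U)$ as fiberwise formal algebraic gauge theories over $\bun_G(U)$. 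Both sides are fiberwise formal: on the left, $\alpha$ preserves the canonical section $\sigma \colon \bun_G(U) \to \higgs^{\mr{fer}}_G(U)$ (the zero Higgs field) and the projection $\pi$ (forgetting the Higgs field), so Corollary \ref{fiberwise_action_corollary} applies; on the right, fiberwise formality is built into the formal completion along $\sigma \colon \bun_G(U) \to \higgs_G(U)$. By Theorem \ref{GRtheorem}, it will therefore suffice to identify the two associated Lie algebra objects in $\mr{IndCoh}(\bun_G(U))$, i.e.\ to match the restricted shifted tangent complexes as sheaves of dg Lie algebras.

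At a $G$-bundle $P$, the mapping-stack description $\higgs^{\mr{fer}}_G(U) = \ul{\mr{Map}}(\Pi TU, BG)$ together with $\OO_{\Pi TU} = \bigoplus_p \Pi^p \Omega^p_U$ yields
\[\bb T_P[-1]\higgs^{\mr{fer}}_G(U) \iso \bigoplus_p \Pi^p \Omega^p(U; \gg_P),\]
all in cohomological degree $0$ with fermionic parity $p$, with the Lie bracket coming from the wedge on $\Omega^\bullet$ and the bracket on $\gg_P$. The tangent direction dual to the Higgs field has $\alpha$-weight $+1$ (opposite to the function weight $-1$ of the Higgs field itself), so the $p$-th summand has $\alpha$-weight $p$. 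Applying the regrading sends the $p$-th piece to cohomological degree $p$ with bosonic parity, producing $\bigoplus_p \Omega^p(U;\gg_P)[-p]$. On the other hand, using $U_{\mr{Dol}} = T[1]U$ with $\OO_{T[1]U} = \mr{Sym}_U(\bb L_U[-1]) = \bigoplus_p \Omega^p_U[-p]$, a direct computation gives
\[\bb T_P[-1]\higgs^{\mr{bos}}_G(U) \iso \bigoplus_p \Omega^p(U; \gg_P)[-p]\]
with the same Lie bracket. The two therefore agree as dg Lie algebra objects, and the equivalence follows from Theorem \ref{GRtheorem}.

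The hard part will be the careful bookkeeping of sign and grading conventions: one must verify that the $\alpha$-weights on the structure sheaf of $\Pi TU$, after passing to the tangent complex where weights flip, give exactly the right shifts to convert fermionic parity into the cohomological grading on $\OO_{T[1]U}$. The compatibility of regrading with $T^*_{\mr{form}}[-1]$ will also require a brief verification that the induced dual $\CC^\times$-action on the cotangent fibers produces the expected grading shifts on both the base and fiber directions of the shifted cotangent.
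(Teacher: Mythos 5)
Your proof is correct and follows essentially the same route as the paper: the key point in both is that the $\CC^\times$-weight on the Higgs directions converts fermionic parity into cohomological degree, turning $\Pi TU$ into $T[1]U \iso U_{\mr{Dol}}$. The paper phrases this as a regrading of the source of the mapping stack, while you verify it on the restricted shifted tangent complexes and invoke Theorem \ref{GRtheorem}; this is just a more explicit unpacking of the paper's definition of regrading, not a genuinely different argument.
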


\begin{proof}
We saw in Theorem \ref{holo_twist_theorem} for the surface $\CC^2$, which we used as a definition for more general surfaces, that 
\begin{align*}
\EOM_{\mr{hol}}(U) &\iso T_{\mr{form}}^*[-1]\higgs_G^{\mr{fer}}(U) \\
&\iso T_{\mr{form}}^*[-1]\ul{\mr{Map}}(\Pi TU, BG) \\
&\iso T_{\mr{form}}^*[-1](\ul{\mr{Map}}(\Pi TU, BG)^\wedge_{\bun_G(U)}).
\end{align*}
The $\CC^\times$-action we've described acts on the fiber of $\Pi TU$ with weight one, so the regraded space is equivalent to
\[\EOM_{\mr{hol}}^\alpha(U) \iso  T_{\mr{form}}^*[-1](\ul{\mr{Map}}(T[1]U, BG)^\wedge_{\bun_G(U)}).\]
In turn, the shifted tangent bundle $T[1]U$ is equivalent to $U_{\mr{Dol}}$ (because $U$ is a smooth scheme, so $T[1]U \iso T_{\mr{form}}[1]U$), so $\EOM_{\mr{hol}}^\alpha(U) \iso T^*_{\mr{form}}[-1]\higgs_G^{\mr{bos}}(U)$ as required.
\end{proof}

\begin{remark}
The formal completion at $\bun_G(U)$ is necessary for the bosonic but not the fermionic Higgs moduli space because, while the fibers of the map $\higgs_G^{\mr{fer}}(U) \to \bun_G(U)$ are purely fermionic, and therefore formal, the map $\higgs_G(U) \to \bun_G(U)$ has non-formal fibers, so the map is not a nil-isomorphism.  Taking the completion while we regrade is necessary for the regraded theory to still be a formal algebraic gauge theory.
\end{remark}

Now, let's describe a twist of the holomorphic theory with respect to the B-supercharge.  The idea is that, viewing $\EOM_{\mr{hol}}(U)$ as a mapping space as in Lemma \ref{holo_as_dolbeault_lemma} we can canonically deform the \emph{source} from $U_{\mr{Dol}}$ to $U_{\mr{dR}}$, for instance by applying Theorem \ref{GRtheorem2} to the symmetry generated by a non-vanishing degree one vector field on $T[1]U$.  This will contrast with the A-twist in the next section, where we'll deform the global shifted cotangent bundle construction in a similar way.  


\begin{theorem} \label{B_twist_EOM_theorem}
The algebraic classical field theory $\EOM_B$ which assigns to a complex algebraic surface $U$ the derived stack
\[\EOM_B(U) = T^*_{\mr{form}}[-1]\Flat_G(U)\]
arises as a natural deformation of $\EOM_{\mr{hol}}(U)$ which, if $U = \CC^2$, defines a twist of $N=4$ super Yang--Mills theory with respect to the topological supercharge $Q_B$. 
\end{theorem}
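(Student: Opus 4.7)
The plan is to construct the B-twist in two steps: first exhibit $T^*_{\mr{form}}[-1]\loc_G(U)$ as a one-parameter deformation of the regraded holomorphic theory $\EOM_{\mr{hol}}^\alpha(U) \iso T^*_{\mr{form}}[-1]\ul{\mr{Map}}(U_{\mr{Dol}},BG)^\wedge_{\bun_G(U)}$ coming from the Hodge family, and second verify that this deformation implements the perturbative B-twist at the level of the tangent complex, in the sense of definition \ref{twist_definition}.

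For the global deformation, I would use example \ref{de_rham_stack_twist_example} applied to $U$: the Hodge prestack defines a family $U_{\mr{Hod}}\to\bb A^1$ with $U_{0\text{-dR}}=U_{\mr{Dol}}$ and $U_{1\text{-dR}}=U_{\mr{dR}}$. Taking mapping stacks into $BG$, forming the formal completion along $\bun_G(U) = \ul{\mr{Map}}(U,BG)$ over each point of $\bb A^1$, and then applying $T^*_{\mr{form}}[-1]$ yields a derived stack $\mc M'\to\bb A^1$ whose fiber over $0$ is $\EOM_{\mr{hol}}^\alpha(U)$ and whose fiber over $1$ is $T^*_{\mr{form}}[-1]\loc_G(U)$. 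The projection to $\bb A^1$ and, for each closed point of $\bun_G$, the section $\bb A^1\to\mc M'$ supplied by the canonical section of the formal cotangent bundle gives conditions (1) and (2) of definition \ref{twist_definition} directly. Equivalently, one can describe the source deformation entirely through Gaitsgory--Rozenblyum: by theorem \ref{GRtheorem2}, the family $\{U_{\lambda\text{-dR}}\}$ corresponds to the one-parameter family of Lie algebroids $\lambda\cdot\mr{id}\colon\bb T_U\to\bb T_U$ deforming the zero algebroid, and mapping to $BG$ then applying $T^*_{\mr{form}}[-1]$ is functorial.

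Next I would check that the perturbative tangent complex of $\mc M'$ at the generic fiber, restricted along the zero section, realizes the B-perturbative twist of $\bb T[-1]\EOM_{\mr{hol}}(U)$. At a closed point $P\in\bun_G(U)$ the restricted shifted tangent complex of $T^*_{\mr{form}}[-1]\loc_G(U)$ is $\Omega^\natural_{\mr{alg}}(U;\gg_P)$ with the algebraic de Rham differential $\partial$ (together with its shifted dual), whereas that of the regraded holomorphic theory is $\Omega^\natural_{\mr{alg}}(U;\gg_P)$ with zero differential, by lemma \ref{holo_as_dolbeault_lemma} and theorem \ref{holo_twist_theorem}. The Hodge family precisely introduces the differential $\lambda\partial$, so $s^*\bb T_{\mc M'/\bb A^1}$ deforms linearly from the $Q_{\mr{hol}}$-twist to the $Q_{\mr{hol}}+\partial$-twist. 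To match this with the physical B-supercharge I would compare it with the description of $Q_B$ as a vector field on $\CC^{2|3}$ from section \ref{superspace_section}: after $Q_{\mr{hol}}$-cohomology the surviving odd generators $\alpha_j^\vee\otimes f_i^*$ act on the global superspace $\Pi TX\times\CC^{0|1}$ as $\varepsilon_i\tfrac{\partial}{\partial z_j}$, so the Kapustin--Witten ratio $(\mu:\nu)=(1:0)$ yields the vector field $\partial$ on $\Pi TX$, confirming that deforming the source from $U_{\mr{Dol}}$ to $U_{\mr{dR}}$ is exactly the action of $Q_B$ modulo $Q_{\mr{hol}}$.

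Finally I would verify the remaining hypothesis (3) of definition \ref{twist_definition}, namely that $s^*\bb T_{\mc M'/\bb A^1}$ is the perturbative twist of $\bb T_{\sigma(P)}\EOM_{\mr{hol}}$ in the sense of remark \ref{perturbative_deformation}, and conclude by invoking proposition \ref{perturbative_twist_still_a_theory} to guarantee that $\mc M'$ remains a classical field theory and that the canonical $(-1)$-shifted symplectic form on $T^*_{\mr{form}}[-1]\loc_G(U)$ restricts at each point to the required degree $-3$ pairing. I expect the main obstacle to be this last verification: one must ensure that the identification of the de Rham differential on $\Omega^\natural_{\mr{alg}}(U;\gg_P)$ with the action of $Q_B$ is compatible with the $L_\infty$-structure (not only the linear piece), and that the natural deformation constructed via the Hodge prestack in fact corresponds to the Gaitsgory--Rozenblyum deformation of the ambient Lie algebroid rather than only preserving the underlying ind-coherent sheaf with its anchor map. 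This is handled by lemma \ref{twist_construction_lemma} provided that the twisted anchor map uniquely lifts, and uniqueness follows here because the source deformation $U_{\mr{Dol}}\rightsquigarrow U_{\mr{dR}}$ is itself uniquely specified by its underlying family of tangent Lie algebroids.
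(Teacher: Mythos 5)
Your proposal is correct and follows essentially the same route as the paper: deform the source via the Hodge prestack $U_{\mr{Hod}}\to\bb A^1$, take the relative mapping stack into $BG$ and apply $T^*_{\mr{form}}[-1]$, identify the fibers over $0$ and $1$, and then verify definition \ref{twist_definition} by computing the restricted relative tangent complex with differential scaling linearly in the deformation parameter. The only point to tighten is that the sections required by definition \ref{twist_definition} must pass through every closed point $(P,\nabla)$ of the generic fiber $\loc_G(U)$, so they are given by rescaling the flat connection, $\lambda\mapsto(P,\lambda\nabla)$, rather than by the zero section over points of $\bun_G$; correspondingly the differential on $\Omega^{\bullet}_{\mr{alg}}(U;\gg_P)$ is the covariant derivative $d_A$, not a bare $\partial$.
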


\begin{remark}
As we noted in Remark \ref{non_compact_remark}, this theory is only a true algebraic classical field theory according to Definition \ref{classical_field_theory_def} if $U$ is proper, ensuring that $\Flat_G(U)$ is finitely presented, so has a perfect tangent complex.  In general the theory exists as an assignment of (possibly infinite type) derived stacks, but the presymplectic form on the shifted cotangent complex may be degenerate.
\end{remark}

\begin{proof}
We'll build a canonical twist as discussed in Remark \ref{canonical_twist_remark}.  More specifically, we'll describe a deformation of the regrading $\EOM_{\mr{hol}}^\alpha(U)$ for a general surface $U$, then observe that if $U$ is a Zariski open subset of $\CC^2$ then it satisfies the conditions of Definition \ref{twist_definition}. 

For a fixed complex algebraic surface $U$, we consider the derived stack $\mc M' (U)= T^*_{\mr{form}}[-1]\ul{\mr{Map}}_{\bb A^1}(U_{\mr{Hod}}, BG \times \bb A^1)$: the formal shifted cotangent to the mapping stack relative to $\bb A^1$.  This admits a flat map to $\bb A^1$ whose fiber over $t$ is canonically equivalent to $T^*_{\mr{form}}[-1]\ul{\mr{Map}}(U_{t \text{-dR}}, BG)$ -- as in Example \ref{de_rham_stack_twist_example} -- so the general fiber is equivalent to $\mc M^{Q_B}(U) = T^*_{\mr{form}}[-1]\Flat_G(U)$, and whose fiber over zero is equivalent to $T^*_{\mr{form}}[-1]\higgs_G(U)$.   We've therefore defined a deformation of the regrading $\mc M^\alpha (U) = \EOM_{\mr{hol}}^\alpha(U)$, via the embedding $\EOM_{\mr{hol}}^\alpha(U) \to \higgs_G(U)$, whose general fiber is the desired twisted moduli space.

Now, we must check the hypotheses of Definition \ref{twist_definition}; that is, that for every closed point $P \in \bun_G(U)$ we can find a section $s$ such that $s(0) = \sigma_\alpha(P)$ and such that the relative shifted tangent complex agrees with the twist of the zero fiber as a perturbative field theory.  For every closed point of $\mc M^{Q_B} (U)$ -- just a closed point $A = (P,\nabla)$ of the base space $\Flat_G(U)$ -- there's a natural section $s \colon \bb A^1 \to \mc M'$ given by rescaling the connection, such that the shifted tangent complex restricted to $s$ is equivalent to the $\CC[t]$-module 
\[s^*\bb{T}_{\mc M'}[-1] = \left( ( \Omega^{\bullet}_{\mr{alg}}(U; \gg_P) \oplus \Omega^{\bullet}_{\mr{alg}}(U; \gg_P) ^\vee[-3] )  \otimes \CC[t] ), (td_A, td_A)\right)\]
where $d_A$ is the algebraic covariant derivative associated to the flat connection $\nabla$ on $U$.  This defines a twist of the perturbative field theory $\bb{T}_P[-1]\EOM_{\mr{hol}}(U) = \Omega^{\natural}_{\mr{alg}}(U; \gg_P) \oplus \Omega^{\natural}_{\mr{alg}}(U; \gg_P) ^\vee[-3]$ by the B-twisting data.   
\end{proof}

It is immediate to identify compactification of the twisted theory along an algebraic curve.

\begin{corollary}
For a product $\Sigma_1 \times \Sigma_2$ of algebraic curves, the B-twist of $N=4$ super Yang--Mills theory satisfies
\[\EOM_B(\Sigma_1 \times \Sigma_2 ) = T^*_{\mr{form}}[-1] \underline{\mr{Map}}( (\Sigma_1)_{\mr{dR}},  \Flat_G(\Sigma_2)). \]
\end{corollary}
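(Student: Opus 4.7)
The plan is to derive this identification as a direct formal consequence of Theorem \ref{B_twist_EOM_theorem} together with two standard facts about derived stacks: the exponential law for mapping stacks, and the compatibility of the de Rham prestack with finite products.

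First I would apply Theorem \ref{B_twist_EOM_theorem} with $U = \Sigma_1 \times \Sigma_2$ to obtain
\[\EOM_B(\Sigma_1 \times \Sigma_2) \iso T^*_{\mr{form}}[-1] \loc_G(\Sigma_1 \times \Sigma_2) = T^*_{\mr{form}}[-1] \underline{\mr{Map}}((\Sigma_1 \times \Sigma_2)_{\mr{dR}}, BG).\]
Thus the problem reduces to identifying the source of the mapping stack.

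Next I would observe that the de Rham prestack commutes with products: since by definition $\mc X_{\mr{dR}}(R) = \mc X(R^{\mr{red}})$ on $R \in \cdga^{\le 0}$, and the functor of points of a product is the product of functors of points, one has a canonical equivalence $(\Sigma_1 \times \Sigma_2)_{\mr{dR}} \iso (\Sigma_1)_{\mr{dR}} \times (\Sigma_2)_{\mr{dR}}$ of prestacks. Combining this with the exponential adjunction for mapping stacks yields
\[\underline{\mr{Map}}((\Sigma_1 \times \Sigma_2)_{\mr{dR}}, BG) \iso \underline{\mr{Map}}((\Sigma_1)_{\mr{dR}}, \underline{\mr{Map}}((\Sigma_2)_{\mr{dR}}, BG)) = \underline{\mr{Map}}((\Sigma_1)_{\mr{dR}}, \loc_G(\Sigma_2)).\]
Applying $T^*_{\mr{form}}[-1]$ to both sides then gives the desired identification.

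Since each step is a formal manipulation of mapping stacks and de Rham prestacks, there is no real obstacle; the only point that might require brief comment is that the compactification of the B-twisted theory along $\Sigma_1$ coincides with the global sections on $\Sigma_1 \times \Sigma_2$, which is automatic from our definition of compactification as pushforward and the fact that $\EOM_B$ was defined on arbitrary smooth algebraic surfaces in Theorem \ref{B_twist_EOM_theorem}.
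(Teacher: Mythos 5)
Your proof is correct and follows essentially the same route as the paper: the paper's own argument simply invokes the definition $\loc_G(X) = \underline{\mr{Map}}(X_{\mr{dR}}, BG)$ together with the exponential adjunction for mapping stacks, leaving implicit the compatibility $(\Sigma_1 \times \Sigma_2)_{\mr{dR}} \iso (\Sigma_1)_{\mr{dR}} \times (\Sigma_2)_{\mr{dR}}$ that you spell out. Your version is just a slightly more explicit rendering of the same formal manipulation.
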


\begin{proof}
This follows from the definition $\Flat_G(X) = \underline{\mr{Map}}(X_{\mr{dR}}, BG)$ and the adjunction 
\[\underline{\mr{Map}}(X \times Y , Z ) = \underline{\mr{Map}} (X , \underline{\mr{Map}}(Y,Z)).\]
\end{proof}

\begin{remark}
One can read this corollary as saying that the B-twisted theory compactifies to the B-model with target $\Flat_G(\Sigma_2)$. A completely perturbative description was given by Costello \cite{CostelloSH}, which was not enough to identify $\Flat_G(\Sigma_2)$ as an algebraic stack. One should note that here we identify the target as the moduli stack of de Rham local systems, as opposed to Betti local systems, which is more aligned with the usual formulation of the geometric Langlands correspondence. This result is somewhat surprising, because it has been widely believed that the Kapustin--Witten story can only capture the topological aspects of the correspondence. 

\vspace{-5pt}
One might worry that one shouldn't expect the twist by a \emph{topological} supercharge to depend on a choice of complex structure on spacetime, which our examples clearly do.  Because this theory on $X$ didn't necessarily arise from twisting a theory with respect to global topological twisting data, there's no reason that the moduli space $\EOM_B(U)$ shouldn't depend on a complex algebraic structure on $U$, and in general it \emph{does} depend on this choice.

\vspace{-5pt}
A more familiar example of this phenomenon is provided by Donaldson--Witten theory as a topological twist of $N=2$ super Yang--Mills.  While the theory on a flat space is truly topological, if one uses the superspace formalism to extend this theory to a general 4-manifold one finds that the moduli space of solutions to the equations of motion is built from the moduli space of instantons, which -- if $b_2^+ = 1$ -- may depend on the metric of the underlying 4-manifold, not just its diffeomorphism type.  A discussion in the physics literature can be found in the 1998 paper of Moore and Witten \cite{WittenMoore}.

\vspace{-5pt}
From the point of view of the current work, this subtlety is necessary if we intend to recover a statement as the geometric Langlands conjecture, which is dependent on changes in the algebraic/holomorphic structure on a curve from a topologically twisted theory.  We will return to this in future work. 
\end{remark}

\begin{remark}
In theories like the B-twist, we would like to be able to talk about the germs of solutions to the equations of motion near some (smooth) submanifold of positive real codimension, especially codimension 1 submanifolds of form $\Sigma \times S^1 $, where $\Sigma$ is an algebraic curve: these germs of solutions correspond to the classical phase space in the 2d theory obtained by compactification along $\Sigma$.  With the ideal, analytic Definition \ref{idealdefinition} of a classical field theory this would be possible: one could define the space of \emph{germs of solutions to the equations of motion} along a submanifold $Y \sub X$ to be the inverse image $\iota^{-1}\mc M$, where $\iota \colon Y \inj X$ was the inclusion map.  As we'll see, this would give very natural examples for an analytic version of the B-twisted classical field theory, but using our algebraic definition we'll need to use a slightly different construction.

\vspace{-5pt}
Suppose we indeed had an algebraic model for the holomorphically twisted $N=4$ theory with open sections on an \emph{analytic} open set $U$ given by $T^*[-1]\higgs_G^{\mr{fer}}(U)$, interpreted in some natural way. Then we could make a claim of the following sort.  
\begin{claim}
If $Y \sub X$ is a compact oriented codimension $k$ submanifold, then the germs of solutions to the equations of motion near $Y$ in a $B$-twisted $N=4$ theory are given by
\[\EOM_B(Y) = T^*_{\mr{form}}[k-1]\Flat_G(Y)\] 
where $\Flat_G(Y)$ is the space of germs of flat connections near $Y \sub X$. 
\end{claim}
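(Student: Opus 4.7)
The plan is to model the germs of equations of motion near $Y$ by the restriction of $\EOM_B$ to an analytic tubular neighborhood $N$ of $Y$, apply the mapping-stack description from Theorem \ref{B_twist_EOM_theorem}, identify the underlying moduli via a de Rham retraction, and extract the codimension-dependent shift of the cotangent direction from a transverse Thom / Poincar\'e-duality argument.

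First, I would pick an analytic tubular neighborhood $N \sub X$ of $Y$, realised as the total space of the normal disk bundle of $Y$, so there is a smooth retraction $\pi \colon N \to Y$. In the hypothesised analytic extension of the B-twist, Theorem \ref{B_twist_EOM_theorem} would give $\EOM_B(N) \simeq T^*_{\mr{form}}[-1]\loc_G(N)$, subject to the caveat that on the non-compact open $N$ the resulting pairing is degenerate. Since the normal disk fibres are contractible, the reduced topology of $N$ coincides with that of $Y$; consequently $N_{\mr{dR}} \simeq Y_{\mr{dR}}$ as de Rham prestacks, and hence $\loc_G(N) = \underline{\mr{Map}}(N_{\mr{dR}}, BG) \simeq \underline{\mr{Map}}(Y_{\mr{dR}}, BG) = \loc_G(Y)$, which is exactly the base of the shifted cotangent appearing in the claim.

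It remains to pin down the shift. Perturbatively, at a flat connection $A$ the shifted tangent complex of $\EOM_B(X)$ was the de Rham complex $\Omega^{\bullet}_{\mr{alg}}(X;\gg_P)$ summed with its dual in shift $-3$, the pairing coming from integrating the wedge product over the $4$-manifold $X$. Replacing $X$ by the germ along $Y$, the dual pairing should use sections compactly supported in the transverse direction; applying the Thom isomorphism for the oriented rank-$k$ normal bundle shifts the pairing degree upward by $k$, converting the degree $-3$ pairing on $X$ into a degree $k-3$ pairing along $Y$. This matches the degree of the canonical symplectic form on $T^*_{\mr{form}}[k-1]\loc_G(Y)$, whose shifted tangent complex pairing is of degree $(k-1)-2 = k-3$, and I would then verify that the $L_\infty$-structures on the two tangent complexes agree by realising both as formal $(k-1)$-shifted cotangent theories over $\loc_G(Y)$.

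The hard part will be making these computations rigorous in the absence of a well-developed analytic framework for classical field theories on open manifolds; what is really required is a variant of the derived AKSZ construction for germs along a submanifold, in which the input is an oriented compact $Y$ together with a normal bundle and the output incorporates the Thom class of that bundle to supply the shift by $k$. The orientation hypothesis on $Y$ enters precisely via this Thom class, while compactness is what ensures the resulting shifted symplectic form is non-degenerate. A subsidiary difficulty is that our model of the B-twist lives in algebraic geometry, and the extension to analytic tubular neighborhoods, while natural and already invoked implicitly in the statement's interpretation of $\loc_G(Y)$ as germs of flat connections near $Y$, is exactly the extension the excerpt flags as unavailable in the present formalism.
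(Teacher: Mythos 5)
Your proposal is correct and follows essentially the same route as the paper: the paper also passes to a tubular neighborhood $U$ of $Y$, identifies the germs with compactly supported sections on $U$, and uses Poincar\'e duality between $\Omega_c^\bullet(U;\gg_P)$ and $\Omega^\bullet(Y;\gg_P^*)$ (which, for a tubular neighborhood, is exactly your Thom-isomorphism shift by $k$) to exhibit a degree $k-3$ pairing on the tangent complex, hence a $(k-1)$-shifted symplectic structure splitting as a complex plus a shift of its dual. The only cosmetic difference is that you phrase the shift via the Thom class of the normal bundle while the paper phrases it via Poincar\'e duality on $U$ and on $Y$; both likewise defer rigor to an unavailable analytic version of the Gaitsgory--Rozenblyum/AKSZ formalism.
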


\begin{proof}
To identify the moduli space of germs along $Y$ we choose a tubular neighborhood $U$ of $Y$ in $X$, and use Poincar\'e duality to identify the compactly supported sections of the shifted tangent complex on $Y$ with the compactly supported sections of the complex $\Omega^\bullet(U; \gg_P)[1]$ of $\Flat_G(U)$ plus a shift of its dual.  Indeed, global sections of the inverse image $\iota^{-1}\EOM_B(Y)$ are just compactly supported sections of $\EOM_B$ on a tubular neighborhood $U$ of $Y$. We have quasi-isomorphisms
\begin{align*}
\left(\Omega_c^\bullet(U; \gg_P)[1]\right)^\vee &\iso \left(\Omega^\bullet(Y; \gg_P)[1]\right)^\vee \\
 &\iso \Omega^\bullet(Y; \gg_P^*)[\dim Y - 1] \\
 &\iso \Omega^\bullet(Y; \gg_P^*)[3-k] \\
 &\iso \left(\Omega^\bullet(Y; \gg_P^*[1])[1]\right)[1-k] \\
\end{align*}
which gives the total compactly supported tangent complex $\Omega_c^\bullet(U; \gg_P \oplus \gg_P^*[1])[1]$ a $(k-1)$-symplectic structure which splits globally as the sum of a sheaf of complexes and a shift of its dual.  Thus, after an application of a version of Theorem \ref{GRtheorem} in analytic derived geometry we identify the moduli space of solutions with the appropriate shifted cotangent bundle.
\end{proof}
\vspace{-10pt}
We'll give an algebraic version of this claim for manifolds of form $\Sigma \times U$ for $U = S^1$ or $U= \mr{pt}$ below.
\end{remark}

As discussed in the remark, we would like to make sense of what a theory assigns to a submanifold of nonzero codimension. Because our framework uses an algebraic structure of a submanifold in an essential way -- we defined the B-twist by twisting theories only naturally defined for algebraic varieties -- we'll need to extend our formalism.  One observes that the base of the cotangent sheaf defining the B-twist can be described by $U \mapsto \Flat_G(U) = \mr{Map} (  U_{\mr{dR}} , BG)$ for $U \subset X$ and that this assignment makes sense for a more general class of derived stacks than just algebraic varieties.

Specifically, let's consider compact connected manifolds $U$ so that $U \times \Sigma$ has dimension less than four (formally, we're considering spaces of positive codimension for the 2-dimensional theory obtained by compactification along $\Sigma$): the only possibilities are the circle and the point.  These are modelled by derived stacks $S^1_B$ and $\mr{pt}$, so we will simply consider $\Sigma \times U \mapsto \ul{\mr{Map}}((\Sigma \times U ) _{\mr{dR}} , BG)$ for $U=S^1_B$ or $U=\mr{pt}$.

While it is natural to consider the assignment $V \mapsto \Flat_G(V)$ to such extended objects, the $(-1)$-shifted cotangent bundle is not: the degree of the shift must change depending on the dimension of $V$. In order to understand what this means, let us view $\mr{EOM}_B(X) = T_{\mr{form}}^*[-1] \Flat_G(X)$, where $X$ is a smooth and proper algebraic surface, as arising by applying Theorem \ref{GRtheorem} to a sheaf of dg Lie algebras over $\Flat_G(X)$ given by the dg Lie algebra equivalence
\begin{eqnarray*}
\mathbb{T}_{T_{\mr{form}}^*[-1] \Flat_G(X)  }[-1] & =&  \mathbb{T}_{\Flat_G(X) } [-1] \oplus  (\mathbb{T}_{\Flat_G(X) } [-1] )^\vee[-3]\\
& =&  \mathbb{T}_{\Flat_G(X) } [-1] \oplus  \mathbb{L}_{\Flat_G(X) } [-2]\\
&= &  \mathbb{T}_{\Flat_G(X) } [-1] \oplus  \mathbb{T}_{\Flat_G(X) },
\end{eqnarray*}
where we use the $(-2)$-shifted symplectic structure of $\Flat_G(X) = \ul{\mr{Map}}(X_{\mr{dR}}, BG)$ obtained from the AKSZ construction using the 4-orientation on $X_{\mr{dR}}$ to identify the $(-2)$-shifted cotangent complex with the tangent complex \cite[Theorem 2.5]{PTVV}. This is an equivalence of dg Lie algebras, where the second summand is treated as a module for the first summand.  We'll extend this description of the moduli space of solutions to the equations of motion, to \emph{define} the moduli space for spaces of form $\Sigma \times U$. 

\begin{definition} \label{phase_space_definition}
For $U= S^1_B$ or $U=\mr{pt}$, we define $\mr{EOM}_B(\Sigma \times U)$ on $X$ to be the derived stack obtained by applying the Theorem \ref{GRtheorem} to the sheaf  
$ \mathbb{T}_{\Flat_G(\Sigma \times U) } [-1] \oplus  \mathbb{T}_{\Flat_G(\Sigma \times U) }$ of Lie algebras over $\Flat_G(\Sigma \times U)$.
\end{definition}

\begin{corollary}\label{Bphasespacecorollary}
There is an equivalence of derived stacks 
\[\mr{EOM}_B (\Sigma \times S_B^1) = T^*_{\mr{form}} (\LL  \Flat_ G( \Sigma)).\]
\end{corollary}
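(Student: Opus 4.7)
The plan is to realize both sides as derived stacks associated, via the Gaitsgory-Rozenblyum equivalence of theorem \ref{GRtheorem}, to equivalent sheaves of Lie algebras over $\LL \loc_G(\Sigma)$.

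The first step is to identify the base stacks on both sides. Since $S^1_B$ is a constant prestack one has $(S^1_B)_{\mr{dR}} = S^1_B$, and the de Rham functor distributes over products, so $(\Sigma \times S^1_B)_{\mr{dR}} \simeq \Sigma_{\mr{dR}} \times S^1_B$. By the mapping space adjunction,
\[\loc_G(\Sigma \times S^1_B) \simeq \underline{\mr{Map}}(\Sigma_{\mr{dR}} \times S^1_B, BG) \simeq \underline{\mr{Map}}(S^1_B, \loc_G(\Sigma)) = \LL \loc_G(\Sigma).\]

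Second, I would analyze $T^*_{\mr{form}} \LL \loc_G(\Sigma)$ from the Lie-algebraic viewpoint used in definition \ref{phase_space_definition}. For the unshifted formal cotangent stack, the restricted shifted tangent complex at the zero section takes the standard form
\[\mathbb{T}_{T^*_{\mr{form}} \LL \loc_G(\Sigma)}[-1]|_{\LL \loc_G(\Sigma)} \simeq \mathbb{T}_{\LL \loc_G(\Sigma)}[-1] \oplus \mathbb{L}_{\LL \loc_G(\Sigma)}[-1],\]
where the first summand acts on the second through the canonical coadjoint action and the fiber brackets to zero.

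The key input is then the AKSZ construction of \cite[Theorem 2.5]{PTVV}: the product $(\Sigma \times S^1_B)_{\mr{dR}} \simeq \Sigma_{\mr{dR}} \times S^1_B$ carries a $3$-dimensional orientation built from the $2$-dimensional orientation on $\Sigma_{\mr{dR}}$ and the canonical $1$-dimensional Betti orientation on $S^1_B$. Transgressing the $2$-shifted symplectic form on $BG$ through this orientation produces a canonical $(-1)$-shifted symplectic form on $\LL \loc_G(\Sigma)$, hence a canonical equivalence $\mathbb{L}_{\LL \loc_G(\Sigma)}[-1] \simeq \mathbb{T}_{\LL \loc_G(\Sigma)}$. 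Under this equivalence the sheaf of Lie algebras above is identified with $\mathbb{T}_{\LL \loc_G(\Sigma)}[-1] \oplus \mathbb{T}_{\LL \loc_G(\Sigma)}$, which is exactly the sheaf appearing in definition \ref{phase_space_definition} for $U = S^1_B$.

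I expect the main obstacle to lie in verifying the compatibility of this identification at the full Lie-algebraic level rather than just on underlying complexes: one must check that the canonical coadjoint action of $\mathbb{T}[-1]$ on $\mathbb{L}[-1]$ is transported by the AKSZ symplectic pairing to the module structure appearing in definition \ref{phase_space_definition}. Granted this compatibility, theorem \ref{GRtheorem} assigns the same pointed formal moduli over $\LL \loc_G(\Sigma)$ to both sides, yielding the asserted equivalence of derived stacks.
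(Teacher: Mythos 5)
Your proposal is correct and follows essentially the same route as the paper: both identify $\loc_G(\Sigma\times S^1_B)$ with $\LL\loc_G(\Sigma)$, compare the restricted shifted tangent complexes as sheaves of Lie algebras via theorem \ref{GRtheorem}, and use the AKSZ $(-1)$-shifted symplectic structure coming from the $2$-orientation on $\Sigma_{\mr{dR}}$ and the $1$-orientation on $S^1_B$ to exchange $\mathbb{L}[-1]$ and $\mathbb{T}$. The only difference is that you run the chain of identifications from the cotangent side toward definition \ref{phase_space_definition} rather than the reverse, and you explicitly flag the Lie-module compatibility that the paper asserts without further comment.
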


\begin{proof}
By definition, it is enough to compare the shifted tangent complexes of $\mr{ EOM}_B (\Sigma \times S_B^1)$ and $T^*_{\mr{form}} (\LL  \Flat_ G( \Sigma))$ as sheaves of Lie algebras over $\LL  \Flat_ G( \Sigma)$.  There are Lie algebra equivalences
\begin{eqnarray*}
\mathbb{T}_{\mr{EOM}_B(\Sigma \times S^1_B)}[-1]  & = & \mathbb{T}_{\Flat_G(\Sigma \times S^1_B) } [-1] \oplus  \mathbb{T}_{\Flat_G(\Sigma \times S^1_B) }\\
& = & \mathbb{T}_{\Flat_G(\Sigma \times S^1_B) } [-1] \oplus  \mathbb{L}_{\Flat_G(\Sigma \times S^1_B) }[-1]\\
& =&  \mathbb{T}_{\LL  \Flat_ G( \Sigma) }[-1] \oplus(  \mathbb{T}_{\LL  \Flat_ G( \Sigma)) }[-1])^\vee [-2]\\
&=&  \mathbb{T}_{T^*_{\mr{form}} (\LL  \Flat_ G( \Sigma)) }[-1] 
\end{eqnarray*}
where we use the $(-1)$-shifted symplectic structure of $\Flat_G(\Sigma \times S^1_B) = \ul{\mr{Map}}((\Sigma \times S^1_B)_{\mr{dR}}, BG) \iso \ul{\mr{Map}}(\Sigma_{\mr{dR}} \times S^1_B, BG)$ provided by the AKSZ construction, using the 2-orientation on $\Sigma_{\mr{dR}}$ and the 1-orientation on $S^1_B$.
\end{proof}

Note that the result is a 0-shifted symplectic derived stack. This is an expected property of a \textit{phase space} in a classical field theory, i.e. the space the theory assigns to a proper codimension 1 submanifold.  According the Kapustin--Witten program, this space should -- under geometric quantization -- yield the Hochschild homology of the category the relevant \emph{extended} 2d topological quantum field theory assigns to the point, expected to be the category on the B-side of the geometric Langlands correspondence.  We intend to address this in the sequel to this work.

Finally, we can similarly understand what the B-twisted theory assigns to spaces of codimension 2.

\begin{corollary}
For a smooth projective curve $\Sigma$, the moduli space of germs of solutions to the equations of motion on $\Sigma \times \mr{pt}$ is given by
\[\EOM_B(\Sigma \times \mr{pt} ) \iso T^*_{\mr{form}}[1]\Flat_G(\Sigma).\]
\end{corollary}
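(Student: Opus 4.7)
The plan is to mimic the proof of Corollary~\ref{Bphasespacecorollary}, reducing the problem to identifying the two sides as sheaves of dg Lie algebras over $\loc_G(\Sigma)$ and then appealing to theorem~\ref{GRtheorem}. First I would specialise Definition~\ref{phase_space_definition} to $U = \mr{pt}$: since $\loc_G(\Sigma \times \mr{pt}) = \loc_G(\Sigma)$, the derived stack $\EOM_B(\Sigma \times \mr{pt})$ is by definition the pointed formal moduli problem over $\loc_G(\Sigma)$ associated to the Lie algebra object $\bb{T}_{\loc_G(\Sigma)}[-1] \oplus \bb{T}_{\loc_G(\Sigma)}$, with the first summand acting on the second by the tautological module structure.

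Next I would compute the restricted shifted tangent Lie algebra of the target $T^*_{\mr{form}}[1]\loc_G(\Sigma)$ over its base. The general description of a formal shifted cotangent theory yields
\[\bb{T}_{T^*_{\mr{form}}[1]\loc_G(\Sigma)}[-1] \iso \bb{T}_{\loc_G(\Sigma)}[-1] \oplus \bb{L}_{\loc_G(\Sigma)}\]
as Lie algebras over $\loc_G(\Sigma)$, again with the first summand acting on the second by its canonical module structure. The key input that matches the two sides is the observation that $\loc_G(\Sigma) = \ul{\mr{Map}}(\Sigma_{\mr{dR}}, BG)$ carries a $0$-shifted symplectic structure by the AKSZ construction \cite[Theorem 2.5]{PTVV}, combining the 2-orientation on $\Sigma_{\mr{dR}}$ with the 2-shifted symplectic form on $BG$. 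The resulting non-degenerate pairing furnishes a canonical equivalence $\bb{T}_{\loc_G(\Sigma)} \iso \bb{L}_{\loc_G(\Sigma)}$ identifying the two displayed Lie algebras.

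The main obstacle I expect is verifying that this AKSZ-induced identification $\bb{T} \iso \bb{L}$ respects the full dg Lie algebra structure on either side, not merely the underlying graded sheaves. On the phase space side the module structure is the one packaged into Definition~\ref{phase_space_definition}; on the cotangent side it is the canonical square-zero extension arising from the formal neighborhood of the zero section inside the $1$-shifted cotangent. Compatibility is essentially a formal consequence of the AKSZ formalism for shifted cotangent theories -- the symplectic pairing is precisely what intertwines the canonical Lie module structure on $\bb{L}$ with the tautological one on $\bb{T}$ -- but making this rigorous is the non-trivial step. Once that is settled, applying theorem~\ref{GRtheorem} to the two matched formal moduli problems over $\loc_G(\Sigma)$ produces the asserted equivalence of derived stacks. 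Finally, one may note that the resulting $1$-shifted symplectic structure on $T^*_{\mr{form}}[1]\loc_G(\Sigma)$ is the expected one for a space assigned to a codimension-two submanifold, matching the phase-space heuristic discussed before Corollary~\ref{Bphasespacecorollary}.
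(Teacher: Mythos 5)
Your proposal is correct and follows essentially the same route as the paper: specialize definition \ref{phase_space_definition} to $U=\mr{pt}$, use the $0$-shifted AKSZ symplectic structure on $\loc_G(\Sigma)$ to identify $\bb{T}_{\loc_G(\Sigma)}$ with $\bb{L}_{\loc_G(\Sigma)}$ as the module summand, and apply theorem \ref{GRtheorem}. The compatibility of the Lie module structures that you flag as the main obstacle is treated in the paper as part of the same formal dg Lie algebra equivalence, exactly as in corollary \ref{Bphasespacecorollary}.
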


\begin{proof}
The argument here is very similar to the computation of the phase space in corollary \ref{Bphasespacecorollary}.  We apply Theorem \ref{GRtheorem} to the sheaf 
\[\bb T_{\Flat_G(\Sigma)}[-1] \oplus \bb T_{\Flat_G(\Sigma)}\]
on $\Flat_G(\Sigma)$.  There are dg Lie algebra equivalences
\begin{align*}
\bb T_{\Flat_G(\Sigma)}[-1] \oplus \bb T_{\Flat_G(\Sigma)} &\iso \bb T_{\Flat_G(\Sigma)}[-1] \oplus \bb L_{\Flat_G(\Sigma)} \\
&\iso \bb T_{T^*_{\mr{form}}[1] \Flat_G(\Sigma)}
\end{align*}
using the 0-shifted symplectic structure on $\Flat_G(\Sigma)$.  Again, applying Theorem \ref{GRtheorem} completes the proof.
\end{proof}

\begin{remark} \label{derhamderham_remark}
In order to perform this calculation, we were forced to extend a natural calculation of $\EOM_B$ for algebraic varieties to spaces of form $\Sigma \times U_B$ by hand.  In order to obtain a theory compatible with geometric Langlands, as proposed by Kapustin and Witten, we are forced to perform this procedure, where we replace a theory which is ``de Rham'' in all four directions with a theory that is de Rham in two directions and Betti (purely topological) in the remaining two.  It is worth noting that these theories are very different: the purely de Rham theory is determined entirely by its local operators, whereas the de Rham-Betti theory admits non-trivial line operators (indeed, these are critical for the geometric Langlands program).  Having made this modification, one can go further to investigate a theory in which all four directions are topological; an understanding of such a theory should lead to a physical description of the ``Betti Langlands correspondence'' of Ben-Zvi and Nadler \cite{BZNBettiLanglands}.
\end{remark}

\subsection{The A-twist as a Limit of Holomorphic-Topological Twists of A-type} \label{A_twist_section}
Understanding the A-twisted theory will be slightly different to our calculation for the B-twist, because the A-twisted theory is no longer a cotangent theory. However, it will be a cotangent theory upon a certain compactification. In fact, we will realize that the A-twist arose as a limit of A-type deformations of holomorphic-topological twists, all of which yield cotangent theories upon such a compactification.
 
We'll begin by calculating the solutions to the equations of motion in the A-type deformations of holomorphic-topological twists by an analogous procedure to the one we used for the B-twist. A crucial difference from the previous twists is that the relevant twisting data fails to preserve the fibers of the morphism $\pi \colon \EOM_{\mr{hol}}^\alpha(X) \to \bun_G(X)$ defining the fiberwise formal algebraic gauge theory. However, for the A-twist, the fibers of the morphism $\sigma \colon \bun_G(X) \to \EOM_{\mr{hol}}^\alpha(X)$ \emph{are} preserved from the twisting data, so it's possible to define a canonical twist by applying the general construction in Lemma \ref{twist_construction_lemma} based on the general Gaitsgory--Rozenblyum correspondence in Theorem \ref{GRtheorem2}.

Let $Q_\lambda = Q_{\text{hol}} - \lambda(\alpha_2^\vee \otimes f_2^*) + (\alpha_2 \otimes e_2)$ be an A-type deformation of a holomorphic-topological supercharge as described at the end of Section \ref{supercharge_section} (so $Q_\lambda \to Q_A$ as $\lambda \to 0$).  We'll first consider a twisted theory with respect to these supercharges where $\lambda \in \CC^\times$ on a space of form $X = \Sigma_1 \times \Sigma_2$, where $\Sigma_i$ are smooth algebraic curves.  We'll have to be careful: if $\lambda \ne 0$ then the twisting data is equivariant \emph{neither} for the projection $\pi$, \emph{nor} for the section $\sigma$, so there is no chance of constructing the twist canonically from formal, linear algebraic data.  We will however describe a natural deformation of the holomorphically twisted theory, for each $\lambda$, including $\lambda = 0$ that yields a twist as defined in Section \ref{classical_field_theory_section}, guided by the superspace description of the supersymmetry action.

Recall that the twist $Q_{\lambda}$ for $\lambda \in \CC^\times$ corresponds -- in the superspace formalism -- to the vector field $\overline{\del}_{\Sigma_1} + d_{\Sigma_2} + \delby{\eps}$ on $\Sigma_1 \times \Sigma_2$.  The $Q_\lambda$-twisted theory admits a description in terms of moduli space of $\lambda$-connections, as in Definition \ref{lambda_connection}; let's describe this.  Let $U_1$ and $U_2$ be smooth complex curves; we'll describe $\EOM_\lambda(U_1 \times U_2)$, where the supercharge $Q_\lambda$ acts holomorphically in the first complex direction and topologically in the second direction. Since the twisting procedure for a supercharge $Q$ that splits as $Q' + Q''$ with $Q'$ purely of positive helicity and $Q''$ purely of negative helicity can be performed in steps without changing the result, as in Remark \ref{remark_successive_twisting}, or more concretely by performing two deformations, then obtaining a composite deformation by restricting to the diagonal $\bb A^1 \sub \bb A^1 \times \bb A^1$, we first consider the twist by the vector field $\overline{\del}_{\Sigma_1} + d_{\Sigma_2}$ and then by $\delby{\eps}$.

When we twist with respect to the holomorphic-topological supercharge $\overline{\del}_{\Sigma_1} + d_{\Sigma_2}$, it is clear from a similar line of reasoning to the one employed in Theorem \ref{B_twist_EOM_theorem} that there is a natural twisted moduli space of solutions to the equations of motion on $U_1 \times U_2$ given by the $(-1)$-shifted formal cotangent space to the moduli stack of principal $G$-bundles on $U_1 \times U_2$ together with a formal Higgs field on $U_1$ and a flat $\lambda$-connection on $\Sigma_2$, that is, the mapping space
\[T^*_{\mr{form}}[-1] \left(\underline{\mr{Map}}\left( (U_1)_{\mr{Dol}} \times (U_2)_{\lambda\text{-dR}}, BG\right)^\wedge_{\ul{\mr{Map}}(U_1 \times (U_2)_{\lambda\text{-dR}}, BG)}\right).\]
More precisely, there is a deformation of the holomorphically twisted moduli space given by the relative mapping space
\[T^*_{\mr{form}}[-1] \left(\underline{\mr{Map}}_{\bb{A}^1}\left( (U_1)_{\mr{Dol}} \times (U_2)_{\mr{Hod}}, BG \times \bb{A}^1 \right)^\wedge_{\ul{\mr{Map}}_{\bb{A}^1}(U_1 \times (U_2)_{\mr{Hod}}, BG \times \bb{A}^1 )}\right),\]
whose fiber over $\lambda$ is given by the mapping space above, and when $U_1$ and $U_2$ are both Zariski open subsets of $\CC$ this defines a twist in the sense of Definition \ref{twist_definition}.

As for the second summand, $\delby{\eps}$, this supercharge has a very natural description when $U=X$ is proper, in which case it becomes the non-vanishing vector field of degree 1, because $T^*_{\mr{form}}[-1] \underline{\mr{Map} }( X_{\mr{Dol}} , BG ) = T_{\mr{form}}[1] \underline{\mr{Map}}(X_{\mr{Dol}} ,BG)$ using the $(-2)$-shifted symplectic structure of the mapping stack from the AKSZ construction \cite[Theorem 2.5]{PTVV}.

The following proposition describes what happens when we perform the two supercharges successively.

\begin{prop} \label{HT_twist_first_description}
If $\Sigma_1$ and $\Sigma_2$ are proper smooth curves, the moduli space of solutions to the equations of motion in the $Q_\lambda$ twist of $N=4$ gauge theory is equivalent to the de Rham prestack
\[\EOM_\lambda(\Sigma_1 \times \Sigma_2) \iso \left( \underline{\mr{Map}}\left( (\Sigma_1)_{\mr{Dol}} \times (\Sigma_2)_{\lambda\text{-dR}}, BG\right)^\wedge_{\ul{\mr{Map}}(\Sigma_1 \times (\Sigma_2)_{\lambda\text{-dR}}, BG)}\right)_{\mr{dR}}.\]
\end{prop}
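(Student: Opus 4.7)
Following remark \ref{remark_successive_twisting}, the plan is to perform the $Q_\lambda$-twist in two successive stages, corresponding to the splitting of $Q_\lambda$ into its negative-helicity summand $\lambda(\alpha_2^\vee \otimes f_2^*)$ and its positive-helicity summand $\alpha_2 \otimes e_2$, which act in the superspace formalism as the vector field deforming $\overline{\del}_{\Sigma_2}$ to $d_{\Sigma_2}$ and as the odd vector field $\delby{\eps}$, respectively. The first stage has essentially been treated in the discussion preceding the proposition: the summand $\overline{\del}_{\Sigma_1} + d_{\Sigma_2}$ preserves the fibers of both $\sigma$ and $\pi$, and the deformation argument of theorem \ref{B_twist_EOM_theorem}, applied relative to $\bb A^1$ with the Hodge prestack of $\Sigma_2$ in place of that of $U$, yields a family whose fiber at $\lambda$ is
\[T^*_{\mr{form}}[-1] \mc Y_\lambda, \qquad \mc Y_\lambda := \ul{\mr{Map}}\!\left((\Sigma_1)_{\mr{Dol}} \times (\Sigma_2)_{\lambda\text{-dR}}, BG\right)^\wedge_{\ul{\mr{Map}}(\Sigma_1 \times (\Sigma_2)_{\lambda\text{-dR}}, BG)}.\]

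For the second stage I would invoke the properness hypothesis. By the AKSZ construction of \cite[Theorem 2.5]{PTVV}, the product $(\Sigma_1)_{\mr{Dol}} \times (\Sigma_2)_{\lambda\text{-dR}}$ carries a canonical 4-orientation assembled from the 2-orientations on each factor, so $\mc Y_\lambda$ inherits a $(-2)$-shifted symplectic structure. The associated non-degenerate pairing provides an identification $\bb L_{\mc Y_\lambda}[2] \iso \bb T_{\mc Y_\lambda}$ and hence a canonical equivalence
\[T^*_{\mr{form}}[-1]\mc Y_\lambda \iso T_{\mr{form}}[1]\mc Y_\lambda\]
of pointed formal moduli problems over $\mc Y_\lambda$. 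Under this equivalence, the residual supercharge $\delby{\eps}$ is translated into a nonvanishing degree one vector field along the formal tangent fiber.

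The final step of the plan is to apply the example following lemma \ref{twist_construction_lemma} verbatim, with $\bun_G$ replaced by $\mc Y_\lambda$: the twist of $T_{\mr{form}}[1]\mc Y_\lambda$ by a nonvanishing degree one vector field is equivalent to $(\mc Y_\lambda)_{\mr{dR}}$, because the twisted anchor map is forced to be the identity $\mathrm{id} \colon \bb T_{\mc Y_\lambda} \to \bb T_{\mc Y_\lambda}$, the terminal object of $\mr{IndCoh}(\mc Y_\lambda)_{/\bb T_{\mc Y_\lambda}}$, and hence by theorem \ref{GRtheorem2} corresponds to the tangent Lie algebroid, whose associated formal moduli problem under $\mc Y_\lambda$ is precisely $(\mc Y_\lambda)_{\mr{dR}}$. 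Packaging both stages relative to $\bb A^1$, via the Hodge family of $\Sigma_2$ for the first stage and the Hodge-type family $(\mc Y_\lambda)_{\mr{Hod}}$ of example \ref{de_rham_stack_twist_example} for the second, produces the global deformation required by definition \ref{twist_definition}; the required sections come from composing the tautological section of the Hodge family with the zero section of the formal cotangent.

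The main obstacle will be verifying rigorously that the superspace vector field $\delby{\eps}$ really corresponds to the identity anchor map under the AKSZ self-duality; this is the matching of two a priori distinct degree one vector fields on $T_{\mr{form}}[1]\mc Y_\lambda$. The verification reduces to a computation at the level of perturbative tangent complexes, where both act as the identity $\bb T_{\mc Y_\lambda} \to \bb T_{\mc Y_\lambda}$ once one pairs with the $(-2)$-shifted PTVV form built from the invariant form on $\gg$ and integration against the 4-orientation, along the same lines as the self-duality $T^*_{\mr{form}}[-1]\higgs_G^{\mr{fer}} \iso T_{\mr{form}}[1]\higgs_G^{\mr{fer}}$ implicit in the proof of theorem \ref{holo_twist_theorem}.
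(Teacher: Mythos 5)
Your proposal is correct and follows essentially the same route as the paper: the first summand of $Q_\lambda$ is handled by the deformation of the source to $(\Sigma_2)_{\lambda\text{-dR}}$ discussed just before the proposition, and the second summand $\delby{\eps}$ is handled by using the AKSZ $(-2)$-shifted symplectic structure on the mapping space (available by properness of $\Sigma_1$ and $\Sigma_2$) to identify $T^*_{\mr{form}}[-1]$ with $T_{\mr{form}}[1]$ and then applying the de Rham deformation of example \ref{de_rham_stack_twist_example}. The paper's proof is simply a terser version of the same argument; your closing caveat about matching $\delby{\eps}$ with the generator of the de Rham deformation under the AKSZ self-duality is a fair point about what the paper leaves implicit, but it is not a divergence in method.
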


\begin{proof}
Since $\Sigma_1$ and $\Sigma_2$ are proper, the mapping space $\mc X = \underline{\mr{Map}}\left( (\Sigma_1)_{\mr{Dol}} \times (\Sigma_2)_{\lambda\text{-dR}}, BG\right)$ and its formal completion are $(-2)$-shifted symplectic by the AKSZ construction.  Indeed, $BG$ is naturally 2-shifted symplectic and $(\Sigma_1)_{\mr{Dol}}$ and $(\Sigma_2)_{\lambda\text{-dR}}$ are both $\OO$-compact and $\OO$-2-oriented by their fundamental classes.  Using this shifted symplectic form, we can identify $T^*[-1] \mc X$ with $T[1]\mc X $.  The result then follows by Example \ref{de_rham_stack_twist_example}.
\end{proof}

This $Q_\lambda$-twisted moduli space has another description, which realizes the compactified theory as a cotangent field theory on $\Sigma_1$. For a convenient future reference, we first note the following lemma on some useful canonical equivalences of derived stacks.

\begin{lemma} \label{mapping_stack_claims}

\begin{enumerate}
\item For a reduced scheme $Y$ and any prestack $\mc X$, there is an equivalence \[\ul{\mr{Map}}(Y, \mc X_{\mr{dR}}) \iso \ul{\mr{Map}}(Y, \mc X)_{\mr{dR}}.\]
\item For a smooth projective curve $\Sigma$ and a $k$-shifted symplectic derived stack $\mc X$, there is an equivalence \[T^*_{\mr{form}}[k-2]\ul{\mr{Map}}(\Sigma, \mc X) \iso \ul{\mr{Map}}(T[1]\Sigma, \mc X)^\wedge_{\ul{\mr{Map}}(\Sigma, \mc X)}.\]
\item For a derived Artin stack $\mc X$ locally of finite presentation, there is an equivalence
\[T^*_{\mr{form}}[k]T_{\mr{form}}[\ell]\mc X \iso T_{\mr{form}}[\ell]T_{\mr{form}}^*[k-\ell] \mc X\]
for all integers $k$ and $\ell$. 
\end{enumerate}
\end{lemma}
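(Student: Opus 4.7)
My plan is to handle the three parts in sequence. Part (1) is essentially a formal manipulation at the level of functors of points. Parts (2) and (3) will both be handled by identifying each side as a pointed formal moduli problem over a common base, verifying that their tangent Lie algebras agree in the relevant ind-coherent category, and invoking the Gaitsgory-Rozenblyum correspondence (Theorem \ref{GRtheorem}) to upgrade to an equivalence of formal moduli problems.

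For (1), I will unpack the functor of points on a test cdga $R \in \cdga^{\le 0}$. By the definition of the de Rham prestack recalled in the conventions,
\[
\ul{\mr{Map}}(Y, \mc X_{\mr{dR}})(R) = \mr{Map}_{\mr{dSt}}\bigl((Y \times \spec R)^{\mr{red}}, \mc X\bigr).
\]
Since $Y$ is a reduced scheme, reduction commutes with the product to give $(Y \times \spec R)^{\mr{red}} \iso Y \times \spec(R^{\mr{red}})$. The right-hand side above is therefore equal to $\ul{\mr{Map}}(Y, \mc X)(R^{\mr{red}}) = \ul{\mr{Map}}(Y, \mc X)_{\mr{dR}}(R)$, naturally in $R$. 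This produces the claimed equivalence of prestacks.

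For (2), I will view the LHS as a pointed formal moduli problem over $\mc M := \ul{\mr{Map}}(\Sigma, \mc X)$ via the canonical zero section, and the RHS via the map $\mc M \to \ul{\mr{Map}}(T[1]\Sigma, \mc X)$ induced by the projection $T[1]\Sigma \to \Sigma$. I will compare their tangent Lie algebras at a closed point $f \in \mc M$. Using the decomposition $\OO_{T[1]\Sigma} \iso \OO_\Sigma \oplus \Omega^1_\Sigma[-1]$ valid for a smooth curve,
\[
\bb T_{\ul{\mr{Map}}(T[1]\Sigma, \mc X)}|_f \iso R\Gamma(\Sigma, f^*\bb T_\mc X) \oplus R\Gamma(\Sigma, f^*\bb T_\mc X \otimes \Omega^1_\Sigma)[-1].
\]
The first summand is $\bb T_\mc M|_f$. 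Applying Serre duality on $\Sigma$ together with the equivalence $\bb L_\mc X \iso \bb T_\mc X[-k]$ coming from the $k$-shifted symplectic form on $\mc X$, the second summand becomes $\bb L_\mc M|_f[k-2]$. This matches the tangent complex of $T^*_{\mr{form}}[k-2]\mc M$ at the zero section above $f$. Verifying that the Lie brackets also agree and applying Theorem \ref{GRtheorem} completes this part.

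For (3), both sides are naturally pointed formal moduli problems over $\mc X$ via canonical zero sections. Iterating the tangent complex splittings for formal (co)tangent bundles, the shifted relative tangent complex over $\mc X$ on both sides equals
\[
\bb T_\mc X[\ell-1] \oplus \bb L_\mc X[k-1] \oplus \bb L_\mc X[k-\ell-1]
\]
in $\mr{IndCoh}(\mc X)$. The Lie algebra structures on both sides are abelian extensions built from layered formal (co)tangent constructions on the smooth base $\mc X$ and agree naturally. Theorem \ref{GRtheorem} yields the equivalence of formal moduli problems and hence of the corresponding derived stacks. The main obstacle I anticipate is the verification in (2) that the match of tangent complexes lifts to an equivalence of Lie algebra objects in $\mr{IndCoh}(\mc M)$: this requires showing that the module action of $\bb T_\mc M|_f = R\Gamma(\Sigma, f^*\bb T_\mc X)$ on $R\Gamma(\Sigma, f^*\bb T_\mc X \otimes \Omega^1_\Sigma)[-1]$ coincides, via the symplectic identification on $\mc X$ composed with Serre duality, with the coadjoint action of $\bb T_\mc M|_f$ on $\bb L_\mc M|_f[k-2]$ that governs the formal cotangent theory. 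Once the GR framework is in place, this check is essentially bookkeeping, but it is the most delicate step.
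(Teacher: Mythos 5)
Your proposal is correct and follows essentially the same route as the paper: part (1) by a functor-of-points manipulation using reducedness of $Y$ (and of products of reduced schemes over $\CC$), and parts (2) and (3) by viewing both sides as pointed formal moduli problems over a common base, comparing restricted tangent complexes, and invoking the Gaitsgory--Rozenblyum correspondence; your Serre-duality computation in (2) is the same mechanism as the paper's appeal to the AKSZ $(k-2)$-shifted symplectic structure on $\ul{\mr{Map}}(\Sigma_{\mr{Dol}},\mc X)$. Two remarks on execution. First, the step you flag as most delicate --- matching the module action of $\bb T_{\mc M}|_f$ on the cotangent fiber --- largely dissolves in the relative framework: as pointed formal moduli problems over $\mc M = \ul{\mr{Map}}(\Sigma,\mc X)$, both sides have \emph{abelian} relative Lie algebra structure (on the mapping-space side because $T[1]\Sigma \to \Sigma$ is square-zero for a curve), so the base action is carried by the $\mr{IndCoh}(\mc M)$-linearity of the identification rather than by any bracket one must check. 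Second, and conversely, comparing tangent complexes only at closed points $f$ is not by itself enough to produce an equivalence of objects of $\mr{IndCoh}(\mc M)$; the paper instead writes down a global map of sheaves induced by the section $\sigma \colon \mc M \to \ul{\mr{Map}}(T[1]\Sigma,\mc X)$ and verifies on fibers that it is an equivalence. Since your fiberwise identification is canonical in $f$, this is easily repaired, but the argument should be phrased as the construction of a natural map followed by a fiberwise check.
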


\begin{proof}
\begin{enumerate}
\item We analyse the $S$-points for an arbitrary cdga $S$.  There are equivalences
\begin{align*}
\ul{\mr{Map}}(Y,\mc X)_{\mr{dR}}(S) &\iso \ul{\mr{Map}}(Y,\mc X)(S^{\mr{red}})\\
&\iso \mr{Map}(Y \times \spec S^{\mr{red}} ,\mc X)\\
&\iso \mr{Map}(Y^{\mr{red}} \times \spec S^{\mr{red}} ,\mc X)\\
&\iso \mr{Map}(Y \times \spec S ,\mc X_{\mr{dR}})\\
&\iso \ul{\mr{Map}}(Y,\mc X_{\mr{dR}})(S).
\end{align*}

\item Note that both the left-hand and right-hand sides are pointed formal moduli problems over the mapping space $\ul{\mr{Map}}(\Sigma, \mc X)$, so by Theorem \ref{GRtheorem} it suffices to provide an equivalence of their shifted relative tangent bundles as sheaves of dg Lie algebras.  We observe that
\begin{align*}
\bb T_{T^*[k-2]\ul{\mr{Map}}(\Sigma, \mc X)/\ul{\mr{Map}}(\Sigma, \mc X)}[-1] &\iso \bb L_{\ul{\mr{Map}}(\Sigma, \mc X)}[k-2][-1]\\
\text{and } \bb T_{\ul{\mr{Map}}(T[1]\Sigma, \mc X)/\ul{\mr{Map}}(\Sigma, \mc X)}[-1] &\iso (\bb T_{\ul{\mr{Map}}(\Sigma, \mc X)} \to \sigma^*\bb T_{\ul{\mr{Map}}(T[1]\Sigma, \mc X)})[-1] \\
&\iso (\bb T_{\ul{\mr{Map}}(\Sigma, \mc X)} \to \sigma^*\bb L_{\ul{\mr{Map}}(T[1]\Sigma, \mc X)}[k-2])[-1]
\end{align*}
where $\sigma$ is the morphism of mapping stacks obtained by precomposition with the projection $T[1] \Sigma \to \Sigma$, and where on the last line we used the $(k-2)$-shifted symplectic structure on $\ul{\mr{Map}}(T[1]\Sigma, \mc X) \iso \ul{\mr{Map}}(\Sigma_{\mr{Dol}}, \mc X)$ obtained by the AKSZ construction. Note that the Lie algebra structure is trivial on both sides.  The two-step complexes on the right-hand side just spell out the definition of the relative tangent complex, as an object of the derived category of sheaves.

The map $\sigma$ induces a map of sheaves
\[\bb T_{\ul{\mr{Map}}(\Sigma, \mc X)}[-1] \to \sigma^* \bb T_{\ul{\mr{Map}}(T[1]\Sigma, \mc X)}[-1]\]
or dually, with a shift, a map
\[\sigma^* \bb L_{\ul{\mr{Map}}(T[1]\Sigma, \mc X)}[k-3] \to \bb L_{\ul{\mr{Map}}(\Sigma, \mc X)}[k-3].\]
We'll show that the kernel of this map is equivalent to $\bb T_{\ul{\mr{Map}}(\Sigma, \mc X)}$, and therefore the induced map between relative tangent complexes is an equivalence.  It suffices to check this claim for the fiber at each map $f \colon \Sigma \to \mc X$.  At such a fiber, the map of sheaves induced by $\sigma$ is given by the projection
\[\Gamma(\Sigma; \bb L_{\mc X} \otimes (\OO_\Sigma[2] \oplus K_\Sigma[1]))[k-3] \to \Gamma(\Sigma; \bb L_{\mc X} \otimes K_\Sigma)[k-2].\]
On the other hand, the inclusion of a fiber of $\bb T_{\ul{\mr{Map}}(\Sigma, \mc X)}[-1]$ is given by the composite
\[\Gamma(\Sigma; \bb T_{\mc X})[-1] \to \Gamma(\Sigma; \bb T_{\mc X} \otimes (\OO_\Sigma \oplus K_\Sigma[-1])[-1] \iso \Gamma(\Sigma; \bb L_{\mc X}[k] \otimes (\OO_\Sigma \oplus K_\Sigma[-1])[-1],\]
whose image is precisely the kernel of the projection, as required.  Therefore the relative tangent complexes to our two derived stacks are equivalent, so the derived stacks themselves are equivalent, as required.

\item Since both $T^*_{\mr{form}}[k]T_{\mr{form}}[\ell]\mc X$ and $T_{\mr{form}}[\ell]T^*_{\mr{form}}[k-\ell]\mc X$ define pointed formal moduli problems over $\mc X$, it suffices by Theorem \ref{GRtheorem} to prove an equivalence for the restricted shifted tangent complexes as sheaves of Lie algebras over $\mc X$.  We realize such an equivalence as the composite
\begin{align*}
\sigma^*\bb T_{T^*_{\mr{form}}[k]T_{\mr{form}}[\ell]\mc X}[-1] &\iso ((\bb T_{\mc X} \oplus \bb T_{\mc X}[\ell]) \oplus (\bb L_{\mc X} \oplus \bb L_{\mc X}[-\ell])[k])[-1] \\
&\iso (\bb T_{\mc X} \oplus \bb T_{\mc X}[\ell] \oplus \bb L_{\mc X}[k] \oplus \bb L_{\mc X}[k-\ell])[-1]\\
&\iso ((\bb T_{\mc X} \oplus \bb L_{\mc X}[k-\ell]) \oplus (\bb T_{\mc X} \oplus \bb L_{\mc X}[k-\ell])[\ell])[-1] \\
&\iso \sigma^*\bb T_{T_{\mr{form}}[\ell]T^*_{\mr{form}}[k-\ell]\mc X}[-1]
\end{align*}
of dg Lie algebra equivalences, where the Lie structure on the second line is given by the bracket on the first factor, the action of the first factor on each of the others, and the pairing between the second and fourth factors, taking values in the third factor.
\end{enumerate}
\vspace{-10pt}
\end{proof}

\begin{remark} \label{de_rham_deformation_under_equivalence_remark}
\begin{enumerate}
\item The equivalence $\ul{\mr{Map}}(Y, \mc X_{\mr{dR}}) \iso \ul{\mr{Map}}(Y, \mc X)_{\mr{dR}}$ arises as an equivalence of the full Hodge stack. For this, it is enough to observe that $\ul{\mr{Map}}(Y, T_{\mr{form}}[1] \mc X) \iso T_{\mr{form}}[1] \ul{\mr{Map}}(Y, \mc X)$ has the same relative shifted tangent complex over $\ul{\mr{Map}}(Y, \mc X)$, which is immediate.
\item The third equivalence for $\ell=1$ is also compatible with its de Rham deformation. More precisely, under the equivalence $T_{\mr{form}}[1]T^*_{\mr{form}}[-k]\mc X \iso T^*_{\mr{form}}[1-k]T_{\mr{form}}[1]\mc X$, we can transfer the natural deformation of the shifted tangent complex on the left-hand side corresponding to the family of sheaves $t \cdot \mr{id} \colon \pi^*\bb T_{T^*_{\mr{form}}[-k]\mc X} \to  \pi^*\bb T_{T^*_{\mr{form}}[-k]\mc X}$ over $\bb A^1$, where $\pi$ is the projection $T_{\mr{form}}[1]T^*_{\mr{form}}[-k]\mc X \to T^*_{\mr{form}}[-k]\mc X$, to the right-hand side.  The result is the pullback under the map $T^*_{\mr{form}}[1-k]T_{\mr{form}}[1]\mc X \to T_{\mr{form}}[1]\mc X$ of the deformation  $t \cdot \mr{id} \colon \pi'^*\bb T_{\mc X} \to  \pi'^*\bb T_{\mc X}$, where now $\pi'$ is the projection $T_{\mr{form}}[1]\mc X \to \mc X$.  Now, we can consider the formal completions of both sides of our equivalence with respect to $T^*_{\mr{form}}[-k]\mc X$ to obtain a pair of equivalent pointed formal moduli problems over $T^*_{\mr{form}}[-k]\mc X$.  By Theorem \ref{GRtheorem} these are determined by (equivalent) sheaves of dg Lie algebras over $T^*_{\mr{form}}[-k]\mc X$, and we've described equivalent 1-parameter deformations of these sheaves, and therefore of the resulting formal moduli problems under $T^*_{\mr{form}}[-k]\mc X$.  The fibers over 1 of these deformed moduli problems are given by
\[(T^*_{\mr{form}}[-k]\mc X)_{\mr{dR}} \iso T^*_{\mr{form}}[1-k](\mc X_{\mr{dR}})\]
where the latter is a formal moduli problem under $T^*[-k]\mc X$ by the composite
\[T^*_{\mr{form}}[-k]\mc X \to (T^*_{\mr{form}}[-k]\mc X)_{\mr{dR}} \to \mc X_{\mr{dR}} \iso T^*_{\mr{form}}[1-k](\mc X_{\mr{dR}}).\]
\end{enumerate}
\end{remark}

\begin{theorem} \label{HT_twist_EOM}
The moduli space of solutions to the equations of motion on the product $\Sigma_1 \times \Sigma_2$ of two smooth projective curves after applying the $Q_\lambda$-twist is equivalent to
\[\EOM_\lambda(\Sigma_1 \times \Sigma_2) \iso T^*_{\mr{form}}[-1] \underline{\mr{Map}}\left(\Sigma_1 , \Flat^\lambda_G(\Sigma_2)_{\mr{dR}} \right)\]
in a canonical way.
\end{theorem}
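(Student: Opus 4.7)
The plan is to massage the description of $\EOM_\lambda(\Sigma_1 \times \Sigma_2)$ provided by Proposition \ref{HT_twist_first_description} into the claimed shifted cotangent form through a sequence of canonical moves in derived algebraic geometry, using Lemma \ref{mapping_stack_claims} and Remark \ref{de_rham_deformation_under_equivalence_remark} as the main inputs. The starting point is the identification
\[\EOM_\lambda(\Sigma_1 \times \Sigma_2) \iso \left( \underline{\mr{Map}}\left( (\Sigma_1)_{\mr{Dol}} \times (\Sigma_2)_{\lambda\text{-dR}}, BG\right)^\wedge_{\ul{\mr{Map}}(\Sigma_1 \times (\Sigma_2)_{\lambda\text{-dR}}, BG)}\right)_{\mr{dR}}.\]

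First I would use the mapping-space adjunction to absorb $(\Sigma_2)_{\lambda\text{-dR}}$ into the target, rewriting both the ambient stack and the base of the completion as mapping spaces into $\loc^\lambda_G(\Sigma_2) = \ul{\mr{Map}}((\Sigma_2)_{\lambda\text{-dR}}, BG)$. Using the identification $(\Sigma_1)_{\mr{Dol}} \iso T[1]\Sigma_1$ valid for a smooth scheme, the inner formal moduli problem becomes the completion of $\ul{\mr{Map}}(T[1]\Sigma_1, \loc^\lambda_G(\Sigma_2))$ along $\ul{\mr{Map}}(\Sigma_1, \loc^\lambda_G(\Sigma_2))$.

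Next I would invoke Lemma \ref{mapping_stack_claims}(2). The target $\loc^\lambda_G(\Sigma_2)$ carries a $0$-shifted symplectic structure produced via AKSZ from the $2$-orientation on the proper smooth $(\Sigma_2)_{\lambda\text{-dR}}$ together with the $2$-shifted symplectic form on $BG$. Applying the lemma with $\Sigma = \Sigma_1$, $\mc X = \loc^\lambda_G(\Sigma_2)$ and $k = 0$ converts the formal moduli problem above into $T^*_{\mr{form}}[-2]\ul{\mr{Map}}(\Sigma_1, \loc^\lambda_G(\Sigma_2))$, so that
\[\EOM_\lambda(\Sigma_1 \times \Sigma_2) \iso \left( T^*_{\mr{form}}[-2]\ul{\mr{Map}}(\Sigma_1, \loc^\lambda_G(\Sigma_2)) \right)_{\mr{dR}}.\]

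Finally I would commute the de Rham prestack first past the shifted cotangent construction and then past the outer mapping space. For the first move, Remark \ref{de_rham_deformation_under_equivalence_remark}(2) with $k = 2$ and $\mc X = \ul{\mr{Map}}(\Sigma_1, \loc^\lambda_G(\Sigma_2))$ yields
\[(T^*_{\mr{form}}[-2]\mc X)_{\mr{dR}} \iso T^*_{\mr{form}}[-1](\mc X_{\mr{dR}}).\]
For the second move, since $\Sigma_1$ is a reduced scheme, Lemma \ref{mapping_stack_claims}(1) gives $\ul{\mr{Map}}(\Sigma_1, \loc^\lambda_G(\Sigma_2))_{\mr{dR}} \iso \ul{\mr{Map}}(\Sigma_1, \loc^\lambda_G(\Sigma_2)_{\mr{dR}})$. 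Composing these equivalences delivers the claimed identification. The only delicate point, and what I would expect to be the main technical obstacle, is checking that the shifted symplectic/cotangent structure transported along this chain of equivalences is compatible with the one coming intrinsically from the $(-1)$-shifted symplectic structure on $\EOM_\lambda(\Sigma_1 \times \Sigma_2)$; this reduces to the naturality of the AKSZ construction with respect to the factorisation of the orientation on $\Sigma_1 \times (\Sigma_2)_{\lambda\text{-dR}}$ as a product, which is the same verification that underpins Lemma \ref{mapping_stack_claims}(2).
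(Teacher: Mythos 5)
Your proposal is correct and follows essentially the same route as the paper's proof: the identical chain of equivalences (adjunction, Lemma \ref{mapping_stack_claims} parts 1 and 2 with the AKSZ $0$-shifted symplectic structure on $\loc^\lambda_G(\Sigma_2)$, and Remark \ref{de_rham_deformation_under_equivalence_remark} part 2), merely run in the opposite direction. The compatibility issue you flag at the end is handled in the paper by observing that every step lifts to an equivalence of Hodge stacks over $\bb A^1$.
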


\begin{remark}
This statement is not contentless, despite the fact that it involves the cotangent bundle of a de Rham stack, which is necessarily trivial. Indeed, the equivalence is compatible with the deformation to the whole Hodge stack. All such statements appearing in the paper arise as specializations of equivalences of Hodge stacks.
\end{remark}

\begin{proof}
We begin with the derived stack on the right-hand side.  Since $\Flat^\lambda_G(\Sigma_2) = \ul{\mr{Map}}(\Sigma_{\lambda\text{-dR}}, BG)$ is 0-shifted symplectic by the AKSZ construction, there is an equivalence $T[1]\Flat^\lambda_G(\Sigma_2) \iso T^*[1]\Flat^\lambda_G(\Sigma_2)$, so in particular $\Flat^\lambda_G(\Sigma_2)_{\mr{Dol}} = T_{\mr{form}}[1]\Flat^\lambda_G(\Sigma_2)$ is 1-shifted symplectic. We have equivalences
\begin{align*}
T^*_{\mr{form}}[-1] \underline{\mr{Map}}\left(\Sigma_1, \Flat^\lambda_G(\Sigma_2)_{\mr{dR}} \right)&\iso T^*_{\mr{form}}[-1]\left(\ul{\mr{Map}}(\Sigma_1, \Flat_G^\lambda(\Sigma_2))_{\mr{dR}}\right) \\
&\iso \left(T^*_{\mr{form}}[-2]\ul{\mr{Map}}(\Sigma_1, \Flat_G^\lambda(\Sigma_2)) \right)_{\mr{dR}} \\
&\iso \left(\ul{\mr{Map}}(T[1]\Sigma_1, \Flat_G^\lambda(\Sigma_2))^\wedge_{\ul{\mr{Map}}(\Sigma_1, \Flat_G^\lambda(\Sigma_2))}\right)_{\mr{dR}} \\
&\iso \left( \left( \underline{\mr{Map}}\left( (\Sigma_1)_{\mr{Dol}}, \Flat_G^\lambda(\Sigma_2)\right)\right)^\wedge_{\ul{\mr{Map}}(\Sigma_1, \Flat_G^\lambda(\Sigma_2))}\right)_{\mr{dR}} \\
&\iso \left( \left( \underline{\mr{Map}}\left( (\Sigma_1)_{\mr{Dol}} \times (\Sigma_2)_{\lambda\text{-dR}}, BG\right)\right)^\wedge_{\ul{\mr{Map}}(\Sigma_1, \Flat_G^\lambda(\Sigma_2))}\right)_{\mr{dR}} \\
&= \EOM_\lambda(\Sigma_1 \times \Sigma_2),
\end{align*}
where on the first line we used Lemma \ref{mapping_stack_claims} part 1, on the second line we used Remark \ref{de_rham_deformation_under_equivalence_remark} part 2, and on the fifth line we used the adjunction $\underline{\mr{Map}}( (\Sigma_1 )_{\mr{Dol}} \times (\Sigma_2)_{\lambda\text{-dR} } , BG  ) = \underline{\mr{Map}} ( (\Sigma_1)_{\mr{Dol}} , \underline{\mr{Map}}( (\Sigma_2 )_{\lambda\text{-dR} }, BG  ) )$. Now in view of Remark \ref{de_rham_deformation_under_equivalence_remark}, one can note that the whole equivalences work at the level of Hodge stacks. 
\end{proof}

\begin{remark}
We have two apparently different-looking descriptions of our moduli space, but the point is that one can use either one. For the rest of the paper, we won't use this latter description. On the other hand, when $\lambda=0$, this theorem amounts to identifying the compactification of the A-twisted theory along $\Sigma_2$ with the A-model with target $\higgs_G(\Sigma_2)$, as expected from the physics literature. This can also be understood as an algebraization and globalization of Costello's perturbative description of the A-model in the smooth category  \cite{CostelloSH}.
\end{remark}

Let's now discuss what this assigns to objects of nonzero codimension as we did in Section \ref{B_twist_section}: \[\EOM_\lambda(\Sigma \times U) \iso \left( \underline{\mr{Map}}\left( \Sigma_{\mr{Dol}} \times U_{\lambda\text{-dR}}, BG\right)^\wedge_{\ul{\mr{Map}}(\Sigma \times U_{\lambda\text{-dR}},BG)}\right)_{\mr{dR}}\]
as in Proposition \ref{HT_twist_first_description} the assignment naturally extends to $U = S^1_B$ or $U=\pt$.

We'll describe it in a way designed to illustrate the connection with geometric Langlands. However, the argument we gave for Theorem \ref{HT_twist_EOM} no longer applies.  Instead of a $(-1)$-shifted cotangent space, we'll produce a 0-shifted cotangent space. In the A-twist, the degree of shifting comes naturally so we don't need any auxiliary step: de Rham stack can be regarded as $k$-shifted symplectic for any $k$, but that being realized as a Hodge stack over $\bb{A}^1$ determines the unique number $k$ in such a way that ensures compatibility for any $t \in \bb{A}^1$. 

\begin{prop} \label{A_twist_phase_space_prop}
The phase space $\EOM_\lambda(\Sigma \times S^1_B)$ in the $Q_\lambda$-twisted theory is equivalent to
\[T^*_{\mr{form}}(\underline{\mr{Map}}(S^1_B, \bun_G(\Sigma))_{\mr{dR}} ).\]  
In particular the result is independent of the value of $\lambda$.  The equivalence arises by taking the fiber at 1 of an equivalence of deformations, whose fiber at 0 is an equivalence 
\[\ul{\mr{Map}} \left(\Sigma_{\mr{Dol}} \times (S^1_B)_{\lambda \text{-dR}}, BG \right)_{\mr{Dol}} \iso T^*_{\mr{form}}T_{\mr{form}}[1]\underline{\mr{Map}}(S^1_B, \bun_G(\Sigma)).\]
\end{prop}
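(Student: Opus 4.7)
The plan is to establish the proposition by constructing equivalent Hodge families over $\bb A^1$ whose fibers at $t = 0$ realize the stated non-trivial equivalence and whose fibers at $t = 1$ give the desired statement $\EOM_\lambda(\Sigma \times S^1_B) \iso T^*_{\mr{form}}(\ul{\mr{Map}}(S^1_B, \bun_G(\Sigma))_{\mr{dR}})$.  Note that at $t = 1$ both sides reduce to $(\LL \bun_G(\Sigma))_{\mr{dR}}$: the outer de Rham functor absorbs the formal completion (since $(\mc Y^\wedge_{\mc X})_{\mr{dR}} \iso \mc X_{\mr{dR}}$), and $T^*_{\mr{form}}$ is trivial on a de Rham stack (whose cotangent complex vanishes).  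So, as in the remark following theorem \ref{HT_twist_EOM}, the content of the proposition is really the equivalence of full families, not the fiber at $t = 1$ alone.

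For the fiber-at-$0$ equivalence, I would first observe that $(S^1_B)_{\lambda\text{-dR}} \iso S^1_B$ since $S^1_B$ is a constant prestack, which immediately yields the $\lambda$-independence.  Applying the mapping space adjunction together with the identification $\ul{\mr{Map}}(\Sigma_{\mr{Dol}}, BG) \iso \higgs_G(\Sigma) \iso T^*_{\mr{form}}\bun_G(\Sigma)$ (from lemma \ref{mapping_stack_claims} part 2 with $k = 2$, using that $\Sigma_{\mr{Dol}} \iso T_{\mr{form}}[1]\Sigma$ for smooth $\Sigma$), the left-hand side becomes $T_{\mr{form}}[1]\LL T^*_{\mr{form}}\bun_G(\Sigma)$ in the appropriate formal neighborhood of $\LL \bun_G(\Sigma)$.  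Using the HKR equivalence $\LL \mc X \iso T_{\mr{form}}[-1]\mc X$ and successive applications of lemma \ref{mapping_stack_claims} part 3 to commute shifted tangent and cotangent operations past one another (with careful bookkeeping of the shifts induced by each swap), this can be rearranged into $T^*_{\mr{form}}T_{\mr{form}}[1]\LL\bun_G(\Sigma)$, matching the right-hand side.

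To promote this fiber-at-$0$ equivalence to an equivalence of full Hodge families, I would invoke remark \ref{de_rham_deformation_under_equivalence_remark} part 2: the de Rham deformation $T_{\mr{form}}[1]\mc X \leadsto \mc X_{\mr{dR}}$ is compatible with the swap $T^*_{\mr{form}}T_{\mr{form}}[1]\mc X \iso T_{\mr{form}}[1]T^*_{\mr{form}}[-1]\mc X$.  Applied at each stage of the chain above, this promotes the fiberwise equivalence to one of deformations, with the outer $(-)_{\mr{Dol}}$ on the left and the outer $T_{\mr{form}}[1]$ on the right both deforming in parallel to $(-)_{\mr{dR}}$.  Taking fibers at $t = 1$ then yields the proposition.

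The hard part will be precisely this accounting step: remark \ref{de_rham_deformation_under_equivalence_remark} only guarantees compatibility for the particular swap $T^*_{\mr{form}}T_{\mr{form}}[1] \iso T_{\mr{form}}[1]T^*_{\mr{form}}[-1]$, so the rearrangement in step 2 must be organized so that at every intermediate stage the Hodge deformation is being applied to an outermost $T_{\mr{form}}[1]$.  This is a delicate but routine exercise given the lemmas already proved, and presents no fundamental new obstacle.
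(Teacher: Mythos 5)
Your overall architecture matches the paper's: establish the fiber-at-$0$ equivalence, promote it to an equivalence of Hodge families via remark \ref{de_rham_deformation_under_equivalence_remark} part 2, and get $\lambda$-independence from $(S^1_B)_{\lambda\text{-dR}} \iso S^1_B$. But there is a genuine gap in the middle step. You apply the mapping-space adjunction in the order that absorbs $\Sigma_{\mr{Dol}}$ first, producing $T_{\mr{form}}[1]\,\LL\, T^*_{\mr{form}}\bun_G(\Sigma)$, and you then need to commute the Betti loop space past the formal cotangent construction. The tool you invoke for this, ``HKR: $\LL \mc X \iso T_{\mr{form}}[-1]\mc X$,'' is not available in the paper and is false as stated for the stacks at hand: $\LL BG = G/G$ is the full adjoint quotient, whereas $T_{\mr{form}}[-1]BG$ is only its formal completion along $BG$, and the same failure occurs for $\bun_G(\Sigma)$. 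A blanket application of this equivalence would collapse $\LL\bun_G(\Sigma)$ to $T_{\mr{form}}[-1]\bun_G(\Sigma)$ and destroy exactly the non-formal loop directions that must survive into the answer $T^*_{\mr{form}}T_{\mr{form}}[1]\ul{\mr{Map}}(S^1_B,\bun_G(\Sigma))$. What you would actually need is a relative version applied only in the formal cotangent-fiber direction, together with an analysis of how $\ul{\mr{Map}}(S^1_B,-)$ interacts with formal completions; neither is proved in the paper, and lemma \ref{mapping_stack_claims} part 3 alone cannot supply them.

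The paper's proof avoids this entirely by performing the adjunction in the opposite order: $\ul{\mr{Map}}(\Sigma_{\mr{Dol}} \times S^1_B, BG) \iso \ul{\mr{Map}}(\Sigma_{\mr{Dol}}, \loc_G(S^1))$, so the Betti circle is absorbed into the \emph{target} first. The target $\loc_G(S^1) = \LL BG$ is $1$-shifted symplectic by AKSZ, so lemma \ref{mapping_stack_claims} part 2 (with $k=1$) directly identifies the completed mapping space out of $T[1]\Sigma$ with $T^*_{\mr{form}}[-1]\ul{\mr{Map}}(\Sigma, \loc_G(S^1))$; one application of part 3 then yields the central fiber $T^*_{\mr{form}}T_{\mr{form}}[1]\ul{\mr{Map}}(\Sigma,\loc_G(S^1)) = T^*_{\mr{form}}T_{\mr{form}}[1]\ul{\mr{Map}}(S^1_B,\bun_G(\Sigma))$, and remark \ref{de_rham_deformation_under_equivalence_remark} part 2 handles the deformation, exactly as in your final step. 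If you reorganize your second step this way, the rest of your argument goes through without needing any form of HKR.
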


\begin{proof}
First, observe that $(S^1_B)_{\lambda \text{-dR}} \iso S^1_B$ for all $\lambda \in \CC$.  Indeed, any topological space $Y$ viewed as a derived stack has trivial tangent complex, so $(Y_B)_{\mr{Hod}} \iso Y_B \times \bb A^1$. According to Proposition \ref{HT_twist_first_description} and Lemma \ref{mapping_stack_claims} part 2 we have
\begin{align*}
\EOM_\lambda(\Sigma \times S^1_B) &\iso \left(\ul{\mr{Map}} \left(\Sigma_{\mr{Dol}} \times (S^1_B)_{\lambda \text{-dR}}, BG \right)^\wedge_{\ul{\mr{Map}} \left(\Sigma\times (S^1_B)_{\lambda \text{-dR}}, BG \right)}\right)_{\mr{dR}} \\ 
&\iso \left(\ul{\mr{Map}} (T[1]\Sigma, \Flat_G(S^1))^\wedge_{\ul{\mr{Map}} \left(\Sigma\times (S^1_B)_{\lambda \text{-dR}}, BG \right)}\right)_{\mr{dR}} \\
&\iso (T^*_{\mr{form}}[-1]\ul{\mr{Map}}(\Sigma, \Flat_G(S^1)))_{\mr{dR}}.
\end{align*}
This falls into a family of equivalences, by replacing the de Rham prestack with the Hodge prestack, whose central fiber is given by the formal completion
\begin{align*}
T_{\mr{form}}[1]\ul{\mr{Map}} \left(\Sigma_{\mr{Dol}} \times (S^1_B)_{\lambda \text{-dR}}, BG \right) &\iso T_{\mr{form}}[1]T_{\mr{form}}^*[-1]\ul{\mr{Map}}(\Sigma, \Flat_G(S^1)) \\
&\iso T_{\mr{form}}^*T_{\mr{form}}[1]\ul{\mr{Map}}(\Sigma, \Flat_G(S^1)),
\end{align*}
by Lemma \ref{mapping_stack_claims} part 3.  To conclude the proof we observe that the degree 1 symmetry of the tangent complex generating the de Rham deformation via Example \ref{de_rham_stack_twist_example} corresponds -- under the equivalence -- to the symmetry on the right-hand side deforming $T_{\mr{form}}^*T_{\mr{form}}[1]\ul{\mr{Map}}(\Sigma, \Flat_G(S^1))$ to $T_{\mr{form}}^*(\ul{\mr{Map}}(\Sigma, \Flat_G(S^1))_{\mr{dR}})$ by Remark \ref{de_rham_deformation_under_equivalence_remark} part 2.
\end{proof}

Given that the A-twist is computed by an identical procedure to the more general $\lambda$-twist, one might ask what the point is of considering the $\lambda$ family of twists at all. The claim, which we hope to return to in future work, is that in order to see more refined structures in the geometric Langlands program, it is necessary to consider such twists. The following remark provides a hint of this structure.

\begin{remark} \label{bubble_EOM_remark}
The curve $\Sigma = \bb{CP}^1$ deserves a little more attention; we'll describe an infinitesimal version of the above calculation, explicitly using the family of theories obtained by varying $\lambda$.  Instead of describing the solutions to the equations of motion on the derived stack $\bb{CP}^1 \times S^1_B$, we'll instead consider a different complex structure on a complex neighborhood of $S^2 \times S^1$.  The following construction should be thought of as informal and motivational, since we'll use complex analytic constructions that don't make sense in derived algebraic geometry.  Consider the complex manifold
\[(\CC \times \CC^\times) \bs (\{0\} \times S^1).\]
Note that there are diffeomorphisms $\CC \times \CC^\times \iso \CC \times (0,\infty) \times S^1 \simeq B^3 \times S^1$ for an open three-ball $B^3$ around $0$. Removing $\{0\} \times S^1$ from $\CC \times \CC^\times$ corresponds to removing $\{0\} \times S^1$ from $B^3 \times S^1$ on the right-hand side, yielding a diffeomorphism $(B^3 \bs \{0\}) \times S^1 \simeq (S^2 \times (-1,1) ) \times S^1$.  Thus we can think of $(\CC \times \CC^\times) \bs (\{0\} \times S^1)$ as a complex manifold thickening $S^2 \times S^1$.

\vspace{-5pt}
From Proposition \ref{HT_twist_first_description}, the space of solutions to the equations of motion is obtained by applying the de Rham space construction to the moduli space of $G$-bundles on $ (\CC \times \CC^\times) \bs ( \{0\} \times S^1)$ with a Higgs field on $\CC$ and a flat $\lambda$-connection on $\CC^\times$. Let us denote the two connected components of $\CC^\times \bs S^1$ by $A_{\mr{in}}$ and $A_{\mr{out}}$. Note that a $G$-bundle $P$ on $(\CC \times \CC^\times) \setminus ( \{0\} \times S^1  )$ is equivalent to the data of a triple $(P', \phi_{\mr{in} } , \phi_{\mr{out}} )$, where $P'$ is the restriction of $P$ to $\CC^\times  \times \CC^\times$, $\phi_{\mr{in}}$ is the extension of $P'|_{\CC^\times \times A_{\mr{in}} }$ to $\CC \times A_{\mr{in}}$, and $\phi_{\mr{out}}$ is the extension of $P'|_{\CC^\times \times A_{\mr{out}} }$ to $\CC \times A_{\mr{out}}$. Note that ignoring the annular factor we would obtain a $G$-bundle on a ``bubbled'' plane $B := \CC \amalg_{\CC^\times} \CC$ made by gluing the two planes along $\CC^\times$.

\vspace{-5pt}
Then we can describe the moduli space of solutions to the equations of motion on $(\CC \times \CC^\times) \bs (\{0\} \times S^1)$ as a datum $(P',\phi_{\mr{in}},\phi_{\mr{out} })$ of this form, together with a Higgs field and a flat $\lambda$-connection in the two complex directions. Since we have a flat $\lambda$-connection in the $\CC^\times$-direction throughout, we can understand the space of germs of solutions to the equations of motion near $S^2 \times S^1$ as the de Rham stack of $\underline{\mr{Map}}(S^1_B, \higgs^{\mr{bos}}_G(B))$. It is essential here to have $\lambda \neq 0$: otherwise we cannot simply describe the moduli spaces in a way that depend only on the topology of $\CC^\times$, and not its algebraic structure. 

\vspace{-5pt}
Finally, we can replace $\CC$ by the formal disk $\bb{D}$.  One then obtains as the space of solutions to the equations of motion 
\[\EOM(\bb B \times S^1_B) \iso T_{\mr{form}}^*(\underline{\mr{Map}} (S^1_B, \bun_G(\bb{B}) )_{\mr{dR}} )\]
where $\bb B$ is the ``formal bubble'' $\bb{B}:= \bb{D} \amalg_{\bb {D^\times} } \bb{D}$.  The space of $G$-bundles on the formal bubble $\bb B$ is a familiar space in geometric representation theory: the quotient of the \emph{affine Grassmannian} $\mc G \mr r_G$ by the arc group $G(\CC[[t]])$.  We'll investigate the action of a quantization of this moduli space $\EOM(\bb B \times S^1_B)$ on a quantization of $\EOM(\Sigma \times S^1)$ for general surfaces $\Sigma$, inherited from the geometric structure of the bases of these cotangent spaces in future work.
\end{remark}

To conclude this section, we'd also like to understand germs of solutions to the equations of motion near manifolds of codimension 2.
\begin{prop}
$\EOM_\lambda(\Sigma \times \CC) \cong T_{\mr{form}}^*[1](\bun_G(\Sigma)_{\mr{dR}})$.
The equivalence arises as the fiber at 1 of an equivalence of deformations, whose fiber over 0 is
\[\ul{\mr{Map}}(\Sigma_{\mr{Dol}} \times \pt_{\lambda\text{-dR}}, BG)_{\mr{Dol}} \iso T_{\mr{form}}^*[1]T_{\mr{form}}[1]\bun_G(\Sigma)^\wedge_{T_{\mr{form}}^*\bun_G(\Sigma)}.\]
\end{prop}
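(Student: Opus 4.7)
The approach mirrors that of proposition \ref{A_twist_phase_space_prop}, adapted to the codimension-2 setting where the transverse direction is a complex line rather than a circle. The key simplification is that $\pt_{\lambda\text{-dR}} \iso \pt$ for every $\lambda \in \bb A^1$ --- a point has trivial tangent complex, so its Hodge prestack is $\pt \times \bb A^1$ --- so all $\lambda$-dependence in the mapping space description collapses, and the statement reduces to a purely formal manipulation involving cotangent spaces, shifted tangent bundles, and de Rham stacks of $\bun_G(\Sigma)$.

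The first step is to extend proposition \ref{HT_twist_first_description} to the codimension-2 case by the same hand-definition used to obtain proposition \ref{A_twist_phase_space_prop} (that is, identifying germs of solutions near $\Sigma \times \CC$ with the de~Rham prestack of the appropriate formally completed mapping space). Because $\pt_{\lambda\text{-dR}} = \pt$ and $\ul{\mr{Map}}(\Sigma_{\mr{Dol}}, BG) = T_{\mr{form}}[1]\bun_G(\Sigma)$, this gives
\[
\EOM_\lambda(\Sigma \times \CC) \iso \bigl(\ul{\mr{Map}}(\Sigma_{\mr{Dol}} \times \pt_{\lambda\text{-dR}}, BG)^\wedge_{\bun_G(\Sigma)}\bigr)_{\mr{dR}} \iso \bigl(T_{\mr{form}}[1]\bun_G(\Sigma)\bigr)_{\mr{dR}},
\]
where the formal completion is at the zero section. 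Next, the AKSZ construction applied to $\bun_G(\Sigma) = \ul{\mr{Map}}(\Sigma, BG)$, using the 2-orientation on $\Sigma$ and the 2-shifted symplectic structure on $BG$, endows $\bun_G(\Sigma)$ with a 0-shifted symplectic structure, and therefore yields $T_{\mr{form}}[1]\bun_G(\Sigma) \iso T^*_{\mr{form}}\bun_G(\Sigma)$. Consequently the object above is equivalent to $(T^*_{\mr{form}}\bun_G(\Sigma))_{\mr{dR}}$.

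The third step invokes remark \ref{de_rham_deformation_under_equivalence_remark} part 2 with $k = 0$ and $\mc X = \bun_G(\Sigma)$. This produces a canonical $\bb A^1$-family of equivalences between $T_{\mr{form}}[1]T^*_{\mr{form}}\bun_G(\Sigma)$ (with its Hodge deformation) and $T^*_{\mr{form}}[1]T_{\mr{form}}[1]\bun_G(\Sigma)$ (with the corresponding transported deformation, using lemma \ref{mapping_stack_claims} part 3), whose fiber at $1$ is precisely the equivalence
\[
\bigl(T^*_{\mr{form}}\bun_G(\Sigma)\bigr)_{\mr{dR}} \iso T^*_{\mr{form}}[1]\bigl(\bun_G(\Sigma)_{\mr{dR}}\bigr).
\]
Concatenating with steps 1 and 2 yields the desired identification. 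To check the fiber-at-0 claim, one unpacks the deformation: the Hodge deformation of $T_{\mr{form}}[1]\bun_G(\Sigma)$ has central fiber $T_{\mr{form}}[1]T_{\mr{form}}[1]\bun_G(\Sigma) = (T_{\mr{form}}[1]\bun_G(\Sigma))_{\mr{Dol}}$; transporting through the symplectic equivalence and lemma \ref{mapping_stack_claims} part 3 gives $T^*_{\mr{form}}[1]T_{\mr{form}}[1]\bun_G(\Sigma)$, formally completed at the image $T^*_{\mr{form}}\bun_G(\Sigma)$ of the zero section, matching the right-hand side of the stated central-fiber equivalence.

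The main obstacle is ensuring the compatibility in step 3: one must check that the Hodge deformation of $T_{\mr{form}}[1]\bun_G(\Sigma)$ arising from the de Rham construction on the mapping space $\ul{\mr{Map}}(\Sigma_{\mr{Dol}}, BG)$ transports to the Hodge deformation of $T^*_{\mr{form}}\bun_G(\Sigma)$ treated in remark \ref{de_rham_deformation_under_equivalence_remark} under the AKSZ symplectic equivalence of step 2. This compatibility is not a formal consequence of unrelated equivalences but follows from the naturality of the AKSZ pairing, which is precisely the content built into the proof of that remark; everything else in the proof is bookkeeping of shifted cotangent and tangent functors, identical in spirit to the codimension-1 case already handled.
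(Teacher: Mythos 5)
Your overall strategy is the paper's: collapse the factor $\pt_{\lambda\text{-dR}} \iso \pt$, identify the resulting Dolbeault mapping space with a formal cotangent of $\bun_G(\Sigma)$, and then transport the de Rham deformation through the exchange of $T_{\mr{form}}[1]$ and $T^*_{\mr{form}}$ exactly as in proposition \ref{A_twist_phase_space_prop}. The final deformation-transport step and the statement of the central fiber are correct.

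However, the middle of your argument rests on two false identifications that happen to cancel. First, $\ul{\mr{Map}}(\Sigma_{\mr{Dol}}, BG)^\wedge_{\bun_G(\Sigma)}$ is \emph{not} $T_{\mr{form}}[1]\bun_G(\Sigma)$: you have conflated maps out of a Dolbeault stack with maps into one. Maps \emph{into} $\mc X_{\mr{Dol}}$ produce $T_{\mr{form}}[1]$ of the mapping space (remark \ref{de_rham_deformation_under_equivalence_remark} part 1), whereas maps \emph{out of} $\Sigma_{\mr{Dol}} = T[1]\Sigma$ are Higgs bundles, whose completion along $\bun_G(\Sigma)$ is $T^*_{\mr{form}}\bun_G(\Sigma)$ (a Higgs field on a curve is a section of $K_\Sigma \otimes \gg_P$, i.e.\ a cotangent vector); the relative tangent complexes $\bb T_{\bun_G}[1]$ and $\bb L_{\bun_G}$ sit in different cohomological degrees, so these are genuinely different stacks. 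Second, the 0-shifted symplectic structure on $\bun_G(\Sigma)$ gives $\bb T \iso \bb L$ and hence $T_{\mr{form}}[1]\bun_G(\Sigma) \iso T^*_{\mr{form}}[1]\bun_G(\Sigma)$, not $T^*_{\mr{form}}\bun_G(\Sigma)$; your claimed equivalence would require a $(-1)$-shifted symplectic structure. The composite of the two wrong steps is the correct statement, which is obtained in one stroke from lemma \ref{mapping_stack_claims} part 2 with $\mc X = BG$ and $k = 2$, giving $\ul{\mr{Map}}(T[1]\Sigma, BG)^\wedge_{\bun_G(\Sigma)} \iso T^*_{\mr{form}}[0]\bun_G(\Sigma)$; this is where the AKSZ input (the 2-shifted structure on $BG$ and the orientation of $\Sigma$) actually enters, inside the proof of that lemma rather than as a symplectic structure on $\bun_G(\Sigma)$ itself. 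The same correction is needed in your central-fiber check: the central fiber of the Hodge deformation is $T_{\mr{form}}[1]T^*_{\mr{form}}\bun_G(\Sigma)$, not $T_{\mr{form}}[1]T_{\mr{form}}[1]\bun_G(\Sigma)$, and these are not equivalent.
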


\begin{proof}
Lemma \ref{mapping_stack_claims} provides an equivalence
\begin{align*}
\EOM_\lambda(\Sigma) &\iso \left(\ul{\mr{Map}}(\Sigma_{\mr{Dol}} \times \pt_{\lambda\text{-dR}}, BG)^\wedge_{\ul{\mr{Map}}(\Sigma \times \pt_{\lambda\text{-dR}},BG)}\right)_{\mr{dR}} \\
&\iso \left(\ul{\mr{Map}}(T[1]\Sigma, BG)^\wedge_{\ul{\mr{Map}}(\Sigma \times \pt_{\lambda\text{-dR}},BG)}\right)_{\mr{dR}} \\
&\iso (T_{\mr{form}}^*\ul{\mr{Map}}(\Sigma, BG))_{\mr{dR}}
\end{align*}
using the 2-shifted symplectic structure on $BG$.  As in the proof of Proposition \ref{A_twist_phase_space_prop}, this equivalence arises as the generic fiber of a natural deformation, whose fiber over zero is 
\begin{align*}
T_{\mr{form}}[1]\ul{\mr{Map}}(\Sigma_{\mr{Dol}} \times \pt_{\lambda\text{-dR}}, BG) &\iso T_{\mr{form}}[1]T_{\mr{form}}^*\ul{\mr{Map}}(\Sigma, BG)) \\
&\iso T_{\mr{form}}^*[1]T_{\mr{form}}[1]\ul{\mr{Map}}(\Sigma, BG))\\
&\iso T_{\mr{form}}^*[1]T_{\mr{form}}[1]\bun_G(\Sigma).
\end{align*}
Again, as in Lemma \ref{mapping_stack_claims} we observe by Remark \ref{de_rham_deformation_under_equivalence_remark} part 2 that the degree 1 symmetry of the tangent complex generating the de Rham deformation corresponds to the symmetry on the right-hand side deforming the Dolbeault stack to the Hodge prestack, thus providing an equivalence of deformations, as required. 
\end{proof}

\begin{remark}
As above, if $\Sigma=\bb{CP}^1$ we have the option to perform an infinitesimal construction.  By the same reasoning as in Remark \ref{bubble_EOM_remark} one can choose as a thickening $(\CC \times \CC) \bs (\{0\} \times I)$, where $I$ is the imaginary axis in the second factor: there is a diffeomorphism
\[S^2 \times (-1,1) \times I \simeq (B^3 \bs \{0\} ) \times I \simeq (\CC \times \CC) \bs (\{0\} \times I )\] 
which we again think of as a choice of complex thickening of $S^2$.  Running through the same calculation as in Remark \ref{bubble_EOM_remark} we end up with the moduli space of germs of solutions to the equations of motion $T_{\mr{form}}^*[1] (\bun_G(\bb{B}) _{\mr{dR}})$.  This will naturally appear in an interpretation of geometric Satake as arising from line operators.
\end{remark}

\appendix
\appendixpage
\addappheadtotoc
\section{Supersymmetry Algebras} \label{SUSY_appendix}
We'll begin by setting up some general language for describing supersymmetry algebras before describing the particular cases we're interested in (supersymmetry in 2, 4 and 10 dimensions).  The notion of \emph{twisting} supersymmetry algebras and supersymmetric field theories makes sense in any dimension and signature.  The material in this section is standard.  Proofs can be found for instance in \cite{Deligne} or \cite{Varadarajan}.

Let $p$ and $q$ be non-negative integers, and let $n = p+q$.  We'll describe supersymmetry algebras in pseudo-Riemannian signature $(p,q)$.  The main pieces of data that we'll need to specify are a spin representation and a spin-invariant vector-valued pairing on this representation.
\begin{definition}
A (real or complex) representation of the Lie algebra $\so(p,q)$ is \emph{spinorial} if it extends to a module for the even (real or complex) Clifford algebra $\mr {Cl}^+(p,q)$.
\end{definition}

There is a complete classification of spinorial $\so(p,q)$ representations.
\begin{prop} \label{spin_rep_classification_prop}
Over $\CC$, $\so(p,q)$ either has a \emph{unique} non-trivial irreducible representation $S$ of dimension $2^{\frac{n-1}2}$ if $p+q$ is odd, or has two distinct non-trivial irreducible representations $S_\pm$ each of dimension $2^{\frac n2 -1}$ if $p+q$ is even.  In the latter case we write $S$ for $S_+ \oplus S_-$.  We call $S$ the space of \emph{Dirac spinors} and $S_{\pm}$ the spaces of positive and negative helicity \emph{Weyl spinors}. 

Over $\RR$, the representation $S$ is the complexification of a real representation $S_\RR$ when $p-q \equiv 0, 1 \text{ or } 7 \!\mod 8$.  The representations $S_\pm$ are the complexifications of real representations $S_{\RR \pm}$ when $p-q \equiv 0 \!\mod 8$.  We call $S_\RR$ the space of \emph{Majorana spinors} and $S_{\RR \pm}$ spaces of \emph{Majorana--Weyl spinors}.  When instead $p-q \equiv 2$ or $6 \!\mod 8$ the representation $S_+ \oplus S_+^*$ \footnote{A real form for $S_- \oplus S_-^*$ would also work; the two agree up to complex conjugation.} is the complexification of a real representation, which we also denote by $S_\RR$ and refer to as the space of Majorana spinors.
\end{prop}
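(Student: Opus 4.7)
The strategy is to translate the problem into the classification of modules over the even Clifford algebra $\mr{Cl}^+(p,q)$ and then invoke the standard structure theorems for Clifford algebras. By definition, a spinorial $\so(p,q)$-representation is precisely a module for $\mr{Cl}^+(p,q)$, and every irreducible $\mr{Cl}^+(p,q)$-module remains irreducible as a representation of $\spin(p,q) \sub \mr{Cl}^+(p,q)$ because the subalgebra of $\mr{Cl}^+(p,q)$ generated by $\spin(p,q)$ is the whole of $\mr{Cl}^+(p,q)$. So the classification of spinorial representations reduces to the classification of irreducible modules over the (real or complex) algebra $\mr{Cl}^+(p,q)$.

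For the complex statement, I would use the following structural computation. Over $\CC$ the signature plays no role: $\mr{Cl}(p,q) \otimes_\RR \CC \iso \mr{Cl}(n; \CC)$ and this is isomorphic to $M_{2^{n/2}}(\CC)$ for $n$ even and to $M_{2^{(n-1)/2}}(\CC) \oplus M_{2^{(n-1)/2}}(\CC)$ for $n$ odd. Passing to the even part flips the two cases: $\mr{Cl}^+(n;\CC) \iso M_{2^{(n-1)/2}}(\CC)$ for $n$ odd and $\mr{Cl}^+(n;\CC) \iso M_{2^{n/2-1}}(\CC) \oplus M_{2^{n/2-1}}(\CC)$ for $n$ even. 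Reading off the simple modules of a matrix algebra (respectively sum of two matrix algebras) and their dimensions then produces the claimed irreducible spinor representations, with $S_\pm$ distinguished in the even case by which factor acts non-trivially, and with $S_+ \oplus S_-$ cut out as eigenspaces of the volume element (which is central in the even Clifford algebra when $n$ is even). The key step here, which I would verify by the usual tensor-product-of-generators argument, is the matrix algebra structure of $\mr{Cl}(n;\CC)$.

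For the real statement, I would invoke Bott periodicity for real Clifford algebras: the Morita type of $\mr{Cl}(p,q)$ depends only on $(p-q) \bmod 8$, with the eight types cycling through $\RR$, $\RR\oplus\RR$, $\RR$, $\CC$, $\HH$, $\HH\oplus\HH$, $\HH$, $\CC$ as the underlying division algebra. Taking the even subalgebra implements a shift $(p-q) \mapsto (p-q)-1$ in this table, so $\mr{Cl}^+(p,q)$ has underlying division algebra $\RR$ (i.e.\ is a real matrix algebra or sum of two real matrix algebras) exactly when $p-q \equiv 0, 1, 7 \pmod 8$, has underlying division algebra $\CC$ exactly when $p-q \equiv 2, 6 \pmod 8$, and has underlying division algebra $\HH$ otherwise. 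Matching with the complex picture: when the real algebra already has real type, each complex irreducible descends to a real irreducible (giving the Majorana and, in the $p-q \equiv 0 \pmod 8$ case, Majorana--Weyl structures); when it has complex type, the unique real irreducible complexifies to $S_+ \oplus S_-$ (explaining the mixed real form in the $p-q \equiv 2, 6 \pmod 8$ case); and when it has quaternionic type no real structure on $S$ exists.

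The step I expect to require the most care is not any single calculation but the bookkeeping of which shift of $p-q$ corresponds to the even subalgebra and how this interacts with the parity of $n$ and with the splitting of $S$ into $S_\pm$. Once the periodic table for $\mr{Cl}^+$ is laid out correctly, the case analysis is purely routine, matching the real division algebra type against whether the complex irreducible $S$ is unique ($n$ odd) or splits as $S_+ \oplus S_-$ ($n$ even) to extract the three distinct real-form scenarios in the statement.
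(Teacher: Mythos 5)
The paper itself gives no proof of this proposition; it defers to the standard references (Deligne, Varadarajan), and your proof sketch is precisely the argument those references give: reduce spinorial representations to modules over $\mr{Cl}^+(p,q)$ (using that the degree-two elements generate the even Clifford algebra), read off the complex case from $\mr{Cl}(n;\CC) \iso M_{2^{n/2}}(\CC)$ or $M_{2^{(n-1)/2}}(\CC)^{\oplus 2}$ together with $\mr{Cl}^+(n;\CC) \iso \mr{Cl}(n-1;\CC)$, and read off the real case from the mod-8 periodicity table. So the approach is the right one and the final classification you state agrees with the proposition.

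There is, however, an internal inconsistency at exactly the step you flag as delicate. You list the cycle of division-algebra types for $\mr{Cl}(p,q)$ as $\RR,\ \RR\oplus\RR,\ \RR,\ \CC,\ \bb H,\ \bb H\oplus\bb H,\ \bb H,\ \CC$ for $p-q \equiv 0,1,\dots,7 \pmod 8$, and then claim the even subalgebra implements the shift $(p-q) \mapsto (p-q)-1$. Applied literally, this would make $\mr{Cl}^+(p,q)$ of real type when $p-q \equiv 1,2,3 \pmod 8$ (e.g.\ for $p-q\equiv 0$ it would land on the residue $7$, giving type $\CC$), which contradicts your own (correct) conclusion that the real type occurs for $p-q \equiv 0,1,7$. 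The fix is that $\mr{Cl}^+(p,q) \iso \mr{Cl}(p,q-1) \iso \mr{Cl}(q,p-1)$, so with your table the shift is $(p-q) \mapsto (p-q)+1$ (equivalently $1-(p-q)$, which gives the same type by the symmetry of the table); with that correction the three residues $0,1,7$ land on $1,2,0$, all of real type, $2$ and $6$ land on $3$ and $7$, both of type $\CC$, and the remaining residues land on quaternionic type, matching the statement. One further small discrepancy: in the complex-type case you say the real irreducible complexifies to $S_+\oplus S_-$, while the proposition writes $S_+\oplus S_+^*$; these agree when $S_+^* \iso S_-$ but not in general, and the paper's own footnote acknowledges the ambiguity up to complex conjugation, so this is a matter of convention rather than an error.
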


We write $V_\RR$ for the $n$-dimensional vector representation $\RR^{p,q}$ of $\so(p,q)$, and $V_\CC$ for its complexification.  The second component necessary to define supersymmetry algebras is the following.
\begin{definition}
A \emph{pairing} on a spin representation $\Sigma$ is a symmetric $\so(p,q)$-equivariant linear map
\[\Gamma \colon \Sigma \otimes \Sigma \to V_k\]
where $k = \RR$ or $\CC$.
\end{definition}

Again, we have a good control over the existence and uniqueness of such pairings.  We can construct them using the Clifford multiplication, and duality properties of the spinors.
\begin{prop}
Over $\CC$ there exist unique pairings (up to rescaling)
\begin{align*}
&\Gamma \colon S \otimes S \to V_\CC \ \ \text{ if } n \equiv 1, 3, 5, \text{ or } 7 \! \mod 8 \\
&\Gamma \colon S_\pm \otimes S_\pm \to V_\CC \ \text{ if } n \equiv 2 \text{ or } 6 \! \mod 8 \\
&\Gamma \colon S_\pm \otimes S_\mp \to V_\CC \ \text{ if } n \equiv 0 \text{ or } 4 \! \mod 8.
\end{align*}
These pairings descend to give unique $V_\RR$-valued pairings on the Majorana or Majorana--Weyl spinors whenever they exist.
\end{prop}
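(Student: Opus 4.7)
The construction of each pairing will come from Clifford multiplication. The plan is to start from the canonical $\so(p,q)$-equivariant map $c \colon V_\CC \otimes S \to S$ in odd dimensions and $c \colon V_\CC \otimes S_\pm \to S_\mp$ in even dimensions, coming from the Clifford algebra module structure. Dualising $c$ and using the quadratic form to identify $V_\CC \iso V_\CC^*$ produces maps $S^* \otimes S \to V_\CC$ and $S_\mp^* \otimes S_\pm \to V_\CC$. To convert these into pairings on the spinors themselves, the remaining task is to identify $S^*$ with $S$, or $S_\pm^*$ with $S_\pm$ or $S_\mp$, depending on $n \!\mod 8$.

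This identification is where the mod 8 periodicity enters. Using the Bott periodicity of complex Clifford algebras together with the standard tables for real Clifford algebras, I would verify by hand in each residue class that $S$ is self-dual as an $\so(p,q)$-module for $n \equiv 1, 3, 5, 7 \!\mod 8$ (giving $\Gamma \colon S \otimes S \to V_\CC$), that each Weyl representation is self-dual for $n \equiv 2, 6 \!\mod 8$ (giving $\Gamma \colon S_\pm \otimes S_\pm \to V_\CC$), and that $S_+$ and $S_-$ are each other's duals for $n \equiv 0, 4 \!\mod 8$ (giving $\Gamma \colon S_\pm \otimes S_\mp \to V_\CC$). These are precisely the cases stated.

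Uniqueness up to rescaling would follow from a Schur-type argument: in each case the duality identifications above, combined with irreducibility, show that $V_\CC$ appears with multiplicity exactly one in the relevant tensor product of spin representations, so the space of equivariant maps to $V_\CC$ is one-dimensional. The symmetry condition on $\Gamma$ is an additional constraint, which I would verify by symmetrizing the pairing coming from $c$ and checking nonvanishing in a small base case (for instance $n = 1$ or $n = 2$), then propagating along the Clifford periodicity isomorphisms to all other residues.

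For the descent to real pairings, I would invoke Proposition \ref{spin_rep_classification_prop} to produce a real structure $\sigma$ on $\Sigma$ in the Majorana and Majorana-Weyl cases. Because $\Gamma$ is $\so(p,q)$-equivariant and unique up to scaling, the complex conjugate pairing $\overline{\Gamma} \circ (\sigma \otimes \sigma)$ must agree with $\Gamma$ up to a constant, which can be absorbed by rescaling so that $\Gamma$ lands in the real form $V_\RR \subset V_\CC$. The main obstacle is the mod 8 bookkeeping: although in principle mechanical, one must simultaneously track the duality type of each spin representation and the symmetry properties of the resulting pairing, since these jump between consecutive residue classes and together determine in which cases each of the three pairings is available.
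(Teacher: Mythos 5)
Your overall strategy --- build $\Gamma$ from the Clifford multiplication $c\colon V_\CC\otimes S_\pm\to S_\mp$, dualize, identify $S_\mp^*$ with a spin representation using the duality type, and get uniqueness from a multiplicity-one/Schur argument --- is exactly the route the paper intends (it gives no proof beyond the remark that the pairings come from ``the Clifford multiplication, and duality properties of the spinors,'' deferring to Deligne and Varadarajan). However, your duality bookkeeping is exactly backwards, and this is the one step that actually decides which of the three pairings occurs. For $\so(n;\CC)$ with $n=2m$, the half-spin representations are self-dual when $m$ is even, i.e.\ $n\equiv 0\bmod 4$ (check $n=4$: $\so(4;\CC)\iso\sl(2)\oplus\sl(2)$ and each $S_\pm$ is the self-dual 2-dimensional representation of one factor), while $S_\pm^*\iso S_\mp$ when $m$ is odd, i.e.\ $n\equiv 2\bmod 4$ (check $n=6$: $\so(6;\CC)\iso\sl(4;\CC)$ with $S_+=W$ the fundamental and $S_-=W^*$). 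You assert the opposite in both cases. Moreover, with your own starting map $S_\mp^*\otimes S_\pm\to V_\CC$, self-duality of the Weyl spinors yields the \emph{opposite}-chirality pairing $S_\mp\otimes S_\pm\to V_\CC$ and mutual duality yields the \emph{same}-chirality pairing $S_\pm\otimes S_\pm\to V_\CC$ --- again the reverse of the implication you draw. The two reversals cancel, so your final list agrees with the proposition, but the intermediate claims are false and the argument as written does not establish the conclusion.

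A second, smaller issue is the symmetry verification. Symmetrizing the Clifford-induced map and ``propagating along periodicity'' from a base case will not work uniformly, because the symmetry type of the canonical map genuinely alternates with $n\bmod 8$: for $n=6$ one has $V_\CC\iso\wedge^2 S_+$ inside $S_+\otimes S_+$, so the canonical map is \emph{antisymmetric} and its symmetrization vanishes, whereas for $n=2$ and $n=10$ it is symmetric (for $n=10$ this is visible in the paper's explicit octonionic formula for $\Gamma$ on $S^\RR_{10+}$). In the residue classes where the canonical map on a single copy of $S_\pm$ is antisymmetric, the symmetric pairing demanded by the paper's definition only exists after tensoring with a multiplicity space $W$ carrying a form of the complementary symmetry; this is precisely the mod-8 bookkeeping you defer at the end, but it needs to be carried out case by case rather than propagated from $n=1$ or $n=2$. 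The Schur-type uniqueness argument and the descent to $V_\RR$ via a real structure $\sigma$ with $\overline{\Gamma(\sigma\cdot,\sigma\cdot)}=c\,\Gamma$, $c$ a phase absorbable by rescaling, are both fine.
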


We can use this to describe pairings on more general spinorial representations.  There are pairings on the representation $S \otimes W$ -- where $W$ is a finite-dimensional vector space -- for each element of $\gl(W)$.  If we also require our pairings to be non-degenerate then there is a unique pairing up to $\so(p,q)$-equivariant isomorphism.

Now, we can define the supersymmetry algebra associated to this data.
\begin{definition}
The (real) \emph{supertranslation algebra} associated to a spinorial representation $\Sigma$ of $\so(p,q)$ is the super Lie algebra
\[T = V_\RR \oplus \Pi(\Sigma)\]
where the only bracket is the pairing $\Gamma \colon \Sigma \otimes \Sigma \to V_\RR$.  The (real) \emph{super Poincar\'e algebra} is the super Lie algebra
\[P = (\so(p,q) \ltimes V_\RR) \oplus \Pi(\Sigma)\]
where there are brackets given by $\Gamma$, by the internal bracket on the even piece and by the action of $\so(p,q)$ on $\Sigma$.  We define \emph{complex} supertranslation and super Poincar\'e algebras analogously, with $V_\RR$ replaced by $V_\CC$, and with $\Sigma$ a complex spinorial representation.
\end{definition}
To complete the definition, we need one more piece of data, namely a subalgebra of the R-symmetry algebra.
\begin{definition}
The \emph{R-symmetry algebra} associated to a supertranslation algebra is the algebra of outer automorphisms acting trivially on the bosonic piece.  Given a subalgebra $\gg_R$ of the R-symmetry algebra, the (real) \emph{supersymmetry algebra} is the super Lie algebra
\[\mc A = (\so(p,q) \ltimes V_\RR) \oplus \gg_R \oplus \Pi(\Sigma)\]
with brackets as before, plus the action of $\gg_R$ on $\Sigma$.  The complexified supersymmetry algebra is defined analogously.
\end{definition}

When $\Sigma = S^N$, we say there are \emph{$N$ supersymmetries}.  When $\Sigma = S_+^{N_1} \oplus S_-^{N_2}$ we say there are \emph{$(N_1, N_2)$ supersymmetries}.  If we impose the condition that the pairing $\Gamma$ is non-degenerate then we can only have $N_1 \ne N_2$ when $n \equiv 2$ or $6 \mod 8$ in the complex case, or when  $n \equiv 2$ or $6 \mod 8$ \emph{and} $p \equiv q \mod 8$ in the real case. 

\begin{definition}
A \emph{supersymmetric field theory} on $\RR^{p,q}$ is a field theory on $\RR^{p,q}$ equipped with an action of the complexified supersymmetry algebra extending the natural action of the complexified Poincar\'e algebra $\so(p,q) \ltimes V_\CC$.
\end{definition}

\begin{example}[Dimension 4]
The principal theories that we're interested in this paper are supersymmetric theories in dimension 4.  In this and the subsequent examples we'll be most interested in the complexified supersymmetry algebra, so the choice of signature won't be too important.  For specificity we'll work in Euclidean signature $(4,0)$.  Recall that we have an isomorphism of groups, $\mr{Spin}(4) \cong \SU(2)_+ \times \SU(2)_-$. Let $S_+$ and $S_-$ be the complex 2-dimensional defining representations of the two copies of $\SU(2)$, respectively. Let $V_\RR$ be the real 4-dimensional vector representation of $\mr{Spin}(4)$. If we define $V_\CC := V_\RR \otimes_\RR \CC$, then there is an isomorphism $\Gamma \colon  S_+ \otimes S_- \stackrel{\cong}{\longrightarrow} V_\CC $ as complex $\mr{Spin}(4)$-representations.

Let $W$ be a finite-dimensional complex vector space.  There is a natural non-degenerate pairing on the spinorial representation $(S_+ \otimes W) \oplus (S_- \otimes W^*)$, given by the isomorphism $\Gamma$ and the canonical pairing $W \otimes W^* \to \CC$.  The \emph{super-translation algebra} associated to $W$ is the super Lie algebra
\[T^W = V_\CC \oplus \Pi \left(S_+ \otimes W \oplus S_- \otimes W^* \right),\]
with Lie bracket given by this pairing.

One can compute that the R-symmetry algebra for this representation and pairing is the algebra $\gl(W)$ acting on $W$ and $W^*$ by the fundamental and anti-fundamental representations respectively.  Given a subalgebra $\gg_R \sub \gl(W)$, there is an associated \emph{supersymmetry algebra}
\[\mc A^W = (\so(4;\CC) \oplus \gg_R) \ltimes T^W.\]
If $\dim W = k$, we also denote this algebra by $\mc A^{N=k}$.  We'll be particularly interested in the case where $\dim W = 4$ and $\gg_R = \sl(4)$.  As we'll see, this is the supersymmetry algebra that will act on $N=4$ supersymmetric gauge theories.
\end{example}

\begin{example}[Dimension 2]
Two-dimensional theories will arise for us as dimensional reductions of 4d theories along a Riemann surface.  Again, since we're most interested in the complexified supersymmetry algebra we'll not be too concerned about the choice of signature, but it is worth remarking that the case of Lorentzian signature is special due to the existence of Majorana--Weyl spinors.  We have an isomorphism $\spin(2) \iso \mr U(1)$.  Let $S_\pm$ be the complex 1-dimensional representations of the circle of weight $\pm 1$.  The vector representation of $\Spin(2)$ corresponds to the weight two representation of $\mr U(1)$, so there are natural pairings $\Gamma \colon S_\pm \otimes S_\pm \to V_\CC$ (using a canonical isomorphism between $V_\CC$ and its dual).  

Let $W_+$ and $W_-$ be finite-dimensional complex vector spaces, and choose inner products $W_\pm \otimes W_\pm \to \CC$.  Combining this with the pairing above yields a pairing $\Gamma$ on the spinorial representation $(S_+ \otimes W_+) \oplus (S_- \otimes W_-)$, and thus a super Poincar\'e algebra
\[P_2^{(W_+, W_-)} = (\so(2;\CC) \ltimes V_\CC) \oplus \Pi\left((S_+ \otimes W_+) \oplus (S_- \otimes W_-)\right).\]
The R-symmetry algebra associated to this super Poincar\'e algebra is $\gl(W_+) \oplus \gl(W_-)$, and associated to a subalgebra $\gg_R$ of this algebra we produce a \emph{supersymmetry algebra}
\[\mc A_2^{(W_+, W_-)} = (\so(2;\CC) \ltimes V_\CC) \oplus \gg_R \oplus \Pi\left((S_+ \otimes W_+) \oplus (S_- \otimes W_-)\right).\]
If $\dim W_+ = N_1$ and $\dim W_- = N_2$, we say we have $(N_1, N_2)$ supersymmetries, and write $\mc A_2^{(N_1, N_2)}$.

Let's describe dimensional reduction from 4 to 2 dimensions (for the complexified algebra, though we could also investigate the real case in Riemannian or Lorentzian signature).  That is, take $\CC^2 \sub \CC^4$, and consider the subalgebra of the complex infinitesimal isometries $\so(4;\CC) \ltimes \CC^4$ mapping this subspace to itself, which has form $(\so(2;\CC) \ltimes \CC^2) \oplus \so(2;\CC)$.  Let $S_+$ and $S_-$ be the spaces of 4d Weyl spinors.  As modules for this subalgebra, the first $\so(2;\CC)$ acts with weights $(\pm 1, \mp 1)$ on $S_\pm$ respectively, and the second $\so(2;\CC)$ acts with weight $(\pm 1,\pm 1)$ on $S_\pm$.  Thus the $N=k$ super Poincar\'e algebra in dimension 4 naturally dimensionally reduces to the $N = (2k,2k)$ supersymmetry algebra in dimension 2, with R-symmetry group $\so(2;\CC) \iso \gl(1;\CC)$. 
\end{example}

\begin{example}[Dimension 10]
There is a supersymmetric gauge theory in dimension 10 which is ``universal'' in the sense that a range of supersymmetric gauge theories that are studied in lower dimensions arise from it by a combination of dimensional reduction and restriction of scalars \cite{ABDHN}.  We'll focus on the case of minimal supersymmetry, i.e. $N = (1,0)$, describe the Majorana--Weyl spinor representations in signature $(1,9)$, then describe the complexification.

Abstractly, the classification \ref{spin_rep_classification_prop} tells us to expect a pair of mutually dual irreducible spinorial representations of $\so(1,9)$ over the real numbers, each of dimension 16.  We can actually describe these representations very concretely; the details are described by Deligne in \cite[Chapter 6]{Deligne}. 

It suffices to construct a non-trivial 32-dimensional module for the algebra $\mr{Cl}(V,Q)$, where $V$ is 10-dimensional, and $Q$ is a quadratic form of signature (1,9).  Concretely, we'll set $V = \bb O \oplus H$ with $\bb O$ 8-dimensional and $H = \langle e,f\rangle$ 2-dimensional, and we set
\[Q(\omega + ae + bf) = \omega \cdot \ol \omega - ab\]
where $\omega \cdot \ol \omega$ is the octonion norm-squared.  Let $S^\RR_{10} = (\bb O^2) \oplus (\bb O^2)$ be a 32-dimensional real vector space.  We must describe a Clifford multiplication $\rho \colon V \otimes S^\RR_{10} \to S^\RR_{10}$ making $S^\RR_{10}$ into a module for $\mr{Cl}(V,Q)$.  This is concretely given by
\[\rho \colon \bb O \oplus H \to \eend(S^\RR_{10})\]
where
\begin{align*}
 \rho(\omega) &= \pmat{ \pmat{0&m_\omega\\m_\omega&0} & 0 \\ 0 & \pmat{0&m_\omega\\m_\omega&0}} \text{ for } \omega \in \bb O,\  m_\omega(\alpha) = \ol \omega \cdot \ol \alpha \\
 \rho(e) &= \pmat{\pmat{0&1\\0&0}&0\\0&\pmat{0&1\\0&0}} \quad
 \text{and } \ \ \rho(f) = \pmat{\pmat{0&0\\-1&0}&0\\0&\pmat{0&0\\-1&0}}. \\
\end{align*}
One can check that this gives a well-defined Clifford multiplication, and thus defines a 32-dimensional real spin representation which splits as a sum of two 16-dimensional representations of the even part of the Clifford algebra: call them $S^\RR_{10+}$, spanned by the first and third components of $\bb O^4$, and $S^\RR_{10-}$ spanned by the second and forth. There is also the induced pairing $\Gamma \colon S^\RR_{10\pm} \otimes S^\RR_{10\pm} \to V$, which one checks is given on $S^\RR_{10+}$ and $S^\RR_{10-}$ respectively by
\begin{align*}
\Gamma((\alpha_1,\alpha_2), (\beta_1,\beta_2)) &= \ol \alpha_1 \cdot \ol \beta_1 + \ol \alpha_2 \cdot \ol \beta_2 - \tr(\alpha_1 \cdot \ol \beta_1 + \alpha_2 \cdot \ol \beta_2)f \\
\text{and } \Gamma((\alpha_1,\alpha_2), (\beta_1,\beta_2)) &= \ol \alpha_1 \cdot \ol \beta_1 + \ol \alpha_2 \cdot \ol \beta_2 + \tr(\alpha_1 \cdot \ol \beta_1 + \alpha_2 \cdot \ol \beta_2)e
\end{align*}
where $\tr(\alpha) = \alpha + \ol \alpha$ is the octonionic reduced trace, and where the calculation is done using the identity $\langle \Gamma(s,t),v\rangle = (\rho(v)s, t)$ for spinors $s,t$ and vectors $v$. This now gives us a complete description of the supersymmetry algebra in 10-dimensions: it is given by
\[(\so(1,9) \ltimes \RR^{1,9}) \oplus \Pi(S^\RR_{10+})\]
with brackets given by the internal bracket on $\so(1,9)$, the action of $\so(1,9)$ on the translations, the action of $\so(1,9)$ on the supersymmetries, and the pairing $\Gamma \colon S^\RR_{10+} \otimes S^\RR_{10+} \to \RR^{1,9}$.

Finally, we can complexify the supersymmetry algebra to obtain a superalgebra of form
\[(\so(10;\CC) \ltimes \CC^{10}) \oplus \Pi(S_{10+}).\]
The complexification $S_{10+} = S^\RR_{10+} \otimes \CC$ is a 16-complex dimensional Weyl spinor representation of $\so(10;\CC)$.  Clifford theory says that the complexification $\so(10;\CC)$ embeds in the (even part of the) Clifford algebra $\mr{Cl}^+_{10} \iso \mat_{16}(\CC) \oplus \mat_{16}(\CC)$ as the elements of spinor norm one.  The Weyl spinors are the fundamental representation of the first matrix algebra factor.

More concretely, we write $S_{10+}$ as $\bb O^2 \oplus i\bb O^2$ where $\bb O$ is a 4-complex dimensional vector space.  We write $\CC^{10}$ as $\bb O \oplus i\bb O \oplus \CC\langle e,f \rangle$.  The Clifford multiplication is then given by 

\begin{align*}
\rho(\omega) &= \pmat{ \pmat{0 & m_\omega \\ m_\omega & 0} &0 \\ 0 & \pmat{0 & m_\omega \\ m_\omega & 0}}, \quad \rho(i\omega) = \pmat{0&\pmat{0 & m_\omega \\ m_\omega & 0}\\\pmat{0 & m_\omega \\ m_\omega & 0}&0} \ \text{ for } \omega \in \bb O\\
 \rho(e) &= \pmat{\pmat{0&1\\0&0}&0\\0&\pmat{0&1\\0&0}} \quad \text{and} \quad \rho(f) = \pmat{\pmat{0 & 0 \\ -1 & 0}&0 \\ 0&\pmat{0 & 0 \\-1 & 0} }.
\end{align*}

This complexified algebra dimensionally reduces to recover the $N=4$ supersymmetry algebra discussed above in four-dimensions.  We choose an embedding $\CC^4 \inj \CC^{10}$ and consider the subalgebra of the supersymmetry algebra fixing this subspace.  The bosonic piece has the form $\so(4;\CC) \ltimes V_\CC \oplus \sl(4;\CC)$, where the $\sl(4;\CC)$ fixes the subspace pointwise (and arises from complexification of $\so(6) \iso \su(4)$).  We must check that the action of $\so(6;\CC) \oplus \sl(4;\CC)$ on the 16-complex-dimensional space of spinors recovers the space $S_+ \otimes W \oplus S_- \otimes W^*$ that we expect.  We can do this by looking at the actions of the two summands separately, using that the action is still spinorial, and the fact that it arose as complexification of a representation for the (Lorentzian) real form.

Firstly, $\sl(4;\CC)$ has two Weyl spinor representations, the fundamental $W$ and the anti-fundamental $W^*$, and we must have equal numbers of each (since the complexification of the Majorana spin representation is their sum).  The modules $S_{10+}$ has no $\so(6;\CC)$-fixed points, so there are no trivial factors and $S_{10+} \otimes_\RR \CC \iso (W \oplus W^*) \otimes_\CC \CC^2$.  Secondly, $\so(4;\CC)$ has two Weyl spinor representations $S_+$ and $S_-$.  By the same argument we have equal numbers of each and there are no trivial summands, so $S_{10+} \otimes_\RR \CC \iso (S_+ \oplus S_-) \otimes_\CC \CC^4$.  Finally, to describe the relationship between these two actions we observe that the actions commute and complexify a real Lie algebra action.
\end{example}

\section{Lie Algebras and Deformation Theory} \label{linfty_appendix}
For motivation and reference, we've included the fundamental definitions and results on sheaves of Lie algebras and deformation theory. None of this material is original, and most of the results in the smooth category context be found in \cite{CostelloSH}, \cite{GG} and appendix A of \cite{CostelloGwilliam1}.  The derived deformation theoretic results we reference are due to Hinich \cite{Hinich} and Getzler \cite{Getzler}, or in a more homotopical setting to Lurie \cite{DAGX} and Hennion \cite{Hennionthesis}.

As we work in the setting of $\infty$-categories and the two operads Lie and $L_\infty$ are homotopy equivalent we are free to use the languages of Lie and $L_\infty$-algebras interchangeably, mainly choosing our terminology in order to be more compatible with the literature for the appropriate context.

\begin{definition}
A \emph{curved $L_\infty$ algebra} over a cdga $R$ with respect to an ideal $I$ is a locally free graded $R^\natural$ module $L$ equipped with a degree 1 differential 
\[d \colon \symc_{R^\natural}(L^\vee[-1]) \to \symc_{R^\natural}(L^\vee[-1])\]
making $\symc_{R^\natural}(L^\vee[-1])$ into a dg-module over $R$, such that $d$ vanishes on $\sym^0$ modulo the ideal $I$.  We denote $\symc_{R^\natural}(L^\vee[-1])$ by $C^\bullet(L)$ and call it the \emph{Chevalley--Eilenberg algebra} of $L$. 
\end{definition}

By taking the Taylor coefficients of the differential $d$ we obtain a sequence of degree 0 graded anti-symmetric operations $\ell_n \colon (\wedge^n L)[n-2] \to L$, dual to the composite
\[L^\vee[-1] \inj C^\bullet(L) \overset d \to C^\bullet(L) \surj \sym^n(L^\vee[-1])\]
which satisfy higher analogues of the Jacobi identities, recovering a more classical definition of a (curved) $L_\infty$ algebra.  One way of thinking about our definition is that Lie algebras are Koszul dual to commutative algebras, so defining the Lie algebra structure on $L$ is equivalent to defining a commutative dga structure on its Koszul dual $C^\bullet(L)$. 

We'll want to study versions of $L_\infty$ algebras varying over a topological space.  This will be useful for perturbative field theory, where an $L_\infty$ algebra describes the deformations of a particular solution to the equations of motion on an open set $U$ in spacetime, in order to describe the relationship between these solutions on different open sets.

\begin{definition} \label{local_L_infty_definition}
A \emph{local $L_\infty$ algebra} over a manifold $M$ is a cochain complex of vector bundles $L$ over $M$ such that the sheaf of sections is given the structure of a sheaf of $L_\infty$ algebras where the operations $\ell_n$ are polydifferential operators.

\vspace{-5pt}
If $G$ is an algebraic supergroup, a $G$-action on a local $L_\infty$ algebra $L$ is a $C^\bullet(G)$-module structure on $L(U)$ for each open set $U \sub X$ making $L$ into a sheaf of curved $L_\infty$ algebra over $C^\bullet(G)$ relative to the ideal $C^{>0}(G)$.  Here $C^\bullet(G)$ denotes the complex where $C^i(G) = \OO(G^i)$, with the usual differential using the group structure.  One similarly defines a $\gg$-action for a super Lie algebra $\gg$ to be a local module structure on each open set for the Chevalley--Eilenberg complex $C^\bullet(\gg)$.
\end{definition}

The perturbative definition of a classical field theory used by Costello in \cite{CostelloSH} builds on the following definition capturing local geometry of a given space. The idea is that in algebraic geometry, one is able to investigate formal neighborhoods of a point by only considering local Artinian algebras.

\begin{definition}
A \emph{formal derived moduli problem} is a functor $F$ from the category $\art_{\mr{dg}}^{\leq 0}$ of differential graded Artinian algebras cohomologically in degrees $\leq 0$ to the category $\sset$ of simplicial sets satisfying the following conditions:
\vspace{-10pt}
\begin{itemize}
\item the space $F(\CC)$ is contractible.
\item If $A \rightarrow B$ and $A'\rightarrow B$ are morphisms in $\art_{\mr{dg}}^{\leq 0}$ which are surjections on $H^0$, then the induced map $F(A\times_B A') \rightarrow F(A)\times_{F(B)}F(A')$ is a homotopy equivalence.
\end{itemize}
\vspace{-5pt}
\end{definition}

Note that the second condition ensures the ability to glue $\spec A$ and $\spec A'$ along $\spec B$ whenever we have closed embeddings at the classical level.

For example, given a point $p \in X =\spec R $ for $R \in \mr{cdga}^{\leq 0}$, or a maximal ideal $\mathfrak{m} \subset R$, the functor $X_p \colon \art_{\mr{dg}}^{\leq 0} \rightarrow \sset $ defined by \[(A,\mathfrak{m}_A) \mapsto ( \{ \phi \colon R \rightarrow A \otimes \Omega^\bullet(\Delta^n) \mid \phi(\mathfrak{m} ) =  \mathfrak{m}_A \otimes \Omega^\bullet(\Delta^n)  \})_{n \in \Delta } \] is a formal moduli problem. Geometrically $X_p(A)$ encodes the data of infinitesimal extension of $p$ via $A$.

The most important tool we are going to take advantage of in order to understand formal moduli problems is the Maurer--Cartan functor.

\begin{definition}
Let $L$ be an $L_\infty$ algebra. The \emph{Maurer--Cartan functor} $\mr{MC}_L \colon \art_{\mr{dg}}^{\leq 0} \rightarrow \sset$ is defined to be the functor given by $(R,\mathfrak{m}) \mapsto \mr{MC}_L(R)$, where the simplicial set $\mr{MC}_L(R)$ has as $n$-simplices elements $\alpha \in L \otimes \mathfrak{m} \otimes \Omega^\bullet(\Delta^n)$ of cohomological degree 1, which satisfy the Maurer--Cartan equation
\[\sum_{n \ge 0} \frac 1{n!}\ell_n(\alpha^{\otimes n}) = 0.\] 
\end{definition}

This is not manifestly a homotopy invariant notion, and thus not manifestly well-defined.  However, there is an equivalent rephrasing of the Maurer--Cartan functor that \emph{is} manifestly homotopy invariant.

\begin{prop}
$\mr{Hom}_{\cdga_*}(C^\bullet(L) , R) = \mr{MC}_L(R)$ for $R \in \art_{\mr{dg}}^{\leq 0}$.
\end{prop}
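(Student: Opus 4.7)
The plan is to exploit the Koszul-dual description of $L_\infty$ algebras that is built into the definition of $C^\bullet(L)$: its underlying graded algebra is the completed symmetric algebra $\widehat{\mr{Sym}}(L^\vee[-1])$ freely generated by $L^\vee[-1]$, and the differential $d$ repackages the $L_\infty$ brackets. First I would establish the bijection at the level of $0$-simplices and argue that the simplicial structure comes along for the ride: producing a natural bijection $\mr{Hom}_{\cdga_*}(C^\bullet(L), S) \iso \mr{MC}_L(S)_0$ for every $S \in \art_{\mr{dg}}^{\leq 0}$, and then applying it with $S = R \otimes \Omega^\bullet(\Delta^n)$ yields an isomorphism of the full simplicial sets, since both sides are built by first tensoring with $\Omega^\bullet(\Delta^n)$ and then taking points.

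At the level of graded algebras, a pointed map $\phi \colon \widehat{\mr{Sym}}(L^\vee[-1]) \to R$ of degree zero sending the augmentation ideal into $\mathfrak{m}$ is, by the universal property of the completed symmetric algebra, uniquely determined by its restriction $\phi_1 \colon L^\vee[-1] \to \mathfrak{m}$. Since $L$ is locally free and $\mathfrak{m}$ is nilpotent (so only finitely many symmetric powers contribute in each cohomological degree), this datum is precisely a degree-one element $\alpha \in (L \otimes \mathfrak{m})^1$. This identifies the set of pointed graded-algebra maps with $(L \otimes \mathfrak{m})^1$; the only remaining content is compatibility with differentials.

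Next I would unwind the condition $\phi \circ d = d_R \circ \phi$. Dually to the $L_\infty$ operations $\ell_n \colon \wedge^n L[n-2] \to L$, the Chevalley-Eilenberg differential decomposes by symmetric degree as $d = \sum_{n \ge 0} d_n$ with $d_n \colon L^\vee[-1] \to \mr{Sym}^n(L^\vee[-1])$ the transpose of $\frac{1}{n!}\ell_n$ (up to Koszul signs). Restricting $\phi \circ d = d_R \circ \phi$ to the generators $L^\vee[-1]$ and transposing gives, term by term,
\[
d_R \alpha + \sum_{n \ge 0} \frac{1}{n!} \ell_n(\alpha^{\otimes n}) = 0,
\]
which is exactly the Maurer-Cartan equation on $L \otimes \mathfrak{m}$ (with the internal differential of $L$ absorbed into $\ell_1$). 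Conversely, any such $\alpha$ extends uniquely and multiplicatively to a graded algebra map, and since both $d$ and $d_R$ are derivations, compatibility on generators forces compatibility on the whole algebra, producing the desired cdga morphism.

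The main obstacle will be bookkeeping of signs, Koszul conventions, and the $1/n!$ factorials involved in identifying $d_n$ with the transpose of $\ell_n$; convergence of the Maurer-Cartan sum and of $d$ in each symmetric degree is handled automatically by nilpotence of $\mathfrak{m}$ and by completion in $\widehat{\mr{Sym}}$, respectively. A secondary subtlety is the curved case, where $\ell_0$ contributes the $n=0$ term in the Maurer-Cartan equation; here the hypothesis from the definition of a curved $L_\infty$ algebra that $d$ vanishes on $\mr{Sym}^0$ modulo $I$ ensures this term lands in $\mathfrak{m}$ upon evaluation against $R$, so the equation remains well-posed.
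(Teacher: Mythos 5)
Your argument is correct, and it is the standard Koszul-duality unwinding of this statement. Note that the paper does not actually prove this proposition: it simply cites Lurie's \emph{DAG X}, Section 2.3 (Theorem 2.3.1), so your direct computation is a genuine supplement rather than a parallel to an argument in the text. Two small points are worth tightening. First, the identification of a degree-zero map $\phi_1 \colon L^\vee[-1] \to \mathfrak{m}$ with an element of $(L \otimes \mathfrak{m})^1$ uses reflexivity of $L$ (finite rank in each degree, or the nuclear duality the paper sets up), not just local freeness and nilpotence of $\mathfrak{m}$; nilpotence is what you need for the convergence of $\sum_n \frac{1}{n!}\ell_n(\alpha^{\otimes n})$ and for $\phi$ to factor through a finite quotient of $\widehat{\mr{Sym}}$, which you correctly flag. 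Second, your simplicial upgrade via $S = R \otimes \Omega^\bullet(\Delta^n)$ cannot literally invoke the $0$-simplex bijection ``for $S \in \art_{\mr{dg}}^{\leq 0}$,'' since $R \otimes \Omega^\bullet(\Delta^n)$ is neither Artinian nor concentrated in non-positive degrees; you should observe that the generator-level bijection holds for any augmented cdga with nilpotent augmentation ideal, which $\mathfrak{m} \otimes \Omega^\bullet(\Delta^n)$ is, and that the $n$-simplices of the mapping space $\mr{Hom}_{\cdga_*}(C^\bullet(L),R)$ are by definition pointed maps into $R \otimes \Omega^\bullet(\Delta^n)$. With those adjustments the proof is complete, and it has the advantage over the bare citation of making the role of the $1/n!$ factors and the curvature term explicit.
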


A proof of this fact appears in Section 2.3 of Lurie \cite{DAGX}; as we've phrased it it's implied by his Theorem 2.3.1.

\begin{theorem}[{\cite[2.0.2]{DAGX}}]
The Maurer--Cartan functor provides an equivalence of categories
\[\mr{MC} \colon \{L_\infty \text{ algebras}\} \to \{\text{formal derived moduli problems}\}\]
with quasi-inverse given by taking the $(-1)$-shifted tangent complex equipped with a canonical $L_\infty$ structure.
\end{theorem}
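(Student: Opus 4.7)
The plan is to construct the equivalence as an adjunction and then verify that both the unit and counit are equivalences, closely following the strategy Lurie uses in \cite{DAGX}. The key organizational idea is Koszul duality between the Lie and commutative (co)operads in the pro-nilpotent setting, which packages the Chevalley--Eilenberg functor $C^\bullet(-)$ and its left adjoint into a Quillen adjunction whose derived form descends to the desired equivalence.

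First I would verify that the functor $\mr{MC}_L$ is well-defined as a formal derived moduli problem. Contractibility of $\mr{MC}_L(\CC)$ is immediate since the Maurer--Cartan set collapses when the maximal ideal vanishes. For the pullback property, given $A \to B \leftarrow A'$ with the maps surjective on $H^0$, I would exploit the reformulation $\mr{MC}_L(R) \simeq \mr{Map}_{\cdga_*}(C^\bullet(L), R)$ given in the excerpt: the claim reduces to showing that $\cdga_*$ satisfies the analogous pullback property on mapping spaces out of $C^\bullet(L)$, which holds because $\art_{\mr{dg}}^{\leq 0} \to \cdga^{\leq 0}_*$ preserves such pullbacks and $C^\bullet(L)$ is a cofibrant commutative algebra. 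Homotopy invariance under quasi-isomorphisms of $L$ then follows because $C^\bullet(-)$ sends quasi-isomorphisms of $L_\infty$-algebras to quasi-isomorphisms of cdgas (on the Koszul-dual side).

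Next I would construct the candidate quasi-inverse $F \mapsto \mathbb T_F[-1]$, where the tangent complex at the basepoint is computed as $\mathbb T_F[-1]^n = \pi_0 F(\CC \oplus \CC[-n-1][\epsilon])$ with suitable dg enhancements, and endowed with its canonical $L_\infty$-structure obtained by functoriality applied to the square-zero extensions $\CC[\epsilon_1, \epsilon_2]/(\epsilon_i \epsilon_j)$ and their iterated variants. I would then exhibit the natural transformations $L \to \mathbb T_{\mr{MC}_L}[-1]$ and $\mr{MC}_{\mathbb T_F[-1]} \to F$ as unit and counit of an adjunction, using the defining universal property of $C^\bullet(L)$ as the Koszul dual cdga.

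The hard part will be proving that the counit $\mr{MC}_{\mathbb T_F[-1]} \to F$ is an equivalence, i.e.\ essential surjectivity: every formal moduli problem actually arises via Maurer--Cartan from some $L_\infty$-algebra. The standard argument proceeds by induction along the Postnikov-like filtration of an Artinian cdga $R$: one writes $R$ as an iterated square-zero extension of $\CC$ by shifts $\CC[n]$, and at each stage uses the pullback condition on $F$ together with the already-proved bijection on \emph{square-zero} extensions (where both sides compute $\mr{Map}(\CC[n], \mathbb T_F[-1])$) to promote compatibility to the next stage. Making this inductive step rigorous requires care with the filtration and with the identification of $F$ applied to trivial square-zero extensions with the shifted tangent complex; this is the technical heart of \cite[\S 2]{DAGX} and I would simply appeal to that argument rather than reproducing it. Once the counit is an equivalence, the unit is automatically an equivalence by the triangle identities and the fact that $C^\bullet$ is conservative on nilpotent $L_\infty$-algebras, completing the proof.
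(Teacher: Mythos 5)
The paper offers no proof of this statement at all: it is imported verbatim from Lurie \cite[2.0.2]{DAGX}, so there is no internal argument to compare yours against. Judged on its own terms, your outline is a faithful reconstruction of Lurie's architecture: the well-definedness of $\mr{MC}_L$ via the reformulation $\mr{MC}_L(R) \simeq \mr{Map}_{\cdga_*}(C^\bullet(L),R)$, the tangent complex as candidate inverse, and the recognition that the real content is showing every formal moduli problem is recovered from its restriction to square-zero extensions by induction along the nilpotence filtration of an Artinian cdga. Two remarks on where your sketch diverges from (or smooths over) the source. First, Lurie does not verify a unit/counit pair by hand; he sets up the Koszul duality adjunction $D \colon (\mr{CAlg}^{\mr{aug}})^{\mr{op}} \rightleftarrows \mr{Lie}$ and then invokes his abstract criterion for a ``deformation theory'' (Theorem 1.3.12 of \cite{DAGX}), whose hypotheses reduce exactly to the computation that $\Psi(\gg)$ on trivial square-zero extensions recovers the cohomology of $\gg$ --- that computation is what makes the unit an equivalence, and it is a direct calculation rather than a formal consequence of the triangle identities as you suggest. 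Second, be careful with the claim that $C^\bullet(L)$ is a cofibrant cdga: it is a \emph{completed} symmetric algebra, hence a pro-nilpotent object, and the mapping-space description of $\mr{MC}_L(R)$ is only valid because $R$ is Artinian (this is precisely the hypothesis in the proposition the paper quotes from \cite[2.3]{DAGX}); without that restriction the pullback-preservation argument does not go through as stated. Since you explicitly defer the inductive heart to Lurie, your proposal is an accurate roadmap rather than a self-contained proof, which is consistent with how the paper itself treats the result.
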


We sometimes write $BL$ for the formal moduli problem $\mr{MC}_L$. Then the theorem in particular says the following
\vspace{-10pt}
\begin{itemize}
\item There is an equivalence $\bb{T}_0[-1] BL \iso L$.
\item Every formal derived pointed moduli problem $X$ can be realized as $B L_X $ for some $L_\infty$ algebra $L_X$, in the sense that the formal derived moduli problem describing maps into $X$ is equivalent to the formal moduli problem $\mr{MC}_{L_X}$.
\end{itemize}
\vspace{-10pt}

The proposition allows one to think of $C^\bullet(L)$ as the structure sheaf of the formal moduli problem $B L$. Note that $C^\bullet(L)$ is in general not an object of the category $\mr{cdga}^{\leq 0}$, having stacky nature.

For our purpose, it is important to understand mapping stacks in terms of an $L_\infty$-algebra.

\begin{lemma}
Let $L$ be an $L_\infty$-algebra and $A$ be an object of $\art_{\mr{dg}}^{\leq 0}$. Then $L \otimes A$ is the $L_\infty$-algebra governing the deformations of the constant map $\spec A \rightarrow BL$.
\end{lemma}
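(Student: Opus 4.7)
The plan is to verify the claim at the level of Maurer-Cartan functors, using the equivalence recalled in the preceding theorem between $L_\infty$-algebras and formal derived moduli problems. Concretely, for every test algebra $(R, \mathfrak{m}_R) \in \art_{\mr{dg}}^{\leq 0}$, I would produce a natural equivalence of simplicial sets between $\mathrm{MC}_{L\otimes A}(R)$ and the simplicial set $\mathrm{Def}_c(R)$ parametrizing $R$-families of deformations of the constant map $c\colon \spec A \to BL$.

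First I would fix what is meant by the deformation functor $\mathrm{Def}_c$. Since $\underline{\mr{Map}}(\spec A, BL)(R) = BL(\spec(A\otimes R))$, and because $BL$ is described on Artinian inputs by the Maurer-Cartan functor $\mathrm{MC}_L$, the mapping space in question equals $\mathrm{MC}_L(A\otimes R)$. The constant map $c$ corresponds under the preceding proposition to the zero Maurer-Cartan element (the basepoint of $\mathrm{MC}_L(A)$), so $\mathrm{Def}_c(R)$ is the homotopy fiber at $0$ of the map $\mathrm{MC}_L(A \otimes R) \to \mathrm{MC}_L(A)$ induced by the augmentation $R \to \CC$. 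Concretely this is the subspace of Maurer-Cartan elements $\alpha$ in $L \otimes \mathfrak{m}_{A\otimes R}$ whose image in $L\otimes \mathfrak{m}_A$ vanishes.

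The core computation is then an exercise in reassociating tensor factors. The maximal ideal of $A\otimes R$ decomposes as $\mathfrak{m}_{A\otimes R} = A \otimes \mathfrak{m}_R + \mathfrak{m}_A \otimes R$, and the vanishing condition mod $\mathfrak{m}_R$ forces $\alpha \in L \otimes A \otimes \mathfrak{m}_R$. Under the associativity isomorphism $L\otimes A \otimes \mathfrak{m}_R \cong (L\otimes A)\otimes \mathfrak{m}_R$, the Maurer-Cartan equation for $\alpha$ as an element of $L$ with coefficients in the Artinian cdga $A \otimes R$ coincides, term by term, with the Maurer-Cartan equation for $\alpha$ as an element of the $L_\infty$-algebra $L\otimes A$ with coefficients in $R$. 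This is because the $L_\infty$-brackets on $L\otimes A$ are defined by $A$-multilinear extension, so $\ell_n^{L\otimes A}$ and $\ell_n^L$ produce the same element of $L\otimes A \otimes \mathfrak{m}_R^{\otimes n}$ on the nose. This establishes a bijection on MC elements, which upgrades to a simplicial equivalence by running the same identification with coefficients in $R \otimes \Omega^\bullet_{\mr{alg}}(\Delta^n)$.

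I do not expect serious obstacles here: once the notation is set up, the identification is formal. The only mild subtlety is bookkeeping in the simplicial enrichment, and verifying that the basepoint $c$ of the mapping space really does correspond to the zero Maurer-Cartan element of $L\otimes A$ rather than some gauge-equivalent perturbation; both of these are handled automatically by the Maurer-Cartan model for $BL$. Invoking the equivalence between $L_\infty$-algebras and formal moduli problems then promotes the computation $\mathrm{Def}_c \simeq \mathrm{MC}_{L\otimes A}$ to the assertion that $L\otimes A$ is the $L_\infty$-algebra governing deformations of $c$, as claimed.
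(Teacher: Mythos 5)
Your argument is correct and follows essentially the same route as the paper's own (sketched) proof: both identify $\mr{MC}_{L\otimes A}(R)$ with the Maurer--Cartan elements of $L$ over $A\otimes R$ that die under the augmentation $R\to\CC$, using the decomposition $\mathfrak{m}_{A\otimes R}=\mathfrak{m}_A\otimes R+A\otimes\mathfrak{m}_R$ and the $A$-multilinearity of the brackets on $L\otimes A$. You are in fact slightly more complete, since you make the homotopy-fiber definition of the deformation functor explicit (the strict fiber suffices because the surjection $A\otimes R\to A$ induces a fibration of Maurer--Cartan spaces) and carry out the simplicial upgrade, whereas the paper only treats the $0$-simplices.
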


We only sketch the proof for the 0-simplex to give an idea.

\begin{proof}[Proof sketch]
If $B$ is another Artinian algebra, then $\alpha \in MC_{L \otimes A}(B)[0]$ is an element $\alpha \in (L \otimes A \otimes \mathfrak{m}_B)^1$ satisfying Maurer--Cartan equation. Since the maximal ideal of $A \otimes B$ is $\mathfrak{m}_A \otimes B + A \otimes \mathfrak{m}_B$, from $MC_{L \otimes A}(B) \subset MC_L(A \otimes B)$, an element $\alpha \in MC_{L \otimes A }( B)$ can be characterized as an element of $MC_L(A \otimes B)$ which vanishes modulo $\mathfrak{m}_A$. Hence, geometrically, $MC_{L \otimes A}(B)$ represents families of maps $\spec A \rightarrow BL$ parametrized by $\spec B$ which are constant at the unique geometric point $\spec \mathbb{C} \in \spec A$.
\end{proof}

In other words, for the mapping stack $\underline{\mr{Map}}(X,Y)$, its formal derived moduli problem at $f$ is controlled by the $L_\infty$-algebra $\Gamma ( X,   f^* L_Y)$. 

The main construction we are using in the paper is in an algebraic setting.

\begin{theorem}\cite[4.2.0.1]{Hennionthesis}
If $\mc X$ is a derived Artin stack locally of finite presentation, then its shifted tangent complex $\mathbb{T}_{\mc X}[-1]$ is a Lie algebra object of $\qcoh(\mc X)$.
\end{theorem}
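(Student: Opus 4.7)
The plan is to realize $\mathbb{T}_{\mc X}[-1]$ as the Lie algebra associated to a canonically constructed pointed formal moduli problem over $\mc X$, using the Gaitsgory--Rozenblyum correspondence (theorem \ref{GRtheorem}). The natural candidate for the pointed formal moduli problem is the formal completion of the diagonal, namely $\mc Y := (\mc X \times \mc X)^{\wedge}_{\mc X}$, which comes equipped with the two projections $\pi_i \colon \mc Y \to \mc X$ and the diagonal $\Delta \colon \mc X \to \mc Y$ as a common section. Choosing one of the projections, this gives a pointed formal moduli $\pi_1 \colon \mc Y \to \mc X$ with section $\Delta$; here one uses that $\mc X$ locally of finite presentation admits deformation theory and that the formal completion is inf-schematic over each factor, so $\mc Y \in \text{Ptd}(\text{FormMod}_{/ \mc X})$.

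Next I would identify the underlying sheaf. By definition the functor of theorem \ref{GRtheorem} sends $\mc Y$ to $\Delta^* \bb{T}_{\mc Y / \mc X}[-1]$. Because $\mc Y$ is the formal completion of $\mc X \times \mc X$ along the diagonal and the formal completion does not change the tangent complex along the subspace, one has a canonical equivalence
\[
\Delta^* \bb{T}_{\mc Y / \mc X} \iso \Delta^* \bb{T}_{(\mc X \times \mc X)/ \mc X} \iso \Delta^* \pi_2^* \bb{T}_{\mc X} \iso \bb{T}_{\mc X}.
\]
Thus the Lie algebra in $\mr{IndCoh}(\mc X)$ attached to $\mc Y$ via theorem \ref{GRtheorem} has underlying object $\bb{T}_{\mc X}[-1]$. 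To pass from $\mr{IndCoh}(\mc X)$ to $\mr{QC}(\mc X)$ I use the hypothesis that $\mc X$ is locally of finite presentation: the cotangent complex $\mathbb{L}_{\mc X}$ is perfect, hence so is $\bb{T}_{\mc X}$, and perfect complexes sit as a full subcategory of both $\mr{QC}(\mc X)$ and $\mr{IndCoh}(\mc X) $ compatibly with the symmetric monoidal structures. Lie algebra objects in one therefore correspond to Lie algebra objects in the other, giving the desired structure on $\bb{T}_{\mc X}[-1] \in \mr{QC}(\mc X)$.

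The main obstacle is verifying that $\mc Y = (\mc X \times \mc X)^{\wedge}_{\mc X}$ genuinely lies in $\text{Ptd}(\text{FormMod}_{/ \mc X})$, i.e.\ that $\pi_1$ is inf-schematic and induces an isomorphism on reduced parts, so that theorem \ref{GRtheorem} actually applies. Inf-schematicness is one of the standard properties of the formal completion construction recalled in the conventions section, and the isomorphism on reduced parts follows because the formal completion of any closed embedding agrees with the source on the reduced level. A secondary subtlety is the compatibility with the symmetric monoidal structures when transferring between $\mr{IndCoh}$ and $\mr{QC}$; once we know $\bb{T}_{\mc X}$ is perfect this is formal, but it is where the local finite presentation hypothesis is really used. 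Modulo these points, the assertion is a direct consequence of the Gaitsgory--Rozenblyum correspondence applied to the diagonal.
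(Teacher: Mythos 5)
The paper offers no proof of this statement at all --- it is imported verbatim from Hennion's thesis --- so there is nothing internal to compare against line by line. That said, your reconstruction via the Gaitsgory--Rozenblyum correspondence is the natural one and is consistent with how the paper deploys theorem \ref{GRtheorem} elsewhere: $(\mc X\times\mc X)^\wedge_{\mc X}$ with $\pi_1$ and the diagonal section is a pointed formal moduli problem over $\mc X$ (the reduced-part condition is immediate from the functor-of-points description of completions, and inf-schematicity is the property recalled in the conventions), local finite presentation implies locally almost of finite type so the theorem applies, and the identification $\Delta^*\bb T_{\mc Y/\mc X}\iso\Delta^*\pi_2^*\bb T_{\mc X}\iso\bb T_{\mc X}$ is correct because the completion map $(\mc X\times\mc X)^\wedge_{\mc X}\to\mc X\times\mc X$ is formally \'etale (its relative cotangent complex is pulled back from a map of de Rham prestacks, hence vanishes). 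This is also morally Hennion's own construction: his Lie algebra is that of the formal group $\mc X\times_{\mc X\times\mc X}\mc X\iso\Omega_{\mc X}\bigl((\mc X\times\mc X)^\wedge_{\mc X}\bigr)$.

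The step that is genuinely incomplete is the descent from $\mr{IndCoh}(\mc X)$ to $\mr{QC}(\mc X)$. The comparison functor $\Upsilon_{\mc X}=(-)\otimes\omega_{\mc X}\colon\mr{QC}(\mc X)\to\mr{IndCoh}(\mc X)$ is symmetric monoidal, but it is fully faithful only when $\mc X$ is eventually coconnective, and a derived Artin stack locally of finite presentation need not be (e.g.\ $\spec\sym(V[2])$ for $V$ finite-dimensional). So the assertion that perfect complexes sit as a full subcategory of both categories ``compatibly with the symmetric monoidal structures'' is not automatic, and transporting a Lie algebra structure backwards along a symmetric monoidal functor that is not known to be fully faithful on the relevant objects is exactly where such an argument can leak. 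This is presumably why Hennion never passes through $\mr{IndCoh}$: he constructs the group object and takes its Lie algebra directly in $\mr{QC}(\mc X)$. To repair your version, either restrict to eventually coconnective $\mc X$ (where $\Upsilon_{\mc X}$ is fully faithful and the transfer is formal), or give a separate argument that the Lie algebra object produced by theorem \ref{GRtheorem} lies in the essential image of $\Upsilon_{\mc X}$ \emph{as a Lie algebra object}, not merely as a sheaf.
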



\bibliographystyle{alpha}
\bibliography{GL-KW}

\newcommand{\etalchar}[1]{$^{#1}$}
\begin{thebibliography}{ABD{\etalchar{+}}14}

\bibitem[ABD{\etalchar{+}}14]{ABDHN}
Alexandros Anastasiou, Leron Borsten, Michael Duff, Leo Hughes, and Silvia
  Nagy.
\newblock Super {Y}ang-{M}ills, division algebras and triality.
\newblock {\em J. High Energy Phys.}, (8):080, front matter+34, 2014.

\bibitem[AG15]{ArinkinGaitsgory}
Dima Arinkin and Dennis Gaitsgory.
\newblock Singular support of coherent sheaves and the geometric {Langlands}
  conjecture.
\newblock {\em Selecta Mathematica}, 21(1):1--199, 2015.

\bibitem[AHS78]{AHS}
Michael Atiyah, Nigel Hitchin, and Isadore Singer.
\newblock Self-duality in four-dimensional {Riemannian} geometry.
\newblock {\em Proceedings of the Royal Society of London. A. Mathematical and
  Physical Sciences}, 362(1711):425--461, 1978.

\bibitem[Bat79]{Batchelor}
Marjorie Batchelor.
\newblock The structure of supermanifolds.
\newblock {\em Transactions of the American Mathematical Society},
  253:329--338, 1979.

\bibitem[BMS07]{BMS}
Rutger Boels, Lionel Mason, and David Skinner.
\newblock Supersymmetric gauge theories in twistor space.
\newblock {\em Journal of High Energy Physics}, 2007(02):014, 2007.

\bibitem[BSS77]{BrinkSchwarzScherk}
Lars Brink, John Schwarz, and Joel Scherk.
\newblock Supersymmetric {Yang}-{Mills} theories.
\newblock {\em Nuclear Physics B}, 121(1):77--92, 1977.

\bibitem[BZN16]{BZNBettiLanglands}
David Ben-Zvi and David Nadler.
\newblock Betti geometric {Langlands}.
\newblock {\em arXiv preprint arXiv:1606.08523}, 2016.

\bibitem[Cal15]{Calaque}
Damien Calaque.
\newblock Lagrangian structures on mapping stacks and semi-classical {TFTs}.
\newblock In {\em Stacks and Categories in Geometry, Topology, and Algebra},
  volume 643, pages 1--24. American Mathematical Soc., 2015.

\bibitem[CG16]{CostelloGwilliam1}
Kevin Costello and Owen Gwilliam.
\newblock {\em Factorization Algebras in Quantum Field Theory, Volume I}.
\newblock Number~31 in New Mathematical Monographs. Cambridge University Press,
  2016.

\bibitem[CG17]{CostelloGwilliam2}
Kevin Costello and Owen Gwilliam.
\newblock {\em Factorization Algebras in Quantum Field Theory, Volume II}.
\newblock 2017.
\newblock book in progress, available at
  \url{http://people.mpim-bonn.mpg.de/gwilliam/}.

\bibitem[Cos10]{CostelloWittenGenusI}
Kevin Costello.
\newblock A geometric construction of the {Witten} genus, {I}.
\newblock {\em Proceedings of the International Congress of Mathematicians,
  (Hyderabad, 2010)}, 2010.

\bibitem[Cos13]{CostelloSH}
Kevin Costello.
\newblock Notes on supersymmetric and holomorphic field theories in dimensions
  2 and 4.
\newblock In {\em Special Issue: In Honor of Dennis Sullivan}, volume~9 of {\em
  Pure and Applied Mathematics Quarterly}. 2013.

\bibitem[CPT{\etalchar{+}}17]{CPTVV}
Damien Calaque, Tony Pantev, Bertrand To{\"e}n, Michel Vaqui{\'e}, and Gabriele
  Vezzosi.
\newblock Shifted {Poisson} structures and deformation quantization.
\newblock {\em Journal of Topology}, 10(2):483--584, 2017.

\bibitem[CR12]{CarchediRoytenberg}
David Carchedi and Dmitry Roytenberg.
\newblock Homological algebra for superalgebras of differentiable functions.
\newblock {\em arXiv preprint arXiv:1212.3745}, 2012.

\bibitem[CS76]{CremmerScherk}
Eug\`ene Cremmer and Jo\"el Scherk.
\newblock Spontaneous compactification of space in an
  {Einstein-Yang-Mills-Higgs} model.
\newblock {\em Nuclear Physics B}, 108(3):409--416, 1976.

\bibitem[Del99]{Deligne}
Pierre Deligne.
\newblock {\em Quantum Fields and Strings; A Course for Mathematicians: Notes
  on Spinors}, volume~1.
\newblock AMS, 1999.

\bibitem[DF99]{DeligneFreedCFT}
Pierre Deligne and Daniel Freed.
\newblock {\em Quantum Fields and Strings; A Course for Mathematicians:
  Classical Field Theory}, volume~1.
\newblock AMS, 1999.

\bibitem[DG15]{DrinfeldGaitsgory}
Vladimir Drinfeld and Dennis Gaitsgory.
\newblock Compact generation of the category of {D}-modules on the stack of
  {$G$}-bundles on a curve.
\newblock {\em Camb. J. Math.}, 3(1-2):19--125, 2015.

\bibitem[EY17]{EY2}
Chris Elliott and Philsang Yoo.
\newblock A physical origin for singular support conditions in geometric
  {Langlands}.
\newblock {\em arXiv preprint arXiv:1707.01292}, 2017.

\bibitem[Fre07]{FrenkelLectures}
Edward Frenkel.
\newblock Lectures on the {Langlands program} and conformal field theory.
\newblock In {\em Frontiers in number theory, physics, and geometry II}, pages
  387--533. Springer, 2007.

\bibitem[Gai11a]{GaitsgoryQC}
Dennis Gaitsgory.
\newblock Notes on geometric {Langlands}: Quasi-coherent sheaves on stacks.
\newblock Available at author's website
  \url{http://www.math.harvard.edu/~gaitsgde/GL}, 2011.

\bibitem[Gai11b]{GaitsgoryStacks}
Dennis Gaitsgory.
\newblock Notes on geometric {Langlands}: Stacks.
\newblock Available at author's website
  \url{http://www.math.harvard.edu/~gaitsgde/GL}, 2011.

\bibitem[Get09]{Getzler}
Ezra Getzler.
\newblock Lie theory for nilpotent {$L_\infty$}-algebras.
\newblock {\em Annals of mathematics}, 170(1):271--301, 2009.

\bibitem[GG15]{GG}
Ryan Grady and Owen Gwilliam.
\newblock {$L_\infty$} spaces and derived loop spaces.
\newblock {\em New York J. Math.}, 21:231--272, 2015.

\bibitem[GNO77]{GNO}
Peter Goddard, Jean Nuyts, and David Olive.
\newblock Gauge theories and magnetic charge.
\newblock {\em Nuclear Physics B}, 125(1):1--28, 1977.

\bibitem[GR17]{GR}
Dennis Gaitsgory and Nick Rozenblyum.
\newblock {\em A study in derived algebraic geometry. {V}ol. {II}.
  {D}eformations, {L}ie theory and formal geometry}, volume 221 of {\em
  Mathematical Surveys and Monographs}.
\newblock American Mathematical Society, Providence, RI, 2017.

\bibitem[Hen15]{Hennionthesis}
Benjamin Hennion.
\newblock {\em Formal loops, {Tate} objects and tangent {Lie} algebras}.
\newblock PhD thesis, Universit\'e Montpellier 2, 2015.
\newblock arXiv preprint: arXiv:1412.0053.

\bibitem[Hin01]{Hinich}
Vladimir Hinich.
\newblock {DG} coalgebras as formal stacks.
\newblock {\em Journal of pure and applied algebra}, 162(2):209--250, 2001.

\bibitem[Joy11]{Joycedmanifolds}
Dominic Joyce.
\newblock An introduction to d-manifolds and derived differential geometry.
\newblock {\em Moduli spaces, LMS Lecture Notes}, 411:230--281, 2011.

\bibitem[Kap06]{KapustinHolo}
Anton Kapustin.
\newblock Holomorphic reduction of {$N=2$} gauge theories, {Wilson}-'t {Hooft}
  operators, and {S}-duality.
\newblock {\em arXiv preprint hep-th/0612119}, 2006.

\bibitem[KW07]{KW}
Anton Kapustin and Edward Witten.
\newblock Electric-magnetic duality and the geometric {Langlands} program.
\newblock {\em Communications in Number Theory and Physics}, 1:1--236, 2007.

\bibitem[Laf09]{Lafforgue}
Vincent Lafforgue.
\newblock Quelques calculs reli{\'e}s {\`a} la correspondance de {Langlands}
  g{\'e}om{\'e}trique pour {$\mathbb{P}^1$} (version provisoire).
\newblock Available at author's website: \url{http://www.math.jussieu.fr/\~
  vlafforg}, 2009.

\bibitem[Loz99]{Lozano}
Carlos Lozano.
\newblock {\em Duality in Topological Field Theories}.
\newblock PhD thesis, University of Santiago de Compostela, 1999.

\bibitem[Lur09]{LurieHTT}
Jacob Lurie.
\newblock {\em Higher topos theory}.
\newblock Number 170 in Annals of Mathematics Studies. Princeton University
  Press, 2009.

\bibitem[Lur11]{DAGX}
Jacob Lurie.
\newblock Derived algebraic geometry {X}: Formal moduli problems.
\newblock {\em available at author's website:
  {http://www.math.harvard.edu/lurie}}, 2011.

\bibitem[Lur14]{HigherAlgebra}
Jacob Lurie.
\newblock Higher algebra, 2014.
\newblock Available from author's website at
  \url{http://www.math.harvard.edu/~lurie/papers/HigherAlgebra.pdf}.

\bibitem[MO77]{MontonenOlive}
Claus Montonen and David Olive.
\newblock Magnetic monopoles as gauge particles?
\newblock {\em Physics Letters B}, 72(1):117--120, 1977.

\bibitem[Mov08]{Movshev}
Mikhail Movshev.
\newblock A note on self-dual {Yang}-{Mills} theory.
\newblock {\em arXiv preprint arXiv:0812.0224}, 2008.

\bibitem[MW98]{WittenMoore}
Gregory Moore and Edward Witten.
\newblock Integration over the {$u$}-plane in {Donaldson} theory.
\newblock {\em Advances in Theoretical and Mathematical Physics}, 1:298--387,
  1998.

\bibitem[Nad09]{NadlerMicrolocal}
David Nadler.
\newblock Microlocal branes are constructible sheaves.
\newblock {\em Selecta Mathematica}, 15(4):563--619, 2009.

\bibitem[Nad12]{NadlerFL}
David Nadler.
\newblock The geometric nature of the fundamental lemma.
\newblock {\em Bulletin of the American Mathematical Society}, 49(1):1--50,
  2012.

\bibitem[Ng{\^o}10]{Ngo}
Bao~Ch{\^a}u Ng{\^o}.
\newblock Le lemme fondamental pour les algebres de {L}ie.
\newblock {\em Publications math{\'e}matiques de l'IH{\'E}S}, 111(1):1--169,
  2010.

\bibitem[NZ09]{NadlerZaslow}
David Nadler and Eric Zaslow.
\newblock Constructible sheaves and the {Fukaya} category.
\newblock {\em Journal of the American Mathematical Society}, 22(1):233--286,
  2009.

\bibitem[Pri17]{Pridham}
J.~P. Pridham.
\newblock Shifted {P}oisson and symplectic structures on derived {$N$}-stacks.
\newblock {\em J. Topol.}, 10(1):178--210, 2017.

\bibitem[PTVV13]{PTVV}
Tony Pantev, Bertrand To{\"e}n, Michel Vaqui{\'e}, and Gabriele Vezzosi.
\newblock Shifted symplectic structures.
\newblock {\em Publications math{\'e}matiques de l'IH{\'E}S}, 117(1):271--328,
  2013.

\bibitem[PY16]{PortaYu}
Mauro Porta and Tony~Yue Yu.
\newblock Higher analytic stacks and {GAGA} theorems.
\newblock {\em Adv. Math.}, 302:351--409, 2016.

\bibitem[Sim97]{Simpson2}
Carlos Simpson.
\newblock The {Hodge} filtration on nonabelian cohomology.
\newblock In {\em Proceedings of Symposia in Pure Mathematics}, volume~62,
  pages 217--284. American Mathematical Society, 1997.

\bibitem[Sim98]{Simpson3}
Carlos Simpson.
\newblock Algebraic aspects of higher nonabelian {Hodge} theory.
\newblock In {\em Motives, polylogarithms and {Hodge} theory}, volume~3 of {\em
  Int. Press Lect. Ser}, pages 417--604. 1998.

\bibitem[Sim09]{Simpson4}
Carlos Simpson.
\newblock Geometricity of the {Hodge} filtration on the {$\infty$-stack} of
  perfect complexes over {$X_{\mathrm{DR}}$}.
\newblock {\em Mosc. Math. J}, 9(209):665--721, 2009.

\bibitem[SS79]{ScherkSchwarz}
Joel Scherk and John Schwarz.
\newblock Spontaneous breaking of supersymmetry through dimensional reduction.
\newblock {\em Physics Letters B}, 82(1):60--64, 1979.

\bibitem[To{\"e}05]{ToenOverview}
Bertrand To{\"e}n.
\newblock Higher and derived stacks: A global overview.
\newblock In {\em Proc. Sympos. Pure Math}, volume~80, pages 435--487, 2005.

\bibitem[To{\"e}14]{ToenSurvey}
Bertrand To{\"e}n.
\newblock Derived algebraic geometry.
\newblock {\em EMS Surv. Math. Sci.}, 1(2):153--240, 2014.

\bibitem[TV05]{HAGI}
Bertrand To{\"e}n and Gabriele Vezzosi.
\newblock Homotopical algebraic geometry i: Topos theory.
\newblock {\em Advances in Mathematics}, 193(2):257--372, 2005.

\bibitem[TV08]{HAGII}
Bertrand To{\"e}n and Gabriele Vezzosi.
\newblock {\em Homotopical algebraic geometry {II}: Geometric stacks and
  applications}, volume 193 of {\em Memoirs of the American Mathematical
  Society}.
\newblock American Mathematical Society, 2008.

\bibitem[Var04]{Varadarajan}
Veeravalli Varadarajan.
\newblock {\em Supersymmetry for mathematicians: an introduction}, volume~11 of
  {\em Courant Lecture Notes in Mathematics}.
\newblock American Mathematical Soc., 2004.

\bibitem[Wit88]{WittenTQFT}
Edward Witten.
\newblock Topological quantum field theory.
\newblock {\em Communications in Mathematical Physics}, 117(3):353--386, 1988.

\bibitem[Wit04]{WittenTST}
Edward Witten.
\newblock Perturbative gauge theory as a string theory in twistor space.
\newblock {\em Communications in Mathematical Physics}, 252(1-3):189--258,
  2004.

\bibitem[Wit12]{WittenSUSY}
Edward Witten.
\newblock Notes on supermanifolds and integration.
\newblock {\em arXiv preprint arXiv:1209.2199}, 2012.

\bibitem[WW91]{WW}
Richard~Samuel Ward and Raymond~O'Neil Wells.
\newblock {\em Twistor geometry and field theory}.
\newblock Cambridge University Press, 1991.

\end{thebibliography}

\textsc{Institut des Hautes \'Etudes Scientifiques}\\
\textsc{35 Route de Chartres, Bures-sur-Yvette, 91440, France}\\
\texttt{celliott@ihes.fr}\\
\vspace{5pt}\\
\textsc{Department of Mathematics, Yale University}\\
\textsc{10 Hillhouse Ave., New Haven, CT 06511, USA} \\
\texttt{philsang.yoo@yale.edu}
\end{document}